\documentclass{article}

\usepackage{xspace}
\usepackage{amsmath}
\usepackage{amssymb}
\usepackage{amsthm}
\usepackage{mfirstuc}
\usepackage{tikzsymbols}
\usepackage{MnSymbol}
\usepackage{paralist}
\usepackage{tikz}
\usepackage{url}
\usetikzlibrary{quantikz}
\usetikzlibrary{positioning,calc,arrows}
\usetikzlibrary{decorations.text}
\usetikzlibrary{decorations.pathmorphing}
\tikzset{onslide/.code args={<#1>#2}{%
		\only<#1>{\pgfkeysalso{#2}} % \pgfkeysalso doesn't change the path
}}
\tikzset{
	invisible/.style={opacity=0},
	visible on/.style={alt={#1{}{invisible}}},
	alt/.code args={<#1>#2#3}{%
		\alt<#1>{\pgfkeysalso{#2}}{\pgfkeysalso{#3}} % \pgfkeysalso doesn't change the path
	},
}
\tikzstyle{inputNode}=[draw,circle,minimum size=10pt,inner sep=0pt]
\tikzstyle{stateTransition}=[-stealth, thick]

\usepackage{graphicx}

%\declareauthor{ni}{Nicolas}{red}
%\declareauthor{np}{Nicolas}{red}
%\declareauthor{mn}{Mnacho}{brown!70!black}
%\declareauthor{ra}{Rachid}{magenta}
%\declareauthor{me}{Mehdi}{blue}

% special graph notations

\newcommand{\iso}{\sim}
\newcommand{\acong}{\triplesim}
\newcommand{\sgr}{\preceq}
\newcommand{\sgrc}{\sgr_{\aclass}}

\newcommand{\agraph}{G}
\newcommand{\agraphB}{H}
\newcommand{\agraphC}{I}

\newcommand{\aclass}{{\cal C}}

% terms

\newcommand{\funs}{\Sigma}
\newcommand{\vars}{{\cal V}}
\newcommand{\ar}[1]{\mathit{ar}(#1)}
\newcommand{\terms}{{\cal T}}

\newcommand{\fv}[1]{\vars(#1)}
\newcommand{\id}{{\mathit id}}

% nodes (gates)

\newcommand{\innernodes}[1]{{\cal N}^*_{#1}}

\newcommand{\nodes}[1]{{\cal N}_{#1}}
\newcommand{\allnodes}{{\cal N}}
\newcommand{\anode}{\alpha}
\newcommand{\anodeB}{\beta}

% ports

\newcommand{\inputs}[1]{{\cal R}_{#1}}

% edges

\newcommand{\edges}[1]{{\cal E}_{#1}}
\newcommand{\mkedge}[2]{\left(#1 \rightarrow #2\right)}
\newcommand{\anyedge}[2]{(#1 \rightleftarrows #2)}

\newcommand{\mkgraph}[4]{\tuple{#1,#2,#3,#4}}

% labels

\newcommand{\labs}[1]{l_{#1}}

% vocabulary

\newcommand{\node}{node\xspace}

\newcommand{\edge}{edge\xspace}

\newcommand{\substitutable}{substitutable\xspace}
\newcommand{\graph}{graph\xspace}
\newcommand{\subgraph}{sub\graph}
\newcommand{\nsubstitution}{$\allnodes$-mapping\xspace}
\newcommand{\substitution}{substitution\xspace}
\newcommand{\merge}{merge\xspace}
\newcommand{\groot}{root\xspace}
\newcommand{\rootsimilar}{root-compatible\xspace}

% standard mathematical notations

\newcommand{\interv}[2]{[#1,#2]}
\newcommand{\dom}[1]{\mathit{dom}(#1)}
\newcommand{\img}[1]{\mathit{img}(#1)}
\newcommand{\tuple}[1]{\langle #1 \rangle}
\newcommand{\set}[1]{\{ #1 \}}
\newcommand{\setof}[2]{\left\{#1\,\middle|\:#2\right\}}
\newcommand{\isdef}{\stackrel{\scriptscriptstyle{\mathsf{def}}}{=}}
\newcommand{\restrict}[2]{#1|_{#2}}
\newcommand{\inv}[1]{#1^{-1}}

% rewriting
\newcommand{\replace}[3]{#1[#3/#2]}

\newtheorem{theorem}{Theorem}
\newtheorem{proposition}[theorem]{Proposition}
\newtheorem{lemma}[theorem]{Lemma}
\newtheorem{definition}[theorem]{Definition}
\newtheorem{corollary}[theorem]{Corollary}
\newtheorem{example}{Example}

\title{A Superposition-Based Calculus for Quantum Diagrammatic Reasoning and Beyond}
\author{Rachid Echahed and Mnacho Echenim\\ and Mehdi Mhalla and
  Nicolas Peltier\\  \\{Universit\'e Grenoble Alpes, Grenoble INP}\\ {CNRS, LIG, F-38000 Grenoble, France}
}
\date{March 2021}

\begin{document}
	
	\maketitle

        \begin{abstract}
          \noindent
          We introduce a class of rooted graphs which allows one to
          encode various kinds of classical or quantum circuits. We
          then follow a set-theoretic approach to define rewrite
          systems over the considered graphs and propose a new
          complete Superposition calculus which handles sets of
          formulas consisting of equations or
          disequations over these graphs.
        \end{abstract}
\section{Introduction}

%\comment[np]{Pour le titre, je trouve qu'il serait mieux de mettre en avant ``Superposition" plutôt que ``saturation", eg ``A Superposition-Based Calculus for Quantum Diagrammatic Reasoning" }\comment[me]{ and beyond ?}

Graph-based languages are 
%\comment[np]{au lieu de ``have become an increasingly popular''} 
a popular specification and verification tool, both in mathematics and computer science. They have become particularly useful in quantum computing, for the formal verification of quantum algorithms and protocols, and several such languages have been developed over the past years for such an analysis, including the ZW calculus \cite{ZW}, the ZH calculus \cite{ZH}, the ZX calculus \cite{ZX}, the SZX \cite{SZX} calculus, or more recently  the PBS-calculus  \cite{PBS}. The reason these languages are so useful is that standard approaches to verifying the correctness of quantum algorithms and protocols involve computations on complex matrices, and such approaches are non-intuitive and error-prone. On the other hand, graph-based languages permit to represent the same algorithms and protocols in a more intuitive way, by abstracting the numerical values and matrices, and replacing computations on these matrices by rewrite rules. Using graph-based languages to represent quantum algorithms thus permits to reduce many verification tasks to testing the equivalence of two graphs modulo a set of equations, which represent properties of the considered computations. It is also possible to test the equivalence between graphs in a user-friendly way using software such as, e.g., quantomatic\footnote{\url{https://quantomatic.github.io/}}. These tools permit to create graphs, define graph rewrite rules and apply them automatically, and even to  define simplification tactics. 
However, these tactics have to be specified by the user and are not necessarily complete.

A natural question is whether it is possible to find strategies that can be applied to graph rewrite rules in a purely automated way. % -- it is clear that rewriting the graphs blindly does not yield an efficient procedure --.
Such rewrite strategies {must fulfill several requirements}:
\begin{inparaenum}[(i)]
	\item{they
		must be {efficient} (i.e., as restrictive as possible with regards to the number of rule applications);}
	\item{they must be {complete}, in the sense that any two equivalent graphs must always be reducible to the same representative; and}
	\item{they must be {generic}, because graph languages and the equations that can be applied are very diverse and constantly evolving.}
\end{inparaenum}
The {Superposition\footnote{There is no relation with quantum superposition.} calculus} \cite{BG94} is a proof procedure that meets these three requirements. This is a {highly efficient calculus that permits to reason automatically on first-order theories involving equalities.}
It may be viewed as an extension of the Knuth-Bendix algorithm \cite{Knuth-Bendix} to a wider class of equational formul{\ae}, involving disjunctions (e.g., conditional rules) and quantifiers.
It may also be viewed as an extension of the Resolution calculus  \cite{Rob65,LEI96}, which is the most efficient proof procedure for predicate logic,
to sets of clauses containing the equality predicate.
The calculus is presented as a set of inference rules, deducing new assertions from axioms or previously generated assertions, together with a generic redundancy criterion that permits to prune the search space and discard many  inferences.
%This calculus also allows to solve 
%goals, e.g., to synthesize $x,y$ such that $t(x) = s(y)$ is a consequence of a set of axioms $E$.
This calculus is {complete}: if a  formula $\phi$ is deducible from a set of axioms $E$ then this is automatically detected by  the calculus. It is also generic, and can be uniformly applied  to any set of axioms 	that can be expressed in first-order logic.
{Most efficient theorem provers, such as Vampire \cite{RV01}, E \cite{Eprover} or Spass \cite{SPASS}, are based on the Superposition calculus.}

In this paper, we present an adaptation of the Superposition calculus to so-called \emph{graph rewrite systems}, i.e., to sets of equations involving graphs rather than first-order terms. 
%It is well-known that the Superposition calculus cannot be extended to the entire class of graphs. \comment[np]{the last sentence is perhaps too strong (maybe there is some way to extend the calculus, to circumvent the confluence issue, actually this seems  plausible to me, although the redundancy criterion would be probably much weaker).} 
The Superposition calculus cannot be extended to the entire class of graphs in a straightforward manner.
This is due to the fact that one of the fundamental properties that is required to prove completeness in the first-order case, namely confluence, does not hold for graphs. In fact, confluence on the class of graphs is not even decidable \cite{DBLP:conf/birthday/Plump05}. We thus exploit the fact that the graphs considered in the context of quantum computing are not arbitrary, and define a class that is general enough to encode the graph-based languages that are used for quantum diagrammatic reasoning.

The paper is decomposed as follows. After a recap on notions and notations, we introduce the class of graphs on which the extended Superposition calculus will be applied in Section \ref{sec:graphs}, along with a replacement operation that controls the way graphs can be rewritten. Formulas involving graphs are introduced in Section \ref{sec:lits}, as well as the way such formulas are interpreted. The extension of the Superposition calculus is defined in Section \ref{sec:pf-proc}, and this calculus is proved to be refutationally complete in Section \ref{sec:complete}. Section \ref{sec:circuits} illustrates the generality of our approach by considering the so-called subclass of \emph{circuits}, which is an interesting candidate to encode the graphs used in the ZX calculus.
%\comment[me]{which can be an interesting candidate to encode the graphs used in the ZX calculus? c'est juste qu'il faut justifier que comme on peut toujour orienter les generatuers on peut orienter un diagramme  et que si ce diagramme correspond a un circuit ou a un calcul quantique alors on peut trouver une orientation naturelle et retrouver comme quation certaines deformations , le CNOT avec un trait vertical veut dire que les deux orientations produisent le meme objet et que du coup on peut tester des egalite avec un choix arbitraire d'orientation. }
\section{Basic Definitions and Notations}

\label{sect:basics}

For any partial function $f$, we denote by $\dom{f}$ its domain, i.e., the set of elements $x$ such that $f(x)$ is defined. For any function $f$ and for any $D \subseteq \dom{f}$, we denote by $\restrict{f}{D}$ the restriction of $f$ to $D$. The function of  domain $\{ t_1,\dots,t_n\}$ mapping $t_i$ to $s_i$ 
(for all $i \in \interv{1}{n}$) is denoted by $\{ t_i \mapsto s_i \mid i \in \interv{1}{n} \}$.

Any partial function $f$ operating on some set $S$ may be extended into a function 
operating on tuples or sets of elements in $S$, using the relations: 
$f(\tuple{t_1,\dots,t_n}) \isdef \tuple{f(t_1),\dots,f(t_n)}$
and
$f(\set{t_1,\dots,t_n}) \isdef \set{f(t_1),\dots,f(t_n)}$. 
If $t_i \not \in \dom{f}$ for some $i = 1,\dots,n$ then 
$f(\tuple{t_1,\dots,t_n})$
and 
$f(\set{t_1,\dots,t_n})$ are undefined.
These relations may be applied recursively, e.g., if $t_1,\dots,t_n$ are themselves sets or tuples of elements.

The notation $g \circ f$ denotes as usual the composition of $g$ and $f$, 
with $\dom{g \circ f} = \{ x \in \dom{f} \mid f(x) \in \dom{g}\}$
and $(g \circ f)(x) = g(f(x))$ for every $x \in \dom{g \circ f}$.
According to the previous convention, if functions are viewed as sets of pairs, then
for every function $f$ and for every injective function $g$ such that 
$\img{f} \cup \dom{f} \subseteq \dom{g}$,
$g(f)$ is a function with $\dom{g(f)} = g(\dom{f})$, $\img{g(f}) = g(\img{f})$, and
$(g(f))(g(x)) = g(f(x))$, for any $x \in \dom{f}$.
Note that, by definition, $g(f) = g \circ f \circ g^{-1}$.

With a slight abuse of notations, we sometimes use set notations on tuples, i.e., we write $u \in  \tuple{t_1,\dots,t_n}$ to state that 
$u \in \set{t_1,\dots,t_n}$, $\tuple{t_1,\dots,t_n} \cup E$ to denote the set
$\set{t_1,\dots,t_n} \cup E$ or $\tuple{t_1,\dots,t_n} = \tuple{t_1',\dots,t_m'} \cup \tuple{t_1'',\dots,t_k''}$ for 
$\{ t_1,\dots,t_n \} = \{ t_1',\dots,t_m',t_1'',\dots,t_k''\}$. 
 
% we denote by $g(f)$ the function 
%$\{ g(x) \rightarrow g(f(x)) \mid x \in \dom{f} \}$.
%Note that $g(f)$ is well-defined since $g$ is injective.

%For any set $S$, an {\em \sequence{S}} is a finite sequence of elements of $S$, with no repetition.

\begin{definition}
\label{def:term}
Let $\funs$ be a set of {\em function symbols}  and let $\vars$ be a set of {\em variables}.
Each symbol $f \in \Sigma$ is associated with a unique arity
$\ar{f}$.
The set of {\em terms} $\terms$ is the least set such that 
$\vars \subseteq \terms$ and $t_1,\dots,t_n \in \terms \Rightarrow f(t_1,\dots,t_n)\in \terms$, for every $f\in \funs$ with $n = \ar{f}$.
We denote by $\fv{t}$ the set of variables occurring in $t$.
A term $t$ is {\em ground} if $\fv{t}= \emptyset$.
\end{definition} 

A {\em \substitution} $\sigma$ is a total mapping from $\vars$ to $\terms$.
As usual, for any term $t$, $\sigma(t)$ denotes the
term obtained from $t$ by replacing every variable $x$ by
$\sigma(x)$. 
A substitution $\sigma$ is {\em ground} if 
$\sigma(x)$ is ground for every $x\in \dom{\sigma}$. 
%
%\comment[ni]{some defs}
%
A {\em unifier} of a set of pairs $E$ is a \substitution $\sigma$ 
such that $\sigma(t) = \sigma(s)$ holds for all pairs $(t,s) \in E$.
$E$ is {\em unifiable} if it admits a unifier. 
A \substitution $\sigma$ is {\em more general than a \substitution $\theta$}
iff there exists a substitution $\sigma'$ such that $\theta = \sigma' \circ \sigma$.
It is well known that every unifiable set of pairs admits a most general unifier (mgu).

\section{\capitalisewords{graphs}}\label{sec:graphs}

\subsection{\capitalisewords{definitions}}

We define the class of graphs on which the adaptation of the Superposition calculus will be applied. Intuitively, this class consists of labeled graphs with a distinguished sequence of nodes which can be viewed as an interface that constrains the operations of replacing a subgraph by another graph that can be performed.

%\comment[ni]{added: (don't know if useful)}\comment[me] { c'est utile je pense d'autant qu'on avait dit que c l'extension de  mapping definis sur les sommets aux labels est l'identite }

\newcommand{\sorts}{{\tt S}}
\newcommand{\asort}{{\tt s}}
\newcommand{\sort}{sort\xspace}
\newcommand{\leqnodes}{\trianglelefteq}
\newcommand{\lessnodes}{\vartriangleleft}
\newcommand{\eqnodes}{\triangleq}
\newcommand{\sortof}{\mathit{sort}}

Let $\allnodes$ be a fixed countable set of {\em {\node}s}, 
%\comment[ni]{added:} 
disjoint from the set of terms,
and let $\sorts$ be a set of {\em {\sort}s}.
We consider a function $\sortof$ mapping every node $\anode \in \allnodes$
to a \sort in $\sorts$ and a pre-order $\leqnodes$ on $\allnodes$.
We write $\anode \eqnodes \anode'$ if $\anode \leqnodes \anode'$
and $\anode' \leqnodes \anode$, and $\anode \lessnodes \anode'$ 
for $\anode \leqnodes \anode'$
and $\anode' \not \leqnodes \anode$.

\begin{definition}
\label{def:graph}
A {\em \graph} $\agraph$ is a
tuple $\mkgraph{\nodes{\agraph}}{\inputs{\agraph}}{\edges{\agraph}}{\labs{\agraph}}$
where:
\begin{itemize}
\item{$\nodes{\agraph}$ is a finite subset of $\allnodes$.}
\item{$\inputs{\agraph}$ is a finite sequence of {\node}s in $\nodes{\agraph}$, with no repetition, called the {\em {\groot}s of $G$}. We denote by $\innernodes{\agraph}$ the set of nodes occurring in $\nodes{\agraph}$ but not in $\inputs{\agraph}$.}

\item{$\edges{\agraph}$ is a finite set of ordered pairs (written $\mkedge{\anode}{\anodeB}$) of {\node}s in $\nodes{\agraph}$, called {\em {\edge}s}. We write $\anyedge{\anode}{\anodeB}$ to denote any edge $\mkedge{\anode}{\anodeB}$ or $\mkedge{\anodeB}{\anode}$}
\item{$\labs{\agraph}$ is a function mapping each \node $\anode \in \innernodes{\agraph}$  to a term (called the {\em label} of $\anode$).}
\end{itemize}
Two {\graph}s $\agraph$ and $\agraphB$ are {\em disjoint}
if $\nodes{\agraph} \cap \nodes{\agraphB} = \emptyset$.
We denote by $\fv{\agraph} \isdef \bigcup_{\anode\in\innernodes{\agraph}} \fv{\labs{\agraph}(\anode)}$ the set of variables occurring in the label of some node in $\agraph$.
A \graph $\agraph$ is {\em ground} if $\fv{\agraph} = \emptyset$.
\end{definition}

Note that the function $\labs{\agraph}$ is not defined over the
nodes in $\inputs{\agraph}$.

\begin{definition}
For any substitution $\sigma$ and \graph $\agraph$, $\sigma(\agraph)$ denotes the \graph $\mkgraph{\nodes{\agraph}}{\inputs{\agraph}}{\edges{\agraph}}{\labs{}'}$, where $\labs{}'(\anode) = \sigma(\labs{\agraph}(\anode))$, for all $\anode\in \innernodes{\agraph}$.	
\end{definition}

%If $\vec{s} = (s_1,\dots,s_n)$ and $\vec{s'} = (s_1',\dots,s_n')$ are two sequences of {\port}s the same length, then we denote by $\comp{\vec{s}}{\vec{s'}}$ the set of {\edge}s $\{ \{ s_i,s_i' \} \mid i \in \interv{1}{n} \}$.

%The {\em \frontier} between two disjoint sets of {\node}s
%$S$ and $S'$ is the set of {\edge}s 
%$\anedge \in \edges{\agraph}$ that are of the form 
%$\{ \anode_\aport, \anodeB_\aportB \}$
%with $\anode \in S$, $\anodeB \in S'$.
% 
\begin{definition}
\label{def:subgraph}
A \graph $\agraphB$ is a {\em \subgraph}
of $\agraph$ (written $\agraphB \sgr \agraph$) if the following conditions hold:
\begin{enumerate}
\item{$\nodes{\agraphB} \subseteq \nodes{\agraph}$; \label{subgraph:nodes}}
%\item{$\mkedge{\anode}{\anodeB}\in \edges{\agraphB} \iff \anode,\anodeB \in \nodes{\agraphB}$ and
%$\mkedge{\anode}{\anodeB}\in \edges{\agraph}$. \label{subgraph:edges}
%}

\item{If $\mkedge{\anode}{\anodeB}\in \edges{\agraphB}$ then
$\mkedge{\anode}{\anodeB}\in \edges{\agraph}$. \label{subgraph:edges} %\comment[ni]{I replaced the equivalence by two implications, with a slightly less restrictive condition for the $\Leftarrow$ part. The previous condition was too strong (see Prop \ref{prop:merge}). Maybe the current condition is still too strong, because a rewrite rule will not be applicable if there are additional edges among the labeled nodes. To discuss}
}
\item{If $\mkedge{\anode}{\anodeB}\in \edges{\agraph}$ and  $\anode,\anodeB \in \nodes{\agraphB}$ 
%\comment[ni]{instead of $\anode,\anodeB \in \innernodes{\agraphB}$}
then $\mkedge{\anode}{\anodeB} \in \edges{\agraphB}$. \label{subgraph:edges_bis} 
%\comment[mn]{I think the case where one of $\alpha$, $\beta$ is in $\inputs{\agraphB}$ needs to be considered. I think for Lemma \ref{lm:sub-subg} to work, we should have ``If $\alpha,\beta \in \nodes{\agraphB}$ and $\set{\alpha, \beta} \not\subseteq \inputs{\agraphB}$ (\ldots) ''}\comment[ni]{I changed back to previous version. To discuss}
}

\item{If $\anyedge{\anode}{\anodeB} \in \edges{\agraph}$, $\anodeB \not \in \nodes{\agraphB}$
and $\anode\in \nodes{\agraphB}$, then $\anode$ occurs in $\inputs{\agraphB}$.  \label{subgraph:inputs}
}
\item{If $\anode$ occurs in $\inputs{\agraph}$ and $\anode \in \nodes{\agraphB}$, then 
$\anode$ occurs in $\inputs{\agraphB}$.  \label{subgraph:inputs_bis}
}
\item{$\labs{\agraphB}$ is the restriction of $\labs{\agraph}$ to the {\node}s in $\innernodes{\agraphB}$.\label{subgraph:labs}}

\end{enumerate}
\end{definition}

\begin{proposition}
The relation $\sgr$ is transitive and reflexive.
\end{proposition}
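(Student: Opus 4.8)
The plan is to verify both properties directly from Definition~\ref{def:subgraph}, checking each of its six conditions in turn.

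For reflexivity, I would take $\agraphB = \agraph$. Conditions \ref{subgraph:nodes}, \ref{subgraph:edges}, \ref{subgraph:edges_bis} and \ref{subgraph:inputs_bis} then hold trivially, each being of the form $X \subseteq X$ or having identical premise and conclusion; condition \ref{subgraph:labs} holds because $\labs{\agraph}$ has domain $\innernodes{\agraph}$ by Definition~\ref{def:graph}, so restricting it to $\innernodes{\agraph}$ gives $\labs{\agraph}$ itself; and condition \ref{subgraph:inputs} holds vacuously, since every node occurring in an edge of $\agraph$ belongs to $\nodes{\agraph}$, so its hypothesis $\anodeB \not\in \nodes{\agraph}$ is never satisfied.

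For transitivity, I would assume $\agraphC \sgr \agraphB$ and $\agraphB \sgr \agraph$ and establish $\agraphC \sgr \agraph$. Conditions \ref{subgraph:nodes} and \ref{subgraph:edges} follow at once by chaining the corresponding conditions for the two given inclusions (using transitivity of $\subseteq$ for the nodes). For condition \ref{subgraph:edges_bis}, given $\mkedge{\anode}{\anodeB} \in \edges{\agraph}$ with $\anode,\anodeB \in \nodes{\agraphC}$, I would use $\nodes{\agraphC} \subseteq \nodes{\agraphB}$ and condition \ref{subgraph:edges_bis} for $\agraphB \sgr \agraph$ to get $\mkedge{\anode}{\anodeB} \in \edges{\agraphB}$, then apply condition \ref{subgraph:edges_bis} for $\agraphC \sgr \agraphB$; condition \ref{subgraph:inputs_bis} is handled analogously, chaining through $\inputs{\agraphB}$.

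The two delicate points are conditions \ref{subgraph:inputs} and \ref{subgraph:labs}. For condition \ref{subgraph:inputs}, suppose $\anyedge{\anode}{\anodeB} \in \edges{\agraph}$ with $\anode \in \nodes{\agraphC}$ and $\anodeB \not\in \nodes{\agraphC}$; I would split on whether $\anodeB \in \nodes{\agraphB}$. If $\anodeB \in \nodes{\agraphB}$, then both endpoints lie in $\nodes{\agraphB}$, so condition \ref{subgraph:edges_bis} for $\agraphB \sgr \agraph$ (applied to whichever orientation of the edge is present) gives $\anyedge{\anode}{\anodeB} \in \edges{\agraphB}$, and condition \ref{subgraph:inputs} for $\agraphC \sgr \agraphB$ then places $\anode$ in $\inputs{\agraphC}$. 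If $\anodeB \not\in \nodes{\agraphB}$, then condition \ref{subgraph:inputs} for $\agraphB \sgr \agraph$ places $\anode$ in $\inputs{\agraphB}$, and condition \ref{subgraph:inputs_bis} for $\agraphC \sgr \agraphB$ then places $\anode$ in $\inputs{\agraphC}$. For condition \ref{subgraph:labs}, I would first prove the auxiliary inclusion $\innernodes{\agraphC} \subseteq \innernodes{\agraphB}$: a node $\anode \in \innernodes{\agraphC}$ lies in $\nodes{\agraphC} \subseteq \nodes{\agraphB}$, and it cannot occur in $\inputs{\agraphB}$, for otherwise condition \ref{subgraph:inputs_bis} for $\agraphC \sgr \agraphB$ would force it into $\inputs{\agraphC}$, contradicting $\anode \in \innernodes{\agraphC}$. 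Given this, for every $\anode \in \innernodes{\agraphC}$ the two instances of condition \ref{subgraph:labs} yield $\labs{\agraphC}(\anode) = \labs{\agraphB}(\anode) = \labs{\agraph}(\anode)$, which is exactly the required restriction equality. The only real obstacle is organizing the case split in condition \ref{subgraph:inputs} and noticing that condition \ref{subgraph:labs} needs the inner-node inclusion; everything else is routine chaining.
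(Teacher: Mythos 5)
Your proof is correct and follows essentially the same route as the paper's: a direct condition-by-condition verification, with the same case split on whether $\anodeB \in \nodes{\agraphB}$ for the input-node condition and the same auxiliary inclusion $\innernodes{\agraphC} \subseteq \innernodes{\agraphB}$ for the labels. If anything, your write-up is slightly more complete, since you explicitly check condition \ref{subgraph:inputs_bis} in the transitivity argument, which the paper's proof passes over.
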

\begin{proof}
It is immediate to check that $\sgr$ is reflexive.
Assume that $\agraphC \sgr \agraphB \sgr \agraph$.
Then:
\begin{enumerate}
\item{$\nodes{\agraphC}  \subseteq \nodes{\agraphB} \subseteq \nodes{\agraph}$ hence 
$\nodes{\agraphC}  \subseteq  \nodes{\agraph}$.}
\item{We have $\mkedge{\anode}{\anodeB}\in \edges{\agraphC} \Rightarrow \mkedge{\anode}{\anodeB}\in \edges{\agraphB} \Rightarrow \mkedge{\anode}{\anodeB}\in \edges{\agraph}$.}
\item{If $\mkedge{\anode}{\anodeB}\in \edges{\agraph}$ and $\anode,\anodeB\in  \innernodes{\agraphC}$, then by Condition \ref{subgraph:inputs_bis} of Definition \ref{def:subgraph}
$\anode,\anodeB\in  \innernodes{\agraphB}$, thus $\mkedge{\anode}{\anodeB}\in \edges{\agraphB}$ and
$\mkedge{\anode}{\anodeB}\in \edges{\agraphC}$.}
%\item{$\mkedge{\anode}{\anodeB}\in \edges{\agraphC}$ is equivalent to: $
%\anode,\anodeB \in \nodes{\agraphC}$ and
%$\mkedge{\anode}{\anodeB}\in \edges{\agraphB}$, i.e. to: 
%$
%\anode,\anodeB \in \nodes{\agraphC} \cap \nodes{\agraphB} = \nodes{\agraphC}$,
%and
%$\mkedge{\anode}{\anodeB}\in \edges{\agraph}$.}
\item{Assume that $\anyedge{\anode}{\anodeB} \in \edges{\agraph}$, $\anode\in \nodes{\agraphC}$ and $\anodeB \not \in \nodes{\agraphC}$. If 
$\anodeB \in \nodes{\agraphB}$, then $\anyedge{\anode}{\anodeB} \in \edges{\agraphB}$ and we deduce that $\anode$ must occur in $\inputs{\agraphC}$ because $\agraphC \sgr \agraphB$.
If
$\anodeB \not \in \nodes{\agraphB}$ then
$\anode$ occurs in $\inputs{\agraphB}$ since $\agraphB \sgr \agraph$.
Thus $\anode \in \inputs{\agraphC}$, since $\agraphC \sgr \agraphB$.
}

\item{$\labs{\agraphC} = \restrict{\labs{\agraphB}}{\innernodes{\agraphC}} = \restrict{(\restrict{\labs{\agraph}}{\innernodes{\agraphB}})}{\innernodes{\agraphC}} = \restrict{\labs{\agraph}}{\innernodes{\agraphC}}$ (because 
$\innernodes{\agraphC} \subseteq \innernodes{\agraphB}$).}
\end{enumerate}
\end{proof}

%Let $\agraph$ be a \graph and let $\agraphB$ be a a \subgraph of $\agraph$.
%We denote by $\agraph \grminus \agraphB$ the \subgraph $\agraphC$ of $\agraph$ such that $\nodes{\agraphC} = \nodes{\agraph} \setminus \nodes{\agraphB}$
%and 
%$\inputs{\agraphC} = \vec{\aport}.\interface{\agraph}{\agraphB}$ where
%$\vec{\aport}$ is the subsequence of the elements of $\inputs{\agraph}$ that occur in 
%$\ports{\agraphC}$.

%Let $\agraph_1$ and $\agraph_2$ be two {\graph}s and let $\vec{\aport}_i$ be subsequences 
%of $\inputs{\agraph_i}$, of the same length.
%We denote by $\agraph_1 \grunion{\vec{\aport_1}}{\vec{\aport_2}} \agraph_2$ the \graph $\agraphB$ such that:
%\begin{itemize}
%\item{$\nodes{\agraphB} = \nodes{\agraph_1} \cup \nodes{\agraph_2}$.}
%\item{$\ports{\agraphB} = \ports{\agraph_1} \cup \ports{\agraph_2}$ and 
%$\nodetoports{\agraphB}(\anode) = \nodetoports{\agraph_i}(\anode)$ if $\anode \in \nodes{\agraph_i}$.}
%\item{$\edges{\agraphB} = \edges{\agraph_1} \cup \edges{\agraph_2} \cup \comp{\vec{\aport_1}}{\vec{\aport_2}}$.}
%
%\end{itemize}

%\comment[mn]{Addition}

\begin{proposition}\label{prop:sub_inputs_include}
	If $\agraphB\sgr \agraph$ then $\inputs{\agraph} \subseteq \nodes{\agraph} \setminus \innernodes{\agraphB}$.
\end{proposition}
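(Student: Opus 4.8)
The plan is to unfold the definitions and reduce the claim to Condition~\ref{subgraph:inputs_bis} of Definition~\ref{def:subgraph}. First I would fix an arbitrary node $\anode$ occurring in $\inputs{\agraph}$; by Definition~\ref{def:graph} the sequence $\inputs{\agraph}$ consists of nodes in $\nodes{\agraph}$, so $\anode \in \nodes{\agraph}$ and it remains only to show $\anode \notin \innernodes{\agraphB}$.

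Next I would split into two cases according to whether $\anode \in \nodes{\agraphB}$. If $\anode \notin \nodes{\agraphB}$, then since $\innernodes{\agraphB} \subseteq \nodes{\agraphB}$ by definition, we immediately obtain $\anode \notin \innernodes{\agraphB}$. If $\anode \in \nodes{\agraphB}$, then since $\anode$ occurs in $\inputs{\agraph}$ and $\agraphB \sgr \agraph$, Condition~\ref{subgraph:inputs_bis} of Definition~\ref{def:subgraph} gives that $\anode$ occurs in $\inputs{\agraphB}$; by the definition of $\innernodes{\agraphB}$ (the nodes of $\agraphB$ that do not occur in $\inputs{\agraphB}$), this again yields $\anode \notin \innernodes{\agraphB}$.

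Since $\anode$ was an arbitrary node of $\inputs{\agraph}$, we conclude $\inputs{\agraph} \subseteq \nodes{\agraph} \setminus \innernodes{\agraphB}$. There is essentially no obstacle here: the statement is an almost immediate consequence of the subgraph conditions, and the only points requiring care are the bookkeeping between ``occurring in the sequence $\inputs{\cdot}$'' and ``belonging to $\innernodes{\cdot}$'', and the reminder that roots are always counted among the nodes.
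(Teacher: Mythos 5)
Your proof is correct and follows essentially the same route as the paper: both arguments reduce the claim to Condition~\ref{subgraph:inputs_bis} of Definition~\ref{def:subgraph}, yours phrased directly with a case split on whether $\anode \in \nodes{\agraphB}$, the paper's phrased as a one-line proof by contradiction. The two are interchangeable.
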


\begin{proof}
	Assume for a contradiction that $\anode \in \inputs{\agraph} \cap \innernodes{\agraphB}$. Then by definition, $\anode$ is a node in $\nodes{\agraphB}$ that does not occur in $\inputs{\agraphB}$, but this is impossible by Condition \ref{subgraph:inputs_bis} of Definition \ref{def:subgraph}.
\end{proof}

\newcommand{\amap}{\mu}

In what follows, we define an equivalence relation on graphs to formalize the fact that the properties we prove do not depend on the actual nodes that occur in the graph.

\begin{definition} 
An {\em \nsubstitution} $\amap$ is a partial injective mapping from {\node}s to {\node}s, %\comment[ni]{added:}
such that $\sortof(\anode) = \sortof(\amap(\anode))$
and %\comment[ni]{change here}
 $\anode \eqnodes \amap(\anode)$. 
%\comment[ni]{as noticed by Mehdi, must extend {\nsubstitution}s to terms} 
For technical convenience we assume that
$\amap(t) = t$ holds for every term $t$, i.e., 
every \nsubstitution is extended to the identity on terms.
\end{definition}

\begin{proposition}
If $\agraph$ is a \graph, then for every \nsubstitution $\amap$ with $\dom{\amap} \supseteq \nodes{\agraph}$,
$\amap(\agraph)$ is  a \graph. Moreover, $\nodes{\amap(\agraph)} = \amap(\nodes{\agraph})$, 
$\edges{\amap(\agraph)} = \amap(\edges{\agraph})$, 
$\inputs{\amap(\agraph)} = \amap(\inputs{\agraph})$
and
$\labs{\amap(\agraph)} = \amap(\labs{\agraph}) = \amap \circ \labs{\agraph} \circ \amap^{-1} = \labs{\agraph} \circ \amap^{-1}$ 
%\comment[mn]{I think we simply have $\labs{\amap(\agraph)} = \labs{\agraph} \circ \amap^{-1}$. Maybe we can simply give a definition of $\amap(\agraph)$?}\comment[ni]{true}
\end{proposition}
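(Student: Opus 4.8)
The plan is to unfold $\amap(\agraph)$ according to the convention that extends partial functions to tuples, sets and sequences, check that the resulting tuple meets the four requirements of Definition~\ref{def:graph}, and then read off the component equalities and the chain of equalities for the labelling function.

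First I would spell out that, applying to $\amap$ (viewed as a partial function on $\allnodes\cup\terms$) this extension convention, one gets $\amap(\agraph)=\tuple{\amap(\nodes{\agraph}),\,\amap(\inputs{\agraph}),\,\amap(\edges{\agraph}),\,\amap(\labs{\agraph})}$. This is well defined: since $\dom{\amap}\supseteq\nodes{\agraph}$, $\amap$ is defined on every node of $\agraph$, hence on every edge (an ordered pair of such nodes) and on $\dom{\labs{\agraph}}=\innernodes{\agraph}\subseteq\nodes{\agraph}$, and $\img{\labs{\agraph}}\subseteq\terms\subseteq\dom{\amap}$ because every \nsubstitution is extended to the identity on $\terms$.

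Next I would verify the four conditions of Definition~\ref{def:graph}. That $\amap(\nodes{\agraph})$ is a finite subset of $\allnodes$ is immediate, since $\amap$ maps nodes to nodes. That $\amap(\inputs{\agraph})$ is a repetition-free sequence of elements of $\amap(\nodes{\agraph})$ uses the injectivity of $\amap$: distinct entries of $\inputs{\agraph}$ have distinct images. That $\amap(\edges{\agraph})$ is a finite set of ordered pairs of elements of $\amap(\nodes{\agraph})$ follows at once from the fact that both endpoints of every edge of $\agraph$ lie in $\nodes{\agraph}$. For the last condition, the key point is that $\innernodes{\amap(\agraph)}=\amap(\innernodes{\agraph})$: since $\amap$ is injective, the image of the set difference $\nodes{\agraph}\setminus\{\text{nodes of }\inputs{\agraph}\}$ is the difference of the images, and the image of the set of nodes of $\inputs{\agraph}$ is the set of nodes of $\amap(\inputs{\agraph})$. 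Consequently $\amap(\labs{\agraph})$ is a function with domain $\amap(\innernodes{\agraph})=\innernodes{\amap(\agraph)}$ whose image $\amap(\img{\labs{\agraph}})=\img{\labs{\agraph}}$ (again because $\amap$ fixes terms) consists of terms, so it is a legitimate labelling function. This shows $\amap(\agraph)$ is a \graph, and the equalities $\nodes{\amap(\agraph)}=\amap(\nodes{\agraph})$, $\edges{\amap(\agraph)}=\amap(\edges{\agraph})$ and $\inputs{\amap(\agraph)}=\amap(\inputs{\agraph})$ are then nothing but the identification of the components of the tuple $\amap(\agraph)$.

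For the labelling chain, $\labs{\amap(\agraph)}=\amap(\labs{\agraph})$ has just been established; $\amap(\labs{\agraph})=\amap\circ\labs{\agraph}\circ\amap^{-1}$ is the instance, recorded in Section~\ref{sect:basics}, of the identity $g(f)=g\circ f\circ g^{-1}$ valid whenever $g$ is injective and $\img{f}\cup\dom{f}\subseteq\dom{g}$, hypotheses checked in the first step; and $\amap\circ\labs{\agraph}\circ\amap^{-1}=\labs{\agraph}\circ\amap^{-1}$ since the image of $\labs{\agraph}\circ\amap^{-1}$ is contained in $\img{\labs{\agraph}}\subseteq\terms$, on which $\amap$ acts as the identity. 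I do not expect a genuine obstacle here: the argument is pure bookkeeping, and the only two points needing care are that the injectivity of $\amap$ is exactly what makes both the no-repetition condition on the roots and the identity $\innernodes{\amap(\agraph)}=\amap(\innernodes{\agraph})$ go through, and that the convention ``$\amap$ is the identity on $\terms$'' is what simultaneously keeps $\amap(\labs{\agraph})$ valued in $\terms$ and collapses the last two terms of the chain.
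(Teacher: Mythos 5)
Your proof is correct and takes essentially the same route as the paper, which simply asserts that the result ``follows immediately from the definition of $\amap(\agraph)$ and from the fact that $\amap$ is injective''; your write-up just makes explicit the bookkeeping (injectivity for the repetition-free roots and for $\innernodes{\amap(\agraph)}=\amap(\innernodes{\agraph})$, and the identity-on-terms convention for the labelling chain) that the paper leaves implicit.
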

\begin{proof}
The result follows immediately from the definition of $\amap(\agraph)$ (see Section \ref{sect:basics}) and from the fact that $\amap$ is injective. \end{proof}

%\comment[mn]{addition}

Similarly, because {\nsubstitution}s are injective, we have the following result:

\begin{proposition}
	\label{prop:mapsgr}
	If $\agraphB \sgr \agraph$ and $\nodes{\agraph} \subseteq \dom{\amap}$ for an \nsubstitution $\amap$, then $\amap(\agraphB) \sgr \amap(\agraph)$.
\end{proposition}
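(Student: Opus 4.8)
The plan is to unfold Definition~\ref{def:subgraph} and verify its six conditions for the pair $\amap(\agraphB),\amap(\agraph)$, relying throughout on the preceding proposition --- which says that $\amap$ commutes with the graph constructors, i.e.\ $\nodes{\amap(\agraph)} = \amap(\nodes{\agraph})$, $\edges{\amap(\agraph)} = \amap(\edges{\agraph})$, $\inputs{\amap(\agraph)} = \amap(\inputs{\agraph})$ and $\labs{\amap(\agraph)} = \labs{\agraph}\circ\amap^{-1}$, and similarly for $\agraphB$ --- together with the injectivity of $\amap$. First I would note that Condition~\ref{subgraph:nodes} of $\agraphB\sgr\agraph$ gives $\nodes{\agraphB}\subseteq\nodes{\agraph}\subseteq\dom{\amap}$, so $\amap(\agraphB)$ and $\amap(\agraph)$ are both well-defined graphs; and, again by injectivity, $\innernodes{\amap(\agraphB)} = \amap(\innernodes{\agraphB})$, since $\innernodes{\agraphB} = \nodes{\agraphB}\setminus\inputs{\agraphB}$.

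For the ``forward'' conditions (\ref{subgraph:nodes} and~\ref{subgraph:edges}) I would simply apply $\amap$ to the corresponding inclusion for $\agraphB\sgr\agraph$, since a function preserves inclusions of sets. For the ``backward'' conditions (\ref{subgraph:edges_bis}, \ref{subgraph:inputs}, \ref{subgraph:inputs_bis}), which assert that certain edges or input occurrences of $\amap(\agraph)$ must already be present in $\amap(\agraphB)$, the recurring move is: every element of $\edges{\amap(\agraph)}$, $\nodes{\amap(\agraph)}$ or $\inputs{\amap(\agraph)}$ is the $\amap$-image of a corresponding element of $\agraph$ whose endpoints lie in $\nodes{\agraph}\subseteq\dom{\amap}$; by injectivity of $\amap$, such an image lies in $\amap(\nodes{\agraphB})$ (resp.\ does not) iff its preimage lies in $\nodes{\agraphB}$ (resp.\ does not). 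One then invokes the matching condition for $\agraphB\sgr\agraph$ on the preimages and pushes the conclusion back through $\amap$. For Condition~\ref{subgraph:labs}, for $\anode=\amap(\anode')$ with $\anode'\in\innernodes{\agraphB}$ I would compute $\labs{\amap(\agraphB)}(\anode) = \labs{\agraphB}(\anode') = \labs{\agraph}(\anode') = \labs{\amap(\agraph)}(\anode)$, using Condition~\ref{subgraph:labs} for $\agraphB\sgr\agraph$ in the middle equality, which gives exactly that $\labs{\amap(\agraphB)}$ is the restriction of $\labs{\amap(\agraph)}$ to $\innernodes{\amap(\agraphB)}$.

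The step needing the most care is Condition~\ref{subgraph:inputs}: given $\anyedge{\anode}{\anodeB}\in\edges{\amap(\agraph)}$ with $\anode\in\nodes{\amap(\agraphB)}$ and $\anodeB\not\in\nodes{\amap(\agraphB)}$, one writes $\anode=\amap(\anode')$ and $\anodeB=\amap(\anodeB')$ with $\anyedge{\anode'}{\anodeB'}\in\edges{\agraph}$; here it is essential to use that \emph{both} $\anode'$ and $\anodeB'$ belong to $\nodes{\agraph}\subseteq\dom{\amap}$ before appealing to injectivity to transfer ``$\anode\in\amap(\nodes{\agraphB})$'' to ``$\anode'\in\nodes{\agraphB}$'' and ``$\anodeB\not\in\amap(\nodes{\agraphB})$'' to ``$\anodeB'\not\in\nodes{\agraphB}$''. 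Condition~\ref{subgraph:inputs} for $\agraphB\sgr\agraph$ then yields that $\anode'$ occurs in $\inputs{\agraphB}$, whence $\anode=\amap(\anode')$ occurs in $\amap(\inputs{\agraphB}) = \inputs{\amap(\agraphB)}$. Apart from this bookkeeping about domains, the whole argument is a routine transfer of each defining condition of $\sgr$ along the injection $\amap$.
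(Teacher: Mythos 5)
Your proof is correct and matches the paper's intent: the paper states this proposition without a written proof, merely remarking that it holds ``because $\allnodes$-mappings are injective,'' and your condition-by-condition verification of Definition~\ref{def:subgraph} (transferring each forward condition by applying $\amap$ and each backward condition by pulling back along the injection) is precisely the routine argument the paper leaves implicit.
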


\begin{definition}
We write $\agraph \iso \agraph'$ if there exists $\amap$ such that $\dom{\amap} \supseteq \nodes{\agraph'}$ and
$\agraph = \amap(\agraph')$.	
\end{definition}

\begin{proposition}
The relation $\iso$ is an equivalence relation.
\end{proposition}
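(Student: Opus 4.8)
The plan is to check the three defining properties of an equivalence relation separately, in each case exhibiting an explicit \nsubstitution that witnesses the required relationship and verifying that it meets the constraints in the definition of an \nsubstitution (partial injectivity, sort-preservation, and $\anode \eqnodes \amap(\anode)$), and then invoking the componentwise description of $\amap(\agraph)$ established in the proposition above to conclude. For reflexivity, take $\amap$ to be the identity on $\nodes{\agraph}$ (extended, as always, by the identity on terms): it is injective, preserves sorts, and satisfies $\anode \eqnodes \anode$ because $\leqnodes$ is reflexive; and $\amap(\agraph) = \agraph$ is immediate, so $\agraph \iso \agraph$.

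For symmetry, assume $\agraph = \amap(\agraph')$ with $\dom{\amap} \supseteq \nodes{\agraph'}$. Since $\amap$ is injective, $\amap^{-1}$ is a well-defined partial injective map, and $\dom{\amap^{-1}} = \img{\amap} \supseteq \amap(\nodes{\agraph'}) = \nodes{\amap(\agraph')} = \nodes{\agraph}$. I would then check $\amap^{-1}$ is an \nsubstitution: for $\anode \in \dom{\amap^{-1}}$, writing $\anodeB = \amap^{-1}(\anode)$ so that $\anode = \amap(\anodeB)$, sort-preservation of $\amap$ gives $\sortof(\amap^{-1}(\anode)) = \sortof(\anodeB) = \sortof(\anode)$, and from $\anodeB \eqnodes \amap(\anodeB) = \anode$ we get $\amap^{-1}(\anode) \eqnodes \anode$ since $\eqnodes$ is symmetric by construction. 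Finally $\amap^{-1}(\agraph) = \amap^{-1}(\amap(\agraph')) = \agraph'$, because $\amap^{-1} \rond \amap$ is the identity on $\dom{\amap} \supseteq \nodes{\agraph'}$ (in particular on $\innernodes{\agraph'}$, so the labels agree); hence $\agraph' \iso \agraph$.

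For transitivity, assume $\agraph = \amap(\agraphB)$ and $\agraphB = \amap'(\agraphC)$ with $\nodes{\agraphB} \subseteq \dom{\amap}$ and $\nodes{\agraphC} \subseteq \dom{\amap'}$. I would use the composition $\amap \rond \amap'$: it is injective as a composition of injective partial maps; its domain contains $\nodes{\agraphC}$, since for $\anode \in \nodes{\agraphC}$ we have $\amap'(\anode) \in \nodes{\amap'(\agraphC)} = \nodes{\agraphB} \subseteq \dom{\amap}$; and it is sort-preserving and satisfies $(\amap \rond \amap')(\anode) \eqnodes \anode$ by chaining the analogous properties of $\amap'$ and $\amap$ through the transitivity of $\eqnodes$ (which follows from transitivity of $\leqnodes$). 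Then $(\amap \rond \amap')(\agraphC) = \amap(\amap'(\agraphC)) = \amap(\agraphB) = \agraph$, using the identity $g(f) = g \rond f \rond g^{-1}$ together with the componentwise description of the image of a graph, so $\agraph \iso \agraphC$.

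The only delicate points are bookkeeping: tracking the domains of the partial maps so that every expression $\amap(\agraph)$ is actually defined, and verifying $\amap(\amap'(\agraphC)) = (\amap \rond \amap')(\agraphC)$ componentwise — in particular that the labels coincide, which reduces to $\labs{\agraph'} \rond (\amap')^{-1} \rond \amap^{-1} = \labs{\agraph'} \rond (\amap \rond \amap')^{-1}$ on the relevant nodes. All of this is routine given the proposition describing $\amap(\agraph)$.
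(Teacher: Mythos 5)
Your proof is correct and follows essentially the same route as the paper: the identity on $\nodes{\agraph}$ for reflexivity, the inverse $\amap^{-1}$ for symmetry, and the composition $\amap \rond \amap'$ for transitivity, with the domain bookkeeping matching the paper's. The extra verifications you carry out (sort preservation and the $\eqnodes$ condition for $\amap^{-1}$ and $\amap \rond \amap'$, using symmetry and transitivity of $\eqnodes$) are details the paper leaves implicit, but they are exactly the right ones.
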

\begin{proof}\hfill
\begin{itemize}
\item{{\bf Reflexivity.}
The identity on $\nodes{\agraph}$ is an \nsubstitution, thus $\agraph \iso \agraph$.}
\item{{\bf Symmetry.}
If $\agraph \iso \agraph'$, then 
$\agraph = \amap(\agraph')$, where $\amap$ is an \nsubstitution such that $\dom{\amap} \supseteq \nodes{\agraph'}$.
Since $\amap$ is injective, $\amap^{-1}$ is well-defined, $\amap^{-1}(\agraph) = (\amap^{-1} \circ \amap)(\agraph') = \agraph'$ and $\dom{\amap^{-1}} = \img{\amap} \supseteq \amap(\nodes{\agraph'}) = \nodes{\agraph}$.
Hence $\agraph' \iso \agraph$.
}
\item{{\bf Transivity.} 
If $\agraph \iso \agraph' \iso \agraph''$ then there exist {\nsubstitution}s $\amap$ and $\amap'$ such that
$\agraph = \amap(\agraph')$, $\agraph' = \amap'(\agraph'')$, $\dom{\amap} \supseteq \nodes{\agraph'}$
and $\dom{\amap'} \supseteq \nodes{\agraph''}$.
Then since the composition of two injective functions is also injective, $\amap \circ \amap'$ is an \nsubstitution, 
and $\agraph = (\amap \circ \amap')(\amap'(\agraph''))$ 
with $\dom{\amap \circ \amap'} \supseteq \nodes{\agraph''}$. 
Hence $\agraph \iso \agraph''$.
}
\end{itemize}
\end{proof}

%\begin{definition}
%Let $\agraph$ be a \graph and let $\amap$ be an \nsubstitution $\amap$ with $\dom{\amap} \supseteq \nodes{\agraph}$.
%We denote by $\amap(\agraph)$ the tuple:
%$\mkgraph{\amap(\nodes{\agraph})}{\amap(\inputs{\nodes{\agraph}})}{\amap{\edges{\agraph}}}{}$
%with $\amap(\tuple{\anode_1,\dots,\anode_n}) = \tuple{\anode_1,\dots,\anode_n}$, 
%$\amap(\mkedge{\anode}{\anodeB}) = \mkedge{\anode}{\anodeB}$,
%$\amap(\set{e_1,\dots,e_n}) = \set{\amap(e_1),\dots,\amap(e_n)}$

%\comment[mn]{some modifs. UPDATE: additional modifs, that I believe simplify proofs afterwards.}

We now define a replacement operation on the considered class of graphs. Intuitively, this replacement operation applied to a graph permits to replace one of its subgraphs by another graph, provided a so-called \emph{substitutability} condition is satisfied. This condition will guarantee that the replacement operation on graphs enjoys properties similar to those of standard term rewriting.

\begin{definition} %\comment[ni]{modif: \rootsimilar}
	\label{def:rootsimilar}
	Let $\agraph$, $\agraphB$ and $\agraphB'$ be {\graph}s such that
	$\agraphB \sgr \agraph$, $\inputs{\agraphB} = \tuple{\anode_1,\dots,\anode_n}$
	and
	$\inputs{\agraphB'} = \tuple{\anode_1',\dots,\anode_m'}$.
	The {\graph}s $\agraphB$ and $\agraphB'$ are {\em \rootsimilar}
	if $n = m$ and for every $i \in \interv{1}{n}$, we have $\sortof(\anode_i) = \sortof(\anode_i')$	and $\anode_i \eqnodes \anode_i'$.
	\end{definition}
	
	\begin{definition}
	\label{def:substitutable}
	A {\graph} $\agraphB'$ is {\em \substitutable for  $\agraphB$ in $\agraph$} if
	$\agraphB$ and $\agraphB'$ are {\em \rootsimilar}
	 and $\nodes{\agraph} \cap \nodes{\agraphB'} \subseteq \nodes{\agraphB}$.
\end{definition}
%\comment[me]{je proposais de couper en deux mais  sinon on peut mettre en bold les noms introduits ? parceque je vois que c'est souvent présenté comme ca dans la suite ?}\comment[ni]{couper en deux est ok pour moi}

An immediate consequence of this definition is the following property:
\begin{proposition}
	\label{prop:substitutable}
	If $\agraphB'$ is \substitutable for  $\agraphB$ in $\agraph$ then $(\nodes{\agraph}\setminus \nodes{\agraphB}) \cap \nodes{\agraphB'} = \emptyset$.
\end{proposition}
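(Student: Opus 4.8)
The plan is to unfold the two definitions involved and do a short set-theoretic argument. Recall that Proposition~\ref{prop:substitutable} claims that if $\agraphB'$ is \substitutable for $\agraphB$ in $\agraph$, then $(\nodes{\agraph}\setminus \nodes{\agraphB}) \cap \nodes{\agraphB'} = \emptyset$. By Definition~\ref{def:substitutable}, the substitutability hypothesis gives us, among other things, the inclusion $\nodes{\agraph} \cap \nodes{\agraphB'} \subseteq \nodes{\agraphB}$; the \rootsimilar{} part of the definition plays no role here and can be ignored.

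First I would take an arbitrary node $\anode \in (\nodes{\agraph}\setminus \nodes{\agraphB}) \cap \nodes{\agraphB'}$ and derive a contradiction. From $\anode \in \nodes{\agraph}$ and $\anode \in \nodes{\agraphB'}$ we get $\anode \in \nodes{\agraph} \cap \nodes{\agraphB'}$, hence $\anode \in \nodes{\agraphB}$ by the substitutability inclusion. But $\anode \in \nodes{\agraph}\setminus \nodes{\agraphB}$ means $\anode \notin \nodes{\agraphB}$, a contradiction. Therefore no such $\anode$ exists, i.e.\ the intersection is empty. Equivalently, one can argue directly with set algebra: $(\nodes{\agraph}\setminus \nodes{\agraphB}) \cap \nodes{\agraphB'} = (\nodes{\agraph} \cap \nodes{\agraphB'}) \setminus \nodes{\agraphB} \subseteq \nodes{\agraphB} \setminus \nodes{\agraphB} = \emptyset$, where the inclusion uses the substitutability hypothesis.

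There is essentially no obstacle here: the statement is an immediate consequence of the inclusion baked into the definition of substitutability, and the proof is a one-line set computation. The only thing to be careful about is keeping track of which half of Definition~\ref{def:substitutable} is actually used (the node-disjointness clause, not root-compatibility), so the proof should explicitly cite that clause.
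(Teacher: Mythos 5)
Your proof is correct and is exactly the argument the paper has in mind: the paper states this proposition without proof as an ``immediate consequence'' of Definition~\ref{def:substitutable}, and your one-line set computation $(\nodes{\agraph}\setminus \nodes{\agraphB}) \cap \nodes{\agraphB'} = (\nodes{\agraph} \cap \nodes{\agraphB'}) \setminus \nodes{\agraphB} \subseteq \nodes{\agraphB} \setminus \nodes{\agraphB} = \emptyset$ is precisely the intended justification. Nothing is missing.
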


\newcommand{\submap}[3]{mapping from \ensuremath{#1} to \ensuremath{#2} in \ensuremath{#3}\xspace}
\newcommand{\amapsub}[3]{\mu^{#1\rightarrow #2}_{#3}}

\begin{definition}
\label{def:replace}
Let $\agraph$, $\agraphB$ and $\agraphB'$ be {\graph}s such that 
 {\graph} $\agraphB'$ is \substitutable for  $\agraphB$ in $\agraph$.
Let $\agraph$ be a \graph. 
Let $\agraphB$ and $\agraphB'$ be two {\graph}s, with 
$\agraphB \sgr \agraph$, $\inputs{\agraphB} = \tuple{\anode_1,\dots,\anode_n}$
and
$\inputs{\agraphB'} = \tuple{\anode_1',\dots,\anode_m'}$.
%The {\graph} $\agraphB'$ is {\em \substitutable for  $\agraphB$ in $\agraph$} if they are \rootsimilar  and $\nodes{\agraph} \cap \nodes{\agraphB'} \subseteq \nodes{\agraphB}$.
%for every $\anode \in \nodes{\agraph} \cap \nodes{\agraphB'}$, we have $\anode \in \nodes{\agraphB}$.
%and if $\anode$ occurs in $\inputs{\agraphB}$ then there exists $i \in \interv{1}{n}$ such that
%$\anode = \anode_i = \anode_i'$. 
The \emph{\submap{\agraphB}{\agraphB'}{\agraph}}, is the function
%Let $\agraph$ be a \graph, let $\agraphB$ be a \subgraph of $\agraph$
%and let $\agraphB'$ be a \graph \substitutable with $\agraphB$ in $\agraph$.
%Let $\inputs{\agraphB} = (\anode_1,\dots,\anode_n)$
%and $\inputs{\agraphB'} = (\anode_1',\dots,\anode_n')$.
 $\amapsub{\agraphB}{\agraphB'}{\agraph} \isdef \{ \anode_i \mapsto \anode_i' \mid 1 \leq i \leq n\} \cup \{ \anode \mapsto \anode \mid 
\anode \in \nodes{\agraph} \setminus \nodes{\agraphB} \}$. 

Let $\amap \isdef \amapsub{\agraphB}{\agraphB'}{\agraph}$. We denote by $\replace{\agraph}{\agraphB}{\agraphB'}$ the \graph $\agraph'$ defined as follows:
\begin{enumerate}
\item{$\nodes{\agraph'} \isdef (\nodes{\agraph} \setminus \nodes{\agraphB}) \cup \nodes{\agraphB'}$. \label{replace:nodes}}
\item{$\inputs{\agraph'} \isdef \amap(\inputs{\agraph})$.
%	, with 
%$\amap = \{ \anode_i \mapsto \anode_i' \mid 1 \leq i \leq n\} \cup \{ \anode \mapsto \anode \mid 
%\anode \in \nodes{\agraph} \setminus \nodes{\agraphB} \}$.
\label{replace:inputs}
}
\item{$\edges{\agraph'} \isdef  \amap(\edges{\agraph} \setminus \edges{\agraphB}) \cup \edges{\agraphB'}$.
\label{replace:edges}}
\item{For every $\anode \in \nodes{\agraph'}$, %\comment[mn]{change here, isolated reference to $\mu$, commented version by Nicolas}
	%\comment[mn]{replaced $\innernodes{\agraph}\setminus \innernodes{\agraphB}$ by $\innernodes{\agraph}\setminus \nodes{\agraphB}$ below} 
	$\labs{\agraph'}(\anode) \isdef \left\{ 
	\begin{array}{l} 
		\labs{\agraph}(\anode) 
		\text{\ if $\anode \in \innernodes{\agraph'}\setminus \nodes{\agraphB'}$} \\
		\labs{\agraphB'}(\anode) 
		\text{\ if $\anode \in \innernodes{\agraphB'}$} \\
		\labs{\agraph}({\inv{\amap}(\anode)})
		\text{\ if $\inv{\amap}(\anode) \in \innernodes{\agraph} \cap \inputs{\agraphB}$} \\
		\text{undefined otherwise}
	\end{array}
	\right.
	$
	\label{replace:labels}
}

%\item{For every $\node \in \nodes{\agraph'}$, \comment[mn]{replaced $\innernodes{\agraph}\setminus \innernodes{\agraphB}$ by $\innernodes{\agraph}\setminus \nodes{\agraphB}$ below} \comment[ni]{to discuss, 
%what if a node occurs in $\inputs{\agraphB}$ and in $\innernodes{\agraph}$?}  
%
%$\labs{\agraph'}(\anode) \isdef \left\{ 
%\begin{array}{l} 
%\labs{\agraph}(\anode') 
%\text{\ if $\anode' \in \innernodes{\agraph}\setminus \innernodes{\agraphB}$ and $\anode = \amap(\anode')$} \\
% \labs{\agraphB'}(\anode) 
%\text{\ if $\anode \in \innernodes{\agraphB'}$} \\
%\text{undefined otherwise}
%\end{array}
%\right.
%$
%\label{replace:labels}
%}
\end{enumerate}
\end{definition}

\begin{example}\label{ex:subs}
	Consider the following {\graph}s (see also Figure \ref{fig:subs}):
	\begin{itemize}
		\item  $\agraph$ is defined as follows:
		\begin{eqnarray*}
			\nodes{\agraph} & \isdef & \set{\anode_1, \ldots, \anode_4}\\
			\inputs{\agraph} & \isdef & \tuple{}\\
			\edges{\agraph} & \isdef & \set{\mkedge{\anode_1}{\anode_2},\, \mkedge{\anode_2}{\anode_3},\, \mkedge{\anode_3}{\anode_4}}\\
			\labs{\agraph} & \isdef & \set{(\anode_1, a),\, (\anode_2, f(x)),\, (\anode_3, f(b)),\, (\anode_4, b)}
		\end{eqnarray*}
		\item $\agraphB$ is defined as follows:
		\begin{eqnarray*}
			\nodes{\agraph} & \isdef & \set{\anode_2, \anode_3}\\
			\inputs{\agraph} & \isdef &\tuple{\anode_2, \anode_3}\\
			\edges{\agraph} & \isdef & \set{\mkedge{\anode_2}{\anode_3}}\\
			\labs{\agraph} & \isdef & \emptyset
		\end{eqnarray*}
		\item $\agraphB'$ is defined as follows:
		\begin{eqnarray*}
			\nodes{\agraph} & \isdef & \set{\anodeB_1, \ldots, \anodeB_4}\\
			\inputs{\agraph} & \isdef & \tuple{\anodeB_1, \anodeB_2}\\
			\edges{\agraph} & \isdef & \set{\mkedge{\anode_1}{\anode_3},\, \mkedge{\anode_1}{\anode_4},\, \mkedge{\anode_2}{\anode_3},\, \mkedge{\anode_2}{\anode_4}}\\
			\labs{\agraph} & \isdef & \set{(\anode_3, c),\, (\anode_4, d)}
		\end{eqnarray*}
	\end{itemize}
	We assume that all nodes are of the same sort and that all nodes are distinct. Then $\agraphB$ is a subgraph of $\agraph$, and $\agraphB'$ is \substitutable for $\agraphB$ in $\agraph$. The \graph $\replace{\agraph}{\agraphB}{\agraphB'}$ is represented in Figure \ref{fig:subs2}.
\end{example}

\begin{figure}
	\begin{center}
		\begin{minipage}{0.4\textwidth}
			\begin{center}
				{\tiny
					\begin{tikzpicture}
						\draw[dashed] (7.5,0) ellipse (1.1cm and 0.4cm);
						\node[inputNode, thick] (i1) at (6, 0) {$\anode_1$};
						\node[inputNode, thick] (i2) at (7, 0) {$\anode_2$};
						\node[inputNode, thick] (i3) at (8, 0) {$\anode_3$};			
						\node[inputNode, thick] (i4) at (9, 0) {$\anode_4$};
						\node[draw=none](H) at (7.5, -0.75) {$H$};
						\node[draw=none] (h1) at (11, 0) {};
						\draw[stateTransition] (i1) -- (i2);
						\draw[stateTransition] (i2) -- (i3);
						\draw[stateTransition] (i3) -- (i4);
					\end{tikzpicture}
				}
			\end{center}
		\end{minipage}
		\begin{minipage}{0.4\textwidth}
			\begin{center}
				{\tiny
					\begin{tikzpicture}
						\node[inputNode, thick] (i1) at (6, 0) {$\anodeB_1$};
						\node[inputNode, thick] (i2) at (8, 0) {$\anodeB_2$};
						\node[inputNode, thick] (i3) at (7, -0.75) {$\anodeB_3$};			
						\node[inputNode, thick] (i4) at (7, -1.5) {$\anodeB_4$};
						\node[draw=none] (h2) at (5, 0) {};
						\draw[stateTransition] (i1) -- (i3);
						\draw[stateTransition] (i1) -- (i4);
						\draw[stateTransition] (i2) -- (i3);
						\draw[stateTransition] (i2) -- (i4);				
					\end{tikzpicture}
				}
			\end{center}
		\end{minipage}
	\end{center}
	\caption{Graphs $\agraph$, $\agraphB$ (dashed lines) and $\agraphB'$   in Example \ref{ex:subs}}\label{fig:subs}
\end{figure}

\begin{figure}
	\begin{center}
	{\tiny
		\begin{tikzpicture}
			\node[inputNode, thick] (i1) at (6, 0) {$\anode_1$};
			\node[inputNode, thick] (j1) at (7, 0) {$\anodeB_1$};
			\node[inputNode, thick] (j2) at (9, 0) {$\anodeB_2$};			
			\node[inputNode, thick] (i4) at (10, 0) {$\anode_4$};
			\node[inputNode, thick] (j3) at (8, -0.75) {$\anodeB_3$};			
			\node[inputNode, thick] (j4) at (8, -1.5) {$\anodeB_4$};
			\node[draw=none] (h1) at (11, 0) {};
			\draw[stateTransition] (i1) -- (j1);
			\draw[stateTransition] (j2) -- (i4);
			\draw[stateTransition] (j1) -- (j3);
			\draw[stateTransition] (j1) -- (j4);
			\draw[stateTransition] (j2) -- (j3);
			\draw[stateTransition] (j2) -- (j4);
		\end{tikzpicture}
	}
	\end{center}
	\caption{The graph $\replace{\agraph}{\agraphB}{\agraphB'}$ in Example \ref{ex:subs}}\label{fig:subs2}
\end{figure}

%\comment[mn]{Addition}

\begin{proposition}
	\label{prop:amap_image}
	If $\agraphB'$ is \substitutable for  $\agraphB$ in $\agraph$, then by letting $\amap \isdef \amapsub{\agraphB}{\agraphB'}{\agraph}$ and $\agraph' \isdef \replace{\agraph}{\agraphB}{\agraphB'}$, we have $\amap(\inputs{\agraph}) = \inputs{\agraph'}$ and $\amap(\inputs{\agraphB}) = \inputs{\agraphB'}$.
\end{proposition}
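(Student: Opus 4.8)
The plan is to establish the two equalities separately, since they are of rather different natures: the second is an immediate consequence of the definition of $\amap$, while the first reduces to a definition once a short well-definedness check has been carried out. Throughout I would write $\inputs{\agraphB} = \tuple{\anode_1,\dots,\anode_n}$ and $\inputs{\agraphB'} = \tuple{\anode_1',\dots,\anode_m'}$, as in Definition~\ref{def:replace}, and keep in mind that $\dom{\amap} = \set{\anode_1,\dots,\anode_n} \cup (\nodes{\agraph} \setminus \nodes{\agraphB})$.

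First I would prove $\amap(\inputs{\agraphB}) = \inputs{\agraphB'}$. The hypothesis that $\agraphB'$ is \substitutable for $\agraphB$ in $\agraph$ entails that $\agraphB$ and $\agraphB'$ are \rootsimilar, and hence that $n = m$. By construction of $\amapsub{\agraphB}{\agraphB'}{\agraph}$ we have $\amap(\anode_i) = \anode_i'$ for every $i \in \interv{1}{n}$; here I would observe that this is unambiguous, since each $\anode_i$ lies in $\nodes{\agraphB}$ and therefore not in $\nodes{\agraph} \setminus \nodes{\agraphB}$, so the two parts of the definition of $\amap$ cannot conflict on these nodes. Applying $\amap$ pointwise to the tuple then gives $\amap(\inputs{\agraphB}) = \tuple{\anode_1',\dots,\anode_n'} = \inputs{\agraphB'}$.

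Next I would treat $\amap(\inputs{\agraph}) = \inputs{\agraph'}$. By Condition~\ref{replace:inputs} of Definition~\ref{def:replace}, $\inputs{\agraph'}$ is \emph{defined} to be $\amap(\inputs{\agraph})$, so the equality holds provided the right-hand side is well-defined, i.e.\ provided every node occurring in $\inputs{\agraph}$ lies in $\dom{\amap}$. To check this, take any $\anode$ occurring in $\inputs{\agraph}$; then $\anode \in \nodes{\agraph}$. If $\anode \notin \nodes{\agraphB}$, then $\anode \in \nodes{\agraph} \setminus \nodes{\agraphB} \subseteq \dom{\amap}$. Otherwise $\anode \in \nodes{\agraphB}$, and since $\agraphB \sgr \agraph$, Condition~\ref{subgraph:inputs_bis} of Definition~\ref{def:subgraph} forces $\anode$ to occur in $\inputs{\agraphB}$, so $\anode = \anode_i$ for some $i \in \interv{1}{n}$ and again $\anode \in \dom{\amap}$.

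I do not anticipate any real difficulty: the whole argument amounts to unfolding the definitions of $\amapsub{\agraphB}{\agraphB'}{\agraph}$ and of $\replace{\agraph}{\agraphB}{\agraphB'}$. The only point where an actual property of graphs is used — and the one I would be most careful to state explicitly — is the well-definedness step, where Condition~\ref{subgraph:inputs_bis} of the definition of subgraph is precisely what guarantees that $\amap$ is defined on every root of $\agraph$; this is the mild ``obstacle'', and it is resolved by that condition alone.
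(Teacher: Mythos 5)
Your proof is correct: the paper states Proposition~\ref{prop:amap_image} without proof, treating it as immediate from Definition~\ref{def:replace}, and your unfolding of the definitions is exactly the intended argument. The one nontrivial point you identify --- that every node of $\inputs{\agraph}$ lies in $\dom{\amap}$, via Condition~\ref{subgraph:inputs_bis} of Definition~\ref{def:subgraph} --- is precisely the well-definedness check the paper carries out inside the proof of Lemma~\ref{lm:sub-subg}, so you are not missing anything and nothing is superfluous.
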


%\comment[mn]{Addition of a proposition which justifies the definition of $\labs{\agraph'}$ (to develop?), can probably also be used to prove Lemma \ref{lm:sub-subg}}

\begin{proposition}\label{prop:subs-minus}
	If $\agraphB'$ is \substitutable for  $\agraphB$ in $\agraph$, then by letting $\agraph' \isdef \replace{\agraph}{\agraphB}{\agraphB'}$, we have:
	\begin{itemize}
		\item $\nodes{\agraph'} \setminus \nodes{\agraphB'} = \nodes{\agraph}\setminus\nodes{\agraphB}$;
		\item $\innernodes{\agraph'}\setminus \nodes{\agraphB'} = \innernodes{\agraph}\setminus \nodes{\agraphB}$.
	\end{itemize}
\end{proposition}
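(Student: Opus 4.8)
The plan is to reduce both equalities to the defining equation $\nodes{\agraph'} = (\nodes{\agraph} \setminus \nodes{\agraphB}) \cup \nodes{\agraphB'}$ (item \ref{replace:nodes} of Definition \ref{def:replace}) together with the disjointness $(\nodes{\agraph} \setminus \nodes{\agraphB}) \cap \nodes{\agraphB'} = \emptyset$ provided by Proposition \ref{prop:substitutable}. Everything else is Boolean bookkeeping on finite sets.

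For the first item, I would substitute the definition of $\nodes{\agraph'}$ into $\nodes{\agraph'} \setminus \nodes{\agraphB'}$: the $\nodes{\agraphB'}$ summand disappears, leaving $(\nodes{\agraph} \setminus \nodes{\agraphB}) \setminus \nodes{\agraphB'}$, which equals $\nodes{\agraph} \setminus \nodes{\agraphB}$ because $\nodes{\agraph} \setminus \nodes{\agraphB}$ is disjoint from $\nodes{\agraphB'}$ by Proposition \ref{prop:substitutable}.

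For the second item, I would use $\innernodes{\agraph'} = \nodes{\agraph'} \setminus \inputs{\agraph'}$ (treating the root sequence as a set, as elsewhere in the paper), so that $\innernodes{\agraph'} \setminus \nodes{\agraphB'} = (\nodes{\agraph'} \setminus \nodes{\agraphB'}) \setminus \inputs{\agraph'}$, which by the first item equals $(\nodes{\agraph} \setminus \nodes{\agraphB}) \setminus \inputs{\agraph'}$. The crux is then the auxiliary claim that the roots of $\agraph$ and of $\agraph'$ agree outside $\nodes{\agraphB}$, namely $\inputs{\agraph'} \cap (\nodes{\agraph} \setminus \nodes{\agraphB}) = \inputs{\agraph} \cap (\nodes{\agraph} \setminus \nodes{\agraphB})$. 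To prove it I would unfold $\inputs{\agraph'} = \amap(\inputs{\agraph})$ with $\amap \isdef \amapsub{\agraphB}{\agraphB'}{\agraph}$ (Proposition \ref{prop:amap_image}): by Condition \ref{subgraph:inputs_bis} of Definition \ref{def:subgraph}, every root of $\agraph$ lying in $\nodes{\agraphB}$ is also a root of $\agraphB$, hence $\amap$ is defined on all of $\inputs{\agraph}$ and each such root is either fixed by $\amap$ (when it lies in $\nodes{\agraph} \setminus \nodes{\agraphB}$) or sent to a node of $\inputs{\agraphB'} \subseteq \nodes{\agraphB'}$; since $\nodes{\agraphB'}$ is disjoint from $\nodes{\agraph} \setminus \nodes{\agraphB}$ by Proposition \ref{prop:substitutable}, the images of the second kind never land in $\nodes{\agraph} \setminus \nodes{\agraphB}$, which yields both inclusions of the claim. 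Granting it, $(\nodes{\agraph} \setminus \nodes{\agraphB}) \setminus \inputs{\agraph'} = (\nodes{\agraph} \setminus \nodes{\agraphB}) \setminus \inputs{\agraph} = (\nodes{\agraph} \setminus \inputs{\agraph}) \setminus \nodes{\agraphB} = \innernodes{\agraph} \setminus \nodes{\agraphB}$.

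The only genuine obstacle is this auxiliary claim about roots, and even there the subtle point is merely to check that $\amap$ is defined on every element of $\inputs{\agraph}$ — which again is exactly Condition \ref{subgraph:inputs_bis} of Definition \ref{def:subgraph}. All the other steps are straightforward set-algebra manipulations, so I would keep the write-up short and point to Proposition \ref{prop:substitutable} each time a disjointness is invoked.
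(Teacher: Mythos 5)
Your proof is correct and follows essentially the same route as the paper's: the first item is the identical set-algebra computation via Proposition \ref{prop:substitutable}, and for the second item your auxiliary claim that the roots of $\agraph$ and $\agraph'$ agree outside $\nodes{\agraphB}$ (proved by unfolding $\amap$, using Condition \ref{subgraph:inputs_bis} of Definition \ref{def:subgraph} and the disjointness from Proposition \ref{prop:substitutable}) is exactly the equivalence $\anode \in \inputs{\agraph}\setminus \nodes{\agraphB} \Leftrightarrow \anode \in \inputs{\agraph'}\setminus \nodes{\agraphB'}$ that the paper establishes.
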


\begin{proof}
	We have $\nodes{\agraph'} \setminus \nodes{\agraphB'} = \left[(\nodes{\agraph} \setminus \nodes{\agraphB}) \cup \nodes{\agraphB'}\right] \setminus \nodes{\agraphB'} = \left[\nodes{\agraph} \setminus \nodes{\agraphB}\right] \setminus \nodes{\agraphB'} = \nodes{\agraph}\setminus\nodes{\agraphB}$ by Proposition \ref{prop:substitutable}. Consider a node $\anode \in \nodes{\agraph}$; we show that $\anode \in \inputs{\agraph}\setminus \nodes{\agraphB}$ if and only if $\anode \in \inputs{\agraph'}\setminus \nodes{\agraphB'}$; 
%\comment[ni]{added:}	
together with the previous result this will prove the second item of the proposition.  First assume that $\anode \in \inputs{\agraph}\setminus \nodes{\agraphB}$ and let $\amap$ denote the \submap{\agraphB}{\agraphB'}{\agraph}. Then by definition, we have $\amap(\anode) = \anode \in \inputs{\agraph'}$ and by Proposition \ref{prop:substitutable}, $\anode \notin \nodes{\agraphB'}$; hence the result. Conversely, if $\anode \in \inputs{\agraph'}\setminus \nodes{\agraphB'}$ then $\anode \in \nodes{\agraph'}\setminus \nodes{\agraphB'} = \nodes{\agraph}\setminus\nodes{\agraphB}$ by the first item of the proposition, and by definition, $\anode = \amap(\anode')$ for some $\anode'\in \inputs{\agraph}$. Since $\anode\notin \nodes{\agraphB}$, necessarily $\anode' = \anode$, hence the result.
\end{proof}

%\comment[ni]{added (because this is used in proofs)}

Observe that, with the notations of Definition \ref{def:replace}, if $\anode \in \innernodes{\agraph'}\setminus \nodes{\agraphB'} = \innernodes{\agraph}\setminus \nodes{\agraphB}$ then necessarily $\amap(\anode) = \anode$ hence
$\inv{\amap}(\anode) = \anode$, thus 
$\labs{\agraph'}(\anode) = \labs{\agraph}(\inv{\amap}(\anode))$.

\begin{lemma}\label{lm:sub-subg}
Let $\agraph$ be a \graph, let $\agraphB$ be a \subgraph of $\agraph$
and let $\agraphB'$ be a \graph \substitutable for $\agraphB$ in $\agraph$.
Then $\replace{\agraph}{\agraphB}{\agraphB'}$ is a \graph and $\agraphB' \sgr \replace{\agraph}{\agraphB}{\agraphB'}$.
\end{lemma}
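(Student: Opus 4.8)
The plan is to set $\agraph' \isdef \replace{\agraph}{\agraphB}{\agraphB'}$ and $\amap \isdef \amapsub{\agraphB}{\agraphB'}{\agraph}$, and to begin by recording the structural facts about $\amap$ that drive both halves of the statement. Directly from Definition~\ref{def:replace}, $\dom{\amap} = (\nodes{\agraph}\setminus\nodes{\agraphB})\cup\inputs{\agraphB} = \nodes{\agraph}\setminus\innernodes{\agraphB}$, which contains $\inputs{\agraph}$ by Proposition~\ref{prop:sub_inputs_include}; and, using Proposition~\ref{prop:amap_image}, $\img{\amap} = \amap(\inputs{\agraphB})\cup\amap(\nodes{\agraph}\setminus\nodes{\agraphB}) = \inputs{\agraphB'}\cup(\nodes{\agraph}\setminus\nodes{\agraphB})$, a union of two disjoint sets by Proposition~\ref{prop:substitutable}, so that $\img{\amap}\cap\nodes{\agraphB'} = \inputs{\agraphB'}$ and $\img{\amap}\subseteq\nodes{\agraph'}$. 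I would also note that no edge of $\edges{\agraph}\setminus\edges{\agraphB}$ has an endpoint in $\innernodes{\agraphB}$: if $\anyedge{\anode}{\anodeB}$ were such an edge with $\anode\in\innernodes{\agraphB}$, then $\anodeB\in\nodes{\agraphB}$ would put the edge into $\edges{\agraphB}$ by Condition~\ref{subgraph:edges_bis} of Definition~\ref{def:subgraph}, while $\anodeB\notin\nodes{\agraphB}$ would force $\anode\in\inputs{\agraphB}$ by Condition~\ref{subgraph:inputs} --- both contradictions. Hence $\amap$ is defined on every node and edge to which Definition~\ref{def:replace} applies it, and it is injective, its restriction to $\inputs{\agraphB}$ being injective since $\inputs{\agraphB'}$ has no repetitions and its two pieces having disjoint images.

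I would then check that $\agraph'$ is a \graph, clause by clause against Definition~\ref{def:graph}. That $\nodes{\agraph'}$ is a finite subset of $\allnodes$ is clear. The sequence $\inputs{\agraph'} = \amap(\inputs{\agraph})$ is repetition-free because $\inputs{\agraph}$ is and $\amap$ is injective, and it lies in $\nodes{\agraph'}$: a node $\anode$ of $\inputs{\agraph}$ that belongs to $\nodes{\agraphB}$ occurs in $\inputs{\agraphB}$ by Condition~\ref{subgraph:inputs_bis}, so $\amap(\anode)\in\inputs{\agraphB'}\subseteq\nodes{\agraph'}$, and otherwise $\amap(\anode) = \anode\in\nodes{\agraph}\setminus\nodes{\agraphB}\subseteq\nodes{\agraph'}$. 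The set $\edges{\agraph'}$ consists of ordered pairs of nodes of $\agraph'$, the $\amap$-image part landing in $\img{\amap}\subseteq\nodes{\agraph'}$ and $\edges{\agraphB'}$ in $\nodes{\agraphB'}\subseteq\nodes{\agraph'}$. The substantive point is the fourth clause, that the partial function $\labs{\agraph'}$ of Definition~\ref{def:replace} is total on $\innernodes{\agraph'}$ and undefined elsewhere. For totality I would split $\innernodes{\agraph'}$ via Proposition~\ref{prop:subs-minus}: a node of $\innernodes{\agraph'}\setminus\nodes{\agraphB'} = \innernodes{\agraph}\setminus\nodes{\agraphB}$ is handled by the first branch (and equals $\labs{\agraph}$ there, by the remark following that proposition); a node of $\innernodes{\agraphB'}$ is handled by the second; and a node of $\innernodes{\agraph'}$ that lies in $\inputs{\agraphB'}$, say $\anode_i'$, has $\inv{\amap}(\anode_i') = \anode_i\in\inputs{\agraphB}$, which cannot lie in $\inputs{\agraph}$ (else $\anode_i'$ would lie in $\inputs{\agraph'}$), so $\anode_i\in\innernodes{\agraph}\cap\inputs{\agraphB}$ and the third branch applies. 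Conversely, a node of $\inputs{\agraph'}$ equals $\amap(\anode_0)$ for some $\anode_0$ occurring in $\inputs{\agraph}$, and no branch applies to it: it is not in $\innernodes{\agraph'}$; if it lay in $\nodes{\agraphB'}$ it would lie in $\inputs{\agraphB'}$ by the first paragraph, hence not in $\innernodes{\agraphB'}$; and $\inv{\amap}$ of it is $\anode_0\notin\innernodes{\agraph}$. In particular this also shows $\innernodes{\agraphB'}\subseteq\innernodes{\agraph'}$.

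Finally I would establish $\agraphB'\sgr\agraph'$ by running through the six conditions of Definition~\ref{def:subgraph}. Conditions~\ref{subgraph:nodes} and~\ref{subgraph:edges} are immediate from the definitions of $\nodes{\agraph'}$ and $\edges{\agraph'}$. Conditions~\ref{subgraph:inputs} and~\ref{subgraph:inputs_bis} follow at once from $\img{\amap}\cap\nodes{\agraphB'} = \inputs{\agraphB'}$: a root of $\agraph'$, or an endpoint of an edge of $\agraph'$ not already in $\edges{\agraphB'}$, lies in $\img{\amap}$, so if it also lies in $\nodes{\agraphB'}$ it lies in $\inputs{\agraphB'}$, which is exactly what is demanded. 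For Condition~\ref{subgraph:edges_bis}, an edge of $\agraph'$ with both endpoints in $\nodes{\agraphB'}$ is either in $\edges{\agraphB'}$ already, or equals $\amap(\mkedge{\anode_0}{\anodeB_0})$ for some $\mkedge{\anode_0}{\anodeB_0}\in\edges{\agraph}\setminus\edges{\agraphB}$; in the latter case $\anode_0$ and $\anodeB_0$ both lie in $\inputs{\agraphB}$ (their $\amap$-images lie in $\nodes{\agraphB'}\cap\img{\amap} = \inputs{\agraphB'} = \amap(\inputs{\agraphB})$ and $\amap$ is injective), so Condition~\ref{subgraph:edges_bis} for $\agraphB\sgr\agraph$ would place that edge in $\edges{\agraphB}$, a contradiction. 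Condition~\ref{subgraph:labs} is precisely the second-branch computation of the previous paragraph, together with $\innernodes{\agraphB'}\subseteq\innernodes{\agraph'}$. I expect the fourth clause of Definition~\ref{def:graph} to be the only genuine obstacle, as it is the one place where Proposition~\ref{prop:subs-minus}, the exact domain and image of $\amap$, and the interplay of $\inputs{\agraphB}$, $\inputs{\agraph}$ and $\innernodes{\agraph}$ all have to be handled simultaneously; once the identity $\img{\amap}\cap\nodes{\agraphB'} = \inputs{\agraphB'}$ and Condition~\ref{subgraph:edges_bis} for $\agraphB\sgr\agraph$ are in hand, the remaining subgraph conditions reduce to short checks.
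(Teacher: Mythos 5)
Your proof is correct and follows essentially the same route as the paper's: establish that $\amapsub{\agraphB}{\agraphB'}{\agraph}$ is injective with the right domain and image, verify that $\labs{\replace{\agraph}{\agraphB}{\agraphB'}}$ is consistently defined, and then check the six conditions of Definition~\ref{def:subgraph} one by one. You are in fact somewhat more thorough than the paper on the ``$\replace{\agraph}{\agraphB}{\agraphB'}$ is a \graph'' part (which the paper dismisses as straightforward after checking well-definedness of the labelling), your identity $\img{\amap}\cap\nodes{\agraphB'}=\inputs{\agraphB'}$ being a clean way to package the case analysis.
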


%\comment[mn]{Some rewriting}

\begin{proof}

Observe that the function $\amap$ in Definition \ref{def:replace} is injective, since  $\inputs{\agraphB'}$ is repetition-free and 
no node in $\nodes{\agraph} \setminus \nodes{\agraphB}$ may occur in $\inputs{\agraphB'}$, because $\nodes{\agraph} \cap \nodes{\agraphB'} \subseteq \nodes{\agraphB}$ by hypothesis.
%since $\agraphB'$ is {\substitutable} to  $\agraphB$ in $\agraph$.
We show that the domain of $\amap$ contains  all {\node}s occurring either in 
$\inputs{\agraph}$ or in an edge in $\edges{\agraph} \setminus \edges{\agraphB}$.
By Condition \ref{subgraph:inputs_bis}  
of Definition \ref{def:subgraph}, the nodes in $\inputs{\agraph}$
occur in  either  $\inputs{\agraphB}$
or $\nodes{\agraph} \setminus \nodes{\agraphB}$. %; and by Condition \ref{subgraph:edges_bis} of Definition \ref{def:subgraph}, if $\anyedge{\anode}{\anodeB} \in \edges{\agraph}\setminus \edges{\agraphB}$ then necessarily, either $\anode, \anodeB \in \nodes{\agraph}\setminus\nodes{\agraphB}$ or $\anode, \anodeB \in \inputs{\agraphB}$.

The function $\labs{\agraph'}$ is well-defined: if $\anode \in \innernodes{\agraphB'}$
and $\anode = \amap(\anode')$ for 
$\anode' \in \innernodes{\agraph}\setminus \nodes{\agraphB}$, then
we have $\anode' \not \in \nodes{\agraphB'}$ (because $\nodes{\agraph} \cap \nodes{\agraphB'} \subseteq \nodes{\agraph}$),
%(because $\agraphB'$ is {\substitutable} for  $\agraphB$ in $\agraph$) 
hence 
$\anode' \not = \anode$, and 
in this case, by definition of $\amap$, $\anode$ occurs in $\inputs{\agraphB'}$, hence may not occur in $\innernodes{\agraphB'}$, a contradiction.
Thus it is straightforward to check that $\replace{\agraph}{\agraphB}{\agraphB'}$ is a \graph.

We prove that 
$\agraphB' \sgr \agraph'$, with $\agraph' \isdef \replace{\agraph}{\agraphB}{\agraphB'}$:
\begin{enumerate}
\item{$\nodes{\agraphB'} \subseteq \nodes{\agraph'} = (\nodes{\agraph} \setminus \nodes{\agraphB}) \cup \nodes{\agraphB'}$.}
\item{%\comment[mn]{To rewrite, depending on the final version of Definition \ref{def:subgraph}} \comment[ni]{some modifs}
 If $\mkedge{\anode}{\anodeB}\in \edges{\agraphB'}$ then 
%$\anode, \anodeB \in \nodes{\agraphB'}$ and 
$\anyedge{\anode}{\anodeB}\in \edges{\agraph'} = \amap(\edges{\agraph} \setminus \edges{\agraphB}) \cup \edges{\agraphB'}$.}
\item{
If $\anode, \anodeB \in \nodes{\agraphB'}$ and 
$\mkedge{\anode}{\anodeB} \in \edges{\agraph'}$, then by definition of $\edges{\agraph'}$, either 
$\mkedge{\anode}{\anodeB} \in \edges{\agraphB'}$ and the proof is completed, 
or
$\mkedge{\anode}{\anodeB} \in   \amap(\edges{\agraph} \setminus \edges{\agraphB})$.
In the latter case, we have 
$\anode = \amap(\anode')$, $\anodeB = \amap(\anodeB')$ with 
$\mkedge{\anode'}{\anodeB'} \in \edges{\agraph} \setminus \edges{\agraphB}$.
We show that $\anode' \in \nodes{\agraphB}$.
If $\anode = \anode'$, then we have $\anode \in \nodes{\agraphB'} \cap \nodes{\agraph}$,
thus $\anode = \anode' \in \nodes{\agraphB}$ (since $\agraphB'$ is \substitutable for $\agraphB$ in $\agraph$). 
Otherwise,  $\anode \not = \anode'$, hence
by definition of $\amap$, $\anode'$ occurs in $\inputs{\agraphB}$, thus $\anode' \in \nodes{\agraphB}$.
Similarly, $\anodeB' \in \nodes{\agraphB}$, 
and since $\agraphB \sgr \agraph$ we deduce that
$\mkedge{\anode'}{\anodeB'} \in \edges{\agraphB}$, contradicting the fact that $\mkedge{\anode'}{\anodeB'} \in \edges{\agraph} \setminus \edges{\agraphB}$.
}

\item{Assume that $\anyedge{\anode}{\anodeB} \in \edges{\agraph'}$, $\anode \in \nodes{\agraphB'}$ and
$\anodeB \not \in \nodes{\agraphB'}$. By definition of $\edges{\agraph'}$ this entails 
that 
$\anyedge{\anode}{\anodeB} \in \amap(\edges{\agraph} \setminus \edges{\agraphB})$
and $\anode = \amap(\anode')$, $\anodeB = \amap(\anodeB')$, for some $\anode',\anodeB' \in \nodes{\agraph}$.
If $\anode = \anode'$ then $\anode\in \nodes{\agraphB'} \cup \nodes{\agraph}$, and
since $\agraphB'$ is \substitutable for $\agraphB$ in $\agraph$, we have 
$\anode \in \nodes{\agraphB}$, thus, by definition of $\amap$, 
$\anode$ occurs in $\inputs{\agraphB}$ (since $\anode\in \dom{\amap}$).
Otherwise,
$\anode \not = \anode'$, and  $\anode$ occurs in $\inputs{\agraphB'}$ by definition of $\amap$.
}

\item{Let $\anode\in \nodes{\agraphB'}$ be a \node occurring in $\inputs{\agraph'}$.
We have $\anode = \amap(\anode')$ for some \node occurring in $\inputs{\agraph}$.
By definition of $\amap$, either $\anode$ occurs in $\inputs{\agraphB'}$ and the proof is completed,
or $\anode = \anode'$ with $\anode\not \in \nodes{\agraphB}$, which contradicts the hypothesis
that $\agraphB'$ is \substitutable for $\agraphB$ in $\agraph$. 
} 
\item{By definition of $\labs{\agraph'}$, we have
$\labs{\agraphB'} = \restrict{\labs{\agraph'}}{\innernodes{\agraphB'}}$.}

\end{enumerate}
\end{proof}

\subsection{{Properties of the Replacement Operation}}

We establish some basic properties of the replacement operation, all of which are similar to their counterparts for the replacement operation on first-order terms.

\begin{proposition}
\label{prop:trivial_replacement}
Let $\agraph$ be a \graph and let $\agraphB\sgr \agraph$.
Then $\agraphB$ is \substitutable for $\agraphB$ in $\agraph$ and $\replace{\agraph}{\agraphB}{\agraphB} = \agraph$.
\end{proposition}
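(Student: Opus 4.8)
The plan is to verify the two claims directly from the definitions, in order. First I would check that $\agraphB$ is \substitutable for $\agraphB$ in $\agraph$: by Definition \ref{def:substitutable} this requires that $\agraphB$ and $\agraphB$ be \rootsimilar and that $\nodes{\agraph} \cap \nodes{\agraphB} \subseteq \nodes{\agraphB}$. The \rootsimilar condition holds because $\inputs{\agraphB} = \inputs{\agraphB}$, so $n = m$, $\sortof(\anode_i) = \sortof(\anode_i)$, and $\anode_i \eqnodes \anode_i$ by reflexivity of $\leqnodes$. The inclusion $\nodes{\agraph} \cap \nodes{\agraphB} \subseteq \nodes{\agraphB}$ is trivial. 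Hence the replacement $\replace{\agraph}{\agraphB}{\agraphB}$ is well-defined (and is a \graph, by Lemma \ref{lm:sub-subg}).

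Next I would observe that the \submap $\amap \isdef \amapsub{\agraphB}{\agraphB}{\agraph}$ is the identity on $\nodes{\agraph}$: writing $\inputs{\agraphB} = \tuple{\anode_1,\dots,\anode_n}$, the first component $\{\anode_i \mapsto \anode_i \mid 1 \leq i \leq n\}$ is the identity on the roots of $\agraphB$, and the second component $\{\anode \mapsto \anode \mid \anode \in \nodes{\agraph}\setminus\nodes{\agraphB}\}$ is the identity on the remaining nodes; together these cover all of $\nodes{\agraph}$ by Condition \ref{subgraph:inputs_bis} of Definition \ref{def:subgraph}. Consequently $\inv{\amap}$ is also the identity wherever it is used.

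Then I would compare $\agraph' \isdef \replace{\agraph}{\agraphB}{\agraphB}$ with $\agraph$ component by component, using Definition \ref{def:replace}. For the nodes, $\nodes{\agraph'} = (\nodes{\agraph}\setminus\nodes{\agraphB}) \cup \nodes{\agraphB} = \nodes{\agraph}$ since $\nodes{\agraphB}\subseteq\nodes{\agraph}$. For the roots, $\inputs{\agraph'} = \amap(\inputs{\agraph}) = \inputs{\agraph}$ since $\amap$ is the identity on $\nodes{\agraph}$. For the edges, $\edges{\agraph'} = \amap(\edges{\agraph}\setminus\edges{\agraphB}) \cup \edges{\agraphB} = (\edges{\agraph}\setminus\edges{\agraphB}) \cup \edges{\agraphB} = \edges{\agraph}$, where the last equality uses $\edges{\agraphB}\subseteq\edges{\agraph}$ (Condition \ref{subgraph:edges} of Definition \ref{def:subgraph}). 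For the labels, I would go through the three cases of Condition \ref{replace:labels}: on $\innernodes{\agraph'}\setminus\nodes{\agraphB}$ we get $\labs{\agraph}(\anode)$; on $\innernodes{\agraphB}$ we get $\labs{\agraphB}(\anode) = \labs{\agraph}(\anode)$ by Condition \ref{subgraph:labs} of Definition \ref{def:subgraph}; and on nodes with $\inv{\amap}(\anode) = \anode \in \innernodes{\agraph}\cap\inputs{\agraphB}$ — a set that is in fact empty by Proposition \ref{prop:sub_inputs_include}, but in any event $\labs{\agraph}(\inv{\amap}(\anode)) = \labs{\agraph}(\anode)$. Since $\innernodes{\agraph'} = \innernodes{\agraph}$ (from $\nodes{\agraph'} = \nodes{\agraph}$ and $\inputs{\agraph'} = \inputs{\agraph}$), this shows $\labs{\agraph'} = \labs{\agraph}$, and hence $\agraph' = \agraph$.

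I do not expect a genuine obstacle here; the statement is a sanity check that the replacement operation degenerates correctly. The only point requiring a little care is bookkeeping the label cases of Condition \ref{replace:labels} and confirming that $\innernodes{\agraph'} = \innernodes{\agraph}$ so that the domains of $\labs{\agraph'}$ and $\labs{\agraph}$ agree; everything else is a one-line set computation.
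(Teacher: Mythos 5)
Your proof is correct and follows essentially the same route as the paper's: check substitutability, observe that the mapping $\amapsub{\agraphB}{\agraphB}{\agraph}$ is the identity, and compare the four components of $\replace{\agraph}{\agraphB}{\agraphB}$ with those of $\agraph$. One harmless slip: the set $\innernodes{\agraph}\cap\inputs{\agraphB}$ is \emph{not} empty in general (Proposition \ref{prop:sub_inputs_include} says roots of $\agraph$ avoid $\innernodes{\agraphB}$, not the converse; see Example \ref{ex:subs} where $\inputs{\agraph}=\tuple{}$), but your fallback computation $\labs{\agraph}(\inv{\amap}(\anode))=\labs{\agraph}(\anode)$ handles that case correctly anyway.
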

\begin{proof}
By definition, $\nodes{\agraph} \cap \nodes{\agraphB} \subseteq \nodes{\agraphB}$
thus $\agraphB$ is \substitutable for $\agraphB$ in $\agraph$.
The conditions of Definition \ref{def:replace} (with $\agraphB = \agraphB'$) entail that $\amap$ is the identity on $\nodes{\agraph} \setminus \innernodes{\agraphB}$
and that $\nodes{\agraph'} = (\nodes{\agraph} \setminus \nodes{\agraphB}) \cup \nodes{\agraphB} = \nodes{\agraph}$, 
$\inputs{\agraph'} = \inputs{\agraph}$,
$\edges{\agraph'}  = (\edges{\agraph} \setminus \edges{\agraphB}) \cup \edges{\agraphB} = \edges{\agraph}$
and $\labs{\agraph'} = \labs{\agraph}$ (since $\labs{\agraphB} = \restrict{\labs{\agraph}}{\innernodes{\agraphB}}$). 
\end{proof}

\newcommand{\rewriterule}{\graph rewrite rule\xspace}
\newcommand{\rules}{\mathit{R}}

%An {\em \nsubstitution} $\amap$ is a partial mapping from $\allnodes$ to $\allnodes$.
%The function $\amap$ may then be extended to {\edge}s and {\graph}s as follows:
%\begin{itemize}
%\item{
%$\amap(\vec{\anode}) = (\amap(\anode_1),\dots,\amap(\anode_n))$, if $\vec{\anode} = (\anode_1,\dots,\anode_n)$}
%\item{
%$\amap(\{ \anode_1,\dots,\anode_n \}) = \{ \amap(\anode_1),\dots,\amap(\anode_n) \}$,}
%\item{$\amap(\agraph) = 
%\mkgraph{\amap(\nodes{\agraph})}{\amap(\inputs{\agraph})}{\amap(\edges{\agraph})}$.}
%\end{itemize}

\begin{proposition}
\label{prop:replace_iso}
Let $\agraph, \agraphB$ and $\agraphB'$ be {\graph}s, with $\agraphB \sgr \agraph$
and $\agraphB'$ is \substitutable for $\agraphB$ in $\agraph$.
Let $\amap$
be an \nsubstitution with domain $\nodes{\agraph} \cup \nodes{\agraphB'}$.
Then  $\amap(\agraphB')$ is \substitutable for $\amap(\agraphB)$ in $\amap(\agraph)$, and:
\[\amap(\replace{\agraph}{\agraphB}{\agraphB'}) = 
\replace{\amap(\agraph)}{\amap(\agraphB)}{\amap(\agraphB')}\]
\end{proposition}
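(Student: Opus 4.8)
The plan is to prove the statement in two parts: first the substitutability claim, then the equality of graphs, and the latter by checking that the two sides agree on each of the four components ($\nodes{}$, $\inputs{}$, $\edges{}$, $\labs{}$) of a graph. Throughout, write $\agraph' \isdef \replace{\agraph}{\agraphB}{\agraphB'}$, $\nu \isdef \amapsub{\agraphB}{\agraphB'}{\agraph}$ (the submapping used to build $\agraph'$), and $\nu' \isdef \amapsub{\amap(\agraphB)}{\amap(\agraphB')}{\amap(\agraph)}$ (the submapping used to build the right-hand side). The central lemma will be the commutation identity
\[
  \nu' \circ \amap = \amap \circ \nu \quad\text{on } \nodes{\agraph},
\]
which I expect to be the real technical heart of the argument.

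First, for substitutability: since $\amap$ is injective, $\amap(\agraphB) \sgr \amap(\agraph)$ by Proposition~\ref{prop:mapsgr}, and the root-compatibility of $\amap(\agraphB)$ and $\amap(\agraphB')$ follows because $\amap$ preserves sorts and the $\eqnodes$-class (and preserves the lengths of the input sequences, being injective); the condition $\nodes{\amap(\agraph)} \cap \nodes{\amap(\agraphB')} \subseteq \nodes{\amap(\agraphB)}$ is obtained by applying $\amap$ to $\nodes{\agraph} \cap \nodes{\agraphB'} \subseteq \nodes{\agraphB}$ and using injectivity of $\amap$ to commute $\amap$ with intersection. So $\amap(\agraphB')$ is indeed substitutable for $\amap(\agraphB)$ in $\amap(\agraph)$, and $\nu'$ is well defined.

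Next I establish the commutation identity. Writing $\inputs{\agraphB} = \tuple{\anode_1,\dots,\anode_n}$, $\inputs{\agraphB'} = \tuple{\anode_1',\dots,\anode_n'}$ (equal lengths by root-compatibility), the map $\nu$ sends $\anode_i \mapsto \anode_i'$ and fixes $\nodes{\agraph}\setminus\nodes{\agraphB}$. By definition $\inputs{\amap(\agraphB)} = \amap(\inputs{\agraphB}) = \tuple{\amap(\anode_1),\dots,\amap(\anode_n)}$ and likewise for $\agraphB'$, so $\nu'$ sends $\amap(\anode_i)\mapsto \amap(\anode_i')$ and fixes $\nodes{\amap(\agraph)}\setminus\nodes{\amap(\agraphB)} = \amap(\nodes{\agraph}\setminus\nodes{\agraphB})$ (the last equality again by injectivity). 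Now take $\anode \in \nodes{\agraph}$. If $\anode = \anode_i$ for some $i$, then $(\nu'\circ\amap)(\anode) = \nu'(\amap(\anode_i)) = \amap(\anode_i') = \amap(\nu(\anode))$. If $\anode \in \nodes{\agraph}\setminus\nodes{\agraphB}$, then $\amap(\anode)\in\amap(\nodes{\agraph}\setminus\nodes{\agraphB})$, so $(\nu'\circ\amap)(\anode) = \amap(\anode) = \amap(\nu(\anode))$. Since $\nodes{\agraph} = (\nodes{\agraph}\setminus\nodes{\agraphB}) \cup \inputs{\agraphB} \cup (\innernodes{\agraphB})$ and $\nu$ is the identity on $\innernodes{\agraphB}\cap\dom\nu$—wait, more carefully: $\dom\nu = \inputs{\agraphB}\cup(\nodes{\agraph}\setminus\nodes{\agraphB})$, which by Proposition~\ref{prop:sub_inputs_include} together with $\nodes{\agraphB}=\inputs{\agraphB}\cup\innernodes{\agraphB}$ covers exactly the nodes we need; the nodes of $\innernodes{\agraphB}$ lie outside $\dom\nu$ but also get discarded on both sides, so the identity holds wherever both composites are defined. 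This case analysis, and checking the domains match, is the step I expect to be the main obstacle—it is routine but requires care with partiality.

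Finally, with the commutation identity in hand, I verify component by component, using Definition~\ref{def:replace} on both sides and the already-noted facts $\amap(\nodes{\agraphB}) = \nodes{\amap(\agraphB)}$, $\amap(\edges{\agraphB}) = \edges{\amap(\agraphB)}$, $\amap(\inputs{\agraphB}) = \inputs{\amap(\agraphB)}$, $\labs{\amap(\agraph)} = \labs{\agraph}\circ\amap^{-1}$, and similarly for $\agraphB'$:
\begin{itemize}
\item \emph{Nodes:} $\nodes{\amap(\agraph')} = \amap(\nodes{\agraph'}) = \amap((\nodes{\agraph}\setminus\nodes{\agraphB})\cup\nodes{\agraphB'}) = (\nodes{\amap(\agraph)}\setminus\nodes{\amap(\agraphB)})\cup\nodes{\amap(\agraphB')}$, using injectivity of $\amap$ to distribute over $\setminus$ and $\cup$.
\item \emph{Roots:} $\inputs{\amap(\agraph')} = \amap(\inputs{\agraph'}) = \amap(\nu(\inputs{\agraph})) = \nu'(\amap(\inputs{\agraph})) = \nu'(\inputs{\amap(\agraph)})$, which is exactly $\inputs{\replace{\amap(\agraph)}{\amap(\agraphB)}{\amap(\agraphB')}}$.
\item \emph{Edges:} similarly, $\amap(\edges{\agraph'}) = \amap(\nu(\edges{\agraph}\setminus\edges{\agraphB}) \cup \edges{\agraphB'}) = \nu'(\amap(\edges{\agraph}\setminus\edges{\agraphB})) \cup \edges{\amap(\agraphB')} = \nu'(\edges{\amap(\agraph)}\setminus\edges{\amap(\agraphB)}) \cup \edges{\amap(\agraphB')}$, again using injectivity of $\amap$ and the commutation identity applied to edges (edges being pairs of nodes, the extension of $\amap$ and $\nu$ to pairs commutes as well).
\item \emph{Labels:} for $\anode\in\nodes{\amap(\agraph')}$, chase through the three cases of clause~\ref{replace:labels}. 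Writing $\anode = \amap(\anodeB)$ with $\anodeB\in\nodes{\agraph'}$: if $\anodeB\in\innernodes{\agraph'}\setminus\nodes{\agraphB'}$ then (by Proposition~\ref{prop:subs-minus}) $\nu$ and hence $\nu'$ fix $\anodeB$ resp.\ $\anode$, and $\labs{\amap(\agraph')}(\anode) = \labs{\agraph'}(\anodeB) = \labs{\agraph}(\anodeB) = \labs{\amap(\agraph)}(\anode)$; if $\anodeB\in\innernodes{\agraphB'}$ then $\labs{\amap(\agraph')}(\anode) = \labs{\agraphB'}(\anodeB) = \labs{\amap(\agraphB')}(\anode)$; and if $\nu^{-1}(\anodeB)\in\innernodes{\agraph}\cap\inputs{\agraphB}$ then, using the commutation identity, $(\nu')^{-1}(\anode) = \amap(\nu^{-1}(\anodeB))$ lies in $\innernodes{\amap(\agraph)}\cap\inputs{\amap(\agraphB)}$ and the labels match by $\labs{\amap(\agraph)} = \labs{\agraph}\circ\amap^{-1}$. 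In each case the value coincides with the corresponding clause of clause~\ref{replace:labels} applied to build $\replace{\amap(\agraph)}{\amap(\agraphB)}{\amap(\agraphB')}$, and the "undefined otherwise" cases correspond under $\amap$ as well.
\end{itemize}
Since the two graphs agree on all four components, they are equal, which completes the proof.
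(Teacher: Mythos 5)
Your proof is correct and follows essentially the same route as the paper's: establish substitutability by pushing $\amap$ through the intersection using injectivity, prove the commutation identity between the two submappings (the paper phrases it as $\amapsub{\amap(\agraphB)}{\amap(\agraphB')}{\amap(\agraph)} = \amap\circ\amapsub{\agraphB}{\agraphB'}{\agraph}\circ\amap^{-1}$, which is your $\nu'\circ\amap=\amap\circ\nu$), and then check the four components one by one, with the same three-case analysis for labels. Your treatment of the partiality of the submappings on $\innernodes{\agraphB}$ is slightly more explicit than the paper's, but the content is identical.
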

\begin{proof}
We have $\nodes{\amap(\agraph)} \cap \nodes{\amap(\agraphB')} = \amap(\nodes{\agraph}) \cap \amap(\nodes{\agraphB'}) =  
\amap(\nodes{\agraph} \cap \nodes{\agraphB'})$ by injectivity of $\amap$,
thus 
$\nodes{\amap(\agraph)} \cap \nodes{\amap(\agraphB')} \subseteq \amap(\nodes{\agraphB}) = \nodes{\amap(\agraphB)}$ (since
$\nodes{\agraph} \cap \nodes{\agraphB'} \subseteq \nodes{\agraphB}$, as $\agraphB'$ is \substitutable for $\agraphB$ in $\agraph$). Consequently, 
$\amap(\agraphB')$ is \substitutable for $\amap(\agraphB)$ in $\amap(\agraph)$.

Let $\agraph' \isdef \replace{\agraph}{\agraphB}{\agraphB'}$
and $\agraph'' \isdef \replace{\amap(\agraph)}{\amap(\agraphB)}{\amap(\agraphB')}$.
We show that $\amap(\agraph') = \agraph''$.
Let $\inputs{\agraphB} = \tuple{\anode_1,\ldots,\anode_n}$, 
$\inputs{\agraphB'} = \tuple{\anode_1',\ldots,\anode_n'}$, and:
\[\amap' = \{ \anode_i \mapsto \anode_i' \mid 1 \leq i \leq n\} \cup \{ \anode \mapsto \anode \mid 
\anode \in \nodes{\agraph} \setminus \nodes{\agraphB} \}\]
\[\amap'' = \{ \amap(\anode_i) \mapsto \amap(\anode_i') \mid 1 \leq i \leq n\} \cup \{ \anode \mapsto \anode \mid 
\anode \in \nodes{\amap(\agraph)} \setminus \nodes{\amap(\agraphB)} \}\]
Observe that $\amap'' = \amap\circ\amap'\circ\amap^{-1}$. We have:

\begin{itemize}
\item{
\[
\begin{array}{lll}
\nodes{\amap(\agraph')} & = & \amap(\nodes{\agraph'}) 
 =  \amap((\nodes{\agraph} \setminus \nodes{\agraphB}) \cup \nodes{\agraphB'}) 
 =  \amap(\nodes{\agraph} \setminus \nodes{\agraphB}) \cup \amap(\nodes{\agraphB'}) \\
& = & (\amap(\nodes{\agraph}) \setminus \amap(\nodes{\agraphB})) \cup \amap(\nodes{\agraphB'}) \quad \text{by injectivity of $\amap$} \\
& = & \nodes{\agraph''} \\
\end{array}
\] 
}
\item{
\[
\begin{array}{lll}
\edges{\amap(\agraph')} & = & \amap(\edges{\agraph'}) 
= \amap(\amap'(\edges{\agraph} \setminus\edges{\agraphB}) \cup \edges{\agraphB'}) \\
& = & \amap(\amap'(\edges{\agraph} \setminus\edges{\agraphB})) \cup \amap(\edges{\agraphB'}))
= \amap(\amap'(\edges{\agraph} \setminus\edges{\agraphB})) \cup \edges{\amap(\agraphB')} \\
& = &  (\amap(\amap'(\edges{\agraph})) \setminus \amap(\amap'(\edges{\agraphB}))) \cup \edges{\amap(\agraphB')} \quad \text{by injectivity of $\amap,\amap'$} \\ & = &  (\edges{\amap(\amap'(\agraph))})) \setminus \edges{\amap(\amap'(\agraphB))}) \cup \edges{\amap(\agraphB')} \\ \end{array}
\]
Further, 
\[
\begin{array}{lll}
\edges{\agraph''} & = & \amap''(\edges{\amap(\agraph)} \setminus \edges{\amap(\agraphB)}) \cup \edges{\amap(\agraphB')}  =  (\edges{\amap''(\amap(\agraph))} \setminus \edges{\amap''(\amap(\agraphB))}) \cup \edges{\amap(\agraphB')} \\
& = & (\edges{\amap(\amap'(\agraph))} \setminus \edges{\amap(\amap'(\agraphB))}) \cup \edges{\amap(\agraphB')} \\
\end{array}
\]
Hence $\edges{\amap(\agraph')} = \edges{\agraph''}$.
}
\item{$\inputs{\amap(\agraph')}  = \amap(\inputs{\agraph'}) = 
\amap(\amap'(\inputs{\agraph})) = \amap''(\amap(\inputs{\agraph})) =  \amap''(\inputs{\amap(\agraph)})) = \inputs{\agraph''}$.
}
\item{Assume that $\anode = \amap''(\anode')$, with $\anode' \in \innernodes{\amap(\agraph)} \setminus \innernodes{\amap(\agraphB)}$.
Then $\anode' = \amap(\anode'')$, where $\anode'' \in \innernodes{\agraph} \setminus \innernodes{\agraphB}$, 
and $\amap(\amap'(\anode'')) = \amap''(\anode') = \anode$.
Furthermore, $\labs{\agraph''}(\anode) = \labs{\amap(\agraph)}(\anode') = 
\labs{\agraph}(\anode'') 
= \labs{\agraph'}(\amap'(\anode''))  = \labs{\amap(\agraph')}(\anode)$, because $\inv{\amap}(\anode) = \amap'(\anode'')$. %\comment[mn]{I think we can get rid of $\amap$ in expressions 3 and 4, same thing in equations after ``Moreover''}.
If $\anode \in \innernodes{\amap(\agraphB')}$, then $\anode = \amap(\anode')$, with $\anode' \in \innernodes{\agraphB}$. Moreover, we have
$\labs{\agraph''}(\anode) = \labs{\amap(\agraphB')}(\anode) =  \labs{\agraphB'}(\anode') = \labs{\agraph'}(\anode) = 
\labs{\amap(\agraph')}(\anode)$.
Finally, if
$\anode = \amap(\anode')$ is not of one of the two forms above, then 
it is easy to check that 
$\anode' \not \in \innernodes{\agraphB'}$, 
and that
there is no $\anode''$ such that $\anode' = \amap'(\anode'')$, with $\anode'' \in \innernodes{\agraph} \setminus \nodes{\agraphB}$.
Thus $\labs{\agraph''}(\anode)$ and 
$\labs{\agraph'}(\anode')$ are both undefined.
}
\end{itemize}
\end{proof}

The next lemma shows that renaming the nodes of some subgraph does not affect the graphs obtained 
by replacing this subgraph.
%\comment[me]{Je me demande si une image n'aiderait pas avec un exemple de substitution }
%\comment[ni]{certainement utile, à faire plus tard}

\begin{lemma}
\label{lem:replace_id}
Let $\agraph, \agraphB$ and $\agraphB'$ be {\graph}s, with $\agraphB \sgr \agraph$
and $\agraphB'$ is \substitutable for $\agraphB$ in $\agraph$.
Let $\amap$ be an \nsubstitution of domain $\nodes{\agraph}$ such that 
$\amap(\anode)= \anode$ if $\anode \in \nodes{\agraph} \setminus \nodes{\agraphB}$.
Then:
\[\replace{\amap(\agraph)}{\amap(\agraphB)}{\agraphB'} = 
\replace{\agraph}{\agraphB}{\agraphB'}\]
\end{lemma}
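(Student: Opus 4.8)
The plan is to unfold both sides of the claimed equality according to Definition~\ref{def:replace} and to compare them componentwise. \emph{First}, I would check that the right-hand side is well defined, i.e., that $\agraphB'$ is \substitutable for $\amap(\agraphB)$ in $\amap(\agraph)$: we have $\amap(\agraphB)\sgr\amap(\agraph)$ by Proposition~\ref{prop:mapsgr}; $\amap(\agraphB)$ and $\agraphB'$ are \rootsimilar because $\amap$ preserves sorts and $\eqnodes$-classes and $\agraphB,\agraphB'$ are; and $\nodes{\amap(\agraph)}\cap\nodes{\agraphB'}\subseteq\nodes{\amap(\agraphB)}$ follows from $\nodes{\agraph}\cap\nodes{\agraphB'}\subseteq\nodes{\agraphB}$, the injectivity of $\amap$, and the hypothesis that $\amap$ fixes $\nodes{\agraph}\setminus\nodes{\agraphB}$ pointwise: a node of $\nodes{\amap(\agraph)}\cap\nodes{\agraphB'}$ which is the $\amap$-image of a node outside $\nodes{\agraphB}$ would be fixed by $\amap$, hence would lie in $\nodes{\agraph}\cap\nodes{\agraphB'}\subseteq\nodes{\agraphB}$, a contradiction. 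I then write $\agraph_1\isdef\replace{\agraph}{\agraphB}{\agraphB'}$ and $\agraph_2\isdef\replace{\amap(\agraph)}{\amap(\agraphB)}{\agraphB'}$, and let $\amap_1\isdef\amapsub{\agraphB}{\agraphB'}{\agraph}$ and $\amap_2\isdef\amapsub{\amap(\agraphB)}{\agraphB'}{\amap(\agraph)}$ be the associated maps of Definition~\ref{def:replace}.

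\emph{Next} I would isolate the two facts that drive the argument. (a) Since $\amap$ is injective and fixes $\nodes{\agraph}\setminus\nodes{\agraphB}$ pointwise, $\nodes{\amap(\agraph)}\setminus\nodes{\amap(\agraphB)} = \nodes{\agraph}\setminus\nodes{\agraphB}$, $\innernodes{\amap(\agraph)} = \amap(\innernodes{\agraph})$, and $\innernodes{\amap(\agraph)}\setminus\nodes{\amap(\agraphB)} = \innernodes{\agraph}\setminus\nodes{\agraphB}$. (b) The domain of $\amap_1$ is $\nodes{\agraph}\setminus\innernodes{\agraphB}$, $\amap$ maps it bijectively onto $\dom{\amap_2}$, and $\amap_1 = \amap_2\circ\amap$ on $\dom{\amap_1}$; this is checked directly on the two kinds of nodes in $\dom{\amap_1}$ — the roots of $\agraphB$ (both sides send the $i$-th one to the $i$-th root of $\agraphB'$) and the nodes of $\nodes{\agraph}\setminus\nodes{\agraphB}$ (fixed by each of $\amap,\amap_1,\amap_2$). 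Likewise $\amap^{-1}$ fixes $\nodes{\agraph}\setminus\nodes{\agraphB}$, and $\amap_2^{-1}(\anode) = \amap(\amap_1^{-1}(\anode))$ for every $\anode$ in the (common) image of $\amap_1$ and $\amap_2$.

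\emph{The componentwise comparison} then runs as follows. For nodes, $\nodes{\agraph_1} = (\nodes{\agraph}\setminus\nodes{\agraphB})\cup\nodes{\agraphB'} = \nodes{\agraph_2}$ by (a). For roots, Condition~\ref{subgraph:inputs_bis} of Definition~\ref{def:subgraph} puts every node of $\inputs{\agraph}$ into $\nodes{\agraph}\setminus\innernodes{\agraphB} = \dom{\amap_1}$, whence $\inputs{\agraph_1} = \amap_1(\inputs{\agraph}) = \amap_2(\amap(\inputs{\agraph})) = \amap_2(\inputs{\amap(\agraph)}) = \inputs{\agraph_2}$. For edges, Conditions~\ref{subgraph:inputs} and~\ref{subgraph:edges_bis} of Definition~\ref{def:subgraph} force every endpoint of an edge in $\edges{\agraph}\setminus\edges{\agraphB}$ out of $\innernodes{\agraphB}$, so such edges lie in $\dom{\amap_1}$ and, using (a) and (b), $\edges{\agraph_2} = \amap_2(\edges{\amap(\agraph)}\setminus\edges{\amap(\agraphB)})\cup\edges{\agraphB'} = \amap_2(\amap(\edges{\agraph}\setminus\edges{\agraphB}))\cup\edges{\agraphB'} = \amap_1(\edges{\agraph}\setminus\edges{\agraphB})\cup\edges{\agraphB'} = \edges{\agraph_1}$. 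In particular $\innernodes{\agraph_1} = \innernodes{\agraph_2}$, so $\labs{\agraph_1}$ and $\labs{\agraph_2}$ have the same domain, and it remains only to compare their values on an inner node $\anode$.

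\emph{The labels are the one genuinely delicate point}, handled by the three-way case split of Definition~\ref{def:replace}(\ref{replace:labels}) applied to $\labs{\agraph_1}(\anode)$. If $\anode\in\innernodes{\agraphB'}$, both sides equal $\labs{\agraphB'}(\anode)$. If $\anode\in\innernodes{\agraph_1}\setminus\nodes{\agraphB'}$, which equals $\innernodes{\agraph}\setminus\nodes{\agraphB} = \innernodes{\agraph_2}\setminus\nodes{\agraphB'}$ by Proposition~\ref{prop:subs-minus} and (a), then $\labs{\agraph_1}(\anode) = \labs{\agraph}(\anode)$ while $\labs{\agraph_2}(\anode) = \labs{\amap(\agraph)}(\anode) = \labs{\agraph}(\amap^{-1}(\anode)) = \labs{\agraph}(\anode)$, using $\labs{\amap(\agraph)} = \labs{\agraph}\circ\amap^{-1}$ and that $\amap^{-1}$ fixes $\anode$. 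Finally, if $\amap_1^{-1}(\anode)\in\innernodes{\agraph}\cap\inputs{\agraphB}$, then by (a) and (b) $\amap_2^{-1}(\anode) = \amap(\amap_1^{-1}(\anode))\in\innernodes{\amap(\agraph)}\cap\inputs{\amap(\agraphB)}$, so $\labs{\agraph_2}(\anode) = \labs{\amap(\agraph)}(\amap_2^{-1}(\anode)) = \labs{\agraph}(\amap_1^{-1}(\anode)) = \labs{\agraph_1}(\anode)$. Hence $\labs{\agraph_1} = \labs{\agraph_2}$ and $\agraph_1 = \agraph_2$. I expect the only real obstacle to be organisational: keeping straight which of $\amap,\amap^{-1},\amap_1,\amap_1^{-1},\amap_2,\amap_2^{-1}$ acts where, and making the label case split exhaustive and consistent; all the underlying identities reduce to injectivity together with facts (a) and~(b).
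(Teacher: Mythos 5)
Your proof is correct and follows essentially the same route as the paper's: both define the replacement maps $\amap_1$ and $\amap_2$, establish the key identity $\amap_1=\amap_2\circ\amap$ from injectivity and the fact that $\amap$ fixes $\nodes{\agraph}\setminus\nodes{\agraphB}$, and then compare the two replacements componentwise. The only differences are that you additionally verify substitutability of $\agraphB'$ for $\amap(\agraphB)$ in $\amap(\agraph)$ (which the paper leaves implicit) and treat the third label case explicitly, both of which are welcome but do not change the argument.
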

\begin{proof}
Let $\agraph' \isdef \replace{\agraph}{\agraphB}{\agraphB'}$,
 $\agraph'' \isdef \replace{\amap(\agraph)}{\amap(\agraphB)}{\agraphB'}$, 
 $\inputs{\agraphB} \isdef \tuple{\anode_1,\dots,\anode_n}$, 
 $\inputs{\agraphB'}\isdef \tuple{\anode_1',\dots,\anode_n'}$ 
and consider the following {\nsubstitution}s:
\[\amap_1 \isdef \{ \anode_i \mapsto \anode_i' \mid 1 \leq i \leq n\} \cup \{ \anode \mapsto \anode \mid 
\anode \in \nodes{\agraph} \setminus \nodes{\agraphB} \} \]
and
\[\amap_2 \isdef \{ \amap(\anode_i) \mapsto \anode_i' \mid 1 \leq i \leq n\} \cup \{ \anode \mapsto \anode \mid 
\anode \in \nodes{\mu(\agraph)} \setminus \nodes{\mu(\agraphB)} \}. \]
By injectivity of $\mu$, we have $\nodes{\mu(\agraph)} \setminus \nodes{\mu(\agraphB)} = \mu(\nodes{\agraph}) \setminus \mu(\nodes{\agraphB})
%$, hence 
%by injectivity of $\mu$, $\nodes{\mu(\agraph)} \setminus \nodes{\mu(\agraphB)}
 = \mu(\nodes{\agraph} \setminus\nodes{\agraphB}) = \nodes{\agraph} \setminus\nodes{\agraphB}$, since $\amap(\anode) = \anode$ if $\anode \in \nodes{\agraph} \setminus \nodes{\agraphB}$. 
Thus $\amap_2\circ\amap = \amap_1$.
We show that $\agraph' = \agraph''$:
\begin{itemize}
\item{By Definition \ref{def:replace} (\ref{replace:nodes}), 
 $\nodes{\agraph'} = (\nodes{\agraph} \setminus \nodes{\agraphB}) \cup \nodes{\agraphB'}$
and 
$\nodes{\agraph'} = (\nodes{\amap(\agraph)} \setminus \nodes{\amap(\agraphB)}) \cup \nodes{\agraphB'}
=   (\nodes{\agraph} \setminus \nodes{\agraphB})   \cup \nodes{\agraphB'} = 
\nodes{\agraph'}$ .}
\item{By Definition \ref{def:replace} (\ref{replace:inputs}),
$\inputs{\agraph'} = \amap_1(\inputs{\agraph})$
and
$\inputs{\agraph''} = \amap_2(\inputs{\amap(\agraph)}) = \amap_2(\amap(\inputs{\agraph})) 
= \amap_1(\inputs{\agraph}) = \inputs{\agraph'}$.
}
\item{By Definition \ref{def:replace} (\ref{replace:edges}),
$\edges{\agraph'} = \amap_1(\edges{\agraph} \setminus \edges{\agraphB}) \cup \edges{\agraphB'}$ and
$\edges{\agraph''} = \amap_2(\edges{\amap(\agraph)} \setminus \edges{\amap(\agraphB)}) \cup \edges{\agraphB'}
=  \amap_2(\amap(\edges{\agraph}) \setminus \amap(\edges{\agraphB})) \cup \edges{\agraphB'}$.
Since $\amap$ is injective, we deduce that
$\edges{\agraph''} = \amap_2(\amap(\edges{\agraph} \setminus \edges{\agraphB})) \cup \edges{\agraphB'} 
= \amap_1(\edges{\agraph} \setminus \edges{\agraphB}) \cup \edges{\agraphB'}  = \edges{\agraph'}$.
}
\item{Let $\anode \in \innernodes{\agraph} \setminus \innernodes{\agraphB}$, with $\anode = \amap_1(\anode')$, hence $\anode = \amap_2(\amap(\anode))$.
By Definition \ref{def:replace} (\ref{replace:labels}),
we have $\labs{\agraph'}(\anode) =  \labs{\agraph}(\anode')$
and
$\labs{\agraph''}(\anode) =  \labs{\amap(\agraph)}(\amap(\anode')) = \labs{\agraph}(\anode') = \labs{\agraph'}(\anode)$.
If $\anode \in \innernodes{\agraphB'}$, then, again by 
Definition \ref{def:replace} (\ref{replace:labels}),
we have $\labs{\agraph'}(\anode) = \labs{\agraph''}(\anode) = \labs{\agraphB'}(\anode)$.
}
\end{itemize}
\end{proof}

This entails that the replacement of a subgraph by an isomorphic subgraph preserves isomorphism: %\comment[mn]{To check later: there should be a simple condition to impose on isomorphisms so that a graph isomorphic to a substitutable graph is also substitutable.}

\begin{corollary}
\label{cor:replace_iso_subgraph}
Let $\agraph, \agraphB, \agraphB', \agraphB''$ be {\graph}s, with
$\agraphB \sgr \agraph$, $\agraphB'_1$ and $\agraphB'_2$ are \substitutable for $\agraphB$ in $\agraph$
and $\agraphB'_1 \iso \agraphB'_2$. Then:
\[\replace{\agraph}{\agraphB}{\agraphB_1'} \iso \replace{\agraph}{\agraphB}{\agraphB_2'}\]
\end{corollary}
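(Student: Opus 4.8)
The plan is to exhibit an explicit \nsubstitution witnessing the required isomorphism. Since $\agraphB'_1 \iso \agraphB'_2$, I would first fix an \nsubstitution $\nu$ with $\dom{\nu} \supseteq \nodes{\agraphB'_2}$ and $\agraphB'_1 = \nu(\agraphB'_2)$; then $\nodes{\agraphB'_1} = \nu(\nodes{\agraphB'_2})$, $\edges{\agraphB'_1} = \nu(\edges{\agraphB'_2})$, $\inputs{\agraphB'_1} = \nu(\inputs{\agraphB'_2})$ and $\labs{\agraphB'_1} = \labs{\agraphB'_2} \circ \inv{\nu}$. Writing $\agraph'_i \isdef \replace{\agraph}{\agraphB}{\agraphB'_i}$, I would set $\rho \isdef \restrict{\nu}{\nodes{\agraphB'_2}} \cup \{ \anode \mapsto \anode \mid \anode \in \nodes{\agraph} \setminus \nodes{\agraphB} \}$ and prove that $\rho(\agraph'_2) = \agraph'_1$, which yields $\agraph'_1 \iso \agraph'_2$ by definition of $\iso$.

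First I would check that $\rho$ is a well-defined \nsubstitution with $\dom{\rho} = \nodes{\agraph'_2}$. The two pieces of the union have disjoint domains because $\nodes{\agraphB'_2} \cap (\nodes{\agraph} \setminus \nodes{\agraphB}) = \emptyset$ by Proposition \ref{prop:substitutable} (applied to $\agraphB'_2$), and disjoint images because $\nu(\nodes{\agraphB'_2}) = \nodes{\agraphB'_1}$ is disjoint from $\nodes{\agraph} \setminus \nodes{\agraphB}$ by Proposition \ref{prop:substitutable} (applied to $\agraphB'_1$); hence $\rho$ is injective, and it preserves sorts and $\eqnodes$-classes since $\nu$ does and the identity part trivially does. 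Finally $\dom{\rho} = \nodes{\agraphB'_2} \cup (\nodes{\agraph} \setminus \nodes{\agraphB}) = \nodes{\agraph'_2}$ by Definition \ref{def:replace} (\ref{replace:nodes}).

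The core of the argument is then the equality $\rho(\agraph'_2) = \agraph'_1$, established by comparing the four components of Definition \ref{def:replace}. The key auxiliary fact is that, letting $\amap_i \isdef \amapsub{\agraphB}{\agraphB'_i}{\agraph}$, one has $\rho \circ \amap_2 = \amap_1$ on $\dom{\amap_2} = \dom{\amap_1}$, which consists of $\nodes{\agraph} \setminus \nodes{\agraphB}$ together with the nodes of $\inputs{\agraphB}$: on $\nodes{\agraph} \setminus \nodes{\agraphB}$ both functions are the identity, and if $\inputs{\agraphB} = \tuple{\anode_1,\dots,\anode_n}$ and $\inputs{\agraphB'_i} = \tuple{\anode_1^{(i)},\dots,\anode_n^{(i)}}$, then $\rho(\amap_2(\anode_j)) = \nu(\anode_j^{(2)}) = \anode_j^{(1)} = \amap_1(\anode_j)$ because $\inputs{\agraphB'_1} = \nu(\inputs{\agraphB'_2})$. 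Granting this, the node sets match since $\rho((\nodes{\agraph}\setminus\nodes{\agraphB}) \cup \nodes{\agraphB'_2}) = (\nodes{\agraph}\setminus\nodes{\agraphB}) \cup \nodes{\agraphB'_1}$; the root sequences match since $\rho(\amap_2(\inputs{\agraph})) = \amap_1(\inputs{\agraph})$, using that every node of $\inputs{\agraph}$ lies in $\dom{\amap_2}$ --- exactly the observation made inside the proof of Lemma \ref{lm:sub-subg}; the edge sets match since $\rho(\amap_2(\edges{\agraph}\setminus\edges{\agraphB})) = \amap_1(\edges{\agraph}\setminus\edges{\agraphB})$ --- again the relevant nodes lie in $\dom{\amap_2}$ by that same observation --- while $\rho(\edges{\agraphB'_2}) = \nu(\edges{\agraphB'_2}) = \edges{\agraphB'_1}$; and the labels match by going through the three cases of Definition \ref{def:replace} (\ref{replace:labels}), using that $\rho$ is the identity on $\innernodes{\agraph}\setminus\nodes{\agraphB} = \innernodes{\agraph'_2}\setminus\nodes{\agraphB'_2}$ (Proposition \ref{prop:subs-minus} and the remark following it), agrees with $\nu$ on $\nodes{\agraphB'_2}$, and sends $\anode_j^{(2)}$ to $\anode_j^{(1)}$, so that in each case $\labs{\agraph'_2}(\inv{\rho}(\anode)) = \labs{\agraph'_1}(\anode)$.

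I expect the only genuine difficulty to be bookkeeping: checking the label equality through the three-way case split of Definition \ref{def:replace} (\ref{replace:labels}), in particular that the case of a node which is a root of $\agraphB$ but an inner node of $\agraph$ is transported correctly by $\rho$ from $\agraph'_2$ to $\agraph'_1$, and that the three defining cases of $\labs{\agraph'_1}$ correspond under $\inv{\rho}$ to the three cases of $\labs{\agraph'_2}$, so that the ``undefined otherwise'' branches also match. A shorter but less self-contained route is to extend $\nu$ to an \nsubstitution $\bar\nu$ of domain $\nodes{\agraph} \cup \nodes{\agraphB'_2}$ that is the identity on $\nodes{\agraph} \setminus \nodes{\agraphB}$, then apply Proposition \ref{prop:replace_iso} to obtain $\bar\nu(\agraph'_2) = \replace{\bar\nu(\agraph)}{\bar\nu(\agraphB)}{\agraphB'_1}$ (note $\bar\nu(\agraphB'_2) = \nu(\agraphB'_2) = \agraphB'_1$) and Lemma \ref{lem:replace_id} to rewrite this as $\agraph'_1$; the delicate point of that route is choosing $\bar\nu$ on $\nodes{\agraphB} \setminus \nodes{\agraphB'_2}$ injectively and compatibly with sorts and $\eqnodes$-classes while keeping its image disjoint from $\nodes{\agraphB'_1}$, which is why I would favour the explicit construction of $\rho$.
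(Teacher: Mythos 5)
Your proposal is correct, but your primary route differs from the paper's. The paper takes exactly the path you sketch as your ``shorter but less self-contained'' alternative: it extends the \nsubstitution $\nu$ witnessing $\agraphB'_1 \iso \agraphB'_2$ to a map $\amap'$ on all of $\nodes{\agraph}$ that is the identity on $\nodes{\agraph}\setminus\nodes{\agraphB}$ and sends the leftover nodes of $\nodes{\agraphB}\setminus\nodes{\agraphB'_2}$ to pairwise distinct fresh nodes outside $\nodes{\agraph}\cup\nodes{\agraphB'_1}$, verifies injectivity of $\amap'$ by the case analysis you anticipate, and then concludes in two lines by Proposition \ref{prop:replace_iso} followed by Lemma \ref{lem:replace_id}. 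Your main construction instead builds the witnessing \nsubstitution $\rho$ with domain exactly $\nodes{\replace{\agraph}{\agraphB}{\agraphB'_2}}$ and checks $\rho(\replace{\agraph}{\agraphB}{\agraphB'_2}) = \replace{\agraph}{\agraphB}{\agraphB'_1}$ componentwise via the identity $\rho\circ\amapsub{\agraphB}{\agraphB'_2}{\agraph} = \amapsub{\agraphB}{\agraphB'_1}{\agraph}$; this essentially re-proves, in one combined step, what Proposition \ref{prop:replace_iso} and Lemma \ref{lem:replace_id} deliver separately. What your route buys is that you never have to extend $\nu$ over $\nodes{\agraphB}\setminus\nodes{\agraphB'_2}$ at all, so the fresh-node bookkeeping (and the implicit need for enough fresh nodes of the right sort and $\eqnodes$-class) disappears; what it costs is redoing the four-component verification, including the three-way label case split, that the two cited lemmas already contain. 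All the individual steps you rely on check out: the disjointness facts come from Proposition \ref{prop:substitutable} applied to each $\agraphB'_i$, and the claim that $\dom{\amapsub{\agraphB}{\agraphB'_2}{\agraph}}$ covers the roots of $\agraph$ and the endpoints of edges in $\edges{\agraph}\setminus\edges{\agraphB}$ is indeed the observation established inside the proof of Lemma \ref{lm:sub-subg}.
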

\begin{proof}
Since $\agraphB'_1 \iso \agraphB'_2$, there exists an \nsubstitution $\amap$ of domain $\nodes{\agraphB'_2}$ such that
$\agraphB'_1 = \amap(\agraphB'_2)$.
Consider the extension $\amap'$ of $\amap$ to $\nodes{\agraph}$, such that: 
$\amap'(\anode) = \anode$ if $\anode \in \nodes{\agraph} \setminus \nodes{\agraphB}$
and all the nodes in $\nodes{\agraphB} \setminus \nodes{\agraphB_2'}$ 
are mapped to pairwise distinct nodes not occurring in $\agraph$ or $\agraphB_1'$.
Observe that $\amap$ is well-defined, since
$\nodes{\agraph} \cap \nodes{\agraphB_2'} \subseteq \nodes{\agraphB}$, as $\agraphB_2'$ is \substitutable for
$\agraphB$ in $\agraph$.
We show that $\amap'$ is injective. 
Let $\anode, \anode'$ such that $\amap'(\anode)=\amap'(\anode')$.
If $\anode,\anode' \in \nodes{\agraphB_2'}$ then $\amap'(\anode)=\amap(\anode)$ 
and $\amap'(\anode') = \amap(\anode')$ hence $\anode = \anode'$ as $\amap$ is injective.
If $\anode,\anode' \in \nodes{\agraph} \setminus \nodes{\agraphB}$ then 
  $\amap'(\anode)=\anode$ 
and $\amap'(\anode') = \anode'$ hence $\anode = \anode'$.
If $\anode,\anode'\in \nodes{\agraphB} \setminus \nodes{\agraphB_2'}$ then 
we have $\anode = \anode'$ by definition of $\amap'$.
If $\anode \in \nodes{\agraphB} \setminus \nodes{\agraphB_2'}$ and 
$\anode' \not \in \nodes{\agraphB} \setminus \nodes{\agraphB_2'}$, then by definition of $\amap'$, 
$\amap'(\anode') \in \nodes{\agraph} \cup \amap'(\nodes{\agraphB_2'}) = \nodes{\agraph} \cup \nodes{\agraphB_1'}$
and 
$\amap'(\anode) \not \in \nodes{\agraph} \cup \nodes{\agraphB_1'}$, which contradicts the fact that 
$\amap'(\anode)=\amap'(\anode')$.
The only remaining case is (by symmetry): 
$\anode \in \nodes{\agraphB_2'}$
and $\anode' \in \nodes{\agraph} \setminus \nodes{\agraphB}$.
Then $\amap'(\anode) = \amap(\anode) \in \nodes{\agraphB_1'}$
and $\amap'(\anode') = \anode' \in \nodes{\agraph}$, hence, since $\agraphB_1'$ is \substitutable for 
$\agraphB$ in $\agraph$, we must have $\anode' \in \nodes{\agraphB}$,  contradicting the fact that $\anode' \in \nodes{\agraph} \setminus \nodes{\agraphB}$.

We get, by Proposition \ref{prop:replace_iso}, since $\amap'$ coincides with $\amap$ on $\nodes{\agraphB_2'}$:
\[\amap'(\replace{\agraph}{\agraphB}{\agraphB_2'}) = \replace{\amap'(\agraph)}{\amap'(\agraphB)}{\amap'(\agraphB_2')} = \replace{\amap'(\agraph)}{\amap'(\agraphB)}{\amap(\agraphB_2')} = \replace{\amap'(\agraph)}{\amap'(\agraphB)}{\agraphB_1'} \]
By definition of $\amap'$, $\amap'(\anode) = \anode$ holds for any node $\anode \in \nodes{\agraph} \setminus \nodes{\agraphB}$, hence, 
by Lemma \ref{lem:replace_id}, we deduce:
\[\amap'(\replace{\agraph}{\agraphB}{\agraphB_2'}) = \replace{\agraph}{\agraphB}{\agraphB_1'} \]
Therefore, $\replace{\agraph}{\agraphB}{\agraphB_1'} \iso \replace{\agraph}{\agraphB}{\agraphB_2'}$.
\end{proof}

\begin{lemma}
\label{lem:replaceseq}
Let $\agraph$, $\agraphB$, $\agraphC$, $\agraphC'$, with
$\agraphB \sgr \agraph$, $\agraphC$ is \substitutable for $\agraphB$ in $\agraph$
and $\agraphC'$ is \substitutable for $\agraphC$ in $\replace{\agraph}{\agraphB}{\agraphC}$.
Then %\comment[mn]{addition} 
$\agraphC'$ is substitutable for $\agraphB$ in $\agraph$ and $\replace{\replace{\agraph}{\agraphB}{\agraphC}}{\agraphC}{\agraphC'} = \replace{\agraph}{\agraphB}{\agraphC'}$.
\end{lemma}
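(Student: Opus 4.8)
The plan is to set $\agraph_1 \isdef \replace{\agraph}{\agraphB}{\agraphC}$, which by Lemma~\ref{lm:sub-subg} is a \graph with $\agraphC \sgr \agraph_1$, and to fix $\inputs{\agraphB} = \tuple{\anode_1,\dots,\anode_n}$, $\inputs{\agraphC} = \tuple{\anodeB_1,\dots,\anodeB_n}$, $\inputs{\agraphC'} = \tuple{\gamma_1,\dots,\gamma_n}$ (the three lengths coincide since successive pairs are \rootsimilar). I would first prove that $\agraphC'$ is \substitutable for $\agraphB$ in $\agraph$. That $\agraphB$ and $\agraphC'$ are \rootsimilar follows componentwise from transitivity of $\eqnodes$ (an equivalence relation) and of equality of sorts, applied to the pairs $(\agraphB,\agraphC)$ and $(\agraphC,\agraphC')$. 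For the node condition, take $\anode \in \nodes{\agraph}\cap\nodes{\agraphC'}$: if $\anode\notin\nodes{\agraphB}$ then $\anode\in\nodes{\agraph}\setminus\nodes{\agraphB}\subseteq\nodes{\agraph_1}$, so $\anode\in\nodes{\agraph_1}\cap\nodes{\agraphC'}\subseteq\nodes{\agraphC}$ (as $\agraphC'$ is \substitutable for $\agraphC$ in $\agraph_1$), hence $\anode\in\nodes{\agraph}\cap\nodes{\agraphC}\subseteq\nodes{\agraphB}$ (as $\agraphC$ is \substitutable for $\agraphB$ in $\agraph$), a contradiction. So $\nodes{\agraph}\cap\nodes{\agraphC'}\subseteq\nodes{\agraphB}$, and in particular $\replace{\agraph}{\agraphB}{\agraphC'}$ is a \graph by Lemma~\ref{lm:sub-subg}.

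Next I would introduce the three mappings $\amap_1,\amap_2,\amap_3$ of Definition~\ref{def:replace} attached to $\replace{\agraph}{\agraphB}{\agraphC}$, $\replace{\agraph_1}{\agraphC}{\agraphC'}$ and $\replace{\agraph}{\agraphB}{\agraphC'}$. By Proposition~\ref{prop:subs-minus}, $\nodes{\agraph_1}\setminus\nodes{\agraphC} = \nodes{\agraph}\setminus\nodes{\agraphB}$, so all three are the identity on $\nodes{\agraph}\setminus\nodes{\agraphB}$ and send respectively $\anode_i\mapsto\anodeB_i$, $\anodeB_i\mapsto\gamma_i$, $\anode_i\mapsto\gamma_i$; a check on these two kinds of generators gives $\img{\amap_1}\subseteq\dom{\amap_2}$ and $\amap_2\circ\amap_1 = \amap_3$. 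The node sets then agree at once, and so do the input sequences, since $\amap_2(\amap_1(\inputs{\agraph})) = \amap_3(\inputs{\agraph})$ — the nodes of $\inputs{\agraph}$ lie in $\dom{\amap_1}$ by Condition~\ref{subgraph:inputs_bis} of Definition~\ref{def:subgraph}. For edges, the key point is that $\amap_1(\edges{\agraph}\setminus\edges{\agraphB})$ is disjoint from $\edges{\agraphC}$: endpoints of edges of $\edges{\agraphC}$ lie in $\nodes{\agraphC}$, and $\img{\amap_1}\cap\nodes{\agraphC} = \{\anodeB_1,\dots,\anodeB_n\}$ because $(\nodes{\agraph}\setminus\nodes{\agraphB})\cap\nodes{\agraphC} = \emptyset$ (Proposition~\ref{prop:substitutable}); so if $\amap_1(\mkedge{\anode}{\anode'})\in\edges{\agraphC}$ with $\mkedge{\anode}{\anode'}\in\edges{\agraph}\setminus\edges{\agraphB}$, injectivity of $\amap_1$ forces $\anode=\anode_i$, $\anode'=\anode_j$, whence $\mkedge{\anode_i}{\anode_j}\in\edges{\agraphB}$ by Condition~\ref{subgraph:edges_bis} of Definition~\ref{def:subgraph}, contradicting $\mkedge{\anode}{\anode'}\notin\edges{\agraphB}$. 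Hence $\edges{\agraph_1}\setminus\edges{\agraphC} = \amap_1(\edges{\agraph}\setminus\edges{\agraphB})$, and applying $\amap_2$ (endpoints land in $\dom{\amap_2}$) together with $\amap_2\circ\amap_1 = \amap_3$ shows that the edge sets of $\replace{\agraph_1}{\agraphC}{\agraphC'}$ and $\replace{\agraph}{\agraphB}{\agraphC'}$ both equal $\amap_3(\edges{\agraph}\setminus\edges{\agraphB})\cup\edges{\agraphC'}$.

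For the labels I would first record a merged form of Definition~\ref{def:replace}(\ref{replace:labels}): for any $\agraph^\star \isdef \replace{\agraph}{\agraphB}{\agraphB'}$ with mapping $\amap$ and any $\anode\in\nodes{\agraph^\star}$,
\[
\labs{\agraph^\star}(\anode) =
\begin{cases}
\labs{\agraphB'}(\anode) & \text{if } \anode\in\innernodes{\agraphB'},\\
\labs{\agraph}(\inv{\amap}(\anode)) & \text{if } \anode\notin\innernodes{\agraphB'},\ \anode\in\img{\amap},\ \inv{\amap}(\anode)\in\innernodes{\agraph},\\
\text{undefined} & \text{otherwise},
\end{cases}
\]
which follows from the remark after Proposition~\ref{prop:subs-minus} (it merges the first and third clauses of that definition), from $\innernodes{\agraph^\star}\setminus\nodes{\agraphB'} = \innernodes{\agraph}\setminus\nodes{\agraphB}$, and from $\dom{\amap}\cap\innernodes{\agraphB} = \emptyset$. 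Now fix $\anode\in\nodes{\replace{\agraph_1}{\agraphC}{\agraphC'}} = \nodes{\replace{\agraph}{\agraphB}{\agraphC'}}$. If $\anode\in\innernodes{\agraphC'}$, both sides give $\labs{\agraphC'}(\anode)$. Otherwise $\anode\in(\nodes{\agraph}\setminus\nodes{\agraphB})\cup\{\gamma_1,\dots,\gamma_n\}$. If $\anode\in\nodes{\agraph}\setminus\nodes{\agraphB}$, then $\inv{\amap_2}(\anode) = \inv{\amap_3}(\anode) = \anode$, the conditions ``$\anode\in\innernodes{\agraph}$'' and ``$\anode\in\innernodes{\agraph_1}$'' coincide by Proposition~\ref{prop:subs-minus} (using $\anode\notin\nodes{\agraphC}$), and in the defined case the merged form applied to $\agraph_1 = \replace{\agraph}{\agraphB}{\agraphC}$ gives $\labs{\agraph_1}(\anode) = \labs{\agraph}(\anode)$; so both sides equal $\labs{\agraph}(\anode)$ (or both are undefined). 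If $\anode = \gamma_i$, then $\inv{\amap_3}(\gamma_i) = \anode_i$ and $\inv{\amap_2}(\gamma_i) = \anodeB_i$, and $\anode_i\in\innernodes{\agraph}\iff\anodeB_i\in\innernodes{\agraph_1}$ since $\inputs{\agraph_1} = \amap_1(\inputs{\agraph})$ with $\amap_1(\anode_i) = \anodeB_i$ and $\amap_1$ injective; in the defined case the $\agraph_3$-side is $\labs{\agraph}(\anode_i)$ and the $\agraph_2$-side is $\labs{\agraph_1}(\anodeB_i)$, which the merged form applied to $\agraph_1$ (note $\anodeB_i = \amap_1(\anode_i)$, $\inv{\amap_1}(\anodeB_i) = \anode_i\in\innernodes{\agraph}$) also evaluates to $\labs{\agraph}(\anode_i)$. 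This exhausts the cases, so $\labs{\replace{\agraph_1}{\agraphC}{\agraphC'}} = \labs{\replace{\agraph}{\agraphB}{\agraphC'}}$ and the two graphs coincide.

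I expect the label comparison to be the only real obstacle: the replacement operation defines $\labs{}$ by a three-way split, and after composing two replacements one must keep track, for each node, of whether it is a root or an inner node both of the source graph and of the intermediate graph $\agraph_1$; the equivalence $\anodeB_i\in\innernodes{\agraph_1}\iff\anode_i\in\innernodes{\agraph}$ is exactly what makes this bookkeeping close. The substitutability argument and the node/input comparisons are routine, and in the edge comparison the only subtlety is the disjointness of $\amap_1(\edges{\agraph}\setminus\edges{\agraphB})$ from $\edges{\agraphC}$.
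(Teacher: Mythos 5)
Your proof is correct and follows essentially the same route as the paper's: establish substitutability of $\agraphC'$ for $\agraphB$ in $\agraph$ first, verify that the replacement mapping for the composite equals the composition of the two intermediate mappings, prove the disjointness of $\amap_1(\edges{\agraph}\setminus\edges{\agraphB})$ from $\edges{\agraphC}$ to handle the edge sets, and finish with the same three-way case analysis on labels tracked through the intermediate graph $\agraph_1$. The only (cosmetic) difference is your ``merged'' two-clause reformulation of the label definition, which tidies the bookkeeping but does not change the argument.
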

\begin{proof}
	Let $\agraph' \isdef \replace{\agraph}{\agraphB}{\agraphC}$.
	We show that $\agraphC'$ is substitutable for $\agraphB$ in $\agraph$, i.e., that $\nodes{\agraph} \cap \nodes{\agraphC'} \subseteq \nodes{\agraphB}$. %\comment[ni]{fixed some typos} 
	We have $\nodes{\agraph'} = (\nodes{\agraph}\setminus \nodes{\agraphB}) \cup \nodes{\agraphC}$, and show that $(\nodes{\agraph}\setminus \nodes{\agraphB}) \cap \nodes{\agraphC'} = \emptyset$. 
	%Assume for a contradiction that $\anode \in (\nodes{\agraph}\setminus \nodes{\agraphB}) \cap \nodes{\agraphC'}$. 
	We have
	\[(\nodes{\agraph}\setminus \nodes{\agraphB}) \cap \nodes{\agraphC'} \subseteq \left[(\nodes{\agraph}\setminus \nodes{\agraphB}) \cup \nodes{\agraphC}\right] \cap \nodes{\agraphC'} = \nodes{\agraph'}\cap \nodes{\agraphC'} \subseteq \nodes{\agraphC},\]
	because $\agraphC'$ is \substitutable for $\agraphC$ in $\agraph'$ by hypothesis. We deduce that $(\nodes{\agraph}\setminus \nodes{\agraphB}) \cap \nodes{\agraphC'} \subseteq (\nodes{\agraph}\setminus \nodes{\agraphB}) \cap \nodes{\agraphC}$. But $(\nodes{\agraph} \setminus \nodes{\agraphB}) \cap\nodes{\agraphC} = \emptyset$ by Proposition \ref{prop:substitutable}, hence the result.
	
%	Let $\agraph'' = \replace{\agraph}{\agraphB}{\agraphC'}$.
%	We show that $\replace{\agraph'}{\agraphC}{\agraphC'} = \agraph''$.
	
	We show that $\replace{\agraph'}{\agraphC}{\agraphC'} = \replace{\agraph}{\agraphB}{\agraphC'}$. We define $\amap \isdef \amapsub{\agraphB}{\agraphC}{\agraph}$, $\amap' \isdef \amapsub{\agraphC}{\agraphC'}{\agraph'}$ and $\amap'' \isdef \amapsub{\agraphB}{\agraphC'}{\agraph}$. 
	Note that $\amap'' = \amap' \circ \amap$ %\comment[mn]{probably needs to be justified a bit} \comment[ni]{added:}
	Indeed, by definition we have $\dom{\amap} = \dom{\amap'}$. Further, for every $\anode \in \dom{\amap}$, if $\anode\in \nodes{\agraph} \setminus \nodes{\agraphB}$ then 
	$\amap(\anode) = \amap'(\anode)$, 
	and $\anode \not \in \nodes{\agraphC}$ (since $\agraphC$ is \substitutable for $\agraphB$ 
	in $\agraph$), hence 
	$\amap''(\anode) = \anode$.  
	If $\anode \in \dom{\amap}$ and $\anode\not \in \nodes{\agraph} \setminus \nodes{\agraphB}$, then $\anode = \anode_i$, for some $i = 1,\dots,n$, with 
	 $\inputs{\agraphB} = \tuple{\anode_1,\dots,\anode_n}$. By definition,  
	$\amap(\anode_i) = \anode_i'$, with $\inputs{\agraphC} = \tuple{\anode_1',\dots,\anode_n'}$
	and  $\amap'(\anode_i')= \anode_i''$, $\amap''(\anode_i) = \anode_i''$ with $\inputs{\agraphC'} = \tuple{\anode_1'',\dots,\anode_n''}$.
	\begin{enumerate}
		\item 	We have 
%		\begin{enumerate}
%	\item{
%	We have 
	\begin{eqnarray*}
		\nodes{\replace{\agraph'}{\agraphC}{\agraphC'}} & = &
	(\nodes{\agraph'} \setminus \nodes{\agraphC}) \cup \nodes{\agraphC'}\\ 
	& =& \left[((\nodes{\agraph} \setminus \nodes{\agraphB}) \cup \nodes{\agraphC}) \setminus \nodes{\agraphC}\right] \cup \nodes{\agraphC'}\\
	& = & \left[(\nodes{\agraph} \setminus \nodes{\agraphB})\setminus \nodes{\agraphC}\right] \cup \nodes{\agraphC'}\\
	& = & (\nodes{\agraph} \setminus \nodes{\agraphB}) \cup \nodes{\agraphC'}\\
	& = & \nodes{\replace{\agraph}{\agraphB}{\agraphC'}},
	\end{eqnarray*}
where the second to last equality is obtained using the fact that, by Proposition \ref{prop:substitutable}, $(\nodes{\agraph} \setminus \nodes{\agraphB})\cap \nodes{\agraphC} = \emptyset$.
	\item We have $\inputs{\replace{\agraph'}{\agraphC}{\agraphC'}} = \amap'(\inputs{\agraph'}) = \amap'(\amap(\inputs{\agraph})) = \amap''(\inputs{\agraph})$.
	\item We show that $\amap(\edges{\agraph}\setminus \edges{\agraphB}) \cap \edges{\agraphC} = \emptyset$. %\comment[ni]{some very minor changes} 
	Suppose for a contradiction that $\anyedge{\anode'}{\anodeB'} \in \amap(\edges{\agraph}\setminus \edges{\agraphB}) \cap \edges{\agraphC}$. Then $\anode', \anodeB'\in \nodes{\agraphC}$, and there exist nodes $\anode, \anodeB$ such that $\anode' = \amap({\anode})$ and $\anodeB' = \amap(\anodeB)$, with $\anyedge{\anode}{\anodeB} \in \edges{\agraph}\setminus \edges{\agraphB}$. Since $\anyedge{\anode}{\anodeB} \notin \edges{\agraphB}$, necessarily, $\set{\anode, \anodeB} \not\subseteq \nodes{\agraphB}$, by Definition \ref{def:subgraph} (\ref{subgraph:edges_bis}). Assume w.l.o.g.\ that $\anode \in \nodes{\agraph}\setminus \nodes{\agraphB}$. Then by definition $\anode' = \anode \in \nodes{\agraphC}$. But this is impossible because $(\nodes{\agraph}\setminus \nodes{\agraphB}) \cap \nodes{\agraphC} = \emptyset$ by Proposition \ref{prop:substitutable}. We deduce that	
	\begin{eqnarray*}
		\edges{\replace{\agraph'}{\agraphC}{\agraphC'}} & = & \amap'(\edges{\agraph'} \setminus \edges{\agraphC}) \cup \edges{\agraphC'}\\
		& = & \amap'\left(\left[\amap(\edges{\agraph}\setminus \edges{\agraphB}) \cup \edges{\agraphC}\right] \setminus \edges{\agraphC} \right)  \cup \edges{\agraphC'}\\
		& = & \amap'\left(\amap(\edges{\agraph}\setminus \edges{\agraphB}) \setminus \edges{\agraphC}\right)  \cup \edges{\agraphC'}\\
		& = & \amap''(\edges{\agraph}\setminus \edges{\agraphB})   \cup \edges{\agraphC'}\\
		& = & \edges{\replace{\agraph}{\agraphB}{\agraphC'}},
	\end{eqnarray*}
where the second to last equality is obtained using the fact that $\amap(\edges{\agraph}\setminus \edges{\agraphB}) \cap \edges{\agraphC} = \emptyset$ and that $\amap'' = \amap' \circ \amap$.
\item Let $\anode \in \nodes{\replace{\agraph'}{\agraphC}{\agraphC'}}$. 
\begin{itemize}
	\item If $\anode \in \innernodes{\replace{\agraph'}{\agraphC}{\agraphC'}} \setminus \nodes{\agraphC'}$, then by Proposition \ref{prop:subs-minus} we have $\anode \in \innernodes{\replace{\agraph'}{\agraphC}{\agraphC'}} \setminus \nodes{\agraphC'}$ if and only if $\alpha \in \innernodes{\agraph'} \setminus \nodes{\agraphC}$ if and only if $\anode \in \innernodes{\agraph}\setminus \nodes{\agraphB}$ if and only if $\anode \in \innernodes{\replace{\agraph}{\agraphB}{\agraphC'}} \setminus \nodes{\agraphC'}$. Thus $\labs{\replace{\agraph'}{\agraphC}{\agraphC'}}(\anode) = \labs{\agraph}(\anode) = \labs{\replace{\agraph}{\agraphB}{\agraphC'}}(\anode)$.
	\item If $\anode \in \innernodes{\agraphC'}$ then $\labs{\replace{\agraph'}{\agraphC}{\agraphC'}}(\anode) = \labs{\agraphC'}(\anode) = \labs{\replace{\agraph}{\agraphB}{\agraphC'}}(\anode)$.
	\item If $\anodeB \isdef \inv{\amap'}(\anode) \in\innernodes{\agraph'} \cap \inputs{\agraphC}$ then by definition of $\amap$, we must have $\inv{\amap}(\beta) \in \inputs{\agraphB}$. We cannot have $\anodeB \in \inputs{\agraph}$ because otherwise we would have $\anodeB \in \inputs{G'}$; hence $\inv{\amap}(\anodeB) \in \innernodes{\agraph} \cap \inputs{\agraphB}$. Since $\amap'' = \amap' \circ \amap$, we deduce that ${\amap''}^{-1}(\anode) = \amap^{-1}(\beta)$ and $\labs{\replace{\agraph'}{\agraphC}{\agraphC'}}(\anode) = \labs{\agraph'}(\anodeB) = \labs{\agraph}(\inv{\amap}(\anodeB)) = \labs{\replace{\agraph}{\agraphB}{\agraphC'}}(\anode)$.
\end{itemize}

By Proposition \ref{prop:subs-minus} we have $\anode \in \innernodes{\replace{\agraph'}{\agraphC}{\agraphC'}} \setminus \nodes{\agraphC'}$ if and only if $\alpha \in \innernodes{\agraph'} \setminus \nodes{\agraphC}$ if and only if $\anode \in \innernodes{\agraph}\setminus \nodes{\agraphB}$. This proves that $\labs{\replace{\agraph'}{\agraphC}{\agraphC'}}(\anode) = \labs{\replace{\agraph}{\agraphB}{\agraphC'}}(\anode)$.
\end{enumerate}
%
%\comment[mn]{Maybe a problem here: take 
%	\begin{itemize}
%		\item $\agraph = \agraphC = \anode \rightarrow \anodeB$ (graph with a single edge), where $\inputs{\agraph} = \tuple{\anode, \anodeB}$,
%		\item $\agraphB = \anode \quad\quad \anodeB$ (graph with no edge),
%		\item $\agraphC' = \anode'\quad \quad \anodeB'$ (no edge) and $\inputs{\agraphC'} = \tuple{\anode', \anodeB'}$.
%	\end{itemize}
%	  Then $\agraphB \sgr \agraph$, $\agraphC$ is \substitutable for $\agraphB$ in $\agraph$
%	and $\agraph'\isdef \replace{\agraph}{\agraphB}{\agraphC} = \agraph$. The graph $\agraphC'$ is \substitutable for $\agraphC$ in $\agraph'$ but $\replace{\agraph'}{\agraphC}{\agraphC'} = \anode\quad\quad \anodeB$ and $\replace{\agraph}{\agraphB}{\agraphC'} = \anode\rightarrow \anodeB$?}
%
%\comment[ni]{to do}
\end{proof}
\begin{corollary}
	Assume $\agraphB'$ is substitutable for $\agraphB$ in $\agraph$ and let $\agraph' \isdef \replace{\agraph}{\agraphB}{\agraphB'}$. Then $\agraphB$ is substitutable for $\agraphB'$ in $\agraph'$ and $\replace{\agraph'}{\agraphB'}{\agraphB} = \agraph$.
\end{corollary}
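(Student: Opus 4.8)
The plan is to reduce the statement to Lemma~\ref{lem:replaceseq} and Proposition~\ref{prop:trivial_replacement}, after first discharging the substitutability side condition by an elementary set computation. Throughout, $\agraphB \sgr \agraph$ is implicitly part of the hypothesis (it is what makes $\agraph' \isdef \replace{\agraph}{\agraphB}{\agraphB'}$ well-defined), and by Lemma~\ref{lm:sub-subg} we also have $\agraphB' \sgr \agraph'$, so $\replace{\agraph'}{\agraphB'}{\agraphB}$ makes sense once substitutability is established.

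First I would check that $\agraphB$ is substitutable for $\agraphB'$ in $\agraph'$. Root-compatibility is symmetric: Definition~\ref{def:rootsimilar} constrains only the common length of $\inputs{\agraphB}$ and $\inputs{\agraphB'}$ and, componentwise, the relations $\sortof(\anode_i) = \sortof(\anode_i')$ and $\anode_i \eqnodes \anode_i'$, all of which are symmetric (recall that $\eqnodes$ is symmetric by definition); since $\agraphB$ and $\agraphB'$ are root-compatible (as $\agraphB'$ is substitutable for $\agraphB$ in $\agraph$), so are $\agraphB'$ and $\agraphB$. For the node-inclusion condition, I would use $\nodes{\agraph'} = (\nodes{\agraph} \setminus \nodes{\agraphB}) \cup \nodes{\agraphB'}$ from Definition~\ref{def:replace}, so that $\nodes{\agraph'} \cap \nodes{\agraphB} = \bigl[(\nodes{\agraph} \setminus \nodes{\agraphB}) \cap \nodes{\agraphB}\bigr] \cup \bigl[\nodes{\agraphB'} \cap \nodes{\agraphB}\bigr] = \nodes{\agraphB'} \cap \nodes{\agraphB} \subseteq \nodes{\agraphB'}$. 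This yields that $\agraphB$ is substitutable for $\agraphB'$ in $\agraph'$.

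Then I would invoke Lemma~\ref{lem:replaceseq} with the instantiation $(\agraph,\agraphB,\agraphC,\agraphC') := (\agraph,\agraphB,\agraphB',\agraphB)$. Its three hypotheses hold: $\agraphB \sgr \agraph$; $\agraphB'$ is substitutable for $\agraphB$ in $\agraph$ (the hypothesis of the corollary); and $\agraphB$ is substitutable for $\agraphB'$ in $\replace{\agraph}{\agraphB}{\agraphB'} = \agraph'$ (just shown). The lemma then gives $\replace{\agraph'}{\agraphB'}{\agraphB} = \replace{\agraph}{\agraphB}{\agraphB}$, and Proposition~\ref{prop:trivial_replacement} gives $\replace{\agraph}{\agraphB}{\agraphB} = \agraph$, which finishes the proof.

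I do not expect a genuine obstacle: the argument is bookkeeping, and the only points requiring care are keeping the indices of root-compatibility straight under the swap of $\agraphB$ and $\agraphB'$, and noting in the node computation that $(\nodes{\agraph} \setminus \nodes{\agraphB}) \cap \nodes{\agraphB} = \emptyset$. If one preferred to avoid Lemma~\ref{lem:replaceseq}, the equality $\replace{\agraph'}{\agraphB'}{\agraphB} = \agraph$ could instead be verified component by component (nodes, roots, edges, labels) from Definition~\ref{def:replace}, using that $\amapsub{\agraphB'}{\agraphB}{\agraph'}$ acts as the inverse of $\amapsub{\agraphB}{\agraphB'}{\agraph}$ on the relevant nodes; but routing through Lemma~\ref{lem:replaceseq} avoids repeating that case analysis.
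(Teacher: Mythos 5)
Your proof is correct and follows the paper's own argument exactly: the same set computation $\nodes{\agraph'} \cap \nodes{\agraphB} = \nodes{\agraphB'} \cap \nodes{\agraphB} \subseteq \nodes{\agraphB'}$ to establish substitutability, followed by the same appeal to Lemma~\ref{lem:replaceseq} and Proposition~\ref{prop:trivial_replacement}. The only difference is that you spell out the symmetry of root-compatibility, which the paper leaves implicit.
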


\begin{proof}
	We have $\nodes{\agraph'} \cap \nodes{\agraphB} = \left[(\nodes{\agraph}\setminus \nodes{\agraphB}) \cup \nodes{\agraphB'}\right] \cap \nodes{\agraphB} = \nodes{\agraphB'} \cap \nodes{\agraphB} \subseteq \nodes{\agraphB'}$, which show that $\agraphB$ is substitutable for $\agraphB'$ in $\agraph'$. By Lemma \ref{lem:replaceseq} and Proposition \ref{prop:trivial_replacement}, we deduce that $\replace{\replace{\agraph}{\agraphB}{\agraphB'}}{\agraphB'}{\agraphB} = \replace{\agraph}{\agraphB}{\agraphB} = \agraph$.
\end{proof}

\begin{lemma}
	\label{lem:replace_subs}
	Consider the graphs $\agraph, \agraphB, \agraphB', \agraphC$, and assume that $\agraphB\sgr \agraph$, $\agraphC \sgr \agraph$ and $\agraphB'$ is \substitutable for $\agraphB$ in $\agraph$. If $\nodes{\agraphB} \cap \nodes{\agraphC} = \emptyset$ then $\agraphC \sgr \replace{\agraph}{\agraphB}{\agraphB'}$.
\end{lemma}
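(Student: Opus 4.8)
The plan is to show directly that $\agraphC$ satisfies the six conditions of Definition~\ref{def:subgraph} with respect to $\agraph' \isdef \replace{\agraph}{\agraphB}{\agraphB'}$. Write $\amap \isdef \amapsub{\agraphB}{\agraphB'}{\agraph}$ and $\inputs{\agraphB} = \tuple{\anode_1,\dots,\anode_n}$. Before checking the conditions I would record the facts used repeatedly. Since $\agraphC\sgr\agraph$ and $\nodes{\agraphB}\cap\nodes{\agraphC}=\emptyset$, we have $\nodes{\agraphC}\subseteq\nodes{\agraph}\setminus\nodes{\agraphB}$; by Proposition~\ref{prop:substitutable} this set is disjoint from $\nodes{\agraphB'}$, so $\nodes{\agraphC}\cap\nodes{\agraphB'}=\emptyset$. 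By construction $\amap$ is the identity on $\nodes{\agraph}\setminus\nodes{\agraphB}$ (hence on $\nodes{\agraphC}$) and maps $\inputs{\agraphB}$ onto $\inputs{\agraphB'}\subseteq\nodes{\agraphB'}$; moreover, as established in the proof of Lemma~\ref{lm:sub-subg}, $\amap$ is defined on every node occurring in $\inputs{\agraph}$ or in an edge of $\edges{\agraph}\setminus\edges{\agraphB}$ (no such edge being incident to a node of $\innernodes{\agraphB}$). Finally, the key auxiliary observation is a \emph{pullback property}: if $\anode\in\nodes{\agraphC}$ and $\anode=\amap(\anode')$ for some $\anode'\in\dom{\amap}$, then $\anode'=\anode$ --- indeed $\anode\notin\nodes{\agraphB'}$, so $\anode$ is not in $\amap(\inputs{\agraphB})$, which forces $\anode'$ into $\nodes{\agraph}\setminus\nodes{\agraphB}$, where $\amap$ is the identity.

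With these in hand, most conditions unwind routinely. Condition~\ref{subgraph:nodes} is immediate from $\nodes{\agraphC}\subseteq\nodes{\agraph}\setminus\nodes{\agraphB}\subseteq\nodes{\agraph'}$. For Condition~\ref{subgraph:edges}, an edge $\mkedge{\anode}{\anodeB}\in\edges{\agraphC}$ lies in $\edges{\agraph}$ with both endpoints outside $\nodes{\agraphB}$, hence in $\edges{\agraph}\setminus\edges{\agraphB}$, and since $\amap$ fixes $\anode,\anodeB$ it belongs to $\amap(\edges{\agraph}\setminus\edges{\agraphB})\subseteq\edges{\agraph'}$. For Condition~\ref{subgraph:edges_bis}, an edge of $\edges{\agraph'}$ with both endpoints in $\nodes{\agraphC}$ cannot lie in $\edges{\agraphB'}$ (its endpoints are not in $\nodes{\agraphB'}$), so it comes from $\amap(\edges{\agraph}\setminus\edges{\agraphB})$; the pullback property identifies its preimage with itself, so it lies in $\edges{\agraph}\setminus\edges{\agraphB}$ and then in $\edges{\agraphC}$ because $\agraphC\sgr\agraph$. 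Condition~\ref{subgraph:inputs_bis}: a node of $\inputs{\agraph'}=\amap(\inputs{\agraph})$ lying in $\nodes{\agraphC}$ is $\amap(\anode')$ for some $\anode'$ occurring in $\inputs{\agraph}$; by Proposition~\ref{prop:sub_inputs_include} $\anode'\notin\innernodes{\agraphB}$, and by the pullback property $\anode'=\anode$, so $\anode$ occurs in $\inputs{\agraph}$ and hence, since $\agraphC\sgr\agraph$, in $\inputs{\agraphC}$. For the labels (Condition~\ref{subgraph:labs}), a node $\anode\in\innernodes{\agraphC}$ lies in $\innernodes{\agraph}\setminus\nodes{\agraphB}=\innernodes{\agraph'}\setminus\nodes{\agraphB'}$ (using Proposition~\ref{prop:subs-minus} together with the fact that $\anode\notin\inputs{\agraph}$, otherwise $\anode\in\inputs{\agraphC}$), so by Definition~\ref{def:replace} and the remark following Proposition~\ref{prop:subs-minus}, $\labs{\agraph'}(\anode)=\labs{\agraph}(\anode)=\labs{\agraphC}(\anode)$; hence $\labs{\agraphC}=\restrict{\labs{\agraph'}}{\innernodes{\agraphC}}$.

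The only case requiring a bit of care is Condition~\ref{subgraph:inputs}: suppose $\anyedge{\anode}{\anodeB}\in\edges{\agraph'}$ with $\anode\in\nodes{\agraphC}$ and $\anodeB\notin\nodes{\agraphC}$. Since $\anode\notin\nodes{\agraphB'}$ the edge lies in $\amap(\edges{\agraph}\setminus\edges{\agraphB})$, say it is the image of $\anyedge{\anode'}{\anodeB'}\in\edges{\agraph}\setminus\edges{\agraphB}$ with $\anode=\amap(\anode')$ and $\anodeB=\amap(\anodeB')$, and the pullback property gives $\anode'=\anode$. The point is then to verify that $\anodeB'\notin\nodes{\agraphC}$: if $\anodeB'$ occurs in $\inputs{\agraphB}$ then $\anodeB'\in\nodes{\agraphB}$, which is disjoint from $\nodes{\agraphC}$; otherwise $\anodeB'\in\nodes{\agraph}\setminus\nodes{\agraphB}$, so $\amap$ fixes it and $\anodeB'=\anodeB\notin\nodes{\agraphC}$. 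Hence $\anyedge{\anode}{\anodeB'}\in\edges{\agraph}$ with $\anode\in\nodes{\agraphC}$ and $\anodeB'\notin\nodes{\agraphC}$, and Condition~\ref{subgraph:inputs} applied to $\agraphC\sgr\agraph$ yields that $\anode$ occurs in $\inputs{\agraphC}$. This case split on where the preimage $\anodeB'$ lands --- combined with the pullback property --- is the main thing to get right; everything else is a mechanical check against the definitions.
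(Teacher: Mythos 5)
Your proof is correct and follows essentially the same route as the paper's: a direct verification of the six conditions of Definition~\ref{def:subgraph}, using the facts that $\nodes{\agraphC}$ is disjoint from both $\nodes{\agraphB}$ and $\nodes{\agraphB'}$ and that $\amap$ fixes $\nodes{\agraphC}$ pointwise. Your explicit ``pullback property'' and the case split on the preimage $\anodeB'$ in Condition~\ref{subgraph:inputs} are exactly the arguments the paper uses (and are in fact stated a little more carefully than in the paper's own text).
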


\begin{proof}
	Let $\agraph' \isdef \replace{\agraph}{\agraphB}{\agraphB'}$ and $\amap \isdef \amapsub{\agraphB}{\agraphB'}{\agraph}$. Note that since $\agraphB'$ is \substitutable for $\agraphB$ in $\agraph$, we have
	\[\nodes{\agraphB'} \cap \nodes{\agraphC}\  \subseteq \ \nodes{\agraphB'} \cap \nodes{\agraph} \ \subseteq\  \nodes{\agraphB} \cap \nodes{\agraphC}\  =\  \emptyset,\]
	and that since $\nodes{\agraphB} \cap \nodes{\agraphC} = \emptyset$, if $\anode \in \nodes{\agraphC}$ then $\amap(\anode) = \anode$.
	We verify that $\agraphC \sgr \agraph'$:
	\begin{enumerate}
		\item{Since $\agraphC \sgr \agraph$, by definition $\nodes{\agraphC} \subseteq \nodes{\agraph}$. By hypothesis $\nodes{\agraphB} \cap \nodes{\agraphC} = \emptyset$, thus $\nodes{\agraphC} \subseteq \nodes{\agraph} \setminus \nodes{\agraphB}  \subseteq \nodes{\agraph'}$.
		}
		\item{If $\mkedge{\anode}{\anodeB} \in \edges{\agraphC}$ then 
			$\mkedge{\anode}{\anodeB} \in \edges{\agraph}$, because $\agraphC \sgr \agraph$.
			Since $\nodes{\agraphB} \cap \nodes{\agraphC} = \emptyset$ by hypothesis, $\mkedge{\anode}{\anodeB} \not \in \edges{\agraphB}$. Thus
			$\mkedge{\anode}{\anodeB} \in (\edges{\agraph} \setminus \edges{\agraphB}) \cup \edges{\agraphB'}$, and since $\amap(\anode) = \anode$ and $\amap(\anodeB) = \anodeB$, we deduce that $\mkedge{\anode}{\anodeB} \in \amap(\edges{\agraph} \setminus \edges{\agraphB}) \cup \edges{\agraphB'} = \edges{\agraph'}$.}
		\item{Let $\anode,\anodeB\in \nodes{\agraphC}$, and assume that
			$\mkedge{\anode}{\anodeB} \in \edges{\agraph'} = \amap(\edges{\agraph} \setminus \edges{\agraphB}) \cup \edges{\agraphB'}$. Since $\nodes{\agraphB'} \cap \nodes{\agraphC} = \emptyset$, we cannot have $\mkedge{\anode}{\anodeB} \in \edges{\agraphB'}$.
			Since $\amap(\anode) = \anode$ and $\amap(\anodeB) = \anodeB$, necessarily $\mkedge{\anode}{\anodeB} \in \edges{\agraph}\setminus \edges{\agraphB} \subseteq \edges{\agraph}$, and since $\agraphC\sgr \agraph$, we deduce that $\mkedge{\anode}{\anodeB} \in \edges{\agraphC}$.
		}
		\item{Assume that $\anyedge{\anode}{\anodeB} \in \edges{\agraph'} = \amap(\edges{\agraph} \setminus \edges{\agraphB}) \cup \edges{\agraphB'}$, where 
			$\anode\in \nodes{\agraphC}$ and $\anodeB\not \in \nodes{\agraphC}$. Then $\amap(\anode) = \anode$ and since $\nodes{agraphB'} \cap \nodes{\agraphC} = \emptyset$, we cannot have $\anyedge{\anode}{\anodeB} \in \edges{\agraphB'}$. Let $\anodeB' \isdef \inv{\amap}(\anodeB)$, note that we cannot have $\anodeB \in \nodes{\agraphC}$ because otherwise we would have $\amap(\anodeB') = \anodeB' = \anodeB \notin \nodes{\agraphC}$. Thus, $\anyedge{\anode}{\anodeB'} \in \edges{\agraph} \setminus \edges{\agraphB} \subseteq \edges{\agraph}$, and since $\agraphC \sgr \agraph$, we deduce that $\anode \in \inputs{\agraphC}$.}

		\item{Assume $\anode\in \inputs{\agraph'} \cap \nodes{\agraphC} = \amap(\inputs{\agraph}) \cap \nodes{\agraphC}$. Then since $\amap(\anode) = \anode$, we have
			$\anode\in \inputs{\agraph} \cap \nodes{\agraphC}$, hence
			$\anode\in \inputs{\agraphC}$ because $\agraphC \sgr \agraph$.}
		
		\item{Consider $\anode \in \innernodes{\agraphC} \subseteq \innernodes{\agraph}$. Then by hypothesis, $\anode \in \innernodes{\agraph} \setminus \nodes{\agraphB}$ and by Proposition \ref{prop:subs-minus} we have $\anode \in \innernodes{\agraph'} \setminus \nodes{\agraphB'}$. By definition, $\labs{\agraph'}(\anode) = \labs{\agraph}(\anode) = \labs{\agraphC}(\anode)$, because $\agraphC\sgr \agraph$.}
	\end{enumerate}
\end{proof}

\begin{lemma}
	\label{lemma:replace_commute}
	Let $\agraph$, $\agraphB_i, \agraphB_i'$ be {\graph}s for $i = 1,2$, where 
	$\agraphB_i \sgr \agraph$ and
	$\agraphB_i'$ is \substitutable for $\agraphB_i$ in $\agraph$. %\comment[mn]{Slight rewriting} 
	If $\nodes{\agraphB_1'} \cap \nodes{\agraphB_2'} = \nodes{\agraphB_1} \cap \nodes{\agraphB_2} = \emptyset$ then, for $i,j \in \set{1,2}$ with $i\neq j$,  $\agraphB_i'$  is \substitutable for $\agraphB_i$ in $\replace{\agraph}{\agraphB_{j}}{\agraphB_{j}'}$ and
%	$\agraphB_1'$ (resp. $\agraphB_2'$)  is \substitutable for $\agraphB_1$
%	(resp. for $\agraphB_2$) 
%	in $\replace{\agraph}{\agraphB_{2}}{\agraphB_{2}'}$
%	(resp.\ in $\replace{\agraph}{\agraphB_{1}}{\agraphB_{1}'}$), 
%	and:
	\[\replace{(\replace{\agraph}{\agraphB_{1}}{\agraphB_{1}'})}{\agraphB_{2}}{\agraphB_{2}'}
	=
	\replace{(\replace{\agraph}{\agraphB_{2}}{\agraphB_{2}'})}{\agraphB_{1}}{\agraphB_{1}'}.\]
\end{lemma}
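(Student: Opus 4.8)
The plan is to treat the two assertions separately: first the two substitutability claims, then the equality of graphs, the latter by writing each of the four components of the result (node set, root sequence, edge set, label function) in a form that is \emph{manifestly symmetric} in the indices $1$ and $2$, so that exchanging the roles of $1$ and $2$ gives the reverse composition for free. For the substitutability part, fix $i\neq j$ in $\set{1,2}$ and abbreviate $\agraph_j\isdef\replace{\agraph}{\agraphB_j}{\agraphB_j'}$. I would first record the two ``mixed'' disjointness facts $\nodes{\agraphB_i'}\cap\nodes{\agraphB_j}=\emptyset$ and $\nodes{\agraphB_j'}\cap\nodes{\agraphB_i}=\emptyset$: indeed $\nodes{\agraphB_i'}\cap\nodes{\agraphB_j}\subseteq\nodes{\agraphB_i'}\cap\nodes{\agraph}\subseteq\nodes{\agraphB_i}$ since $\agraphB_i'$ is \substitutable for $\agraphB_i$ in $\agraph$ (Definition \ref{def:substitutable}) and $\nodes{\agraphB_j}\subseteq\nodes{\agraph}$, and $\nodes{\agraphB_i}\cap\nodes{\agraphB_j}=\emptyset$ by hypothesis. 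Lemma \ref{lem:replace_subs}, applied with $\agraphB_j,\agraphB_j'$ in the roles of $\agraphB,\agraphB'$ and $\agraphB_i$ in the role of $\agraphC$, then yields $\agraphB_i\sgr\agraph_j$. Since $\agraphB_i$ and $\agraphB_i'$ are \rootsimilar{} (Definition \ref{def:rootsimilar}, a condition involving only $\agraphB_i$ and $\agraphB_i'$ and hence inherited from their substitutability in $\agraph$), it remains to check $\nodes{\agraph_j}\cap\nodes{\agraphB_i'}\subseteq\nodes{\agraphB_i}$; writing $\nodes{\agraph_j}=(\nodes{\agraph}\setminus\nodes{\agraphB_j})\cup\nodes{\agraphB_j'}$ this reduces to $\nodes{\agraph}\cap\nodes{\agraphB_i'}\subseteq\nodes{\agraphB_i}$ together with the hypothesis $\nodes{\agraphB_j'}\cap\nodes{\agraphB_i'}=\emptyset$. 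In particular both nested replacements in the statement are legal, and each is a \graph by Lemma \ref{lm:sub-subg}.

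For the equality, write $\inputs{\agraphB_i}=\tuple{\anode^i_1,\dots,\anode^i_{n_i}}$ and $\inputs{\agraphB_i'}=\tuple{\anodeB^i_1,\dots,\anodeB^i_{n_i}}$, and set $\amap_i\isdef\amapsub{\agraphB_i}{\agraphB_i'}{\agraph}$ and $\amap^{(j)}_i\isdef\amapsub{\agraphB_i}{\agraphB_i'}{\agraph_j}$. The technical core is the identity $\amap^{(1)}_2\rond\amap_1=\amap^{(2)}_1\rond\amap_2=\amap$ on the nodes of $\agraph$, where $\amap$ sends each root $\anode^i_k$ of $\agraphB_i$ to the corresponding root $\anodeB^i_k$ of $\agraphB_i'$ (for $i=1,2$) and fixes every node in $\nodes{\agraph}\setminus(\nodes{\agraphB_1}\cup\nodes{\agraphB_2})$; this is a short three-case split (a node lies in $\inputs{\agraphB_1}$, in $\inputs{\agraphB_2}$, or in the rest of $\nodes{\agraph}$), using the mixed disjointness facts and the fact that $\amap^{(j)}_i$ restricts to the identity on $\agraphB_j'$ because $\nodes{\agraphB_j'}\cap\nodes{\agraphB_i}=\emptyset$. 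With this in hand, unfolding Definition \ref{def:replace} twice shows that $\agraph'\isdef\replace{\agraph_1}{\agraphB_2}{\agraphB_2'}$ has node set $(\nodes{\agraph}\setminus\nodes{\agraphB_1}\setminus\nodes{\agraphB_2})\cup\nodes{\agraphB_1'}\cup\nodes{\agraphB_2'}$ (using Proposition \ref{prop:substitutable} and the disjointness facts), root sequence $\amap(\inputs{\agraph})$ (using Definition \ref{def:replace}(\ref{replace:inputs}) and the common-mapping identity), edge set $\amap(\edges{\agraph}\setminus\edges{\agraphB_1}\setminus\edges{\agraphB_2})\cup\edges{\agraphB_1'}\cup\edges{\agraphB_2'}$, and label function equal to $\labs{\agraphB_1'}$ on $\innernodes{\agraphB_1'}$, $\labs{\agraphB_2'}$ on $\innernodes{\agraphB_2'}$, $\labs{\agraph}\rond\inv{\amap}$ on the nodes $\anode$ with $\inv{\amap}(\anode)$ an inner root of $\agraphB_1$ or $\agraphB_2$ in $\agraph$, $\labs{\agraph}$ on $\innernodes{\agraph}\setminus(\nodes{\agraphB_1}\cup\nodes{\agraphB_2})$, and undefined elsewhere. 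Each of these four descriptions is symmetric in $1$ and $2$, so the same computation with the indices swapped gives the same components for $\replace{\agraph_2}{\agraphB_1}{\agraphB_1'}$, whence the two graphs coincide.

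Two points of that last paragraph need real care. For the edge set, I would use that $\amap^{(1)}_2$ fixes $\edges{\agraphB_1'}$ pointwise and that $\amap_1(\edges{\agraph}\setminus\edges{\agraphB_1})\setminus\edges{\agraphB_2}=\amap_1\big((\edges{\agraph}\setminus\edges{\agraphB_1})\setminus\edges{\agraphB_2}\big)$; the latter holds because $\amap_1$ is injective on its domain, it fixes $\edges{\agraphB_2}$ pointwise (as $\nodes{\agraphB_2}\subseteq\nodes{\agraph}\setminus\nodes{\agraphB_1}$), and by Conditions \ref{subgraph:edges_bis} and \ref{subgraph:inputs} of Definition \ref{def:subgraph} no endpoint of an edge of $\agraph$ lying outside $\edges{\agraphB_1}$ can fall in $\innernodes{\agraphB_1}$, so all such endpoints lie in $\dom{\amap_1}$; then $\amap^{(1)}_2\rond\amap_1=\amap$ can be applied to these edges. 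For the label function, I would apply Definition \ref{def:replace}(\ref{replace:labels}) first to the outer replacement $\agraph'=\replace{\agraph_1}{\agraphB_2}{\agraphB_2'}$ and then, in the surviving cases, to the inner replacement $\agraph_1=\replace{\agraph}{\agraphB_1}{\agraphB_1'}$, repeatedly invoking Proposition \ref{prop:subs-minus} and the remark following it to identify $\innernodes{\agraph_j}\setminus\nodes{\agraphB_j'}$ with $\innernodes{\agraph}\setminus\nodes{\agraphB_j}$ and thus merge the various $\innernodes{\cdot}$ conditions that appear.

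I expect the main obstacle to be purely a matter of bookkeeping rather than of ideas: the whole proof is a verification that two threefold nested replacements produce the same four-tuple. The error-prone spots are exactly the edge computation --- where one must justify that removing $\edges{\agraphB_2}$ commutes with applying the partial injective map $\amap_1$ --- and the label computation, which threads the four-way case analysis of the replacement operation through two successive applications. The safeguard is to keep the boundary situations permitted by the hypotheses explicitly in view, for instance a node of $\agraphB_i'$ that already occurs in $\nodes{\agraphB_i}$ (e.g.\ a root of $\agraphB_i$ mapped to itself) or an edge of $\agraph$ joining a root of $\agraphB_1$ to a root of $\agraphB_2$.
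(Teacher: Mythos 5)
Your proposal is correct and follows essentially the same route as the paper's proof: establish the mixed disjointness facts $\nodes{\agraphB_i'}\cap\nodes{\agraphB_j}=\emptyset$, invoke Lemma \ref{lem:replace_subs} and the node-set computation for substitutability, prove the commutation identity $\amap^{(1)}_2\rond\amap_1=\amap^{(2)}_1\rond\amap_2$ on the relevant nodes, and then verify each of the four components of the nested replacements against a symmetric expression, with the same case analysis (via Proposition \ref{prop:subs-minus}) for the labels. No substantive differences to report.
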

\begin{proof}
	By Lemma \ref{lem:replace_subs}, $\agraphB_1 \sgr \replace{\agraph}{\agraphB_{2}}{\agraphB_{2}'}$ and $\agraphB_2 \sgr \replace{\agraph}{\agraphB_{1}}{\agraphB_{1}'}$.
	We show that $\agraphB_1'$ is \substitutable for $\agraphB_1$ in $\replace{\agraph}{\agraphB_{2}}{\agraphB_{2}'}$; the proof that $\agraphB_2'$  is \substitutable for
	$\agraphB_2$ in $\replace{\agraph}{\agraphB_{1}}{\agraphB_{1}'}$ is symmetric.	Let $\anode \in \nodes{\agraphB_1'} \cap 
	\nodes{\replace{\agraph}{\agraphB_{2}}{\agraphB_{2}'}} = \nodes{\agraphB_1'} \cap \left[(\nodes{\agraph} \setminus \nodes{\agraphB_2}) \cup \nodes{\agraphB_2'}\right]$. By hypothesis 	$\nodes{\agraphB_1'} \cap \nodes{\agraphB_2'} = \emptyset$, hence $\anode \in \nodes{\agraph} \setminus \nodes{\agraphB_2} \subseteq \nodes{\agraph}$, and	since $\agraphB_1'$ is \substitutable for $\agraphB_1$ in $\agraph$,	we deduce that $\anode\in \nodes{\agraphB_1}$, hence the result.
	
	Let $\agraph'_i \isdef \replace{\agraph}{\agraphB_{i}}{\agraphB_{i}'}$	and $\agraph''_i \isdef \replace{\agraph_{3-i}'}{\agraphB_{i}}{\agraphB_{i}'}$.	We show that $\agraph''_1 = \agraph''_2$.
	First note that $\nodes{\agraphB_1} \cap \nodes{\agraphB_2'} = \nodes{\agraphB_2} \cap \nodes{\agraphB_1'} = \emptyset$. Indeed, since $\agraphB_1 \sgr \agraph$ and $\agraphB_2'$ is \substitutable for $\agraphB_2$, we have $\nodes{\agraphB_1} \cap \nodes{\agraphB_2'} \subseteq \nodes{\agraphB_1} \cap \nodes{\agraph} \cap \nodes{\agraphB_2'} \subseteq \nodes{\agraphB_1} \cap \nodes{\agraphB_2} = \emptyset$. For $i = 1, 2$, we define 
	%$(\anode_1^i,\dots,\anode_{n_i}^i) \isdef \inputs{\agraphB_i}$, $(\hat{\anode}_1^i,\dots,\hat{\anode}_{n_i}^i) \isdef \inputs{\agraphB_i'}$, 
	$\amap_i \isdef \amapsub{\agraphB_i}{\agraphB_i'}{\agraph}$ and $\amap_i' \isdef \amapsub{\agraphB_{i}}{\agraphB_{i}'}{\agraph_{3-i}'}$.
	%$\amap_i' \isdef \{ \anode_j^i \mapsto \hat{\anode}_j^i \mid 1 \leq j \leq n_i\} \cup \{ \anode \mapsto \anode\mid \anode \in \nodes{\agraph_{3-i}'} \setminus \nodes{\agraphB_i} \}$. 	
	%; and if $j\in \interv{1}{n_i}$ then $\amap_{3-i}(\anode_j^i) = \anode_j^i$.
	Note that by definition, for $i = 1,2$, if $\anode\in \inputs{\agraphB_{i}}$ then $\amap_i(\anode) = \amap_i'(\anode) \in \inputs{\agraphB'_i}$, so that $\amap_{3-i}(\anode) = \anode$ and $\amap_{3-i}'(\amap_i(\anode)) = \amap_i(\anode)$. Similarly, if $\anode \in \nodes{\agraph} \setminus (\nodes{\agraphB_1} \cup \nodes{\agraphB_2})$ then for $i = 1, 2$, $\amap_i(\anode) = \amap_i'(\anode) = \anode$. 
	\begin{itemize}
		\item{We have $\nodes{\agraph''_1} = (\nodes{\agraph'_2} \setminus \nodes{\agraphB_1}) \cup \nodes{\agraphB_1'} 
			= ((\nodes{\agraph} \setminus \nodes{\agraphB_2}) \cup \nodes{\agraphB_2'})  \setminus \nodes{\agraphB_1}) \cup \nodes{\agraphB_1'}$.
			Since $\nodes{\agraphB_1} \cap \nodes{\agraphB_2'} = \nodes{\agraphB_2} \cap \nodes{\agraphB_1'} = \emptyset$, we deduce that
			\[
			\begin{array}{lll}
				\nodes{\agraph''_1} & = & ((\nodes{\agraph} \cup \nodes{\agraphB_1'} \cup \nodes{\agraphB_2'}) \setminus (\nodes{\agraphB_1} \cup \nodes{\agraphB_2}) \\
				& = & ((\nodes{\agraph} \setminus \nodes{\agraphB_1}) \cup \nodes{\agraphB_1'})  \setminus \nodes{\agraphB_2}) \cup \nodes{\agraphB_2'} \\
				& = & (\nodes{\agraph'_1} \setminus \nodes{\agraphB_2}) \cup \nodes{\agraphB_2'} 
				=  \nodes{\agraph''_2}.\\
			\end{array}
			\]
		}
		\item{By definition we have $\inputs{\agraph''_1} = \amap_1'(\inputs{\agraph'_2}) = \amap_1'(\amap_2(\inputs{\agraph}))$ and $\inputs{\agraph''_2} = \amap_2'(\inputs{\agraph'_1}) = \amap_2'(\amap_1(\inputs{\agraph}))$. Consider $\anode \in \nodes{\agraph}$. If $\anode \in \inputs{\agraphB_i}$ for $i= 1, 2$, then we have
			\[\amap_1'(\amap_2(\anode))\ =\ \amap_1'(\anode)\ =\ \amap_1(\anode)\ =\ \amap_2'(\amap_1(\anode)).\]
		Otherwise $\anode \in \inputs{\agraph} \setminus (\nodes{\agraphB_1} \cup \nodes{\agraphB_2})$, so that 
			\[\amap_1'(\amap_2(\anode))\ =\ \amap_1'(\anode)\ =\ \anode\ =\ \amap_1(\anode)\ =\ \amap_2'(\amap_1(\anode)).\]
			We conclude that $\inputs{\agraph_1''} = \inputs{\agraph_2''}$.
		}		
		\item{For $i = 1, 2$, we have $\amap_{3-i}(\edges{\agraphB_i}) = \edges{\agraphB_i}$ and $\amap_{3-i}'(\edges{\agraphB_i'}) = \edges{\agraphB_i'}$, hence %\comment[mn]{the second to last line probably needs some extra justification}\comment[ni]{added below}
			\begin{eqnarray*}
				\edges{\agraph''_1}& =& \amap_2'(\edges{\agraph_2'} \setminus \edges{\agraphB_2}) \cup \edges{\agraphB_2'}\\
				& = & \amap_2'( (\amap_1(\edges{\agraph} \setminus \edges{\agraphB_1}) \cup \edges{\agraphB_1'}) \setminus \edges{\agraphB_2}) \cup \edges{\agraphB_2'}\\
				& = & \amap_2'(\amap_1(\edges{\agraph} \setminus \edges{\agraphB_1}) \setminus \edges{\agraphB_2})  \cup \edges{\agraphB_1'} \cup \edges{\agraphB_2'}\\
				& = & \amap_2'(\amap_1(\edges{\agraph} \setminus (\edges{\agraphB_1} \cup \edges{\agraphB_2})) \cup \edges{\agraphB_1'} \cup \edges{\agraphB_2'}\\
				& = & \amap_1'(\amap_2(\edges{\agraph} \setminus (\edges{\agraphB_1} \cup \edges{\agraphB_2})) \cup \edges{\agraphB_1'} \cup \edges{\agraphB_2'}\\
				& = & \edges{\agraph''_2}
			\end{eqnarray*}
			The second to last line stems from the relation 
			$\forall \anode\in \nodes{\agraph}$, 
			$\amap_1'(\amap_2(\anode) = \amap_2'(\amap_1(\anode))$ established in the previous item.
		}
	\item Since $\nodes{\agraph_1''} = \nodes{\agraph_2''}$ and $\inputs{\agraph_1''} = \inputs{\agraph_2''}$, we have $\innernodes{\agraph_1''} = \innernodes{\agraph_2''}$. We show that for all $\anode \in \innernodes{\agraph_1''}$, $\labs{\agraph_1''}(\anode) = \labs{\agraph_2''}(\anode)$.
	\begin{itemize}
		\item By Proposition \ref{prop:subs-minus} we have 
		$\innernodes{\agraph_2''} \setminus (\nodes{\agraphB_1'} \cup \nodes{\agraphB_2'})\ =\ \innernodes{\agraph_1''} \setminus (\nodes{\agraphB_1'} \cup \nodes{\agraphB_2'})\ =\ \innernodes{\agraph_2'} \setminus (\nodes{\agraphB_1} \cup \nodes{\agraphB_2'})\ =\ \innernodes{\agraph} \setminus (\nodes{\agraphB_1} \cup \nodes{\agraphB_2})$. Thus, if $\anode \in \innernodes{\agraph_2''} \setminus (\nodes{\agraphB_1'} \cup \nodes{\agraphB_2'})$ then $\labs{\agraph_1''}(\anode) = \labs{\agraph}(\anode) = \labs{\agraph_2''}(\anode)$.
		\item If $\anode \in \innernodes{\agraphB_2'}$ then $\labs{\agraph_1''}(\anode) = \labs{\agraph_2'}(\anode) = \labs{\agraphB_2'}(\anode) = \labs{\agraph_2''}(\anode)$. The case where $\anode \in \innernodes{\agraphB_1'}$ is proved in a similar way.
		\item If $\anode \in \innernodes{\agraph_1''} \cap \inputs{\agraphB_1'}$ then $\anode \in\innernodes{\agraph_2''} \setminus \nodes{\agraphB_2'} $ and $\anode = \amap_1(\anodeB) = \amap_1'(\anodeB)$ for some element $\anodeB\in \inputs{\agraphB_1}$. Since $\anodeB \in \innernodes{\agraph_2'}\setminus \nodes{\agraphB_2'} = \innernodes{\agraph}\setminus \nodes{\agraphB_2}$ (Proposition \ref{prop:subs-minus}), we have $\labs{\agraph_1''}(\anode) = \labs{\agraph_2'}(\anodeB) = \labs{\agraph}(\anodeB) = \labs{\agraph_1'}(\anode) = \labs{\agraph_2''}(\anode)$. The case where $\anode \in \innernodes{\agraph_2''} \cap \inputs{\agraphB_2'}$ is proved in a similar way.
	\end{itemize}
%		\item{Let $\anode \in \nodes{\agraph_1''} = \nodes{\agraph_2''} = ((\nodes{\agraph} \cup \nodes{\agraphB_1'} \cup \nodes{\agraphB_2'}) \setminus (\nodes{\agraphB_1} \cup \nodes{\agraphB_2})$.
%			Assume that $\labs{\agraph''_1}(\anode)$ is defined and equal to some term $t$.
%			\begin{itemize}
%				\item{If $\anode\in \innernodes{\agraphB_1'}$, then 
%					by Definition \ref{def:replace} (\ref{replace:labels}), we have
%					$t =
%					\labs{\agraphB_1'}(\anode)$. }
%				\item{
%					If $\anode\in \innernodes{\agraphB_2'}$, then, again by Definition \ref{def:replace} (\ref{replace:labels}),
%					$t =
%					\labs{\agraph_2'}(\anode)$, since $\anode\not \in \nodes{\agraphB_1'}$ and $\anode = \amap_1'(\anode)$,  
%					thus
%					$t = \labs{\agraphB_2'}(\anode)$ (still by by Definition \ref{def:replace} (\ref{replace:labels}), using the fact that $\anode \in \innernodes{\agraphB_2'}$).
%				}
%				\item{
%					If $\anode \in \nodes{\agraph} \setminus (\innernodes{\agraphB_1} \cup \innernodes{\agraphB_2})$, 
%					then 
%					$t = \labs{\agraph_2'}({\amap_1'}^{-1}(\anode))$.
%					Since $\dom{\amap_1'} \cap \nodes{\agraphB_2} = \emptyset$, we have
%					$\amap_2({\amap_1'}^{-1}(\anode) = {\amap_1'}^{-1}(\anode))\not \in \innernodes{\agraphB_2}$, and $t = \labs{\agraph}(\amap_2^{-1}({\amap_1'}^{-1}(\anode))) =
%					\labs{\agraph}(\amap_1^{-1}({\amap_2'}^{-1}(\anode)))$.
%				}
%			\end{itemize}
%			The same reasoning holds if  $\labs{\agraph''_i}(\anode) = t$, 
%			consequently, $\labs{\agraph''_i}(\anode)$ does not depend on $i$, and $\labs{\agraph''_1} = \labs{\agraph''_2}$.
%		}
		
	\end{itemize}  
\end{proof}

\begin{lemma}
\label{lem:replace_subgraph}
Let $\agraph$, $\agraphB$ and $\agraphC$ be {\graph}s such that 
$\agraphC \sgr \agraphB \sgr \agraph$, and let $\agraphC'$ be a \graph \substitutable for $\agraphC$ in $\agraph$.
Then $\agraphC'$ is \substitutable for $\agraphC$ in $\agraphB$,
%\comment[ni]{addition}
$\replace{\agraphB}{\agraphC}{\agraphC'}$ 
is \substitutable for $\agraphB$ in $\agraph$,
and:
\[\replace{\agraph}{\agraphC}{\agraphC'} = 
\replace{\agraph}{\agraphB}{\replace{\agraphB}{\agraphC}{\agraphC'}}\]
\end{lemma}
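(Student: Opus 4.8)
The plan is to dispatch the two substitutability claims by direct computation, and then establish the identity $\replace{\agraph}{\agraphC}{\agraphC'} = \replace{\agraph}{\agraphB}{\replace{\agraphB}{\agraphC}{\agraphC'}}$ componentwise (nodes, roots, edges, labels), the label component being the bulk of the work.

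For substitutability: since $\agraphB \sgr \agraph$ we have $\nodes{\agraphB} \subseteq \nodes{\agraph}$, so $\nodes{\agraphB}\cap\nodes{\agraphC'} \subseteq \nodes{\agraph}\cap\nodes{\agraphC'} \subseteq \nodes{\agraphC}$, and $\agraphC, \agraphC'$ are \rootsimilar by hypothesis; hence $\agraphC'$ is \substitutable for $\agraphC$ in $\agraphB$. Writing $\agraphB' \isdef \replace{\agraphB}{\agraphC}{\agraphC'}$, Definition~\ref{def:replace} gives $\nodes{\agraphB'} = (\nodes{\agraphB}\setminus\nodes{\agraphC})\cup\nodes{\agraphC'}$, whence $\nodes{\agraph}\cap\nodes{\agraphB'} \subseteq (\nodes{\agraphB}\setminus\nodes{\agraphC})\cup(\nodes{\agraph}\cap\nodes{\agraphC'}) \subseteq \nodes{\agraphB}$ using $\nodes{\agraphC}\subseteq\nodes{\agraphB}$; and $\agraphB, \agraphB'$ are \rootsimilar because $\inputs{\agraphB'} = \amapsub{\agraphC}{\agraphC'}{\agraphB}(\inputs{\agraphB})$ by Proposition~\ref{prop:amap_image} and $\amapsub{\agraphC}{\agraphC'}{\agraphB}$ preserves sorts and $\eqnodes$-classes on its domain. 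Thus $\agraphB'$ is \substitutable for $\agraphB$ in $\agraph$, and every replacement appearing in the statement is well-defined.

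For the equality, set $\amap_1 \isdef \amapsub{\agraphC}{\agraphC'}{\agraph}$, $\amap_2 \isdef \amapsub{\agraphC}{\agraphC'}{\agraphB}$, $\amap_3 \isdef \amapsub{\agraphB}{\agraphB'}{\agraph}$, $\agraph' \isdef \replace{\agraph}{\agraphC}{\agraphC'}$ and $\agraph'' \isdef \replace{\agraph}{\agraphB}{\agraphB'}$. The first step is to prove the mapping comparisons $\amap_1 = \amap_3$ on $(\nodes{\agraph}\setminus\nodes{\agraphB})\cup\inputs{\agraphB}$ and $\amap_1 = \amap_2$ on $(\nodes{\agraphB}\setminus\nodes{\agraphC})\cup\inputs{\agraphC}$. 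Both follow from Proposition~\ref{prop:sub_inputs_include} (which places the roots of $\agraphB$ outside $\innernodes{\agraphC}$, hence in $\dom{\amap_1}$) and from $\inputs{\agraphB'} = \amap_2(\inputs{\agraphB})$, after a short case split according to whether a root of $\agraphB$ is itself a root of $\agraphC$. Granting these, the nodes, roots and edges of $\agraph'$ and $\agraph''$ are matched by set manipulations: using $\nodes{\agraphC}\subseteq\nodes{\agraphB}\subseteq\nodes{\agraph}$ and $\edges{\agraphC}\subseteq\edges{\agraphB}\subseteq\edges{\agraph}$ (both from Condition~\ref{subgraph:edges} of Definition~\ref{def:subgraph}) one splits $\nodes{\agraph}\setminus\nodes{\agraphC}$ and $\edges{\agraph}\setminus\edges{\agraphC}$ into their $\agraphB$-outer and $\agraphB$-inner parts, substitutes $\edges{\agraphB'} = \amap_2(\edges{\agraphB}\setminus\edges{\agraphC})\cup\edges{\agraphC'}$ into the formula for $\edges{\agraph''}$, and invokes the mapping comparisons together with Conditions~\ref{subgraph:edges_bis}, \ref{subgraph:inputs} and \ref{subgraph:inputs_bis} of Definition~\ref{def:subgraph}, which force any node incident to a ``boundary'' edge to be a root of the inner graph or to lie outside it, so that the relevant mapping acts as the identity there. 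This yields $\nodes{\agraph'} = \nodes{\agraph''}$, $\inputs{\agraph'} = \inputs{\agraph''}$, $\edges{\agraph'} = \edges{\agraph''}$, and in particular $\innernodes{\agraph'} = \innernodes{\agraph''}$.

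The hard part is the labels. As the two graphs have equal inner-node sets, I would verify $\labs{\agraph'}(\anode) = \labs{\agraph''}(\anode)$ using the partition of $\innernodes{\agraph'}$ into the three disjoint pieces $\innernodes{\agraphC'}$, $\innernodes{\agraph'}\cap\inputs{\agraphC'}$ and $\innernodes{\agraph'}\setminus\nodes{\agraphC'}$, where $\innernodes{\agraphC'} \subseteq \innernodes{\agraph'}$ by Proposition~\ref{prop:sub_inputs_include} applied to $\agraphC' \sgr \agraph'$ (Lemma~\ref{lm:sub-subg}). On $\innernodes{\agraphC'}$ both sides equal $\labs{\agraphC'}(\anode)$, using $\agraphC' \sgr \agraphB'$ and $\innernodes{\agraphC'} \subseteq \innernodes{\agraphB'}$. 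On $\innernodes{\agraph'}\setminus\nodes{\agraphC'}$, which equals $\innernodes{\agraph}\setminus\nodes{\agraphC}$ by Proposition~\ref{prop:subs-minus}, both sides equal $\labs{\agraph}(\anode)$, the only delicate subcase being $\anode \in \inputs{\agraphB}\setminus\nodes{\agraphC}$, handled via $\amap_3(\anode) = \amap_2(\anode) = \anode$ and the third branch of Definition~\ref{def:replace}(\ref{replace:labels}) for $\agraph''$. Finally, $\innernodes{\agraph'}\cap\inputs{\agraphC'}$ is exactly $\{\amap_1(\gamma) \mid \gamma \in \innernodes{\agraph}\cap\inputs{\agraphC}\}$; for $\anode = \amap_1(\gamma)$ the left-hand side is $\labs{\agraph}(\gamma)$ by the third branch, and splitting on whether $\gamma \in \innernodes{\agraphB}$ (then $\anode \in \innernodes{\agraphB'}$ and the right-hand side is $\labs{\agraphB'}(\anode) = \labs{\agraphB}(\gamma) = \labs{\agraph}(\gamma)$ via the third branch for $\agraphB'$) or $\gamma \in \inputs{\agraphB}$ (then $\anode \in \inputs{\agraphB'}$ and the right-hand side is $\labs{\agraph}(\gamma)$ via the third branch for $\agraph''$) finishes the argument. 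I expect this last bookkeeping --- tracking which roots of $\agraphC$ remain roots in $\agraphB'$ and which become internal --- to be the only genuinely fiddly point; everything else is set algebra driven by the nesting $\nodes{\agraphC}\subseteq\nodes{\agraphB}\subseteq\nodes{\agraph}$.
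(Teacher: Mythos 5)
Your proposal is correct and follows essentially the same route as the paper's proof: the same two set-inclusion computations for the substitutability claims, the same comparison of the three mappings $\amapsub{\agraphC}{\agraphC'}{\agraph}$, $\amapsub{\agraphC}{\agraphC'}{\agraphB}$, $\amapsub{\agraphB}{\replace{\agraphB}{\agraphC}{\agraphC'}}{\agraph}$ on $\nodes{\agraph}\setminus\innernodes{\agraphB}$ and $\nodes{\agraphB}\setminus\innernodes{\agraphC}$, and the same three-way partition of $\innernodes{\replace{\agraph}{\agraphC}{\agraphC'}}$ for the label verification. The only (minor) added value is that you explicitly check root-compatibility of $\agraphB$ and $\replace{\agraphB}{\agraphC}{\agraphC'}$ via Proposition~\ref{prop:amap_image}, which the paper leaves implicit.
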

\begin{proof}
We have $\nodes{\agraphB} \cap \nodes{\agraphC'} \subseteq \nodes{\agraph} \cap \nodes{\agraphC'}
\subseteq \nodes{\agraphC}$ (since $\agraphC'$ is \substitutable for $\agraphC$ in $\agraph$).
Therefore, $\agraphC'$ is \substitutable for $\agraphC$ in $\agraphB$.
Similarly, $\nodes{\agraph} \cap \nodes{\replace{\agraphB}{\agraphC}{\agraphC'}} \subseteq
\nodes{\agraph} \cap (\nodes{\agraphB} \cup \nodes{\agraphC'}) 
= \nodes{\agraphB} \cup (\nodes{\agraph} \cap \nodes{\agraphC'}) \subseteq \nodes{\agraphB} \cup \nodes{\agraphC} = \nodes{\agraphB}$, hence $\replace{\agraphB}{\agraphC}{\agraphC'}$ 
is \substitutable for $\agraphB$ in $\agraph$.

Let $\agraph' \isdef \replace{\agraph}{\agraphC}{\agraphC'}$, 
$\agraphB' \isdef \replace{\agraphB}{\agraphC}{\agraphC'}$ and 
$\agraph'' \isdef \replace{\agraph}{\agraphB}{\agraphB'}$; we show that $\agraph' = \agraph''$.

Let $\amap_1 \isdef \amapsub{\agraphC}{\agraphC'}{\agraph}$, $\amap_2 \isdef \amapsub{\agraphC}{\agraphC'}{\agraphB}$ and $\amap_3 \isdef \amapsub{\agraphB}{\agraphB'}{\agraph}$. Note that by definition, $\amap_2$ is the restriction of $\amap_1$ to the nodes in $(\nodes{\agraphB} \setminus \nodes{\agraphC}) \cup \inputs{\agraphC}  = \nodes{\agraphB} \setminus \innernodes{\agraphC}$. We show that for all $\anode\in \nodes{\agraph} \setminus \innernodes{\agraphB} = (\nodes{\agraph} \setminus \nodes{\agraphB}) \cup \inputs{\agraphB}$, we have $\amap_3(\anode) = \amap_1(\anode)$. If $\anode \in \nodes{\agraph} \setminus \nodes{\agraphB}$ then since $\nodes{\agraphC} \subseteq \nodes{\agraphB}$, we have $\amap_3(\anode) = \amap_1(\anode) = \anode$. Otherwise, $\anode \in \inputs{\agraphB}$ and by Proposition \ref{prop:amap_image}, $\amap_3(\anode) = \amap_2(\anode)$. Since $\inputs{\agraphB} \subseteq \nodes{\agraphB} \setminus \innernodes{\agraphC}$ by Proposition \ref{prop:sub_inputs_include}, we deduce that $\amap_3(\anode) = \amap_2(\anode) = \amap_1(\anode)$.

\begin{itemize}
\item{By Definition \ref{def:replace} (\ref{replace:nodes}), 
$\nodes{\agraph''}  = (\nodes{\agraph} \setminus \nodes{\agraphB}) \cup \nodes{\agraphB'} 
= (\nodes{\agraph} \setminus \nodes{\agraphB}) \cup (\nodes{\agraphB} \setminus \nodes{\agraphC}) \cup \nodes{\agraphC'} = (\nodes{\agraph} \setminus \nodes{\agraphC}) \cup \nodes{\agraphC'} = \nodes{\agraph'}$.}

\item{By Definition \ref{def:replace} (\ref{replace:inputs}),
 $\inputs{\agraph''} = \amap_3(\inputs{\agraph})$ and
$\inputs{\agraph'} = \amap_1(\inputs{\agraph})$.
Since $\inputs{\agraph} \subseteq \nodes{\agraph} \setminus \innernodes{\agraphB}$ by Proposition \ref{prop:sub_inputs_include}, we have $\amap_3(\inputs{\agraph}) = \amap_1(\inputs{\agraph})$ so that
$\inputs{\agraph''} = \inputs{\agraph'}$.
%Let $\anode$ be a \node in $\inputs{\agraph}$.
%If $\anode \not \in \nodes{\agraphC} \cup \nodes{\agraphB'}$ then
%$\amap_1(\anode) = \amap_3(\anode)$.
%If $\anode \in \nodes{\agraphC}$, then, 
%by definition of $\amap_1$, 
%we have $\anode = \anode_i$, for some $i \in \interv{1}{n}$
%and $\amap_1(\anode) = \anode_i'$.
%since $\agraphC \sgr \agraphB \sgr \agraph$, necessarily $\anode$ occurs in $\inputs{\agraphB}$.
%Similarly, 
%if $\anode \in \nodes{\agraphB'}$, then, 
%since $\agraphB'$ is \substitutable for $\agraphB$ in $\agraph$, 
% necessarily
%$\anode \in \nodes{\agraphB}$ and $\anode$ occurs in $\inputs{\agraphB}$.
%Consequently,  in both cases there exists $j \in \interv{1}{m}$ with $\anode = \anodeB_j$, and 
%$\amap_3(\anode) = \anodeB'_j = \amap_2(\anodeB_j) = \amap_2(\anode) = \amap_1(\anode)$.
}
\item{Consider an edge $\anyedge{\anode}{\anodeB}$ in $\edges{\agraphB} \setminus \edges{\agraphC}$. Then since $\agraphC \sgr \agraphB$, we cannot have $\set{\anode, \anodeB} \subseteq \nodes{\agraphC}$ and if  $\anode \in \nodes{\agraphC}$ then necessarily $\anode \in \inputs{\agraphC}$. This shows that $\set{\anode, \anodeB} \subseteq \nodes{\agraphB} \setminus \innernodes{\agraphC}$ and therefore, $\amap_2(\edges{\agraphB} \setminus \edges{\agraphC}) = \amap_1(\edges{\agraphB} \setminus \edges{\agraphC})$. The fact that $\amap_3(\edges{\agraph} \setminus \edges{\agraphB}) = \amap_1(\edges{\agraph} \setminus \edges{\agraphB})$ is proved in a similar way. We have
	\begin{eqnarray*}
		\edges{\agraph''} &= &\amap_3(\edges{\agraph} \setminus \edges{\agraphB}) \cup \edges{\agraphB'}\\
		& = & \amap_3(\edges{\agraph} \setminus \edges{\agraphB}) \cup \amap_2(\edges{\agraphB}\setminus \edges{\agraphC}) \cup \edges{\agraphC'}\\
		& = & \amap_1(\edges{\agraph} \setminus \edges{\agraphB}) \cup \amap_1(\edges{\agraphB}\setminus \edges{\agraphC}) \cup \edges{\agraphC'}\\
		& = & \amap_1\left((\edges{\agraph} \setminus \edges{\agraphB}) \cup (\edges{\agraphB}\setminus \edges{\agraphC})\right) \cup \edges{\agraphC'} \text{\quad because $\amap_1$ is injective}\\
		& = & \amap_1\left(\edges{\agraph} \setminus \edges{\agraphC}\right) \cup \edges{\agraphC'}\\
		& = & \edges{\agraph'}
	\end{eqnarray*}	
 }
 \item{Consider a node $\anode \in \innernodes{\agraph'}$. Since $\nodes{\agraph'}  = \nodes{\agraph''}$ and $\inputs{\agraph'} = \inputs{\agraph''}$, we also have $\anode \in \innernodes{\agraph''}$. First assume that $\anode \in \innernodes{\agraph'}\setminus \nodes{\agraphC'}$, so that $\labs{\agraph'}(\anode) = \labs{\agraph}(\anode)$.
 	\begin{itemize}
 		\item If $\anode \in \innernodes{\agraph''} \setminus \nodes{\agraphB'}$ then by definition $\labs{\agraph''}(\anode) = \labs{\agraph}(\anode)$.
 		\item If $\anode \in \innernodes{\agraphB'}$ then since $\anode \notin \nodes{\agraphC'}$ by hypothesis and $\agraphB \sgr \agraph$, we have $\labs{\agraph''}(\anode) = \labs{\agraphB'}(\anode) = \labs{\agraphB}(\anode) = \labs{\agraph}(\anode)$.
 		\item If $\inv{\amap_3}(\anode) \in \innernodes{\agraph} \cap \inputs{\agraphB}$ then by definition $\labs{\agraph''}(\anode) = \labs{\agraph}(\inv{\amap_3}(\anode))$. But since $\anode \notin \nodes{\agraphC}$, we must have $\inv{\amap_3}(\anode) = \inv{\amap_2}(\anode) = \anode$, hence the result.
 	\end{itemize}
 Assume that $\anode \in \innernodes{\agraphC'}$, so that $\labs{\agraph'}(\anode) = \labs{\agraphC'}(\anode)$.
 \begin{itemize}
 	\item We cannot have $\anode \in \innernodes{\agraph''} \setminus \nodes{\agraphB'}$ because $\innernodes{\agraphC'} \subseteq \nodes{\agraphB'}$.
 	\item If $\anode \in \innernodes{\agraphB'}$ then  $\labs{\agraph''}(\anode) = \labs{\agraphB'}(\anode) = \labs{\agraphC'}(\anode)$ because $\anode \in \innernodes{\agraphC'}$ by hypothesis.
 	\item We cannot have $\inv{\amap_3}(\anode) \in \innernodes{\agraph} \cap \inputs{\agraphB}$ because otherwise we would have $\anode \in \amap_3(\inputs{\agraphB}) = \inputs{\agraphB'}$ and since $\agraphC' \sgr \agraphB'$, necessarily $\anode \in \inputs{\agraphC'}$.
 \end{itemize}
Now assume that 
%\comment[mn]{use this formulation in the definition of labels (Def \ref{def:replace})?} 
$\anode \in \innernodes{\agraph'} \cap \inputs{\agraphC'}$, i.e.,  $\inv{\amap_1}(\anode) \in \innernodes{\agraph} \cap \inputs{\agraphC}$, so that $\labs{\agraph'}(\anode) = \labs{\agraph}(\inv{\amap_1}(\anode))$.
\begin{itemize}
	\item We cannot have $\anode \in \innernodes{\agraph''} \setminus \nodes{\agraphB'}$ because $\inputs{\agraphC'} \subseteq \nodes{\agraphB'}$.
	\item If $\anode \in \innernodes{\agraphB'}$ then $\labs{\agraph''}(\anode) = \labs{\agraphB'}(\anode) = \labs{\agraphB}(\inv{\amap_2}(\anode)) = \labs{\agraphB}(\inv{\amap_1}(\anode)) = \labs{\agraph}(\inv{\amap_1}(\anode))$.
	\item If $\inv{\amap_3}(\anode) \in \innernodes{\agraph} \cap \inputs{\agraphB} \subseteq \innernodes{\agraph} \setminus \innernodes{\agraphB}$  then $\labs{\agraph''}(\anode) = \labs{\agraph}(\inv{\amap_3}(\anode)) = \labs{\agraph}(\inv{\amap_1}(\anode))$ because $\amap_1$ and $\amap_3$ coincide on $\innernodes{\agraph} \setminus \innernodes{\agraphB}$.
\end{itemize}
 	
% 	
%Assume that $\anode \in \nodes{\agraphC'}$.  By Definition \ref{def:replace} (\ref{replace:labels}), 
% $\labs{\agraph'}(\anode ) = \labs{\agraphC'}(\anode)$.
% Further, by Definition \ref{def:replace} (\ref{replace:nodes}) $\nodes{\agraphC'} \subseteq \nodes{\agraphB'}$, 
% thus $\anode \in \nodes{\agraphB'}$, and, again by Definition \ref{def:replace} (\ref{replace:labels}), we deduce that
% $\labs{\agraph''}(\anode) = \labs{\agraphB'}(\anode) = \labs{\agraphC'}(\anode) = \labs{\agraph'}(\anode)$.
% If  $\anode = \amap_1(\anode')$ with $\anode' \in \innernodes{\agraph} \setminus \innernodes{\agraphC}$,
%then, according to Definition \ref{def:replace} (\ref{replace:labels}), 
%we have 
%$\labs{\agraph'}(\anode) = \labs{\agraph}(\anode')$.
% We distinguish two cases.
% \begin{itemize}
% \item{If $\anode' \in \innernodes{\agraphB}$, then since $\anode'\not \in \innernodes{\agraphC}$, 
% $\anode' \in \dom{\amap_2}$, thus $\anode = \amap_2(\anode')$,
%  $\anode \in \innernodes{\agraphB'}$ and $\labs{\agraphB'}(\anode) = \labs{\agraphB}(\anode') = \labs{\agraph}(\anode')$.
%  Therefore,
%  $\labs{\agraph''}(\anode) = \labs{\agraph}(\anode') = \labs{\agraph'}(\anode)$.
% }
% \item{Otherwise, $\anode'\in \dom{\amap_3}$, $\amap_3(\anode') = \amap_1(\anode') = \anode$ and 
% $\labs{\agraph''}(\anode) = \labs{\agraph}(\anode') = \labs{\agraph'}(\anode)$.
% 
% }
% \end{itemize}
}
 \end{itemize}
\end{proof}

\newcommand{\compatible}{label-compatible\xspace}

\subsection{\capitalisewords{graph rewriting systems}}

%\comment[mn]{Some rewriting, previous version commented}
In what follows, we define the notion of a \emph{graph rewrite system} that is based on the replacement operation. In order to define critical pairs on such a system, we introduce the notion of a \emph{merge} of two graphs. Intuitively, the merge of graphs $\agraph_1$ and $\agraph_2$ will be any graph $\agraph$ containing all nodes and edges occurring in either $\agraph_1$ or $\agraph_2$, and possibly some additional edges.

\begin{definition}
	\label{def:merge}
	Two {\graph}s $\agraph_1$ and $\agraph_2$ are {\em \compatible} iff
	the set of pairs $\set{\tuple{\labs{\agraph_1}(\anodeB),\labs{\agraph_2}(\anodeB))} \mid \anodeB \in \nodes{\agraph_1} \cap\nodes{\agraph_2}}$ admits 
	an mgu. %\comment[me]{mgu first appears here definition , domaine peut etre }\comment[ni]{added (feel free to add any missing def}
	
	%$\anedge \in \edges{\agraph}, \anedge'\in \edges{\agraph'} \Rightarrow 
	%\anedge = \anedge' \vee \anedge \cap \anedge = \emptyset$.
	
	Given two \compatible\ {\graph}s $\agraph_1$, $\agraph_2$ with mgu $\sigma$ and a set of edges $E$ of the form $\anyedge{\anode}{\anodeB}$ such that $\anode \in \inputs{\agraph_1} \setminus \nodes{\agraph_2}$ and $\anodeB \in \inputs{\agraph_2} \setminus \nodes{\agraph_1}$, the \emph{$E$-\merge of $\agraph_1$ and $\agraph_2$} is the graph $\agraph$ defined as follows:	
	\begin{itemize}
		\item $\nodes{\agraph} \isdef \nodes{\agraph_1} \cup \nodes{\agraph_2}$;
		\item %\comment[ni]{modif:} 
		$\inputs{\agraph} \isdef (\inputs{\agraph_1} \cup \inputs{\agraph_2}) \setminus (\innernodes{\agraph_1} \cup \innernodes{\agraph_2})$ (the order of the nodes in $\inputs{\agraph}$ is chosen arbitrarily);
		\item $\edges{\agraph} \isdef \edges{\agraph_1} \cup \edges{\agraph_2} \cup E$;
		%	where $E$ is a set of edges $\mkedge{\alpha}{\beta}$ such that $\alpha,\beta \in \inputs{\agraph}$;
		\item for every node $\anode \in \nodes{\agraph_i}$ with $i = 1,2$,	$\labs{\agraph}(\anode) \isdef \sigma(\labs{\agraph_i}(\anode))$.
	\end{itemize}
	A {\em \merge} of two \compatible\ {\graph}s $\agraph_1$, $\agraph_2$ is a \graph $\agraph$  that is an $E$-merge of $\agraph_1$ and $\agraph_2$, for some set $E$ of edges of the form $\anyedge{\anode}{\anodeB}$ such that $\anode \in \inputs{\agraph_1} \setminus \nodes{\agraph_2}$ and $\anodeB \in \inputs{\agraph_2} \setminus \nodes{\agraph_1}$.		
\end{definition}

\begin{proposition}
	\label{prop:merge}
	Assume $\agraphB,\agraphB' \sgr \agraph$ and consider the set of edges
	\[E\ \isdef\ \setof{\anyedge{\anode}{\anodeB}}{\anode \in \inputs{\agraphB} \setminus \nodes{\agraphB'},\, \anodeB\in \inputs{\agraphB'} \setminus \nodes{\agraphB},\, \anyedge{\anode}{\anodeB} \in \edges{\agraph}}.\]
	 Then
	$\agraphB$ and $\agraphB'$ are \compatible and if $\agraph'$ is the $E$-\merge of $\agraphB$ and $\agraphB'$, then $\agraph' \sgr \agraph$. 
%	\comment[ni]{instead of: admit a \merge, and for every \merge
%		$\agraph'$ of $\agraphB$ and $\agraphB'$, we have $\agraph' \sgr \agraph$.}
\end{proposition}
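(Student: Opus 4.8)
The plan is to establish the two assertions in succession: that $\agraphB$ and $\agraphB'$ are \compatible, and that the $E$-\merge $\agraph'$ is a \subgraph of $\agraph$.

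For the first assertion, by Condition~\ref{subgraph:labs} of Definition~\ref{def:subgraph}, applied both to $\agraphB \sgr \agraph$ and to $\agraphB' \sgr \agraph$, any node $\anodeB \in \nodes{\agraphB} \cap \nodes{\agraphB'}$ that is labelled in both graphs --- that is, $\anodeB \in \innernodes{\agraphB} \cap \innernodes{\agraphB'}$ --- satisfies $\labs{\agraphB}(\anodeB) = \labs{\agraph}(\anodeB) = \labs{\agraphB'}(\anodeB)$. Thus every pair occurring in the set of Definition~\ref{def:merge} has two syntactically identical components, so the empty substitution is an mgu of it; hence $\agraphB$ and $\agraphB'$ are \compatible, and we take $\sigma = \id$ when forming $\agraph'$. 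In passing, $E$ has exactly the shape demanded by Definition~\ref{def:merge}, so $\agraph'$ is well-defined, and with $\sigma = \id$ the labelling $\labs{\agraph'}$ restricts to $\labs{\agraphB}$ on $\nodes{\agraphB}$ and to $\labs{\agraphB'}$ on $\nodes{\agraphB'}$.

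It then remains to check the six conditions of Definition~\ref{def:subgraph} for $\agraph' \sgr \agraph$. Conditions~\ref{subgraph:nodes} and~\ref{subgraph:edges} are immediate, since $\nodes{\agraph'} = \nodes{\agraphB} \cup \nodes{\agraphB'} \subseteq \nodes{\agraph}$ and $\edges{\agraph'} = \edges{\agraphB} \cup \edges{\agraphB'} \cup E \subseteq \edges{\agraph}$, using Condition~\ref{subgraph:edges} for the two subgraph relations together with $E \subseteq \edges{\agraph}$ (which holds by construction of $E$). For Condition~\ref{subgraph:labs}, a short set computation --- using that $\nodes{\agraphB}$ splits as $\inputs{\agraphB} \cup \innernodes{\agraphB}$, similarly for $\agraphB'$, and the formula for $\inputs{\agraph'}$ --- yields $\innernodes{\agraph'} = \innernodes{\agraphB} \cup \innernodes{\agraphB'}$; then every $\anode \in \innernodes{\agraphB}$ has $\labs{\agraph'}(\anode) = \labs{\agraphB}(\anode) = \labs{\agraph}(\anode)$ by Condition~\ref{subgraph:labs} for $\agraphB \sgr \agraph$, and symmetrically for $\innernodes{\agraphB'}$. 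Conditions~\ref{subgraph:inputs} and~\ref{subgraph:inputs_bis} I would dispatch by a case split on whether the node in question lies in $\nodes{\agraphB}$, in $\nodes{\agraphB'}$, or in both, invoking the matching conditions for $\agraphB \sgr \agraph$ and $\agraphB' \sgr \agraph$: in each case the node lands in $\inputs{\agraphB} \cup \inputs{\agraphB'}$ while staying outside $\innernodes{\agraphB} \cup \innernodes{\agraphB'}$, and therefore belongs to $\inputs{\agraph'} = (\inputs{\agraphB} \cup \inputs{\agraphB'}) \setminus (\innernodes{\agraphB} \cup \innernodes{\agraphB'})$.

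The heart of the argument is Condition~\ref{subgraph:edges_bis}: given $\mkedge{\anode}{\anodeB} \in \edges{\agraph}$ with $\anode, \anodeB \in \nodes{\agraph'} = \nodes{\agraphB} \cup \nodes{\agraphB'}$, we must show $\mkedge{\anode}{\anodeB} \in \edges{\agraph'}$. If $\anode$ and $\anodeB$ both lie in $\nodes{\agraphB}$ (resp.\ both in $\nodes{\agraphB'}$), then $\mkedge{\anode}{\anodeB} \in \edges{\agraphB}$ (resp.\ $\edges{\agraphB'}$) by Condition~\ref{subgraph:edges_bis} for that subgraph relation. Otherwise --- the only nontrivial case --- one endpoint, say $\anodeB$, lies in $\nodes{\agraphB} \setminus \nodes{\agraphB'}$ and the other, $\anode$, in $\nodes{\agraphB'} \setminus \nodes{\agraphB}$. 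Then the edge leaves $\nodes{\agraphB}$ at $\anodeB$, so Condition~\ref{subgraph:inputs} of Definition~\ref{def:subgraph} for $\agraphB \sgr \agraph$ forces $\anodeB \in \inputs{\agraphB}$, and symmetrically the same condition for $\agraphB' \sgr \agraph$ forces $\anode \in \inputs{\agraphB'}$. Hence $\anode \in \inputs{\agraphB'} \setminus \nodes{\agraphB}$ and $\anodeB \in \inputs{\agraphB} \setminus \nodes{\agraphB'}$, so $\mkedge{\anode}{\anodeB}$ matches the defining pattern of $E$ and therefore lies in $\edges{\agraph'}$. I expect this cross-edge case to be the main --- essentially the only --- obstacle; the rest is routine bookkeeping with the conditions of Definition~\ref{def:subgraph} and the set identities for $\nodes{\agraph'}$, $\inputs{\agraph'}$ and $\innernodes{\agraph'}$.
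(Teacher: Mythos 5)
Your proof is correct and follows essentially the same route as the paper's: compatibility is witnessed by the empty mgu because the labels of shared inner nodes all come from $\labs{\agraph}$, and the subgraph conditions of Definition~\ref{def:subgraph} are then checked one by one, with the cross-edge case of Condition~\ref{subgraph:edges_bis} resolved exactly as in the paper by applying Condition~\ref{subgraph:inputs} to each of $\agraphB \sgr \agraph$ and $\agraphB' \sgr \agraph$ to place the endpoints in $\inputs{\agraphB}$ and $\inputs{\agraphB'}$, so that the edge lands in $E$. Your treatment is in fact slightly more explicit than the paper's on the compatibility claim and on the identity $\innernodes{\agraph'} = \innernodes{\agraphB} \cup \innernodes{\agraphB'}$.
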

\begin{proof}
%	\comment[ni]{with the definition of $\inputs{\agraph'}$ I showed previously the \merge did not always exist (usually did not exist).
%		To ensure that $\agraph' \sgr \agraph$ we have to allow that $\agraph$ contains additional edges
%		(not in $\agraph'$), between roots.
%	} \comment[mn]{addition}
	It is straightforward to verify that $\agraphB$ and $\agraphB'$ are \compatible, with the empty mgu. 
	We show that $\agraph' \sgr \agraph$, using the fact that, by hypothesis, $\agraphB,\agraphB' \sgr \agraph$.
	\begin{enumerate}
		\item By Definition \ref{def:merge}, we have $\nodes{\agraph'} = \nodes{\agraphB} \cup \nodes{\agraphB'} \subseteq \nodes{\agraph}$.
		\item All the edges in $E$ are also in $\edges{\agraph}$, it is thus straightforward to verify that $\edges{\agraph'} \subseteq \edges{\agraph}$.
		\item Assume that $\mkedge{\anode}{\anodeB} \in \edges{\agraph}$, where $\anode, \anodeB \in \nodes{\agraph'}$. If $\anode, \anodeB \in \nodes{\agraphB}$ or $\anode, \anodeB \in \nodes{\agraphB'}$ then  $\mkedge{\anode}{\anodeB} \in \edges{\agraph'}$ since $\agraphB,\agraphB' \sgr \agraph$. Suppose that $\anode \in \nodes{\agraphB} \setminus \nodes{\agraphB'}$ and $\anodeB \in \nodes{\agraphB'} \setminus \nodes{\agraphB}$.  Since $\agraphB,\agraphB' \sgr \agraph$, by Definition \ref{def:subgraph} (\ref{subgraph:inputs}),  necessarily $\anode \in \inputs{\agraphB}$ and $\anodeB \in \inputs{\agraphB'}$, hence $\mkedge{\anode}{\anodeB} \in E \subseteq \edges{\agraph'}$. The case where $\anode \in \nodes{\agraphB} \setminus \nodes{\agraphB'}$ and $\anodeB \in \nodes{\agraphB'} \setminus \nodes{\agraphB}$ is proved in a similar way. 
		\item Consider $\anyedge{\anode}{\anodeB} \in \edges{\agraph}$, where $\anode \in \nodes{\agraph'}$ and $\anodeB \notin \nodes{\agraph'}$. Assume $\anode \in \nodes{\agraphB}$, the case where $\anode \in \nodes{\agraphB'}$ is proved in a similar way. Since $\anodeB \notin \nodes{\agraph'}$ we have $\anodeB \notin \nodes{\agraphB}$, hence $\anode \in \inputs{\agraphB}$ by Definition \ref{def:subgraph} (\ref{subgraph:inputs}). It cannot be the case that $\anode \in \innernodes{\agraphB'}$ because otherwise we would have $\anode \in \nodes{\agraphB'}$ and $\anodeB \notin \nodes{\agraphB'}$, so that $\anode \in \inputs{\agraphB'}$, a contradiction. We conclude that $\anode \in \inputs{\agraph'}$.
		\item Consider a node $\anode \in \inputs{\agraph} \cap \nodes{\agraph'}$ and suppose w.l.o.g.\ that $\anode \in \nodes{\agraphB}$. Since $\agraphB \sgr \agraph$, by Definition \ref{def:subgraph} (\ref{subgraph:inputs_bis}) necessarily $\anode \in \inputs{\agraphB}$. If $\anode \in \nodes{\agraphB'}$ then $\anode \in \inputs{\agraphB'}$, we deduce that $\anode \in \inputs{\agraphB} \setminus (\innernodes{\agraphB} \cup \innernodes{\agraphB'}) \subseteq \inputs{\agraph'}$.
		\item By definition (since the considered \substitution $\sigma$ is empty)
		$\labs{\agraph'}$ is the restriction of $\labs{\agraph}$ to $\innernodes{\agraph'}$.
	\end{enumerate}
\end{proof}

 \newcommand{\gatesort}[2]{\mathtt{gate}_{#1}^{#2}}
 \newcommand{\inportsort}{\mathtt{into}}
  \newcommand{\outportsort}{\mathtt{from}}

%\comment[mn]{Couldn't the sentence below be included in the following definition, and then $\sgrc$ be defined as a subset of $\aclass \times \aclass$?}
%\comment[np]{yes, however it must be clear that $\aclass$ is fixed in the paper}

\newcommand{\crel}{$\aclass$-relation\xspace}

\begin{definition}
\label{def:sgrc}
Let $\aclass$ be a  set of {\graph}s, such that $\agraph \in \aclass \wedge \agraph \iso \agraph' \Rightarrow \agraph' \in \aclass$.
A {\em \crel} $\sgrc$ is a reflexive and transitive binary relation included in $\sgr$ and satisfying  the following properties:%\comment[ni]{some modifs}
\begin{enumerate}
\item{If $\agraphB \sgrc \agraph$ then $\agraph, \agraphB\in \aclass$. 
%	\comment[ni]{seems clearer this way} 
}
\item{If $\agraphB \sgrc \agraph$, $\agraphB'\in \aclass$ and $\agraphB'$ is \substitutable for $\agraphB$ in $\agraph$ then 
$\agraphB' \sgrc \replace{\agraph}{\agraphB}{\agraphB'}$ (thus $\replace{\agraph}{\agraphB}{\agraphB'} \in \aclass$). \label{sgrc:preserve}}
\item{If $\agraphB \sgrc \agraph$, then $\amap(\agraphB) \sgrc \amap(\agraph)$, for every \nsubstitution $\amap$. \label{sgrc:iso}}
\item{If $\agraphB,\agraphC \sgrc \agraph$, $\agraphB$ and $\agraphC$ are disjoint, $\agraphC' \in \aclass$ and 
$\agraphC'$ 
%\comment[mn]{to check: I think we need to add that $\agraphC' \in \aclass$} 
is \substitutable for $\agraphC$ in $\agraph$, then 
$\agraphB \sgrc \replace{\agraph}{\agraphC}{\agraphC'}$. 
%\comment[mn]{Maybe this point can be ``merged'' with item 2...}\comment[ni]{for now I prefer to keep everything separated so that we know where each condition is used} 
\label{sgrc:replace}
}
\item{If $\agraph_i \sgrc \agraph$, for $i = 1,2$, 
	%\comment[ni]{modifs:} 
	then there exists a \merge $\agraph'$ of
$\agraph_1$ and $\agraph_2$, such that $\agraph' \sgrc \agraph$\label{sgrc:merge} 
%\comment[ni]{It seems that this condition is needed}
}
\item{
	%\comment[np]{added:} 
	If $\agraph \sgrc \agraphB$ then for every \substitution $\sigma$,
$\sigma(\agraph) \sgrc \sigma(\agraphB)$.\label{sgrc:subst}}
\end{enumerate}
\end{definition}

In the following we consider a fixed set $\aclass$ and a fixed \crel $\sgrc$. The rewrite relation is parameterized by $\sgrc$.

\newcommand{\system}{\graph rewrite system\xspace}

\begin{definition}
\label{def:rr}
A {\em \rewriterule} is a pair written $\agraph \rightarrow \agraphB$, where 
$\agraph$ and $\agraphB$ are   {\graph}s such that 
%\comment[np]{deleted: $\agraph$ and $\agraphB$ are  disjoint {\graph}s}
 $\inputs{\agraph}$ and $\inputs{\agraphB}$ are \rootsimilar. 
 %\comment[ni]{formally the latter condition is not necessary but if it does not hold then the rewrite rule will never be applicable}
A {\em \system} is a set of {\rewriterule}s.
\end{definition}

%\comment[me]{Changed font for rewrite system, $\mathcal{R}$ is already used for graph inputs.}

\begin{definition}
\label{def:rw}
If $\rules$ is a \system
then we write $\agraph \rightarrow_{\rules} \agraph'$
iff there exists a rule $\agraphB \rightarrow \agraphB' \in \rules$, a \substitution $\sigma$ and an \nsubstitution $\amap$ of domain $\nodes{\agraphB} \cup \nodes{\agraphB'}$ such that
$\sigma(\amap(\agraphB)) \sgrc \agraph$,
$\sigma(\amap(\agraphB'))$ is \substitutable for $\sigma(\amap(\agraphB))$ in $\agraph$, %\comment[mn]{addition, maybe clearer this way, in which case item 2 in the proof of Lemma \ref{lem:rw_subgraph} can be rewritten}\comment[ni]{this is fine for me but I am not sure that this simplifies the proof of \ref{lem:rw_subgraph}}  $\nodes{\amap(\agraphB')} \cap \nodes{\agraph} = \emptyset$
 and $\agraph' \iso \replace{\agraph}{\sigma(\amap(\agraphB))}{\sigma(\amap(\agraphB'))}$.
We denote by  $\rightarrow_\rules^+$
the transitive closure of $\rightarrow_\rules$, respectively.
We write $\agraph \rightarrow^{k}_\rules \agraph'$ 
 if there exists 
 a sequence of {\graph}s $\agraphB_i$ (for $i = 1,\dots,k$)
 with $\agraphB_1 \iso \agraph$, $\agraph_k \iso \agraph'$
 and for every $i \in \interv{1}{k-1}$, 
 $\agraphB_i \rightarrow_{\rules} \agraphB_{i+1}$.
%\comment[np]{slight change, not really important but convenient later: } \comment[me]{Moved for readability}
By a slight abuse of notation, we also  denote by $\rightarrow_\rules^*$ the least relation containing $\rightarrow_\rules^+$ and $\iso$. 
\end{definition}

Note that $\rightarrow_\rules^*$ is the reflexive and transitive closure of $\rightarrow_\rules$, if the latter is viewed as a relation on equivalence classes of {\graph}s w.r.t.\ $\iso$.

\begin{proposition}
\label{prop:rw_iso}
If $\agraph \rightarrow_{\rules} \agraph'$
and $\agraph'' \iso \agraph$, then 
$\agraph''\rightarrow_{\rules} \agraph'$. 
\end{proposition}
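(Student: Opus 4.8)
The plan is to transport the data witnessing $\agraph \rightarrow_\rules \agraph'$ along a renaming realizing $\agraph'' \iso \agraph$. First I would unfold Definition~\ref{def:rw}: it supplies a rule $\agraphB \rightarrow \agraphB' \in \rules$, a \substitution $\sigma$, and an \nsubstitution $\amap$ of domain $\nodes{\agraphB} \cup \nodes{\agraphB'}$ such that, abbreviating $\agraphB_0 \isdef \sigma(\amap(\agraphB))$ and $\agraphB_0' \isdef \sigma(\amap(\agraphB'))$, we have $\agraphB_0 \sgrc \agraph$, the \graph $\agraphB_0'$ is \substitutable for $\agraphB_0$ in $\agraph$, and $\agraph' \iso \replace{\agraph}{\agraphB_0}{\agraphB_0'}$. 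From $\agraph'' \iso \agraph$ I would fix an \nsubstitution $\nu$ with $\agraph'' = \nu(\agraph)$ and, after restricting it, assume $\dom{\nu} = \nodes{\agraph}$.

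The only genuine difficulty is that $\nu$ need not be defined on the ``fresh'' nodes $\nodes{\agraphB_0'} \setminus \nodes{\agraph}$ created by the replacement \graph, so $\nu \circ \amap$ is not yet an \nsubstitution of the domain required by Definition~\ref{def:rw}. I would remove this obstacle by extending $\nu$ to an injective \nsubstitution $\nu'$ of domain $\nodes{\agraph} \cup \nodes{\agraphB_0'}$: on $\nodes{\agraphB_0'} \cap \nodes{\agraph}$ — which equals $\nodes{\agraphB_0'} \cap \nodes{\agraphB_0}$, since $\agraphB_0'$ is \substitutable for $\agraphB_0$ in $\agraph$ and $\agraphB_0 \sgr \agraph$ — the map $\nu'$ already coincides with $\nu$, and each node of $\nodes{\agraphB_0'} \setminus \nodes{\agraph}$ is sent to a pairwise distinct fresh node of the same \sort and $\eqnodes$-class avoiding $\img{\nu}$ (such nodes exist, exactly as in the proof of Corollary~\ref{cor:replace_iso_subgraph}). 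Then $\img{\amap} = \nodes{\agraphB_0} \cup \nodes{\agraphB_0'} \subseteq \dom{\nu'}$, so $\amap'' \isdef \nu' \circ \amap$ is again an \nsubstitution (composition of injective, \sort- and $\eqnodes$-preserving maps) and its domain is exactly $\nodes{\agraphB} \cup \nodes{\agraphB'}$. Since {\nsubstitution}s act as the identity on terms they commute with $\sigma$, and combining this with the way maps compose on {\graph}s gives $\sigma(\amap''(\agraphB)) = \nu'(\agraphB_0)$ and $\sigma(\amap''(\agraphB')) = \nu'(\agraphB_0')$; moreover $\nu'(\agraph) = \nu(\agraph) = \agraph''$.

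Finally I would check the three clauses of Definition~\ref{def:rw} for the step $\agraph'' \rightarrow_\rules \agraph'$ with witnesses $\agraphB \rightarrow \agraphB'$, $\sigma$, and $\amap''$. The first clause, $\sigma(\amap''(\agraphB)) = \nu'(\agraphB_0) \sgrc \nu'(\agraph) = \agraph''$, is immediate from $\agraphB_0 \sgrc \agraph$ by property~\ref{sgrc:iso} of Definition~\ref{def:sgrc}. For the remaining two clauses I would apply Proposition~\ref{prop:replace_iso} to $\agraph, \agraphB_0, \agraphB_0'$ and the \nsubstitution $\nu'$ (whose domain $\nodes{\agraph} \cup \nodes{\agraphB_0'}$ is precisely what that proposition requires): it yields that $\nu'(\agraphB_0') = \sigma(\amap''(\agraphB'))$ is \substitutable for $\nu'(\agraphB_0) = \sigma(\amap''(\agraphB))$ in $\agraph''$, and that $\replace{\agraph''}{\sigma(\amap''(\agraphB))}{\sigma(\amap''(\agraphB'))} = \nu'(\replace{\agraph}{\agraphB_0}{\agraphB_0'})$; since $\nodes{\replace{\agraph}{\agraphB_0}{\agraphB_0'}} \subseteq \dom{\nu'}$, the right-hand side is $\iso$-equivalent to $\replace{\agraph}{\agraphB_0}{\agraphB_0'}$, hence to $\agraph'$ by transitivity of $\iso$. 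As indicated, the main obstacle is solely the domain bookkeeping for $\amap''$; once $\nu$ has been extended appropriately, the statement reduces to Proposition~\ref{prop:replace_iso} together with the stability of $\sgrc$ under {\nsubstitution}s.
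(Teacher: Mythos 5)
Your proof is correct and follows essentially the same route as the paper: unfold Definition~\ref{def:rw}, transport the witnesses along the renaming realizing $\agraph'' \iso \agraph$, and conclude via Definition~\ref{def:sgrc}~(\ref{sgrc:iso}) and Proposition~\ref{prop:replace_iso}. The only difference is that you make explicit the extension of the renaming to the fresh nodes of the replacing graph so that Proposition~\ref{prop:replace_iso} is literally applicable — a bookkeeping step the paper's proof leaves implicit — so this is a slightly more careful version of the same argument.
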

\begin{proof}
By definition, there exists $\agraphB \rightarrow \agraphB' \in \rules$, a \substitution $\sigma$ and an \nsubstitution $\amap$ of domain $\nodes{\agraphB} \cup \nodes{\agraphB'}$ such that
$\sigma(\amap(\agraphB)) \sgrc \agraph$,
$\sigma(\amap(\agraphB'))$ is \substitutable for $\sigma(\amap(\agraphB))$ in $\agraph$, 
 and $\agraph' \iso \replace{\agraph}{\sigma(\amap(\agraphB))}{\sigma(\amap(\agraphB'))}$.
 Since $\agraph'' \iso \agraph$, there exists an \nsubstitution 
 $\amap'$ such that $\agraph'' = \amap'(\agraph)$.
 By Definition \ref{def:sgrc} (\ref{sgrc:iso}), 
 we have $\amap'(\sigma(\amap(\agraphB))) \sgrc \amap'(\agraph)$.
 By Proposition \ref{prop:replace_iso}, 
 $\amap'(\sigma(\amap(\agraphB')))$ is \substitutable for $\amap'(\sigma(\amap(\agraphB)))$
 in $\amap'(\agraph)$, and
 \begin{eqnarray*}
 	\amap'(\replace{\agraph}{\sigma(\amap(\agraphB))}{\sigma(\amap(\agraphB'))})
 &= & \replace{\amap'(\agraph)}{\amap'(\sigma(\amap(\agraphB)))}{\amap'(\sigma(\amap(\agraphB')))}\\
 & = & \replace{\agraph''}{\amap'(\sigma(\amap(\agraphB)))}{\amap'(\sigma(\amap(\agraphB')))}.
 \end{eqnarray*}
 Thus 
 $\agraph' \iso \replace{\agraph''}{\amap'(\sigma(\amap(\agraphB)))}{\amap'(\sigma(\amap(\agraphB')))}$.
Since $\amap'(\sigma(\amap(\agraphB))) = \sigma(\amap'\circ\amap(\agraphB))$ and $\amap'(\sigma(\amap(\agraphB'))) = \sigma(\amap'\circ\amap(\agraphB'))$, by Definition \ref{def:rw} we deduce that  
 $\agraph'' \rightarrow_{\rules} \agraph'$.

  \end{proof}
 
\begin{lemma}
\label{lem:rw_subgraph}
Let $\rules$ be a \system.
Let $\agraph$ be a \graph and  $\agraphC$ be a \subgraph of $\agraph$, with $\agraphC \sgrc \agraph$.
If $\agraphC \rightarrow_{\rules}^k \agraphC'$ for some $k \geq 0$ and $\agraphC'$ is \substitutable for $\agraphC$ in $\agraph$, 
then $\agraph \rightarrow_{\rules}^k \replace{\agraph}{\agraphC}{\agraphC'}$.
\end{lemma}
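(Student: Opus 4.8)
The statement is a "lifting" lemma: a rewrite sequence inside a subgraph $\agraphC$ of $\agraph$ can be replayed inside $\agraph$ itself. The natural approach is induction on $k$. The base case $k=0$ says $\agraphC \iso \agraphC'$; by Proposition~\ref{prop:trivial_replacement} and Corollary~\ref{cor:replace_iso_subgraph} (with $\agraphC$ playing the role of the replaced subgraph, $\agraphC$ and $\agraphC'$ the two substitutable isomorphic graphs), $\replace{\agraph}{\agraphC}{\agraphC'} \iso \replace{\agraph}{\agraphC}{\agraphC} = \agraph$, so $\agraph \rightarrow_\rules^0 \replace{\agraph}{\agraphC}{\agraphC'}$ holds. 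One has to be slightly careful that $\agraphC'$ is indeed substitutable for $\agraphC$ in $\agraph$, but that is an explicit hypothesis of the lemma.

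For the induction step, suppose $\agraphC \rightarrow_\rules \agraphC_1 \rightarrow_\rules^{k-1} \agraphC'$. The plan is: first handle the single step $\agraphC \rightarrow_\rules \agraphC_1$, obtaining $\agraph \rightarrow_\rules \replace{\agraph}{\agraphC}{\agraphC_1}$, and then apply the induction hypothesis to $\agraphC_1 \sgrc \replace{\agraph}{\agraphC}{\agraphC_1}$ to get $\replace{\agraph}{\agraphC}{\agraphC_1} \rightarrow_\rules^{k-1} \replace{\replace{\agraph}{\agraphC}{\agraphC_1}}{\agraphC_1}{\agraphC'}$, which by Lemma~\ref{lem:replaceseq} equals $\replace{\agraph}{\agraphC}{\agraphC'}$; chaining the two gives the result. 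Several side conditions must be discharged along the way: that $\agraphC_1 \sgrc \replace{\agraph}{\agraphC}{\agraphC_1}$ (this should follow from Definition~\ref{def:sgrc}(\ref{sgrc:preserve}), using that the single-step definition gives us $\sigma(\amap(\agraphB')) \sgrc \agraphC$ hence a $\sgrc$-witness inside $\agraphC_1$, combined with transitivity of $\sgrc$), and that $\agraphC'$ is substitutable for $\agraphC_1$ in $\replace{\agraph}{\agraphC}{\agraphC_1}$ so that Lemma~\ref{lem:replaceseq} applies — the first part of Lemma~\ref{lem:replaceseq} actually supplies substitutability of $\agraphC'$ for $\agraphC$ in $\agraph$, which we need for the final $\replace{\agraph}{\agraphC}{\agraphC'}$ to be defined.

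The core of the induction step is therefore the single-step case: given $\agraphC \rightarrow_\rules \agraphC_1$ with $\agraphC \sgrc \agraph$ and $\agraphC_1$ substitutable for $\agraphC$ in $\agraph$, show $\agraph \rightarrow_\rules \replace{\agraph}{\agraphC}{\agraphC_1}$. Unfolding Definition~\ref{def:rw}, there is a rule $\agraphB \rightarrow \agraphB' \in \rules$, a substitution $\sigma$ and an $\allnodes$-mapping $\amap_0$ such that $\sigma(\amap_0(\agraphB)) \sgrc \agraphC$, $\sigma(\amap_0(\agraphB'))$ is substitutable for $\sigma(\amap_0(\agraphB))$ in $\agraphC$, and $\agraphC_1 \iso \replace{\agraphC}{\sigma(\amap_0(\agraphB))}{\sigma(\amap_0(\agraphB'))}$. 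We may, after composing $\amap_0$ with a suitable renaming $\amap_1$, assume $\agraphC_1 = \replace{\agraphC}{\sigma(\amap_1\amap_0(\agraphB))}{\sigma(\amap_1\amap_0(\agraphB'))}$ on the nose and that the fresh nodes of $\agraphB'$ avoid $\nodes{\agraph}$ (using Corollary~\ref{cor:replace_iso_subgraph} / Lemma~\ref{lem:replace_id} to absorb the isomorphism and Definition~\ref{def:sgrc}(\ref{sgrc:iso}) to keep the $\sgrc$-relations). Writing $L \isdef \sigma(\amap_1\amap_0(\agraphB))$ and $R \isdef \sigma(\amap_1\amap_0(\agraphB'))$, we have $L \sgrc \agraphC \sgrc \agraph$, hence $L \sgrc \agraph$ by transitivity; we must check $R$ is substitutable for $L$ in $\agraph$ (i.e. $\nodes{\agraph} \cap \nodes{R} \subseteq \nodes{L}$), which follows from $\nodes{\agraph} \cap \nodes{R} \subseteq (\nodes{\agraph}\setminus\nodes{\agraphC})\cap\nodes{R}$ union the part inside $\agraphC$: the fresh-renaming kills the first piece and substitutability inside $\agraphC$ handles the second. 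Then by Definition~\ref{def:rw}, $\agraph \rightarrow_\rules \replace{\agraph}{L}{R}$, and it remains to identify $\replace{\agraph}{L}{R}$ with $\replace{\agraph}{\agraphC}{\agraphC_1}$ — this is exactly Lemma~\ref{lem:replace_subgraph} applied with the nesting $L \sgr \agraphC \sgr \agraph$, since $\agraphC_1 = \replace{\agraphC}{L}{R}$.

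The main obstacle I anticipate is bookkeeping around the $\allnodes$-mappings and the "up to $\iso$" slack in Definition~\ref{def:rw}: one must choose the renaming so that the newly introduced nodes of $R$ are disjoint from $\nodes{\agraph}$ (not merely from $\nodes{\agraphC}$), so that $R$ is substitutable for $L$ not just in $\agraphC$ but in the larger graph $\agraph$, and simultaneously keep all the $\sgrc$-relations intact via Definition~\ref{def:sgrc}(\ref{sgrc:iso}) and (\ref{sgrc:preserve}). Once that renaming is fixed, the two structural facts that do the real work — Lemma~\ref{lem:replace_subgraph} for the single step and Lemma~\ref{lem:replaceseq} for composing steps — are already available, so the remaining arguments are routine verifications of substitutability conditions and domain computations.
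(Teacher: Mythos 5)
Your proposal is correct and follows essentially the same route as the paper's proof: induction on $k$, a w.l.o.g.\ renaming of the rule's fresh nodes so that the one-step replacement is substitutable in the larger graph $\agraph$ (absorbing the $\iso$ slack via Corollary~\ref{cor:replace_iso_subgraph}/Lemma~\ref{lem:replace_id}), transitivity of $\sgrc$ plus Lemma~\ref{lem:replace_subgraph} to realize the single step inside $\agraph$, and Definition~\ref{def:sgrc}(\ref{sgrc:preserve}) together with Lemma~\ref{lem:replaceseq} to chain the induction hypothesis on the remaining $k-1$ steps. The side conditions you flag (substitutability of the intermediate and final graphs, disjointness of fresh nodes from $\nodes{\agraph}$) are exactly the ones the paper discharges with its ``we assume w.l.o.g.'' clauses.
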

\begin{proof}
	%\comment[mn]{Slight modifs}
The proof is by induction on $k$. If $k = 0$, then the proof is an immediate consequence of Proposition \ref{prop:trivial_replacement}.
Assume that $k > 0$. By Definition \ref{def:rr}, 
there exist a rule $\agraphB \rightarrow \agraphB' \in \rules$, a \substitution $\sigma$ and an \nsubstitution $\amap$ of domain $\nodes{\agraphB} \cup \nodes{\agraphB'}$ such that
$\sigma(\amap(\agraphB)) \sgrc \agraphC$,
$\sigma(\amap(\agraphB'))$ is \substitutable for $\sigma(\amap(\agraphB))$ in $\agraphC$, 
 and $\agraphC'' \iso \replace{\agraphC}{\sigma(\amap(\agraphB))}{\sigma(\amap(\agraphB'))}$
 with $\agraphC'' \rightarrow_{\rules}^{k-1} \agraphC'$. 
 %\comment[mn]{Modifs, to recheck!}\comment[ni]{this is fine to me.} 
 We assume w.l.o.g.\ that:
 \begin{itemize}
 	\item %\comment[np]{slight modif} 
 	The \nsubstitution $\amap$ is such that $\nodes{\amap(\agraphB')} \cap \nodes{\agraph} \subseteq \nodes{\agraphC}$, 
 	%\comment[np]{instead of $= \emptyset$}, 
 	so that the graph $\replace{\agraphC}{\sigma(\amap(\agraphB))}{\sigma(\amap(\agraphB'))}$ is \substitutable for $\agraphC$ in $\agraph$ (such an \nsubstitution is guaranteed to exists because 
 	$\nodes{\agraphB} \subseteq \nodes{\agraphC}$);
 	\item $\nodes{\agraphC''} \cap \nodes{\agraph} = \emptyset$, so that $\agraphC''$ is \substitutable for $\agraphC$ in $\agraph$.
 \end{itemize} 
By Corollary \ref{cor:replace_iso_subgraph} we have $\replace{\agraph}{\agraphC}{\agraphC''} \iso \replace{\agraph}{\agraphC}{\replace{\agraphC}{\sigma(\amap(\agraphB))}{\sigma(\amap(\agraphB'))}}$, and by Lemma \ref{lem:replace_subgraph},
\[ \replace{\agraph}{\sigma(\amap(\agraphB))}{\sigma(\amap(\agraphB'))} 
= \replace{\agraph}{\agraphC}{\replace{\agraphC}{\sigma(\amap(\agraphB))}{\sigma(\amap(\agraphB'))}}.   \]
We deduce that $\replace{\agraph}{\agraphC}{\agraphC''}  \iso \replace{\agraph}{\sigma(\amap(\agraphB))}{\sigma(\amap(\agraphB'))}$.
  
 By Definition \ref{def:sgrc} $\sgrc$ is transitive, and since 
 $\agraphC \sgrc \agraph$ and $\sigma(\amap(\agraphB)) \sgrc \agraphC$,
 we deduce that $\sigma(\amap(\agraphB)) \sgrc \agraph$, which proves that $\agraph  \rightarrow_{\rules} \replace{\agraph}{\agraphC}{\agraphC''}$.
%
% By Lemma \ref{lem:replace_subgraph}, we have:
% \[ \replace{\agraph}{\sigma(\amap(\agraphB))}{\sigma(\amap(\agraphB'))} 
% = \replace{\agraph}{\agraphC}{\replace{\agraphC}{\sigma(\amap(\agraphB))}{\sigma(\amap(\agraphB'))}}   \]
% By Corollary \ref{cor:replace_iso_subgraph}, 
% since $\agraphC'' \iso \replace{\agraphC}{\sigma(\amap(\agraphB))}{\sigma(\amap(\agraphB'))}$ we deduce that
% \[ \replace{\agraph}{\sigma(\amap(\agraphB))}{\sigma(\amap(\agraphB'))} 
% \iso \replace{\agraph}{\agraphC}{\agraphC''}\]
% Thus $\agraph  \rightarrow_{\rules} \replace{\agraph}{\agraphC}{\agraphC''}$.
By Definition \ref{def:sgrc} (\ref{sgrc:preserve}) we have $\agraphC'' \sgrc \agraph$ and by the induction hypothesis, 
  $\replace{\agraph}{\agraphC}{\agraphC''} \rightarrow_{\rules}^{k-1} \replace{\agraph}{\agraphC}{\agraphC'}$.
 Therefore, $\agraph \rightarrow_{\rules}^k \replace{\agraph}{\agraphC}{\agraphC'}$.

\end{proof}

\newcommand{\criticalpair}{critical pair\xspace}
\newcommand{\joinable}{joinable\xspace}

%\comment[mn]{Some modifs (tried to fix typos on graph names), to recheck}\comment[ni]{ok}

\begin{definition}
\label{def:critical_pair} %\comment[mn]{modifs}
A {\em \criticalpair} for a set of {\rewriterule}s $\rules$
is a pair $(\agraph_1,\agraph_2)$ such that there exist rules $\agraphB_i \rightarrow \agraphB'_i$ and  
{\nsubstitution}s
$\amap_i$  for $i = 1,2$, such that:
\begin{itemize}
	\item  $\amap_1(\agraphB_1)$ and $\amap_2(\agraphB_2)$ are \compatible with  mgu $\sigma$;
	\item $(\amap_1(\nodes{\agraphB_1'})\cup \amap_2(\nodes{\agraphB_2'})) \cap (\amap_1(\nodes{\agraphB_1}) \cup \amap_2(\nodes{\agraphB_2})) = \emptyset$;
	\item there exists a \merge $\agraph$  of $\amap_1(\agraphB_1)$	and $\amap_2(\agraphB_2)$ and for $i = 1,2$, $\amap_i(\agraphB_i) \sgrc \agraph$; 
	%\comment[mn]{does $\sgrc$ need to be closed under substitutions?}\comment[np]{yes, indeed}
	\item $\agraph_i = \replace{\agraph}{\sigma(\amap_i(\agraphB_i))}{\sigma(\amap_i(\agraphB'_i))}$.
\end{itemize}

%\end{definition}

%\begin{definition}
%\label{def:trivial}
The critical pair $(\agraph_1,\agraph_2)$ is {\em trivial} if
$\amap_1(\nodes{\agraphB_1}) \cap \amap_2(\nodes{\agraphB_2})  = \emptyset$.
\end{definition}

Note that if the critical pair $(\agraph_1,\agraph_2)$ is {trivial}, then this entails that $\sigma$ is the identity on the set of variables $\fv{\agraphB_1} \cup \fv{\agraphB_2}$, since the set $\set{\tuple{\labs{\amap_1(\agraphB_1)}(\anodeB),\labs{\amap_2(\agraphB_2)}(\anodeB)} \mid \anodeB \in \nodes{\amap_1(\agraphB_1)} \cap\nodes{\amap_2(\agraphB_2)}}$ is empty.

\begin{definition}
\label{def:joinable}
A \criticalpair $(\agraph_1,\agraph_2)$ is {\em \joinable} if 
there exists a graph $\agraph'$ such that $\agraph_i \rightarrow^*_\rules \agraph'$ for $i = 1,2$.
\end{definition}

\begin{proposition}
Any trivial critical pair is \joinable.
\end{proposition}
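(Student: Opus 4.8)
The plan is to close the critical‑pair diagram by rewriting each of the two reducts $\agraph_1,\agraph_2$ with the \emph{other} rule at its redex: for a trivial pair the two redexes are node‑disjoint, so each of them survives the replacement that produced the reduct, and the two results coincide by Lemma~\ref{lemma:replace_commute}. Concretely, I would first spell out the data of Definition~\ref{def:critical_pair}: rules $\agraphB_i\rightarrow\agraphB'_i$ ($i=1,2$), {\nsubstitution}s $\amap_1,\amap_2$, an mgu $\sigma$, and a merge $\agraph$ of $\amap_1(\agraphB_1)$ and $\amap_2(\agraphB_2)$ with $\amap_i(\agraphB_i)\sgrc\agraph$ and $\agraph_i=\replace{\agraph}{\sigma(\amap_i(\agraphB_i))}{\sigma(\amap_i(\agraphB'_i))}$. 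Triviality means $\amap_1(\nodes{\agraphB_1})\cap\amap_2(\nodes{\agraphB_2})=\emptyset$ and, by the remark following the definition, that $\sigma$ is the identity on the variables occurring in the two rules, so $\sigma(\amap_i(\agraphB_i))=\amap_i(\agraphB_i)$ and $\sigma(\amap_i(\agraphB'_i))=\amap_i(\agraphB'_i)$. Abbreviating $\hat{\agraphB}_i\isdef\amap_i(\agraphB_i)$ and $\hat{\agraphB}'_i\isdef\amap_i(\agraphB'_i)$, I then have $\agraph_i=\replace{\agraph}{\hat{\agraphB}_i}{\hat{\agraphB}'_i}$, with $\hat{\agraphB}_i\sgrc\agraph$ (hence $\hat{\agraphB}_i\in\aclass$), with $\hat{\agraphB}'_i$ \substitutable for $\hat{\agraphB}_i$ in $\agraph$ (needed for that replacement to be defined), and with $\hat{\agraphB}'_i\iso\agraphB'_i$, so that $\hat{\agraphB}'_i\in\aclass$ as well, assuming — as is standard here — that the right‑hand sides of rewrite rules belong to $\aclass$.

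Then I would assemble the node‑disjointness hypotheses required to invoke Lemma~\ref{lemma:replace_commute}. Triviality gives $\nodes{\hat{\agraphB}_1}\cap\nodes{\hat{\agraphB}_2}=\emptyset$ at once, and since $\hat{\agraphB}_j\sgr\agraph$ while $\hat{\agraphB}'_i$ is \substitutable for $\hat{\agraphB}_i$ in $\agraph$, for $i\ne j$ we get $\nodes{\hat{\agraphB}_j}\cap\nodes{\hat{\agraphB}'_i}\subseteq\nodes{\agraph}\cap\nodes{\hat{\agraphB}'_i}\subseteq\nodes{\hat{\agraphB}_i}\cap\nodes{\hat{\agraphB}_j}=\emptyset$. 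The one hypothesis not supplied by Definition~\ref{def:critical_pair} is $\nodes{\hat{\agraphB}'_1}\cap\nodes{\hat{\agraphB}'_2}=\emptyset$; I would arrange it without loss of generality by composing $\amap_2$ with a renaming that sends the nodes of $\agraphB'_2$ (which are disjoint from those of $\agraphB_2$, as forced by the hypotheses of Definition~\ref{def:critical_pair}) to {\node}s fresh with respect to everything in play. This leaves $\agraph$, $\hat{\agraphB}_2$ and $\agraph_1$ unchanged, preserves all conditions of Definition~\ref{def:critical_pair}, and by Corollary~\ref{cor:replace_iso_subgraph} alters $\agraph_2$ only up to $\iso$, which is harmless since $\rightarrow^*_\rules$ absorbs $\iso$.

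With all this in place the diagram closes. Definition~\ref{def:sgrc}(\ref{sgrc:replace}), applied with $\agraphB:=\hat{\agraphB}_2$, $\agraphC:=\hat{\agraphB}_1$ and $\agraphC':=\hat{\agraphB}'_1$ (using $\hat{\agraphB}_1,\hat{\agraphB}_2\sgrc\agraph$, their disjointness, $\hat{\agraphB}'_1\in\aclass$, and $\hat{\agraphB}'_1$ \substitutable for $\hat{\agraphB}_1$ in $\agraph$), gives $\hat{\agraphB}_2\sgrc\replace{\agraph}{\hat{\agraphB}_1}{\hat{\agraphB}'_1}=\agraph_1$, and symmetrically $\hat{\agraphB}_1\sgrc\agraph_2$. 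Lemma~\ref{lemma:replace_commute} then yields that $\hat{\agraphB}'_2$ is \substitutable for $\hat{\agraphB}_2$ in $\agraph_1$, that $\hat{\agraphB}'_1$ is \substitutable for $\hat{\agraphB}_1$ in $\agraph_2$, and
\begin{align*}
\replace{\agraph_1}{\hat{\agraphB}_2}{\hat{\agraphB}'_2}
 &= \replace{(\replace{\agraph}{\hat{\agraphB}_1}{\hat{\agraphB}'_1})}{\hat{\agraphB}_2}{\hat{\agraphB}'_2} \\
 &= \replace{(\replace{\agraph}{\hat{\agraphB}_2}{\hat{\agraphB}'_2})}{\hat{\agraphB}_1}{\hat{\agraphB}'_1}
  = \replace{\agraph_2}{\hat{\agraphB}_1}{\hat{\agraphB}'_1}.
\end{align*}
Call this common graph $\agraph'$. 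Since $\hat{\agraphB}_2=\sigma(\amap_2(\agraphB_2))\sgrc\agraph_1$ and $\hat{\agraphB}'_2=\sigma(\amap_2(\agraphB'_2))$ is \substitutable for it in $\agraph_1$, the rule $\agraphB_2\rightarrow\agraphB'_2$ together with $\amap_2$ and $\sigma$ witnesses $\agraph_1\rightarrow_\rules\agraph'$ by Definition~\ref{def:rw}, and symmetrically $\agraph_2\rightarrow_\rules\agraph'$; hence $\agraph_1\rightarrow^*_\rules\agraph'$ and $\agraph_2\rightarrow^*_\rules\agraph'$, so the critical pair is \joinable.

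The part I expect to require the most care is the bookkeeping around Lemma~\ref{lemma:replace_commute}: obtaining $\nodes{\hat{\agraphB}'_1}\cap\nodes{\hat{\agraphB}'_2}=\emptyset$ through the renaming argument while checking that it disturbs none of the other critical‑pair data, and securing the $\aclass$‑membership of the right‑hand sides so that Definition~\ref{def:sgrc}(\ref{sgrc:replace}) is applicable. Once those two points are settled, what remains is just the chain of replacement identities above, each of which is an instance of a lemma already proved.
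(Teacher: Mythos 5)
Your proof is correct and follows essentially the same route as the paper's: node-disjointness of the redexes from triviality, Lemma~\ref{lemma:replace_commute} to commute the two replacements, and Definition~\ref{def:sgrc}(\ref{sgrc:replace}) to obtain $\sgrc$ for the second rewrite step on each side. You are in fact somewhat more careful than the paper, which silently assumes the hypothesis $\nodes{\agraphB_1'}\cap\nodes{\agraphB_2'}=\emptyset$ of Lemma~\ref{lemma:replace_commute} that Definition~\ref{def:critical_pair} does not supply; your renaming argument to secure it is a legitimate patch.
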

\begin{proof}
Let $(\agraph_1,\agraph_2)$ be a trivial critical pair.
We use the same notations as in Definition \ref{def:critical_pair}.
By Lemma \ref{lemma:replace_commute}, for every $i,j\in \{1,2\}$ with $i \not = j$, 
we have $\sigma(\amap_i(\agraphB_i)) \sgr \agraph_j$, 
 $\sigma(\amap_i(\agraphB_i'))$ is \substitutable for $\sigma(\amap_i(\agraphB_i))$ in $\agraph_j$ and:
 \begin{equation}
\replace{\agraph_2}{\sigma(\amap_1(\agraphB_1))}{\sigma(\amap_1(\agraphB_1'))} 
= 
\replace{\agraph_1}{\sigma(\amap_2(\agraphB_2))}{\sigma(\amap_2(\agraphB_2'))}.
\label{eq:commute}
\end{equation}
Furthermore, by Definition \ref{def:sgrc} (\ref{sgrc:replace}), 
we have $\sigma(\amap_i(\agraphB_i)) \sgrc \agraph_j$, thus
$\agraph_i \rightarrow_{\rules} \replace{\agraph_i}{\sigma(\amap_j(\agraphB_j))}{\sigma(\amap_j(\agraphB_j'))}$, for every $i,j = 1, 2$, $i \not = j$. By Equation \ref{eq:commute}, we obtain the  result.
\end{proof}

A \system is {\em terminating} if $\rightarrow^*_\rules$ is well-founded, 
and 
{\em confluent} (resp. locally confluent) if for all {\graph}s $\agraph,\agraph_1,\agraph_2$ such that
$\agraph \rightarrow_\rules^* \agraph_1$ 
and
$\agraph \rightarrow_\rules^* \agraph_2$ (resp. $\agraph \rightarrow_\rules \agraph_1$ 
and
$\agraph \rightarrow_\rules \agraph_2$), there exists $\agraph'$ such that
$\agraph_1 \rightarrow_\rules^* \agraph'$ 
and
$\agraph_2 \rightarrow^* \agraph'$.

We recall the following well-known result:
\begin{lemma}(Newman's lemma)
Any binary relation that is well-founded and locally confluent is confluent.
\end{lemma}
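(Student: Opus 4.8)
The plan is to prove the statement by well-founded induction on the relation $\to$, which is legitimate precisely because $\to$ is assumed well-founded (no infinite chain $a_0 \to a_1 \to a_2 \to \cdots$). The statement to establish by induction is: for every element $a$, and for all $b,c$ with $a \to^* b$ and $a \to^* c$, there exists $d$ with $b \to^* d$ and $c \to^* d$ --- with induction hypothesis that this property already holds for every $a'$ such that $a \to^+ a'$. Since confluence is exactly this property quantified over all starting elements $a$, proving it suffices.

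For the induction step, fix $a$ together with reductions $a \to^* b$ and $a \to^* c$. If $a = b$ we may take $d := c$, since then $b = a \to^* c$ and $c \to^* c$; the case $a = c$ is symmetric. Otherwise both reductions use at least one step, so we write $a \to a_1 \to^* b$ and $a \to a_2 \to^* c$. Applying \emph{local confluence} to the pair $a \to a_1$, $a \to a_2$ yields an element $e$ with $a_1 \to^* e$ and $a_2 \to^* e$. Since $a \to^+ a_1$, the induction hypothesis at $a_1$ applied to $a_1 \to^* b$ and $a_1 \to^* e$ gives some $f$ with $b \to^* f$ and $e \to^* f$. Then $a_2 \to^* e \to^* f$, so $a_2 \to^* f$, and also $a_2 \to^* c$; since $a \to^+ a_2$, the induction hypothesis at $a_2$ gives some $d$ with $c \to^* d$ and $f \to^* d$. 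Finally $b \to^* f \to^* d$ yields $b \to^* d$, while $c \to^* d$ holds directly, which closes the diagram.

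I do not expect a genuine obstacle here: this is the classical diagram-tiling proof of Newman's lemma, and every step beyond the two invocations of the induction hypothesis and the single invocation of local confluence is merely transitivity of $\to^*$. The only point deserving a line of justification is the induction principle itself, which rests on well-foundedness of $\to$; and, in the present context, one should recall the remark following Definition \ref{def:rw} that $\to_\rules$ and $\to^*_\rules$ are to be understood modulo $\iso$, so that the abstract notions used above (confluence, well-foundedness, local confluence) are exactly the ones intended for graph rewriting and the argument transfers verbatim.
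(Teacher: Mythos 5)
Your proof is correct: it is the classical diagram-tiling argument by well-founded induction, and each step (the single use of local confluence at the peak, the two nested appeals to the induction hypothesis at the one-step reducts $a_1$ and $a_2$, and transitivity of $\to^*$) is valid. The paper does not actually prove this lemma -- it simply cites the literature -- and your argument is precisely the standard proof contained in the cited reference, so there is nothing to compare beyond noting that you have supplied the proof the paper omits.
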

\begin{proof}
See for instance \cite{baader1998}.
\end{proof}

\begin{theorem}
\label{theo:confluence}
A \system is locally confluent iff
all its (nontrivial) {\criticalpair}s are \joinable.
\end{theorem}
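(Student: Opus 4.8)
The plan is to prove the two implications separately: ($\Rightarrow$) is essentially immediate, while ($\Leftarrow$) is a graph analogue of the Knuth--Bendix critical pair lemma, established by analysing how the two contracted subgraphs overlap inside the common graph. For ($\Rightarrow$), let $(\agraph_1,\agraph_2)$ be a critical pair, coming from rules $\agraphB_i \rightarrow \agraphB_i'$, {\nsubstitution}s $\amap_i$, a merge $\agraph$ of $\amap_1(\agraphB_1)$ and $\amap_2(\agraphB_2)$ with mgu $\sigma$, and $\agraph_i = \replace{\agraph}{\sigma(\amap_i(\agraphB_i))}{\sigma(\amap_i(\agraphB_i'))}$. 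Since $\sigma$ is idempotent, $\sigma(\agraph) = \agraph$, so combining $\amap_i(\agraphB_i) \sgrc \agraph$ with Definition~\ref{def:sgrc}(\ref{sgrc:subst}) gives $\sigma(\amap_i(\agraphB_i)) \sgrc \agraph$; hence $\agraph \rightarrow_\rules \agraph_1$ and $\agraph \rightarrow_\rules \agraph_2$. If $\rightarrow_\rules$ is locally confluent this peak is closed, so $(\agraph_1,\agraph_2)$ is joinable, and a fortiori so is every nontrivial critical pair.

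For ($\Leftarrow$), assume every nontrivial critical pair is joinable (trivial ones being joinable unconditionally, as shown earlier), and take a local peak $\agraph \rightarrow_\rules \agraph_1$, $\agraph \rightarrow_\rules \agraph_2$, witnessed by rules $\agraphB_i \rightarrow \agraphB_i'$, substitutions $\sigma_i$, and {\nsubstitution}s $\amap_i$, with redexes $\agraphB_i'' \isdef \sigma_i(\amap_i(\agraphB_i)) \sgrc \agraph$, contracta $\agraphB_i''' \isdef \sigma_i(\amap_i(\agraphB_i'))$ substitutable for $\agraphB_i''$ in $\agraph$, and $\agraph_i \iso \replace{\agraph}{\agraphB_i''}{\agraphB_i'''}$. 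Using Proposition~\ref{prop:rw_iso} and Corollary~\ref{cor:replace_iso_subgraph} I first normalize: the two rules are taken variable-disjoint so that $\sigma_1,\sigma_2$ amalgamate into one substitution $\sigma$, and the nodes of $\agraphB_1''',\agraphB_2'''$ are chosen fresh, pairwise disjoint and disjoint from $\nodes{\agraph}$. If $\nodes{\agraphB_1''} \cap \nodes{\agraphB_2''} = \emptyset$, the two rewrites are parallel and I argue exactly as in the proof that trivial critical pairs are joinable: Definition~\ref{def:sgrc}(\ref{sgrc:replace}) gives $\agraphB_{3-i}'' \sgrc \agraph_i$ with $\agraphB_{3-i}'''$ still substitutable for it, so $\agraph_i \rightarrow_\rules \replace{\agraph_i}{\agraphB_{3-i}''}{\agraphB_{3-i}'''}$, and Lemma~\ref{lemma:replace_commute} shows the two one-step reducts coincide.

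If instead $\nodes{\agraphB_1''} \cap \nodes{\agraphB_2''} \neq \emptyset$, Definition~\ref{def:sgrc}(\ref{sgrc:merge}) supplies a merge $\agraph^*$ of $\agraphB_1''$ and $\agraphB_2''$ with $\agraph^* \sgrc \agraph$; localizing both steps to $\agraph^*$ yields $\agraph^* \rightarrow_\rules \agraph_i^*$ with $\agraph_i^* = \replace{\agraph^*}{\agraphB_i''}{\agraphB_i'''}$. Factoring the common unifier $\sigma$ through an mgu $\sigma_0$ of the labels shared by $\amap_1(\agraphB_1)$ and $\amap_2(\agraphB_2)$, say $\sigma = \theta \circ \sigma_0$, the resulting local peak at $\agraph^*$ is the $\theta$-instance of a critical pair $(\agraph_1^\dagger,\agraph_2^\dagger)$ — nontrivial precisely because the redexes overlap — which is joinable by hypothesis, say $\agraph_i^\dagger \rightarrow^*_\rules \agraph^\ddagger$ for $i=1,2$. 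Since $\rightarrow_\rules$ (hence $\rightarrow^*_\rules$) is stable under substitution — a consequence of Definition~\ref{def:sgrc}(\ref{sgrc:subst}) together with the commutation of replacement with substitutions and with {\nsubstitution}s (Proposition~\ref{prop:replace_iso}) — we obtain $\agraph_i^* \rightarrow^*_\rules \theta(\agraph^\ddagger)$. Finally, Lemma~\ref{lem:replace_subgraph} rewrites $\agraph_i = \replace{\agraph}{\agraphB_i''}{\agraphB_i'''}$ as $\replace{\agraph}{\agraph^*}{\agraph_i^*}$ up to $\iso$, and Lemma~\ref{lem:rw_subgraph}, applied to $\agraph^* \sgrc \agraph$ with the reduction $\agraph_i^* \rightarrow^*_\rules \theta(\agraph^\ddagger)$ (the freshness normalization providing the required substitutability), lifts this to $\agraph_i \rightarrow^*_\rules \replace{\agraph}{\agraph^*}{\theta(\agraph^\ddagger)}$; as the target is independent of $i$, the peak is joinable and $\rightarrow_\rules$ is locally confluent.

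The main obstacle is the overlapping case: one must carefully match the concrete overlap of the two contracted subgraphs inside $\agraph$ with the critical-pair configuration of Definition~\ref{def:critical_pair} — verifying the node-disjointness side conditions, isolating the mgu $\sigma_0$ and the residual substitution $\theta$, and checking that the merge produced by Definition~\ref{def:sgrc}(\ref{sgrc:merge}) is indeed an instance of a critical-pair merge — and then transport joinability down along $\theta$ and back up from $\agraph^*$ to $\agraph$, keeping every substitutability and $\sgrc$ side condition satisfied throughout. This bookkeeping is exactly what the freshness normalizations and the closure axioms (\ref{sgrc:preserve})--(\ref{sgrc:subst}) of Definition~\ref{def:sgrc} are designed to support; the rest of the argument is a routine adaptation of the first-order critical pair lemma.
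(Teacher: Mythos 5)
Your proof is correct and, for the substantive direction, follows essentially the same route as the paper: invoke Definition~\ref{def:sgrc}~(\ref{sgrc:merge}) to obtain a merge of the two redexes, factor the substitutions through the mgu to exhibit the peak as an instance of a critical pair, and lift its joinability back to $\agraph$ via Lemmas~\ref{lem:rw_subgraph} and~\ref{lem:replace_subgraph}. The only (harmless) differences are that you spell out the easy ($\Rightarrow$) direction, which the paper leaves implicit, and that you separate the node-disjoint case, which the paper absorbs uniformly by appealing to the joinability of trivial critical pairs.
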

\begin{proof}
Let $\rules$ be a \system and 
assume that all the {\criticalpair}s in $\rules$ are \joinable.
Let $\agraph, \agraph_1, \agraph_2$ be {\graph}s such that
$\agraph \rightarrow_{\rules} \agraph_i$ (for $i = 1,2$).
By definition there exist two rules $\agraphB_i \rightarrow \agraphB_i'$ in $\rules$, 
two {\substitution}s $\sigma_i$ and two {\nsubstitution}s  $\amap_i$ ($i = 1,2$) 
such that $\sigma_i(\amap_i(\agraphB_i)) \sgrc \agraph$, 
$\sigma_i(\amap_i(\agraphB_i'))$ is \substitutable for 
$\sigma_i(\amap_i(\agraphB_i))$
in $\agraph$  and
$\agraph_i \iso \replace{\agraph}{\sigma_i(\amap_i(\agraphB_i))}{\sigma_i(\amap_i(\agraphB_i'))}$.
Observe that $\sigma_1(\amap_1(\agraphB_1))$ and $\sigma_2(\amap_2(\agraphB_2))$ are necessarily \compatible.
%\comment[ni]{some modifs below, some propositions are missing}
 By Definition \ref{def:sgrc} (\ref{sgrc:merge}), there exists a \merge  $\agraph'$ of $\amap_1(\agraphB_1)$ and $\amap_2(\agraphB_2)$, with an mgu 
 $\theta$, such that $\theta'(\agraph') \sgrc \agraph$, with $\sigma_i = \theta\theta'$.
By Lemma \ref{lem:replace_subgraph}, by letting 
$\agraph_i' \isdef \replace{\agraph'}{\theta(\amap_i(\agraphB_i))}{\theta(\amap_i(\agraphB_i'))}$, we have $\agraph_i = \replace{\agraph}{\theta'(\agraph')}{\theta'(\agraph'_i)})$.

$(\agraph_1',\agraph_2')$ is a \criticalpair. 
Since all {\criticalpair}s are \joinable, there exists
$\agraph''$ such that 
$\agraph_i' \rightarrow_{\rules}^* \agraph''$, for all $i = 1,2$.
Without loss of generality we may assume that $\agraph''$ is disjoint from $\agraph_1'$ and $\agraph_2'$, hence \substitutable for $\agraph'_1$ and $\agraph_2'$ in $\agraph_i$.
%We have $\sigma_i(\amap_i(\agraphB_i)) \sgrc \agraph$, for $i = 1,2$, 
% and $\agraph'$ is a \merge of $\sigma_1(\amap_1(\agraphB_1))$ and $\sigma_2(\amap_2(\agraphB_2))$
% thus $\agraph' \sgrc \agraph$, by Definition \ref{def:sgrc} (\ref{sgrc:merge}).
By Lemma \ref{lem:rw_subgraph}, we deduce that
$\agraph_i \rightarrow^*_{\rules} \replace{\agraph_i}{\theta'(\agraph_i')}{\theta'(\agraph'')}$. 
By Lemma \ref{lem:replaceseq}, this entails that $\agraph_i \rightarrow^* \replace{\agraph}{\theta'(\agraph')}{\theta'(\agraph'')}$, and the proof is completed.  
\end{proof}

\newcommand{\gequation}{\graph equation\xspace}
\newcommand{\gdisequation}{\graph disequation\xspace}
\newcommand{\gliteral}{\graph literal\xspace}
\newcommand{\formula}{\graph formula\xspace}
\newcommand{\false}{\bot}
\newcommand{\iseq}{\approx}
\newcommand{\niseq}{\not \approx}
\newcommand{\alit}{{\cal L}}
\newcommand{\arel}{\bowtie}

\section{Graph Literals and Graph Congruences}\label{sec:lits}

In this section we introduce \emph{{\gliteral}s}, which are either equations or disequations between graphs, along with \emph{graph congruences}, which permit to define a semantics on a set of {\gliteral}s.

\begin{definition}	
A {\em \gequation} is an unordered pair written $\agraph \iseq \agraphB$, where 
$\agraph, \agraphB$ are \rootsimilar\ {\graph}s.
% with
%\comment[ni]{addition (to discuss)}
%$\inputs{\agraph} = \inputs{\agraphB}$.
A {\em \gdisequation} is the negation of a \graph equation, written $\agraph \niseq \agraphB$.
A {\em \gliteral} is either a \gequation or a \gdisequation or $\false$.
The set of variables occurring in a \gliteral $\alit$ is defined as follows:
$\fv{\agraph \iseq \agraphB} = \fv{\agraph \niseq \agraphB} = \fv{\agraph} \cup \fv{\agraphB}$
and $\fv{\false} = \emptyset$.
All {\substitution}s
 and {\nsubstitution}s $\amap$ are extended to {\gliteral}s as follows:
 $\amap(\agraph \arel \agraphB) = \amap(\agraph) \arel \amap(\agraphB)$, for $\arel \in \{ \iseq, \niseq \}$,
 and $\amap(\false) = \false$. 
 
 If $\alit, \alit'$ are {\gliteral}s, we write $\alit \iso \alit'$ iff either $\alit = \alit'$ or 
 $\alit = (\agraph \arel \agraphB)$,
 $\alit' = (\agraph' \arel \agraphB')$, $\agraph \iso \agraph'$ and $\agraphB \iso \agraphB'$.
 % (note that the nodes in $\agraph$ and $\agraphB$ may be renamed differently, but that we must have
% $\inputs{\agraph} = \inputs{\agraphB}$ and $\inputs{\agraph'} = \inputs{\agraphB'}$ *
%\comment[mn]{where is it imposed that inputs are equal?}).
%\comment[ni]{sorry, outdated comment \dSmiley now deleted}
\end{definition}
 
A {\em \formula} is a set of {\gliteral}s.

\newcommand{\IF}[1]{I_g(#1)}

\begin{definition}
\label{def:IF}
For any \formula $S$, we denote by $\IF{S}$ the set of 
ground instances of a \gliteral in $S$, i.e., the set of 
{\gliteral}s $\sigma(\alit)$ where $\alit \iso \alit'$ for some $\alit' \in S$ and $\sigma$ is a ground \substitution 
of domain $\fv{\alit}$.
\end{definition}

We write $\alit \rightarrow_{\rules} \alit'$
if $\alit = \agraph \arel \agraphB$, 
$\alit' = \agraph' \arel \agraphB$
and $\agraph \rightarrow_{\rules} \agraph'$.

%\section{Semantics}

%\comment[ni]{preliminary definitions, to be completed}

\newcommand{\gcongruence}{\graph congruence\xspace}

\newcommand{\rclosed}{closed under  embeddings\xspace}
\newcommand{\nclosed}{closed under isomorphisms\xspace}
\newcommand{\nrclosed}{closed under isomorphisms and embeddings\xspace}

\begin{definition}
\label{def:nclosed}
 A binary relation $\arel$ on {\graph}s is {\em \nclosed} if $\agraph \arel \agraph' \wedge \agraphB \iso \agraph \wedge \agraphB' \iso \agraph' \Rightarrow \agraphB \arel \agraphB'$.  
\end{definition}

\begin{definition}
\label{def:rclosed}
A binary relation $\arel$ on {\graph}s is {\em \rclosed} 
if 
for every {\graph}s $\agraph,\agraphB,\agraphB'$ such that
$\agraphB \sgrc \agraph$, 
%\comment[ni]{addition:} $\inputs{\agraphB} = \inputs{\agraphB'}$, 
 $\agraphB'$ is \substitutable for $\agraphB$ in 
$\agraph$ and $\agraphB \arel \agraphB'$ then $\agraph  \arel\replace{\agraph}{\agraphB}{\agraphB'}$
\end{definition}

%\comment[mn]{Could we regroup the following definition with the previous one?}
%\comment[np]{maybe def \ref{def:rclosed} deserves a separate definition because it is used in other contexts (for order). But regrouping things is fine for me}
 
\begin{definition}
\label{def:cong}
%\comment[mn]{Changed the symbol to avoid mix-up with the one used in definition \ref{def:gorder}} \comment[ni]{fine, but changing the latex command is safer. I use 
%another symbol to avoid double indices}
A {\em \gcongruence} $\acong$ is an equivalence relation 
between ground {\graph}s that is \nrclosed.
%\begin{enumerate}
%\item{If $\agraph \acong \agraph'$ then $\agraph, \agraph'$ are \rootsimilar. \label{cong:similar} \comment[ni]{maybe useless}}
%\item{If $\agraph \iso \agraphB$ then $\agraph \acong \agraphB$.\label{cong:iso}}
%\item{$\acong$ is \rclosed.\label{cong:replace}}
%\end{enumerate}
\end{definition}

\begin{definition}
\label{def:eval}
A \gcongruence $\acong$ {\em validates} $\phi$ if:
\begin{itemize}
\item{$\phi$ is a ground \gequation $\agraph \iseq \agraphB$ and 
$\agraph \acong \agraphB$;}
\item{$\phi$ is a ground \gdisequation $\agraph \niseq \agraphB$ and 
$\agraph \not \acong \agraphB$;}
%\item{a ground \formula $S$ if it validates all the {\gliteral}s in $S$.}
\item{$\phi$ is a formula and $\acong$ validates all {\gliteral}s in $\IF{\phi}$.}
\end{itemize}
If a \gcongruence validates $\phi$ then it is a \emph{model} of $\phi$, and we say that $\phi$ is {\em satisfiable}. 
We write $\phi \models \psi$ ($\psi$ is a {\em logical consequence of $\phi$}) if every model of $\phi$ is a model of $\phi'$, and $\phi \equiv \psi$ 
%\comment[mn]{will equivalence be used? Although it is standard there is an overload of this symbol that may be confusing} \comment[np]{probably not, we will see} 
if $\phi \models \psi$
and $\psi \models \phi$.
\end{definition}

\begin{proposition}
For any set of {\gequation}s $E$, there exists a \gcongruence $\acong_E$ that is the minimal (w.r.t.\ $\subseteq$) model of
$E$.
\end{proposition}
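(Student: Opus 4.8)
The plan is the classical ``intersection of all models'' construction: I would define $\acong_E$ to be the intersection of all {\gcongruence}s that are models of $E$, then check that this intersection is itself a model of $E$ and, being contained in every model by construction, is the least one.

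First I would check that the family of models of $E$ is nonempty, so that the intersection is well defined. Let $\acong_\top$ be the relation relating every pair of ground {\graph}s. It is an equivalence relation; it is \nclosed\ trivially; and it is \rclosed\ because the conclusion $\agraph \acong_\top \replace{\agraph}{\agraphB}{\agraphB'}$ of the implication in Definition \ref{def:rclosed} always holds (note that $\replace{\agraph}{\agraphB}{\agraphB'}$ is a ground \graph\ whenever $\agraph$ and $\agraphB'$ are, by Lemma \ref{lm:sub-subg}). Hence $\acong_\top$ is a \gcongruence, and it validates every ground {\gliteral} in $\IF{E}$ since it relates all ground {\graph}s; so it is a model of $E$. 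Since $\allnodes$ and $\terms$ are sets, the ground {\graph}s form a set and so do the binary relations on them; thus $\acong_E \isdef \bigcap \setof{\acong}{\acong \text{ is a \gcongruence\ and a model of } E}$ is a legitimate intersection over a nonempty set-indexed family.

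Next I would verify that $\acong_E$ is a \gcongruence. Reflexivity, symmetry and transitivity are inherited by arbitrary intersections of equivalence relations. To see that $\acong_E$ is \nclosed: if $\agraph \acong_E \agraph'$, $\agraphB \iso \agraph$ and $\agraphB' \iso \agraph'$, then $\agraph \acong \agraph'$ for every model $\acong$, hence $\agraphB \acong \agraphB'$ by Definition \ref{def:nclosed}, hence $\agraphB \acong_E \agraphB'$. To see that $\acong_E$ is \rclosed: if $\agraphB \sgrc \agraph$, $\agraphB'$ is \substitutable for $\agraphB$ in $\agraph$ and $\agraphB \acong_E \agraphB'$, then $\agraphB \acong \agraphB'$ for every model $\acong$, hence $\agraph \acong \replace{\agraph}{\agraphB}{\agraphB'}$ by Definition \ref{def:rclosed}, hence $\agraph \acong_E \replace{\agraph}{\agraphB}{\agraphB'}$. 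Thus $\acong_E$ is \nrclosed, and therefore a \gcongruence. The reason both steps go through is that each closure condition is a Horn-like implication whose conclusion is a single membership assertion, a shape that is automatically preserved under intersection.

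Finally, $\acong_E$ is a model of $E$: for every ground instance $\agraph \iseq \agraphB \in \IF{E}$ and every model $\acong$ we have $\agraph \acong \agraphB$ by Definition \ref{def:eval}, hence $\agraph \acong_E \agraphB$; so $\acong_E$ validates all of $\IF{E}$. And $\acong_E \subseteq \acong$ for every model $\acong$ by construction, so $\acong_E$ is the minimal model of $E$ with respect to $\subseteq$. There is no genuine obstacle in this proof; the only points deserving a line of justification are the existence of at least one model --- handled by the universal relation $\acong_\top$, for which Definition \ref{def:rclosed} is satisfied trivially since its conclusion always holds --- and the fact that the two closure conditions from Definitions \ref{def:nclosed} and \ref{def:rclosed} are preserved by intersection.
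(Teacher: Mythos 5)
Your proof is correct and follows essentially the same route as the paper: the paper also defines $\acong_E$ as the intersection of all models of $E$, noting that this family is nonempty because the relation containing all pairs of \rootsimilar\ ground {\graph}s is a model (you use the slightly larger universal relation, which works equally well since Definition \ref{def:cong} imposes no root-compatibility requirement and $E$ contains no {\gdisequation}s). Your write-up just spells out in more detail the routine verification that the paper dismisses as ``easy to check.''
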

\begin{proof}
It is easy to check that if ${\cal M}$ is a set of models of $E$, then the relation defined as the  intersection of all the relations in ${\cal M}$ is also a model of $E$. Further, $E$ has at least one model, containing all pairs of \rootsimilar ground {\graph}s. Hence $\acong_E$ can be defined as the intersection of all the models of $E$.  
\end{proof}

We first prove that the satisfiability problem
is undecidable for ground {\formula}s. The result is straightforward, but it is given here to emphasize the 
difference with the usual ground equational logic (on terms) 
which is well-known to be  decidable

\begin{proposition}
The satisfiability problem is undecidable for ground {\formula}s.
\end{proposition}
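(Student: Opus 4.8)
The plan is to reduce the word problem for finitely presented semigroups to satisfiability of ground \formula s. Fix once and for all a presentation $\langle \Lambda \mid \ell_1 = r_1,\dots,\ell_k=r_k\rangle$ (all $\ell_i,r_i$ nonempty words over $\Lambda$) whose word problem is undecidable (Post; Markov), and let $\equiv_T$ denote the induced congruence of the free semigroup $\Lambda^+$. Take $\aclass$ to be the class of all {\graph}s and $\sgrc\isdef\sgr$; the latter is a \crel in the sense of Definition \ref{def:sgrc} by Propositions \ref{prop:mapsgr} and \ref{prop:merge}, Lemmas \ref{lm:sub-subg} and \ref{lem:replace_subs}, and the fact that $\sgr$ commutes with substitutions. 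Fix two {\sort}s $\asort_c$, $\asort_\ell$, a constant $d$, a constant also written $a$ for every $a\in\Lambda$, and assume $\allnodes$ contains infinitely many nodes of each sort, all pairwise $\eqnodes$-equivalent. To a word $w=a_1\cdots a_n\in\Lambda^+$ associate the \emph{path graph} $\agraph_w$ with nodes $c_0,\anode_1,c_1,\dots,\anode_n,c_n$ (the $c_i$ of sort $\asort_c$, the $\anode_i$ of sort $\asort_\ell$), edges $\mkedge{c_{i-1}}{\anode_i}$ and $\mkedge{\anode_i}{c_i}$ for $1\le i\le n$, roots $\tuple{c_0,c_n}$, and labels $\labs{\agraph_w}(c_i)=d$, $\labs{\agraph_w}(\anode_i)=a_i$. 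These {\graph}s are ground and pairwise \rootsimilar. Given an instance $(w,w')$ of the word problem, let $S_{w,w'}\isdef\set{\agraph_{\ell_i}\iseq\agraph_{r_i}\mid 1\le i\le k}\cup\set{\agraph_w\niseq\agraph_{w'}}$, a ground \formula. The map $(w,w')\mapsto S_{w,w'}$ is computable, so it suffices to prove that $S_{w,w'}$ is satisfiable iff $w\not\equiv_T w'$.

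The combinatorial core is the following observation: the {\subgraph}s of a path graph $\agraph_z$ that are isomorphic to some path graph are exactly the occurrences of a factor (contiguous subword) $p$ of $z$, with the two bounding $c$-nodes taken as roots; moreover $\agraph_z\iso\agraph_{z'}$ iff $z=z'$. Indeed, a {\subgraph} is induced on, and connected over, its node set, hence spans a contiguous interval of the path $\agraph_z$; since an \nsubstitution is the identity on terms, an isomorphism preserves node labels, so the interior-node label sequence of such a {\subgraph} must have the shape $\text{letter},d,\text{letter},\dots,\text{letter}$ (the shape of the interior sequence of any path graph), which forces the interval to begin and end on a $c$-node, i.e.\ to be a factor occurrence; a count of roots rules out declaring an interior node a root. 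Conversely, using Definition \ref{def:replace} and Proposition \ref{prop:trivial_replacement}, if $p$ occurs as a factor of $z$ then, choosing for $\agraph_q$ the copy of the path graph of $q$ whose two roots are the two $c$-nodes bounding that occurrence and whose interior nodes are fresh, $\agraph_q$ is \substitutable for that occurrence of $\agraph_p$ in $\agraph_z$ and $\replace{\agraph_z}{\agraph_p}{\agraph_q}=\agraph_{z''}$, where $z''$ is $z$ with that occurrence of $p$ replaced by $q$. So, on path graphs, the single-step replacement of Definition \ref{def:replace} is precisely elementary rewriting of a factor.

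For $w\equiv_T w'\Rightarrow S_{w,w'}$ unsatisfiable: $w$ and $w'$ are joined by finitely many elementary steps rewriting a factor $\ell_i$ into $r_i$ or conversely, each being, by the observation above, a replacement of a {\subgraph} $\iso\agraph_{\ell_i}$ by $\agraph_{r_i}$ (or conversely). Any \gcongruence validating $S_{w,w'}$ contains $\agraph_{\ell_i}\acong\agraph_{r_i}$ and, being \nclosed and \rclosed, is closed under such replacements, so it would satisfy $\agraph_w\acong\agraph_{w'}$, contradicting the disequation. For the converse, $w\not\equiv_T w'\Rightarrow S_{w,w'}$ satisfiable, I would exhibit a model $\acong^*$: the least equivalence relation on ground {\graph}s that contains $\iso$ and contains every pair $(\agraph,\replace{\agraph}{\agraph_p}{\agraph_q})$ where $\agraph_p\sgrc\agraph$ is a path subgraph of $\agraph$, $\agraph_q$ is a path graph \substitutable for $\agraph_p$ in $\agraph$, and $p\equiv_T q$. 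It validates every $\agraph_{\ell_i}\iseq\agraph_{r_i}$ (take $\agraph=\agraph_{\ell_i}$), it is an equivalence relation, and it is \nclosed by construction. That it is \rclosed follows by lifting a generating chain for $\agraphB\acong^*\agraphB'$ to a chain from $\agraph$ to $\replace{\agraph}{\agraphB}{\agraphB'}$: an $\iso$-link lifts via Corollary \ref{cor:replace_iso_subgraph}, and a replacement-link lifts, using Lemma \ref{lem:replaceseq} and transitivity of $\sgrc$, to a replacement of the \emph{same} path subgraph carried out directly in $\agraph$, with Definition \ref{def:sgrc}(\ref{sgrc:preserve}) keeping the intermediate {\graph}s $\sgrc$-below the successive ambient {\graph}s. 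Hence $\acong^*$ is a \gcongruence validating the equations of $S_{w,w'}$; and since the only path subgraphs of a path graph are its factor occurrences, every generating link between two path graphs is an elementary rewriting step, so (by an induction on the chain, using $\agraph_z\iso\agraph_{z'}\Rightarrow z=z'$) $\agraph_w\acong^*\agraph_{w'}$ would force $w\equiv_T w'$. As $w\not\equiv_T w'$, $\acong^*$ validates $\agraph_w\niseq\agraph_{w'}$, hence $S_{w,w'}$.

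The two points I expect to need real care are the combinatorial characterisation of the path subgraphs of a path graph — this is exactly why the connector sort $\asort_c$ and the auxiliary constant $d$ are built into the encoding, preventing any ``misaligned'' sub-interval from being isomorphic to a path graph — and the verification that $\acong^*$ is \rclosed, where the successive node renamings must be chosen so that all the replacement operations involved are well defined; both are otherwise routine bookkeeping on top of the replacement lemmas of Section \ref{sec:graphs}.
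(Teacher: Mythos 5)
Your proof is correct, and at its core it is the same style of argument as the paper's: both reduce an undecidable string-rewriting problem to ground satisfiability by encoding words as labelled path {\graph}s whose two extremal nodes are the {\groot}s, encoding the rewrite rules as {\gequation}s between small path {\graph}s, and observing that the replacement operation of Definition \ref{def:replace} on such {\graph}s is exactly rewriting of a factor. The differences are worth recording. The paper reduces from the halting problem for Turing machines, so its equations encode transitions and the argument implicitly relies on determinism so that the symmetric closure of reachability still characterises termination; you reduce from the word problem for finitely presented semigroups, which is symmetric from the outset and makes the correspondence between $\acong$-chains and derivations cleaner, at the price of invoking Post--Markov rather than halting directly. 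More substantively, the paper dispatches the model-existence direction with ``it is easy to check'', whereas you construct the witnessing \gcongruence $\acong^*$ explicitly and verify that it is \rclosed by lifting generating chains through Lemmas \ref{lem:replaceseq} and \ref{lem:replace_subgraph} and Corollary \ref{cor:replace_iso_subgraph}; this is exactly where the care is needed, namely in the characterisation of the path {\subgraph}s of a path {\graph} (which your connector sort and dummy label $d$ are designed to force) and in the observation, via the inverse-replacement corollary, that a generating link traversed backwards from a path {\graph} also lands on a path {\graph} of a $\equiv_T$-equivalent word. Your explicit choice of $\aclass$ as the class of all {\graph}s and $\sgrc = \sgr$, justified by the lemmas of Section \ref{sec:graphs}, also pins down a parameter the paper leaves implicit. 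I see no gap beyond the bookkeeping you already flag.
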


\begin{proof}
The proof goes by a straightforward reduction from the halting problem for Turing machines (TM).
Let $M$ be a deterministic TM $(Q,\Gamma,b,\Sigma,q_0,F,\delta)$, where $Q$ is the set of states, $q_0$ is the initial state, $F \subseteq Q$ is the set of final states, $b$ is the blank symbol, $\Gamma$ is the alphabet, $\Sigma \subseteq \Gamma$ is the set of input symbols and $\delta: (Q \setminus F) \times \Gamma \rightarrow Q \times \Gamma \times \{ L,R \}$ the transition function.
We assume, w.l.o.g., that $Q \cap \Gamma = \emptyset$.
The configurations are  tuples $c = (q,w,w')$, where $q$ is a state and $w,w' \in \Gamma^+$ denotes the part of the tape before and after the head, respectively (we assume that $w,w'$ are non empty for technical convenience).
Any such configuration, with $w = w_1.\dots.w_n$, $w' = w_1'.\dots.w_m'$, with $w_i,w_j' \in \Gamma$, may be encoded into a graph $\agraph(c)$ defined as follows (where $s,h,e$ are pairwise distinct symbols not occurring in $\Gamma \cup Q$):
\[
\begin{tabular}{lll}
$\nodes{\agraph(c)}$ & $=$ & $\{ \anode_0, \anode_1, \anode_2, \anode_3, \anodeB_1,\dots,\anodeB_n,  \anodeB_1',\dots,\anodeB_m' \}$ \\
 $\inputs{\agraph(c)}$ & $=$ & $\tuple{}$ \\ 
$\edges{\agraph{c}}$ & $=$ &$\{ \mkedge{\anodeB_i}{\anodeB_{i+1}} \mid i = 1,\dots,n-1 \}$ \\
& &  $\cup\;
\{ \mkedge{\anodeB_j'}{\anodeB_{j+1}'} \mid j = 1,\dots,m-1 \}$ \\
& & $ \cup\;
 \{ \mkedge{\anode_1}{\anodeB_1}, \mkedge{\anodeB_n}{\anode_2},  \mkedge{\anode_2}{\anodeB_1'}, \mkedge{\anodeB_m'}{\anode_3} \}$ \\
$\labs{\agraph(c)}(\anodeB_i)$ & $ = $ & $w_i$ \text{\ (for $i =1,\dots,n$)} \\
$\labs{\agraph(c)}(\anodeB_j')$ & $=$&  $w_j'$ \text{\ (for $j =1,\dots,m$)} \\
\multicolumn{3}{l}{$\labs{\agraph(c)}(\anode_0) = q$ \quad
$\labs{\agraph(c)}(\anode_1) = s$ \quad
$\labs{\agraph(c)}(\anode_2) = h$ \quad
$\labs{\agraph(c)}(\anode_3) = e$} 
\end{tabular}
\]
The transition function is encoded by the following {\gequation}s:

\[
\begin{tabular}{ccc}
\begin{minipage}{0.30\linewidth}
\begin {tikzpicture}[-latex ,auto ,node distance =0.75 cm and 0.75cm ,on grid ,
semithick]
%state/.style ={ circle ,top color =white , bottom color = processblue!20 ,
%draw,processblue , text=blue , minimum width =0.5 cm}]
\node[inputNode] (Q) {$q_1$};
\node[] (I) [right =of Q] {};
\node[inputNode] (A) [right =of I] {$h$};
\node[inputNode] (B) [right =of A] {$i$};
\node[inputNode] (C) [right =of B] {$k$};
\node[] (O) [right =of C] {};
\draw (I) [->] to (A);
\draw (A) [->] to (B);
\draw (B) [->] to (C);
\draw (C) [->] to (O);
\end{tikzpicture}
\end{minipage}& \qquad $=$ \qquad &
\begin{minipage}{0.30\linewidth}
\begin {tikzpicture}[-latex ,auto ,node distance =0.75 cm and 0.75cm ,on grid ,
semithick]
\node[inputNode] (Q) {$q_2$};
\node[] (I) [right =of Q] {};
\node[inputNode] (B) [right =of I] {$j$};
\node[inputNode] (A) [right =of B] {$h$};
\node[inputNode] (C) [right =of A] {$k$};
\node[] (O) [right =of C] {};
\draw (I) [->] to (B);
\draw (B) [->] to (A);
\draw (A) [->] to (C);
\draw (C) [->] to (O);
\end{tikzpicture}
\end{minipage} \\
\begin{minipage}{0.30\linewidth}
\begin {tikzpicture}[-latex ,auto ,node distance =0.75 cm and 0.75cm ,on grid ,
semithick]
%state/.style ={ circle ,top color =white , bottom color = processblue!20 ,
%draw,processblue , text=blue , minimum width =0.5 cm}]
\node[inputNode] (Q) {$q_1$};
\node[] (I) [right =of Q] {};
\node[inputNode] (A) [right =of I] {$h$};
\node[inputNode] (B) [right =of A] {$i$};
\node[inputNode] (C) [right =of B] {$e$};
\draw (I) [->] to (A);
\draw (A) [->] to (B);
\draw (B) [->] to (C);
\end{tikzpicture}
\end{minipage}& \qquad $=$ \qquad &
\begin{minipage}{0.30\linewidth}
\begin {tikzpicture}[-latex ,auto ,node distance =0.75 cm and 0.75cm ,on grid ,
semithick]
\node[inputNode] (Q) {$q_2$};
\node[] (I) [right =of Q] {};
\node[inputNode] (B) [right =of I] {$j$};
\node[inputNode] (A) [right =of B] {$h$};
\node[inputNode] (C) [right =of A] {$b$};
\node[inputNode] (D) [right =of C] {$e$};
\draw (I) [->] to (B);
\draw (B) [->] to (A);
\draw (A) [->] to (C);
\draw (C) [->] to (D);
\end{tikzpicture}
\end{minipage} \\
\end{tabular}
\]
for all $q_1,i,q_2,j,k$ such that $\delta(q_1,i) = (q_2,j,R)$ and $k \in \Gamma$.
 The second rule encodes the fact that the tape is infinite: if the head is at the end of the word, a new blank symbol must be created to ensure that there are always symbols after the head (i.e., $w'$ must be nonempty in any configuration $(q,w,w')$).
Similar rules are defined for the left movement (for all $q_1,i,q_2,j,k$ such that $\delta(q_1,i) = (q_2,j,L)$ and $k,l \in \Gamma$):
\[
\begin{tabular}{ccc}
\begin{minipage}{0.30\linewidth}
\begin {tikzpicture}[-latex ,auto ,node distance =0.75 cm and 0.75cm ,on grid ,
semithick]
%state/.style ={ circle ,top color =white , bottom color = processblue!20 ,
%draw,processblue , text=blue , minimum width =0.5 cm}]
\node[inputNode] (Q) {$q_1$};
\node[] (I) [right =of Q] {};
\node[inputNode] (J) [right =of I] {$l$};
\node[inputNode] (A) [right =of J] {$k$};
\node[inputNode] (B) [right =of A] {$h$};
\node[inputNode] (C) [right =of B] {$i$};
\node[] (O) [right =of C] {};
\draw (I) [->] to (J);
\draw (J) [->] to (A);
\draw (A) [->] to (B);
\draw (B) [->] to (C);
\draw (C) [->] to (O);
\end{tikzpicture}
\end{minipage}& \qquad $=$ \qquad &
\begin{minipage}{0.30\linewidth}
\begin {tikzpicture}[-latex ,auto ,node distance =0.75 cm and 0.75cm ,on grid ,
semithick]
\node[inputNode] (Q) {$q_2$};
\node[] (I) [right =of Q] {};
\node[inputNode] (J) [right =of I] {$l$};
\node[inputNode] (B) [right =of J] {$h$};
\node[inputNode] (A) [right =of B] {$k$};
\node[inputNode] (C) [right =of A] {$j$};
\node[] (O) [right =of C] {};
\draw (I) [->] to (J);
\draw (J) [->] to (B);
\draw (B) [->] to (A);
\draw (A) [->] to (C);
\draw (C) [->] to (O);
\end{tikzpicture}
\end{minipage} \\
\begin{minipage}{0.30\linewidth}
\begin {tikzpicture}[-latex ,auto ,node distance =0.75 cm and 0.75cm ,on grid ,
semithick]
%state/.style ={ circle ,top color =white , bottom color = processblue!20 ,
%draw,processblue , text=blue , minimum width =0.5 cm}]
\node[inputNode] (Q) {$q_1$};
\node[inputNode] (A) [right =of Q] {$s$};
\node[inputNode] (J) [right =of A] {$b$};
\node[inputNode] (B) [right =of J] {$h$};
\node[inputNode] (C) [right =of B] {$i$};
\node[] (I) [right =of C] {};
\draw (A) [->] to (J);
\draw (J) [->] to (B);
\draw (B) [->] to (C);
\draw (C) [->] to (I);
\end{tikzpicture}
\end{minipage}& \qquad $=$ \qquad &
\begin{minipage}{0.30\linewidth}
\begin {tikzpicture}[-latex ,auto ,node distance =0.75 cm and 0.75cm ,on grid ,
semithick]
\node[inputNode] (Q) {$q_2$};
\node[inputNode] (B) [right =of Q] {$s$};
\node[inputNode] (J) [right =of B] {$b$};
\node[inputNode] (A) [right =of J] {$h$};
\node[inputNode] (C) [right =of A] {$b$};
\node[inputNode] (D) [right =of C] {$j$};
\node[] (I) [right =of D] {};
\draw (B) [->] to (J);
\draw (J) [->] to (A);
\draw (A) [->] to (C);
\draw (C) [->] to (D);
\draw (D) [->] to (I);
\end{tikzpicture}
\end{minipage} \\
\end{tabular}
\]
We also need rules to delete useless blank symbols at the end or at the beginning of the tape:
\[
\begin{tabular}{ccc}
\begin{minipage}{0.30\linewidth}
\begin {tikzpicture}[-latex ,auto ,node distance =0.75 cm and 0.75cm ,on grid ,
semithick]
%state/.style ={ circle ,top color =white , bottom color = processblue!20 ,
%draw,processblue , text=blue , minimum width =0.5 cm}]
\node[inputNode] (A)  {$s$};
\node[inputNode] (B) [right =of A] {$b$};
\node[inputNode] (C) [right =of B] {$b$};
\node[] (O) [right =of C] {};
\draw (A) [->] to (B);
\draw (B) [->] to (C);
\draw (C) [->] to (O);
\end{tikzpicture}
\end{minipage}& \qquad $=$ \qquad &
\begin{minipage}{0.30\linewidth}
\begin {tikzpicture}[-latex ,auto ,node distance =0.75 cm and 0.75cm ,on grid ,
semithick]
\node[inputNode] (A)  {$s$};
\node[inputNode] (C) [right =of A] {$b$};
\node[] (O) [right =of C] {};
\draw (A) [->] to (C);
\draw (C) [->] to (O);
\end{tikzpicture}
\end{minipage} \\
\begin{minipage}{0.30\linewidth}
\begin {tikzpicture}[-latex ,auto ,node distance =0.75 cm and 0.75cm ,on grid ,
semithick]
%state/.style ={ circle ,top color =white , bottom color = processblue!20 ,
%draw,processblue , text=blue , minimum width =0.5 cm}]
\node[] (I)  {};
\node[inputNode] (A) [right =of I] {$b$};
\node[inputNode] (B) [right =of A] {$b$};
\node[inputNode] (C) [right =of B] {$e$};
\draw (I) [->] to (A);
\draw (A) [->] to (B);
\draw (B) [->] to (C);
\end{tikzpicture}
\end{minipage}& \qquad $=$ \qquad &
\begin{minipage}{0.30\linewidth}
\begin {tikzpicture}[-latex ,auto ,node distance =0.75 cm and 0.75cm ,on grid ,
semithick]
\node[] (I)  {};
\node[inputNode] (A) [right =of I] {$b$};
\node[inputNode] (C) [right =of A] {$e$};
\draw (I) [->] to (A);
\draw (A) [->] to (C);
\end{tikzpicture}
\end{minipage} \\
\end{tabular}
\]
Let $E$ be the set of such {\gequation}s. 
It is easy to check that, for all configurations $c,c'$, 
$\agraph(c) \acong_E \agraph(c')$ if either $c$ is reachable from $c'$
or  $c'$ is reachable from $c$.
We assume, w.l.o.g., that the considered TM only ends in a configuration $(q_f,b,b)$ (i.e., the TM clears the tape and ends in a single fixed final state $q_f$).
Then, the TM terminates on the empty word iff $E$, enriched by the following {\gdisequation}, is unsatisfiable:

\[
\begin{tabular}{ccc}
\begin{minipage}{0.4\linewidth}
\begin {tikzpicture}[-latex ,auto ,node distance =0.75 cm and 0.75cm ,on grid ,
semithick]
%state/.style ={ circle ,top color =white , bottom color = processblue!20 ,
%draw,processblue , text=blue , minimum width =0.5 cm}]
\node[inputNode] (Q) {$q_0$};
\node[inputNode] (S) [right =of Q] {$s$};
\node[inputNode] (B1) [right =of S] {$b$};
\node[inputNode] (H) [right =of B1] {$h$};
\node[inputNode] (B2) [right =of H] {$b$};
\node[inputNode] (E) [right =of B2] {$e$};
\draw (S) [->] to (B1);
\draw (B1) [->] to (H);
\draw (H) [->] to (B2);
\draw (B2) [->] to (E);
\end{tikzpicture}
\end{minipage}& $\not =$ &
\begin{minipage}{0.4\linewidth}
\begin {tikzpicture}[-latex ,auto ,node distance =0.75 cm and 0.75cm ,on grid ,
semithick]
\node[inputNode] (Q) {$q_f$};
\node[inputNode] (S) [right =of Q] {$s$};
\node[inputNode] (B1) [right =of S] {$b$};
\node[inputNode] (H) [right =of B1] {$h$};
\node[inputNode] (B2) [right =of H] {$b$};
\node[inputNode] (E) [right =of B2] {$e$};
\draw (S) [->] to (B1);
\draw (B1) [->] to (H);
\draw (H) [->] to (B2);
\draw (B2) [->] to (E);
\end{tikzpicture}
\end{minipage} \\
\end{tabular}
\]

\end{proof}

\section{\capitalisewords{a proof procedure}}\label{sec:pf-proc}

\subsection{Inference rules and redundancy}

\newcommand{\PS}{${\mathtt S}^+$}
\newcommand{\NS}{${\mathtt S}^-$}
\newcommand{\REFL}{${\mathtt R}$}

\newcommand{\psup}{Positive Unit Superposition\xspace}
\newcommand{\nsup}{Negative Unit Superposition\xspace}
\newcommand{\reflect}{Reflection\xspace}
\newcommand{\sclosed}{closed under {\substitution}s}
\newcommand{\srclosed}{closed under {\substitution}s and embeddings}
\newcommand{\allclosed}{closed under embeddings, {\substitution}s and isomorphisms}
\newcommand{\glt}{<}
\newcommand{\gleq}{\leq}
\newcommand{\ggt}{>}
\newcommand{\ggeq}{\geq}
\newcommand{\gequiv}{\simeq}

\newcommand{\gorder}{graph reduction order}

We define a set of inference rules that can be viewed as an adaptation of the Superposition calculus to graph formulas. Similarly to the Superposition calculus, this set of inference rules is designed to saturate a set of {\gliteral}s in order to derive $\bot$ when the set is unsatisfiable.

\begin{definition}
\label{def:sclosed}
A binary relation $\arel$ on {\graph}s is {\em \sclosed} 
if 
for all {\graph}s $\agraph,\agraphB$ such that
$\agraph \arel \agraphB$, and for every \substitution $\sigma$, we have
$\sigma(\agraph) \arel \sigma(\agraphB)$. 
\end{definition}

\begin{definition}
\label{def:gorder}
A {\em \gorder} $\ggeq$ is a preorder on {\graph}s that satisfies the following properties:
\begin{itemize}
\item{The associated order $\ggt$, defined as $\agraph \ggt \agraphB \iff (\agraph \ggeq \agraphB \wedge \agraphB \not \ggeq \agraph)$, is well-founded.}
\item{$\ggeq$ and $\ggt$ are \allclosed.}
\item{For all ground {\graph}s $\agraph,\agraphB$, we have either 
$\agraph \ggeq \agraphB$
or
$\agraphB \ggeq \agraph$.}

\end{itemize}
We write $\agraph \gequiv \agraphB$ 
%\comment[np]{beware: change of notation} 
for $\agraph \ggeq \agraphB \wedge
\agraph \gleq \agraphB$.
\end{definition}

{
\newcommand{\card}[1]{|#1|}

\begin{example}
For instance the relation $\agraph \ggeq \agraphB \iff \card{\nodes{\agraph}} \geq \card{\nodes{\agraphB}}$ is a \gorder.
\end{example}
}

Let $\gleq$ be a \gorder.
We consider the following $3$ rules, which apply modulo a renaming of {\node}s and variables.

\paragraph*{\psup (\PS)}

\[
\begin{tabular}{c}
$\agraph \iseq \agraph'$ \qquad $\agraphB \iseq \agraphB'$ \\
\hline
$\replace{\agraphC}{\sigma(\agraph)}{\sigma(\agraph')} \iseq 
\replace{\agraphC}{\sigma(\agraphB)}{\sigma(\agraphB')}$
\end{tabular}
\]
\noindent
If $\fv{\agraph\iseq \agraph'} \cap \fv{\agraphB\iseq \agraphB'} = \emptyset$,
$\nodes{\agraph} \cap \nodes{\agraphB} \not = \emptyset$,
$\agraphC$ is a \merge of $\agraph$ and $\agraphB$, with an mgu $\sigma$,
$\agraph'$ and $\agraphB'$ are respectively \substitutable for $\agraph$ and $\agraphB$ in $\agraphC$,
$\replace{\agraphC}{\sigma(\agraph)}{\sigma(\agraph')} \not \ggt 
\agraphC$
and
$\replace{\agraphC}{\sigma(\agraphB)}{\sigma(\agraphB')} \not \ggt
\agraphC$.

\paragraph*{\nsup (\NS)}
\[
\begin{tabular}{c}
$\agraph \not \iseq \agraph'$ \qquad $\agraphB \iseq \agraphB'$ \\
\hline
$\replace{\sigma(\agraph)}{\sigma(\agraphB)}{\sigma(\agraphB')}\not  \iseq 
\sigma(\agraph')$
\end{tabular}
\]
\noindent
If $\fv{\agraph\not\iseq \agraph'} \cap \fv{\agraphB\iseq \agraphB'} = \emptyset$,
$\sigma$ is a most general \substitution such that $\sigma(\agraphB) \sgrc \sigma(\agraph)$,
$\agraphB'$ is \substitutable for $\agraphB$ in $\agraph$, 
$\sigma(\agraph') \not \ggt 
\sigma(\agraph)$
and
$\sigma(\agraphB') \not \ggt \sigma(\agraphB)$.

\paragraph*{\reflect (\REFL)}

\newcommand{\unifiable}{unifiable\xspace}

Two {\graph}s $\agraph$ and $\agraphB$ are {\em \unifiable} iff there exists a \substitution $\sigma$
such that $\sigma(\agraph) = \sigma(\agraphB)$.

\[
\begin{tabular}{c}
$\agraph \not \iseq \agraphB$ \\
\hline
$\false$
\end{tabular}
\quad \text{If $\agraph$ and $\agraphB$ are \unifiable.}
\]

\newcommand{\OneStepSup}[1]{\mathtt{Sup}(#1)}
\newcommand{\NStepSup}[2]{\mathtt{Sup}^{#2}(#1)}
\newcommand{\AnyStepSup}[1]{\mathtt{Sup}^{*}(#1)}

For every \formula $S$, we denote by 
$\OneStepSup{S}$ the set of the {\gliteral}s deducible from $S$ by one of the above rules (in one step).
Let $\AnyStepSup{S} \isdef \bigcup_{i=0}^{\infty} \NStepSup{S}{i}$, with:
\[\NStepSup{S}{i} \isdef \left\{ \begin{array}{l} 
S \text{\ if $i = 0$} \\
S \cup \OneStepSup{S} \text{\ if $i = 1$} \\
\NStepSup{\NStepSup{S}{i-1}}{1} \text{\ otherwise} 
\end{array}\right.\] 

\begin{theorem}[Soundness]
The rules \PS, \NS and {\REFL} are sound, i.e., 
for every \formula $S$, $S \models \OneStepSup{S}$. Thus if $\false \in \AnyStepSup{S}$ then $S$ is unsatisfiable.
\end{theorem}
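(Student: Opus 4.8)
The plan is to prove the single‑step statement $S \models \OneStepSup{S}$ rule by rule; the rest then follows formally, since an easy induction gives $S \models \NStepSup{S}{i}$ for every $i$, hence $S \models \AnyStepSup{S}$, and since no \gcongruence validates $\false$ the assumption $\false \in \AnyStepSup{S}$ forces $S$ to be unsatisfiable. So fix a \gcongruence $\acong$ that is a model of $S$ together with an instance of one of the three rules whose premises are (renamings of) {\gliteral}s of $S$; I must show that $\acong$ validates the conclusion.

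By Definition \ref{def:eval} it suffices to show that $\acong$ validates every ground instance of the conclusion, so the first step is a reduction to the case where all premises and the conclusion are ground. The key point is that applying a ground substitution $\theta$ commutes with the graph constructions used in the rules. On the one hand, whenever $\agraphB'$ is \substitutable for $\agraphB$ in $\agraph$ one has $\theta(\replace{\agraph}{\agraphB}{\agraphB'}) = \replace{\theta(\agraph)}{\theta(\agraphB)}{\theta(\agraphB')}$, because $\theta$ leaves $\nodes{\cdot}$, $\edges{\cdot}$ and $\inputs{\cdot}$ untouched, so the substitutability condition still holds, the node renaming $\amapsub{\agraphB}{\agraphB'}{\agraph}$ is unchanged, and only labels get instantiated. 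On the other hand, $\theta$ sends a \merge of $\agraph$ and $\agraphB$ to a \merge of $\theta(\agraph)$ and $\theta(\agraphB)$. Together with Definition \ref{def:sgrc} (\ref{sgrc:subst}) (preservation of $\sgrc$ under {\substitution}s) and the obvious preservation of root‑compatibility, this shows that a ground instance of the conclusion of a rule application is the conclusion of a ground instance of the same rule, applied to ground instances of its premises — and these $\acong$ validates, being a model of $S$. Hence we may assume all graphs in sight are ground.

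In the ground case only the facts that $\acong$ is an equivalence relation and that it is \rclosed are needed. For \textbf{\reflect}: if $\agraph$ and $\agraphB$ are \unifiable, then instantiating a unifier turns the premise $\agraph \niseq \agraphB$ into a ground disequation of the form $\agraphC \niseq \agraphC$, which no reflexive relation validates; so $S$ has no model and the rule is vacuously sound. For \textbf{\psup}: the side conditions provide a \merge $\agraphC$ of $\agraph$ and $\agraphB$ with $\agraph \sgrc \agraphC$, $\agraphB \sgrc \agraphC$, and with $\agraph'$ (resp.\ $\agraphB'$) \substitutable for $\agraph$ (resp.\ $\agraphB$) in $\agraphC$ — precisely the hypotheses of \rclosed. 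From the premises $\agraph \acong \agraph'$ and $\agraphB \acong \agraphB'$ we obtain $\agraphC \acong \replace{\agraphC}{\agraph}{\agraph'}$ and $\agraphC \acong \replace{\agraphC}{\agraphB}{\agraphB'}$, so by symmetry and transitivity $\replace{\agraphC}{\agraph}{\agraph'} \acong \replace{\agraphC}{\agraphB}{\agraphB'}$, which is the conclusion. For \textbf{\nsup}: from $\agraphB \acong \agraphB'$ and the side condition $\agraphB \sgrc \agraph$, \rclosed gives $\agraph \acong \replace{\agraph}{\agraphB}{\agraphB'}$; were the conclusion $\replace{\agraph}{\agraphB}{\agraphB'} \niseq \agraph'$ not validated, i.e.\ $\replace{\agraph}{\agraphB}{\agraphB'} \acong \agraph'$, transitivity would give $\agraph \acong \agraph'$, contradicting that $\acong$ validates the premise $\agraph \niseq \agraph'$.

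I expect the main obstacle to be the ground reduction of the second paragraph: one must check carefully that term‑substitution commutes with \merge and with the replacement operation, and that every side condition — root‑compatibility, substitutability, and the $\sgrc$‑occurrences of the replaced subgraphs inside the \merge — is preserved by instantiation, so that a ground instance of a conclusion genuinely is the conclusion of a ground instance of the rule. The order side conditions ($\cdot\not\ggt\cdot$), by contrast, play no role in soundness and can simply be ignored.
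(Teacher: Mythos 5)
The paper states this theorem without a proof, so there is nothing to compare your argument against; judged on its own merits, it is the standard soundness argument and is essentially right. The reduction to ground instances is correct: term {\substitution}s act only on labels, hence commute with the replacement operation and with the \merge construction, preserve \rootsimilar{}ity and substitutability, and preserve $\sgrc$ by Definition \ref{def:sgrc} (\ref{sgrc:subst}); one should also invoke Proposition \ref{prop:replace_iso} and Definition \ref{def:sgrc} (\ref{sgrc:iso}) for the node renamings modulo which the rules apply, but this is routine. The ground cases for \NS and \REFL are handled correctly using transitivity, reflexivity and the \rclosed property of a \gcongruence.

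The one point you should not gloss over is in the \PS case. Your claim that ``the side conditions provide a \merge $\agraphC$ of $\agraph$ and $\agraphB$ with $\agraph \sgrc \agraphC$ and $\agraphB \sgrc \agraphC$'' is not what the rule literally says: the stated side conditions only require $\agraphC$ to be a \merge of $\agraph$ and $\agraphB$ and $\agraph',\agraphB'$ to be \substitutable for $\agraph,\agraphB$ in $\agraphC$; they do not assert $\sigma(\agraph) \sgrc \agraphC$ or $\sigma(\agraphB) \sgrc \agraphC$ (nor even, strictly, $\sigma(\agraph) \sgr \agraphC$, which is what makes the replacement in the conclusion well defined --- an arbitrary \merge may contain an edge of $\agraphB$ joining two nodes of $\agraph$ that is not an edge of $\agraph$). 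Since a \gcongruence is only required to be closed under replacement of $\sgrc$-subgraphs (Definition \ref{def:rclosed}), your appeal to the \rclosed property genuinely needs these $\sgrc$ facts; without them the conclusion of \PS need not hold in every model of the premises. This is arguably an implicit side condition of the rule --- it is exactly how \PS is invoked in the completeness argument, where the \merge is obtained from Definition \ref{def:sgrc} (\ref{sgrc:merge}) --- but you should state explicitly that you are assuming it rather than pretending to derive it from the rule as written. With that assumption made explicit, the proof goes through.
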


%\section{Redundancy}
%\newcommand{\Rw}[1]{{\cal R}_{#1}}
\newcommand{\Rw}[1]{\rules_{#1}}

\newcommand{\sRw}[1]{\rules_{#1}^{\ggt}}
\newcommand{\redundant}{redundant\xspace}
\newcommand{\saturated}{saturated\xspace}
%\comment[mn]{Changed font, made it related to notation of $\rules$, added definition environment and proposition}

We adapt the notions of redundancy and saturation to our setting:

\begin{definition}	
For every \formula $S$ we denote by 
$\sRw{S}$ (resp.\ $\Rw{S}$) the set of rules 
$\agraph \rightarrow \agraphB$
where $\agraph \iseq \agraphB \in \IF{S}$
and $\agraph \ggt \agraphB$ (resp.\ $\agraph \ggeq \agraphB$).
\end{definition}

\begin{proposition}\label{prop:srw-ggeq}
	If $\agraph \rightarrow_{\Rw{S}}^* \agraphB$ then $\agraph \ggeq \agraphB$, and if $\agraph \rightarrow_{\sRw{S}}^+ \agraphB$ %\comment[ni]{instead of $\agraph \rightarrow_{\sRw{S}}^* \agraphB$}
	then $\agraph \ggt \agraphB$.
\end{proposition}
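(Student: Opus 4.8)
The plan is to reduce everything to a one-step statement: if $\agraph \rightarrow_{\Rw{S}} \agraph'$ then $\agraph \ggeq \agraph'$, and if $\agraph \rightarrow_{\sRw{S}} \agraph'$ then $\agraph \ggt \agraph'$. Granting this, the two claims of the proposition follow by a routine induction. For the first, recall that $\rightarrow^*_{\Rw{S}}$ is the least relation containing $\rightarrow^+_{\Rw{S}}$ and $\iso$; so an instance $\agraph \rightarrow^*_{\Rw{S}} \agraphB$ is either an isomorphism, in which case $\agraph \ggeq \agraphB$ follows from reflexivity of $\ggeq$ together with its closure under isomorphisms (Definitions \ref{def:gorder} and \ref{def:nclosed}), or a finite chain $\agraph = \agraph_0 \rightarrow_{\Rw{S}} \cdots \rightarrow_{\Rw{S}} \agraph_n = \agraphB$, along which the one-step statement and transitivity of the preorder $\ggeq$ give $\agraph \ggeq \agraphB$. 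For the second claim, a derivation $\agraph \rightarrow^+_{\sRw{S}} \agraphB$ is a nonempty such chain with $\rightarrow_{\sRw{S}}$-steps, and we conclude $\agraph \ggt \agraphB$ using that the strict part $\ggt$ of a preorder is itself transitive (if $\agraph \ggt \agraphB \ggt \agraphC$ then $\agraph \ggeq \agraphC$, and $\agraphC \ggeq \agraph$ would force $\agraphC \ggeq \agraphB$, contradicting $\agraphB \ggt \agraphC$).

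For the one-step statement I would unfold Definition \ref{def:rw}: a step $\agraph \rightarrow_{\Rw{S}} \agraph'$ arises from a rule $\agraphB \rightarrow \agraphB' \in \Rw{S}$, a \substitution $\sigma$ and an \nsubstitution $\amap$ with $\sigma(\amap(\agraphB)) \sgrc \agraph$, with $\sigma(\amap(\agraphB'))$ \substitutable for $\sigma(\amap(\agraphB))$ in $\agraph$, and $\agraph' \iso \replace{\agraph}{\sigma(\amap(\agraphB))}{\sigma(\amap(\agraphB'))}$. By definition of $\Rw{S}$ we have $\agraphB \ggeq \agraphB'$ (and $\agraphB \ggt \agraphB'$ if the rule lies in $\sRw{S}$). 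I then propagate this inequality to the actual redex: since $\amap(\agraphB) \iso \agraphB$ and $\amap(\agraphB') \iso \agraphB'$, closure of $\ggeq$ (resp.\ $\ggt$) under isomorphisms gives $\amap(\agraphB) \ggeq \amap(\agraphB')$, and closure under substitutions gives $\sigma(\amap(\agraphB)) \ggeq \sigma(\amap(\agraphB'))$ — both being part of $\ggeq$ and $\ggt$ being \allclosed\ (Definition \ref{def:gorder}). Since $\sigma(\amap(\agraphB)) \sgrc \agraph$ and $\sigma(\amap(\agraphB'))$ is \substitutable for it in $\agraph$, closure under embeddings (Definition \ref{def:rclosed}, which is stated precisely with respect to $\sgrc$) yields $\agraph \ggeq \replace{\agraph}{\sigma(\amap(\agraphB))}{\sigma(\amap(\agraphB'))}$ (resp.\ with $\ggt$). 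One final application of closure under isomorphisms to $\agraph' \iso \replace{\agraph}{\sigma(\amap(\agraphB))}{\sigma(\amap(\agraphB'))}$ then gives $\agraph \ggeq \agraph'$ (resp.\ $\agraph \ggt \agraph'$).

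I do not expect a genuine obstacle; the argument is bookkeeping that threads the three closure properties of the \gorder\ through a single rewrite step. The two points deserving a moment's care are: first, matching the hypothesis ``$\sigma(\amap(\agraphB)) \sgrc \agraph$'' of a rewrite step with the $\sgrc$-based formulation of \rclosed, so that no passage between $\sgr$ and $\sgrc$ is needed; and second, recording that $\ggt$, being the strict part of the preorder $\ggeq$, is transitive, which is what makes the $\rightarrow^+_{\sRw{S}}$ half work. I would also remark, merely as a sanity check, that the rule graphs $\agraphB, \agraphB'$ are in fact ground (being obtained from $\IF{S}$), so $\sigma$ acts trivially on them; invoking closure under substitutions makes it unnecessary to rely on this.
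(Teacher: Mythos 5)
The paper states Proposition~\ref{prop:srw-ggeq} without proof, and your argument supplies exactly the intended justification: reduce to a single rewrite step, read off $\agraphB \ggeq \agraphB'$ (resp.\ $\agraphB \ggt \agraphB'$) from the definition of $\Rw{S}$ (resp.\ $\sRw{S}$), and thread it through the three closure properties of Definition~\ref{def:gorder} --- isomorphisms, substitutions, and embeddings in the $\sgrc$-based sense of Definition~\ref{def:rclosed}, whose hypotheses match those of Definition~\ref{def:rw} verbatim --- before concluding by transitivity of $\ggeq$ and of its strict part $\ggt$. Your two flagged points (no passage between $\sgr$ and $\sgrc$ is needed, and $\ggt$ is transitive as the strict part of a preorder) are exactly the right ones to check, and the observation that the rule graphs are ground, so that $\sigma$ is vacuous, is also correct.
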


\begin{definition}
A ground \gliteral $\alit$ is {\em \redundant}
w.r.t.\ a \formula $S$
if one of the following conditions hold.
\begin{enumerate}
\item{$\false \in S$}
\item{$\alit$ is of the form $\agraph \iseq \agraphB$ where 
$\agraph \iso \agraphB$.}
\item{There exist a \gliteral $\alit' \in S$ and a \substitution 
$\sigma$ such that $\sigma(\alit') \iso \alit$.}
\item{There exists a \gliteral $\alit'$ such that 
$\alit \rightarrow_{\sRw{S}} \alit'$ and $\alit'$ is \redundant w.r.t.\ $S$.}
\end{enumerate}
\end{definition}

\begin{definition}
A \formula $S$ is {\em \saturated} if all {\gliteral}s in $\IF{\OneStepSup{S}}$
are \redundant w.r.t.\ $S$.
\end{definition}

\section{Completeness}\label{sec:complete}

\newcommand{\rr}[1]{\rightarrow^{#1}_{|\agraph}}
In this section we prove the completeness of our calculus, in other words, that if a saturated graph formula is unsatisfiable, then it necessarily contains $\bot$.
Throughout this section, we assume that $S$ denotes a fixed  \formula.

\begin{definition}
Let $\agraph$ be a \graph.
We write $\agraphB \rr{n} \agraphB'$ iff
there exists a sequence $\agraphC_i$ ($i = 1,\dots,m$)
with 
$\agraphC_1 = \agraphB$,
$\agraphC_{m} = \agraphB'$,
$\agraphC_i \rightarrow_{\Rw{S}} \agraphC_{i+1}$ for all $i =1,\dots,m-1$
and there are at most $n$ indices $i$ in $\interv{1}{m-1}$
such that $\agraphC_{i+1} \not \glt \agraph$.
\end{definition}

\begin{proposition}
\label{prop:rr_basic} We have the following properties:
\begin{enumerate}
\item{
If $\agraphB \rightarrow_{\Rw{S}}^* \agraphC$ then
$\agraphB \rr{n} \agraphC$, for some $n \in {\Bbb N}$. \label{rr:n}}
\item %\comment[mn]{added} 
If $\agraphB \rr{n} \agraphC$ then $\agraphB \rr{m} \agraphC$ for all $m\geq n$.\label{rr:mgn}
\item{
If  $\agraphB \glt \agraph$ and $\agraphB \rightarrow_{\Rw{S}}^* \agraphC$
then
$\agraphB \rr{0} \agraphC$. \label{rr:0_lt}}
\item{
If $\agraphB \gleq \agraph$ and $\agraphB \rightarrow_{\sRw{S}}^* \agraphC$
then
$\agraphB \rr{0} \agraphC$. \label{rr:0_sRw} }
\item 
%\comment[mn]{added} 
If $\agraphB \rr{n} \agraphB'$ and $\agraphB' \rr{m} \agraphC$ then $\agraphB \rr{n+m} \agraphC$. \label{rr:concat}
\end{enumerate}
\end{proposition}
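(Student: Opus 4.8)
The plan is to prove all five items by unfolding the definition of $\rr{n}$, invoking Proposition~\ref{prop:srw-ggeq} for items~3 and~4, and doing a short amount of arithmetic with the preorder $\ggeq$. Throughout I would use that $\rr{0}$ is reflexive (the one-element sequence witnesses $\agraphB \rr{0}\agraphB$), and that $\rightarrow_{\Rw{S}}$ (by Proposition~\ref{prop:rw_iso}, together with the symmetric observation using transitivity of $\iso$) and $\glt$ (since $\ggeq$ is \allclosed) are both compatible with $\iso$; hence $\rr{n}$ is closed under $\iso$ at both endpoints, and the ``$\iso$-only'' special cases hidden in the abuse of notation $\rightarrow^*$ are trivial.

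Item~1: if $\agraphB \rightarrow^*_{\Rw{S}}\agraphC$ then either $\agraphB \iso \agraphC$, giving $\agraphB \rr{0}\agraphC$, or there is a finite $\rightarrow_{\Rw{S}}$-sequence $\agraphC_1 = \agraphB, \dots, \agraphC_m = \agraphC$, and taking $n$ to be the number of indices $i \in \{1, \dots, m-1\}$ with $\agraphC_{i+1} \not\glt\agraph$ (at most $m-1$), the same sequence witnesses $\agraphB \rr{n}\agraphC$. Item~2 is immediate: a sequence witnessing $\agraphB \rr{n}\agraphC$ has at most $n \leq m$ bad indices, so it also witnesses $\agraphB \rr{m}\agraphC$.

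Items~3 and~4 are the substance. For item~3, assume $\agraphB \glt \agraph$ and fix a $\rightarrow_{\Rw{S}}$-sequence $\agraphC_1 = \agraphB, \dots, \agraphC_m = \agraphC$; I claim there are no bad indices, i.e.\ $\agraphC_i \glt \agraph$ for every $i \in \{2, \dots, m\}$. Indeed $\agraphB \rightarrow^*_{\Rw{S}}\agraphC_i$, so $\agraphB \ggeq \agraphC_i$ by Proposition~\ref{prop:srw-ggeq}; since $\agraph \ggt \agraphB$ unfolds to $\agraph \ggeq \agraphB$ and $\agraphB \not\ggeq \agraph$, transitivity of $\ggeq$ gives $\agraph \ggeq \agraphC_i$, and if $\agraphC_i \ggeq \agraph$ held then $\agraphB \ggeq \agraphC_i \ggeq \agraph$ would contradict $\agraphB \not\ggeq \agraph$; hence $\agraph \ggt \agraphC_i$, i.e.\ $\agraphC_i \glt \agraph$. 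So $\agraphB \rr{0}\agraphC$. Item~4 is the same computation with $\gleq$ in place of $\glt$: every $\rightarrow_{\sRw{S}}$-step is a $\rightarrow_{\Rw{S}}$-step because $\sRw{S} \subseteq \Rw{S}$; for $i \geq 2$ we have $\agraphB \rightarrow^+_{\sRw{S}}\agraphC_i$, so $\agraphB \ggt \agraphC_i$ by Proposition~\ref{prop:srw-ggeq}; using $\agraph \ggeq \agraphB$, transitivity gives $\agraph \ggeq \agraphC_i$, while $\agraphC_i \ggeq \agraph$ would give $\agraphC_i \ggeq \agraphB$, contradicting $\agraphB \ggt \agraphC_i$; hence $\agraphC_i \glt \agraph$ and $\agraphB \rr{0}\agraphC$.

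Item~5 is bookkeeping: given witnessing sequences $\agraphC_1 = \agraphB, \dots, \agraphC_j = \agraphB'$ (at most $n$ bad indices) and $\agraphD_1 = \agraphB', \dots, \agraphD_k = \agraphC$ (at most $m$ bad indices), their concatenation along $\agraphC_j = \agraphD_1$ is a $\rightarrow_{\Rw{S}}$-sequence from $\agraphB$ to $\agraphC$ whose bad indices are precisely those of the first sequence together with those of the second, hence at most $n + m$ in number; so $\agraphB \rr{n+m}\agraphC$. The only mild obstacle anywhere is the preorder arithmetic in items~3 and~4: since $\ggeq$ is not antisymmetric one cannot chain inequalities to obtain strictness for free, and one must argue separately about $\ggeq$ and about $\not\ggeq$, exactly as above; everything else is a direct unfolding of definitions.
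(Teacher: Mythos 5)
Your proof is correct and follows exactly the route the paper intends: the paper's own proof is a one-line remark that the claims follow from the definition of $\rr{n}$ and Proposition~\ref{prop:srw-ggeq}, and your write-up simply makes that explicit, including the only genuinely delicate point (handling strictness by arguing separately about $\ggeq$ and $\not\ggeq$ in items~3 and~4, since $\ggeq$ is merely a preorder).
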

\begin{proof}
The results follow immediately from the definition of $\rr{n}$ and Proposition \ref{prop:srw-ggeq}. %\comment[mn]{instead of: from the fact that
%$\rightarrow_{\Rw{S}} \subseteq \gleq$
%and
%$\rightarrow_{\sRw{S}} \subseteq \glt$}.
\end{proof}

\newcommand{\amapB}{\nu}
\newcommand{\agraphD}{J}

\begin{lemma}
\label{lem:loc_confluence}
Assume that $S$ is \saturated and $\false \not \in S$.
Let $\agraph, \agraph_1,\agraph_2$ be %\comment[np]{added:} 
ground {\graph}s such that
$\agraph \rightarrow_{\Rw{S}} \agraph_i$ (for $i = 1,2$).
There exists a \graph $\agraph'$ and numbers $\eta_i \in \{ 0,1\}$ such
that 
$\agraph_i \rr{\eta_i} \agraph'$, for all $i = 1,2$.
Furthermore, 
if $\agraph \ggt \agraph_i$ for some $i = 1,2$ 
then $\eta_1 = \eta_2 = 0$.
\end{lemma}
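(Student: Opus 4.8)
The statement is a local-confluence lemma for the rewrite relation $\rightarrow_{\Rw{S}}$ on ground graphs, with the extra quantitative bookkeeping that the join uses at most one ``non-decreasing'' step on each side (and zero steps if the original rewrite already strictly decreased the graph). The plan is to follow the classical term-rewriting proof of local confluence via critical pairs, using the infrastructure already built in the excerpt: Lemma~\ref{lemma:replace_commute} (commutation of replacements at disjoint positions), Lemma~\ref{lem:rw_subgraph} (lifting a rewrite derivation on a subgraph to the whole graph), Lemma~\ref{lem:replaceseq} (composing replacements), Theorem~\ref{theo:confluence} and the notion of (joinable) critical pair, and Definition~\ref{def:sgrc}, especially properties~(\ref{sgrc:replace}) and~(\ref{sgrc:merge}).

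\emph{Setting up.} First I would unfold the two rewrite steps: there are rules $\agraphB_i \rightarrow \agraphB_i' \in \rules$, substitutions $\sigma_i$ and $\allnodes$-mappings $\amap_i$ with $\sigma_i(\amap_i(\agraphB_i)) \sgrc \agraph$, with $\sigma_i(\amap_i(\agraphB_i'))$ substitutable for $\sigma_i(\amap_i(\agraphB_i))$ in $\agraph$, and $\agraph_i \iso \replace{\agraph}{\sigma_i(\amap_i(\agraphB_i))}{\sigma_i(\amap_i(\agraphB_i'))}$. Write $\agraphC_i \isdef \sigma_i(\amap_i(\agraphB_i))$ for the two redex subgraphs of $\agraph$. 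The case split is on how $\nodes{\agraphC_1}$ and $\nodes{\agraphC_2}$ interact.

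\emph{Disjoint case.} If $\nodes{\agraphC_1} \cap \nodes{\agraphC_2} = \emptyset$, the two redexes are disjoint subgraphs of $\agraph$, both $\sgrc \agraph$. After renaming the two replacement graphs (using Corollary~\ref{cor:replace_iso_subgraph} to stay within $\iso$) so that their node sets are disjoint from each other and from $\nodes{\agraph}$, Lemma~\ref{lemma:replace_commute} applies and gives a common graph $\agraph'$ obtained by performing both replacements in either order; Definition~\ref{def:sgrc}~(\ref{sgrc:replace}) guarantees that after performing the $j$-th replacement, $\agraphC_i$ is still $\sgrc$ the result, so each of $\agraph_1,\agraph_2$ rewrites to $\agraph'$ in exactly one $\rightarrow_{\Rw{S}}$ step. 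The quantitative part: since the second step in each sequence is itself a rewrite, I need to argue the single step is counted correctly; here I would observe that the last step on the side starting from $\agraph_i$ produces $\agraph'$, so at most one index is non-$\glt\agraph$, giving $\eta_i \le 1$. If moreover $\agraph \ggt \agraph_i$, then since the rule application on the other side behaves ``the same'' relative to $\agraph_i$ as it did relative to $\agraph$ (the replacement is at a disjoint position), one gets $\agraph_i \ggt \agraph'$, using that $\ggeq,\ggt$ are \allclosed\ and compatible with replacement in the relevant sense --- this is where I expect to have to be careful, possibly invoking a monotonicity property of $\ggeq$ under replacement that should follow from the \gorder\ axioms or from $\sgrc$-structure.

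\emph{Non-disjoint (critical) case.} If $\nodes{\agraphC_1} \cap \nodes{\agraphC_2} \ne \emptyset$, I would build the relevant critical pair. By Definition~\ref{def:sgrc}~(\ref{sgrc:merge}) there is a \merge $\agraph^\circ$ of $\amap_1(\agraphB_1)$ and $\amap_2(\agraphB_2)$ with mgu $\theta$ such that $\theta'(\agraph^\circ) \sgrc \agraph$ for a suitable $\theta'$ with $\sigma_i = \theta\theta'$ (this is exactly the step used inside the proof of Theorem~\ref{theo:confluence}). Setting $\agraph^\circ_i \isdef \replace{\agraph^\circ}{\theta(\amap_i(\agraphB_i))}{\theta(\amap_i(\agraphB_i'))}$, the pair $(\agraph^\circ_1,\agraph^\circ_2)$ is a critical pair, hence joinable by hypothesis (saturation, via the results relating saturation, redundancy and joinability of critical pairs): there is $\agraph^\circ{}'$ with $\agraph^\circ_i \rightarrow^*_\rules \agraph^\circ{}'$. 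Then Lemma~\ref{lem:rw_subgraph} lifts this to $\agraph$: $\agraph_i \iso \replace{\agraph}{\theta'(\agraph^\circ)}{\theta'(\agraph^\circ_i)}$ (by Lemma~\ref{lem:replace_subgraph}, exactly as in the proof of Theorem~\ref{theo:confluence}) rewrites to $\replace{\agraph}{\theta'(\agraph^\circ)}{\theta'(\agraph^\circ{}')} =: \agraph'$.

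\emph{The quantitative bound --- the main obstacle.} The genuinely new content over Theorem~\ref{theo:confluence} is controlling the \emph{number} of non-$\glt\agraph$ steps. The side conditions of rules \PS\ and \NS\ force $\agraph_i \not\ggt\agraph$ can fail, but they do guarantee $\replace{\agraphC}{\sigma(\agraph)}{\sigma(\agraph')} \not\ggt \agraphC$ etc., so each single $\rightarrow_{\Rw{S}}$ step from a graph $\gleq\agraph$ lands in a graph $\gleq\agraph$; combined with Proposition~\ref{prop:rr_basic}, a derivation starting below $\agraph$ stays below $\agraph$ and is ``free'' in the $\rr{n}$-counting. The plan is: from $\agraph_i$, the very first step of the lifted join is itself a $\rightarrow_{\Rw{S}}$ step; if $\agraph_i \glt \agraph$ this costs $0$ by Proposition~\ref{prop:rr_basic}(\ref{rr:0_lt}); if $\agraph_i \gequiv \agraph$ (the only other possibility, since $\agraph \ggt \agraph_i$ would force $\eta=0$), the step to the next graph is non-decreasing relative to $\agraph$ and costs $1$, but thereafter --- and this is the delicate claim --- the remaining derivation stays $\glt\agraph$ or at least $\gleq\agraph$ so costs nothing more, giving $\eta_i \le 1$. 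Making this precise is the crux: I would need to show that once we have applied one rule to $\agraph_i$ the result is $\glt \agraph$, which should follow because the resulting graph is $\iso$ to a graph obtained by a replacement inside $\agraph$ whose two sides are ordered by $\not\ggt$, hence the replacement does not increase size past $\agraph$, and a strict decrease is forced unless the pair was already trivial --- this may require an auxiliary monotonicity lemma for $\ggeq$ under replacement, which I expect is the one technical gap to fill. Finally, for the ``furthermore'' clause: if $\agraph \ggt \agraph_i$ for some $i$, then $\agraph_i \glt \agraph$, so Proposition~\ref{prop:rr_basic}(\ref{rr:0_lt}) gives $\agraph_i \rr{0} \agraph'$ for that $i$; for the other index $j$, one shows $\agraph_j \gleq \agraph$ and then that the join derivation from $\agraph_j$ also stays $\glt\agraph$ (using that $\agraph'$ is reached from $\agraph_i \glt \agraph$ and $\ggeq$ is a total preorder on ground graphs), yielding $\eta_j = 0$ as well.
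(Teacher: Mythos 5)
Your overall architecture coincides with the paper's: split on whether the two redex node sets are disjoint, handle the disjoint case via Lemmas \ref{lem:replace_subs} and \ref{lemma:replace_commute}, and the overlapping case via the \merge of Definition \ref{def:sgrc}~(\ref{sgrc:merge}) together with saturation. The disjoint case is essentially complete as you sketch it: the monotonicity you were unsure about is not an extra lemma to prove but exactly the requirement in Definition \ref{def:gorder} that $\ggeq$ and $\ggt$ be \allclosed, where ``closed under embeddings'' (Definition \ref{def:rclosed}) is precisely compatibility with replacement; so $\agraphB_j \ggeq \agraphB_j'$ yields $\agraph_i \ggeq \replace{\agraph_i}{\agraphB_j}{\agraphB_j'}$, and the ``furthermore'' clause follows in that case.

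The gap is where you flagged it: the bound $\eta_i \in \{0,1\}$ in the overlapping case. Your proposed mechanism (``after one step the result is $\glt\agraph$, and a strict decrease is forced unless the pair was trivial'') is not the right one and would not go through: a join derivation that merely ``stays $\gleq\agraph$'' is not free in the $\rr{n}$ accounting, since every step whose target is $\gequiv\agraph$ costs $1$. What actually delivers the bound is the precise shape of \emph{redundancy}. Saturation makes the ground instance of the \PS-conclusion \redundant, and unfolding the definition of redundancy shows that each side of that conclusion reduces by $\rightarrow_{\sRw{S}}$ (strictly decreasing ground rules) to graphs $\agraphD_1,\agraphD_2$ with either $\agraphD_1 = \agraphD_2$ or $\agraphD_1 \iseq \agraphD_2 \in \IF{S}$. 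After lifting with Lemmas \ref{lem:replaceseq}, \ref{lem:rw_subgraph} and \ref{lem:replace_subgraph}, the $\sRw{S}$ portions cost $0$ by Proposition \ref{prop:rr_basic}~(\ref{rr:0_sRw}) because $\agraph_i \gleq \agraph$; the only potentially non-free step is the single application of the residual rule $\agraphD_1 \rightarrow \agraphD_2$ (taking $\agraphD_1 \ggeq \agraphD_2$ w.l.o.g.) on one side, which gives $\eta_i \le 1$, and that step is also free when some $\agraph_j \glt \agraph$ because its target is then $\gleq \agraph_j \glt \agraph$ by Proposition \ref{prop:srw-ggeq} and closure of $\ggeq$ under embeddings. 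Without isolating this one-rule residue from the strictly decreasing part, the quantitative claim --- which is the entire content of this lemma beyond plain local confluence --- is not established.
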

\begin{proof}
By definition of $\rightarrow_{\Rw{S}}$ there exist rules 
$\agraphB_i \rightarrow \agraphB_i'$ in 
$\Rw{S}$, {\substitution}s $\sigma_i$ and {\nsubstitution}s $\amap_i$ such that, for $i = 1,2$,
$\sigma_i(\amap_i(\agraphB_i)) \sgrc \agraph$,
$\sigma_i(\amap_i(\agraphB_i'))$ is \substitutable for 
$\sigma_i(\amap_i(\agraphB_i))$ in $\agraph$, and
$\agraph_i \iso \replace{\agraph}{\sigma(\amap_i(\agraphB_i))}{\sigma(\amap_i(\agraphB_i'))}$.

Let $i\in \set{1,2}$. Since $\Rw{S}$ is ground
we have $\sigma_i = \id$, and since $\IF{S}$ (hence $\Rw{S}$) is closed under isomorphisms, 
we may assume that $\amap_i = \id$, so that
$\agraph_i = \replace{\agraph}{\agraphB_i}{\agraphB_i'}$.
Furthermore, by definition of $\Rw{S}$, we have $\agraphB_i \ggeq \agraphB_i'$ and since $\ggeq$ is \rclosed, $\agraph \ggeq \agraph_i$.

We distinguish two cases.% \comment[mn]{slight rewriting}
\begin{itemize}
\item{Assume that $\nodes{\agraphB_1} \cap \nodes{\agraphB_2} = \emptyset$.
By Lemmas \ref{lem:replace_subs} and \ref{lemma:replace_commute}, for every $i,j\in \{1,2\}$ with $i \not = j$, 
we have $\agraphB_i \sgr \agraph_j$, 
 $\agraphB_i'$ is \substitutable for $\agraphB_i$ in $\agraph_j$ and:
 \begin{equation}
\replace{\agraph_2}{\agraphB_1}{\agraphB_1'} 
= 
\replace{\agraph_1}{\agraphB_2}{\agraphB_2'}.
\label{eq:commute_bis}
\end{equation}
By Definition \ref{def:sgrc} (\ref{sgrc:replace}), 
we have $\agraphB_i \sgrc \agraph_j$, thus
$\agraph_i \rightarrow_{\rules} \replace{\agraph_i}{\agraphB_j}{\agraphB_j'}$,  so that
$\agraph_i \rr{\eta_i} \replace{\agraph_i}{\agraphB_j}{\agraphB_j'}$, where: $\eta_i = 
\left\{\begin{array}{l}
1 \text{\ if $\replace{\agraph_i}{\agraphB_j}{\agraphB_j'} \gequiv \agraph$} \\
0 \text{\ otherwise} 
\end{array}\right.$

Note that by  Equation \ref{eq:commute_bis} we necessarily have $\eta_1 = \eta_2$.
Since $\agraphB_j \ggeq \agraphB_j'$ and $\ggeq$ is \rclosed,
we have 
$\agraph_i \ggeq \replace{\agraph_i}{\agraphB_j}{\agraphB_j'}$. 
If $\agraph \ggt \agraph_i$, for some $i = 1,2$, then 
$\agraph \ggt \replace{\agraph_i}{\agraphB_j}{\agraphB_j'}$, thus by definition $\eta_i = 0$.
By Equation \ref{eq:commute_bis}, we obtain the stated result.
 }

\item{Assume that $\nodes{\agraphB_1} \cap \nodes{\agraphB_2} \not = \emptyset$.
By definition of $\Rw{S}$, there exist
{\gliteral}s 
$\agraphC_i \iseq \agraphC_i'$ in $S$, ground% \comment[mn]{ground?}\comment[np]{yes}
{\substitution}s $\theta_i$
and {\nsubstitution}s $\amapB_i$ (for $i = 1,2$)
such that
$\agraphB_i = \theta_i(\amapB_i(\agraphC_i))$ and
$\agraphB_i' = \theta_i(\amapB_i(\agraphC_i'))$.
We may assume that $\agraphC_1 \iseq \agraphC_1'$
and $\agraphC_2 \iseq \agraphC_2'$ are variable-disjoint, so that $\theta_1$ and $\theta_2$ have disjoint domains. 
By Definition \ref{def:sgrc} (\ref{sgrc:merge}), $\amapB_1(\agraphC_1)$ 
and $\amapB_2(\agraphC_2)$
admit a \merge $\agraphC$, with 
mgu $\theta$, such that $(\theta_1 \cup \theta_2) = \theta' \circ \theta$, for some \substitution $\theta'$,
and $\theta'(\agraphC) \sgrc \agraph$.
Since $\IF{S}$ is closed by node renaming, we may also assume, w.l.o.g., 
%\comment[mn]{why?}\comment[np]{is it clear now?} 
that $\amapB_i(\agraphC_i') \cap \nodes{\agraph} \subseteq \nodes{\amapB_i(\agraphC_i)}$, so that $\amapB_i(\agraphC_i')$ is \substitutable for $\amapB_i(\agraphC_i)$ in $\agraphC$
and also that 
$\replace{\agraphC}{\amapB_i(\agraphC_i)}{\amapB_i(\agraphC_i')}$ is \substitutable
for $\agraphC$ in $\agraph$.

Let $i\in \set{1,2}$ and assume that $\replace{\agraphC}{\theta(\amapB_i(\agraphC_i))}{\theta(\amapB_i(\agraphC_i'))} \ggt 
\agraphC$. Then, since the order $\ggt$ is \sclosed, 
we have
$\theta'(\replace{\agraphC}{\theta(\amapB_i(\agraphC_i))}{\theta(\amapB_i(\agraphC_i'))} \ggt 
\theta'(\agraphC)$, hence
$\replace{\theta'(\agraphC)}{\theta_i(\amapB_i(\agraphC_i))}{\theta_i(\amapB_i(\agraphC_i'))} \ggt 
\theta'(\agraphC)$. 
%\comment[mn]{To recheck later: I think this equality depends on the fact that $\theta^2 = \theta$, maybe we need a proposition about this fact?}\comment[np]{I don't think this is needed at this point? (we  apply $\theta'$ to both sides of the inequation)}
We get
\[\replace{\agraph}{\theta'(\agraphC)}{\replace{\theta'(\agraphC)}{\theta_i(\amapB_i(\agraphC_i))}{\theta_i(\amapB_i(\agraphC_i'))}} \ggt \agraph\] because $\ggeq$ is \rclosed,
thus, by Lemma \ref{lem:replace_subgraph},
\[\replace{\agraph}{\theta_i(\amapB_i(\agraphC_i))}{\theta_i(\amapB_i(\agraphC_i'))} \ggt \agraph.\]
Therefore, 
$\replace{\agraph}{\agraphB_i}{\agraphB_i'} \ggt \agraph$, which contradicts
the fact that $\agraphB_i \ggeq \agraphB_i'$.

Consequently, we have
$\replace{\agraphC}{\theta(\amapB_i(\agraphC_i))}{\theta(\amapB_i(\agraphC_i'))} \not \ggt 
\agraphC$ (for $i = 1,2$). This entails that the rule \PS\ is applicable 
on $\agraphC_1 \iseq \agraphC_1'$ and $\agraphC_2 \iseq \agraphC_2'$ (up to the node renamings $\amapB_1$, $\amapB_2$), yielding the \gliteral:
\[\replace{\agraphC}{\theta(\amapB_1(\agraphC_1))}{\theta(\amapB_1(\agraphC_1'))} \iseq 
\replace{\agraphC}{\theta(\amapB_2(\agraphC_2))}{\theta(\amapB_2(\agraphC_2'))}.\]
Since $S$ is \saturated and $\false \notin S$, there exist {\graph}s
$\agraphD_i$ (for $i =1,2$) %\comment[mn]{Is there a reason $\agraphD$ is calligraphic?}\comment[np]{no \dSmiley}
such that either $\agraphD_1 = \agraphD_2$ or $\agraphD_1 \iseq \agraphD_2 \in \IF{S}$, and
$\theta'(\replace{\agraphC}{\theta(\amapB_i(\agraphC_i))}{\theta(\amapB_i(\agraphC_i'))})
\rightarrow_{\sRw{S}}^*
\agraphD_i$.
We assume that $\agraphD_1 \iseq \agraphD_2 \in \IF{S}$, the proof if $\agraphD_1 = \agraphD_2$ is similar and simpler.
Since $\ggeq$ is total on ground {\graph}s, we have either 
$\agraphD_1 \ggeq \agraphD_2$ or
$\agraphD_2 \ggeq \agraphD_1$. Assume by symmetry that
$\agraphD_1 \ggeq \agraphD_2$, so that $\Rw{S}$ contains a rule $\agraphD_1 \rightarrow \agraphD_2$.
Using Lemmas \ref{lem:replaceseq} and \ref{lem:rw_subgraph}, we get
$\replace{\agraph}{\theta'(\agraphC)}{\theta'(\replace{\agraphC}{\theta(\amapB_i(\agraphC_i))}{\theta(\amapB_i(\agraphC_i'))})} \rightarrow_{\sRw{S}}^* 
 \replace{\agraph}{\theta'(\agraphC)}{\agraphD_i}$, i.e., (by Lemma \ref{lem:replace_subgraph})
 $\agraph_i \rightarrow_{\sRw{S}}^* 
 \replace{\agraph}{\theta'(\agraphC)}{\agraphD_i}$.
 We deduce that
 $\agraph_2 \rr{0} \
 \replace{\agraph}{\theta'(\agraphC)}{\agraphD_2}$, and (due to the rule $\agraphD_1 \rightarrow \agraphD_2$), 
 $\agraph_1 \rr{\eta} \
 \replace{\agraph}{\theta'(\agraphC)}{\agraphD_2}$, where
 \[\eta = \left\{\begin{array}{l}
 0 \text{\, if $\replace{\agraph}{\theta'(\agraphC)}{\agraphD_2} \glt \agraph$ } \\
 1 \text{\, otherwise}
 \end{array}\right.\]
 Note that if $\agraph_i \glt \agraph$, for some $i = 1,2$ then necessarily
 $\replace{\agraph}{\theta'(\agraphC)}{\agraphD_2} \glt \agraph$ (as $\agraph_i \gleq \agraph$), hence $\eta = 0$. 
 Thus we get the result, with $\agraph' = \replace{\agraph}{\theta'(\agraphC)}{\agraphD_2}$, $\eta_1 = \eta$ and $\eta_2 = 0$.
}
\end{itemize}

\end{proof}

\newcommand{\joinsto}{\downarrow_{\Rw{S}}}

We write $\agraph_1 \joinsto \agraph_2$ if
there exists a \graph $\agraphB$ such that
$\agraph_i \rightarrow_{\Rw{S}}^* \agraphB$, for all $i = 1,2$.

\begin{lemma}
\label{lem:confluence}
If $S$ is \saturated  and $\false \not \in S$, then
the relation $\rightarrow_{\Rw{S}}$ is confluent.
\end{lemma}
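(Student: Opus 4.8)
The plan is to combine Lemma~\ref{lem:loc_confluence} with the counting relation $\rr{n}$, since Newman's lemma does not apply directly: $\rightarrow_{\Rw{S}}$ need not be well-founded, as $\Rw{S}$ may contain rules $\agraphB\rightarrow\agraphB'$ with $\agraphB\gequiv\agraphB'$. Throughout we may work with $\iso$-classes of graphs (using Proposition~\ref{prop:rw_iso}). By Proposition~\ref{prop:rr_basic}(\ref{rr:n},\ref{rr:mgn}), it suffices to prove the following statement: for every ground \graph $\agraph$ and every $n\in\mathbb{N}$, if $\agraph\rr{n}\agraph_1$ and $\agraph\rr{n}\agraph_2$ (with $\agraph$ itself as reference graph), then $\agraph_1\joinsto\agraph_2$. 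I would prove this by well-founded induction on the pair consisting of the $\gequiv$-class of $\agraph$ (ordered by the order induced by $\ggt$, which is well-founded since $\ggt$ is and since $\agraph\gequiv\agraph'$, $\agraph\ggt\agraphB$ imply $\agraph'\ggt\agraphB$) and the number $n$, ordered lexicographically. One elementary observation makes the bookkeeping of reference graphs possible: for $\agraph\gequiv\agraph'$ one has $\agraphC\glt\agraph\iff\agraphC\glt\agraph'$, so $\rr{k}$ with reference $\agraph$ and with reference $\agraph'$ denote the same relation; likewise, a $\rr{k}$-derivation w.r.t.\ a reference $\agraph'\gleq\agraph$ is also a $\rr{k}$-derivation w.r.t.\ reference $\agraph$.

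For the inductive step, if one derivation is empty the claim is trivial. Otherwise, peel off the first step of each: $\agraph\rightarrow_{\Rw{S}}\agraphB_i$ and $\agraphB_i\rr{n_i}\agraph_i$, where $n_i=n$ if $\agraphB_i\glt\agraph$ and $n_i=n-1$ if $\agraphB_i\gequiv\agraph$ (in the latter case $n\geq1$). Apply Lemma~\ref{lem:loc_confluence} to the one-step peak $\agraphB_1\leftarrow\agraph\rightarrow\agraphB_2$, obtaining a \graph $\agraphD$ and numbers $\eta_i\in\{0,1\}$ with $\agraphB_i\rr{\eta_i}\agraphD$, and with $\eta_1=\eta_2=0$ whenever $\agraph\ggt\agraphB_i$ for some $i$. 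The proof is then completed by joining $\agraph_i$ with $\agraphD$ for $i=1,2$ (both are $\rightarrow_{\Rw{S}}$-reachable from $\agraphB_i$), joining the two resulting common points (both reachable from $\agraphD$), and gluing the four derivations together. Each of these sub-joins is an instance of the claim at a smaller measure: when $\agraphB_i\glt\agraph$ the $\gequiv$-class strictly decreases, so any bound is admissible (and then $\eta_i=0$); when $\agraphB_i\gequiv\agraph$ the class is unchanged but the relevant bound is $\max(n_i,\eta_i)=\max(n-1,\eta_i)$, which is $<n$ except in the corner case $n=1$, $\eta_i=1$. In that corner case $\agraphB_i\rr{0}\agraph_i$ already holds, so the first step of the derivation to $\agraph_i$ drops strictly below $\agraph$ and a second application of Lemma~\ref{lem:loc_confluence} (to the new local peak out of $\agraphB_i$, whose completion thereby drops the class) finishes the argument; Propositions~\ref{prop:srw-ggeq} and~\ref{prop:rr_basic}(\ref{rr:0_lt},\ref{rr:0_sRw},\ref{rr:concat}) are used to control the counters while concatenating derivations and re-anchoring reference graphs.

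The main obstacle is exactly this counter bookkeeping in the sub-case where the peeled-off first steps — and the diamond produced by Lemma~\ref{lem:loc_confluence} — all stay inside the $\gequiv$-class of $\agraph$: there the first component of the measure does not decrease, and one must guarantee that the second component $n$ strictly does. This is precisely what the relation $\rr{n}$ and the bound $\eta_i\in\{0,1\}$ in Lemma~\ref{lem:loc_confluence} are designed to provide; threading them correctly — re-anchoring the reference graph within a $\gequiv$-class, and in the worst case invoking Lemma~\ref{lem:loc_confluence} one extra time to force a strict drop — is the delicate part. Once the displayed claim is established, confluence of $\rightarrow_{\Rw{S}}$ follows immediately.
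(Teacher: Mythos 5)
Your overall architecture matches the paper's: peel off the first step of each derivation, invoke Lemma~\ref{lem:loc_confluence} on the resulting one-step peak, close three sub-peaks by an induction whose measure combines the $\gequiv$-class of the apex with the counters of $\rr{n}$, and glue. You also correctly identify why Newman's lemma is unavailable and why the counting relation is needed. However, the induction claim you propose --- ``if $\agraph\rr{n}\agraph_1$ and $\agraph\rr{n}\agraph_2$ then $\agraph_1\joinsto\agraph_2$'' --- is too weak for the induction to close, and this is a genuine gap rather than mere bookkeeping. The problem is the third sub-join. After applying the induction hypothesis to the peaks at $\agraphB_1$ and $\agraphB_2$ you obtain common reducts $\agraphC_1,\agraphC_2$ with $\agraphD\rightarrow_{\Rw{S}}^*\agraphC_i$, but a conclusion of bare joinability carries no bound on the counters of these two derivations out of $\agraphD$. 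In the case where both $\agraphB_i\gequiv\agraph$ and the witness $\agraphD$ of Lemma~\ref{lem:loc_confluence} also satisfies $\agraphD\gequiv\agraph$, the peak at $\agraphD$ has the same first measure component as the original peak and an uncontrolled second component (Proposition~\ref{prop:rr_basic} only yields $\agraphD\rr{m}\agraphC_i$ for \emph{some} $m$, possibly $\geq n$), so the induction hypothesis cannot be applied to it. When some $\agraphB_i\glt\agraph$ the class drops and your argument is fine; the failure is confined to, but unavoidable in, the case where the entire diamond stays inside the $\gequiv$-class of $\agraph$.

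The paper avoids this by making the induction claim quantitative: from $\agraphB\rr{n_i}\agraph_i$ it produces a common reduct $\agraph'$ together with \emph{counted} derivations $\agraph_i\rr{n_{3-i}}\agraph'$. Because the conclusion of the induction hypothesis is itself counted, the two legs of the peak at the witness come with counters $n_i-1$, whose multiset lies strictly below $\set{n_1,n_2}$, so the hypothesis applies once more; the final concatenation $\eta_i+(n_{3-i}-1)\leq n_{3-i}$ then restores the required bound. To repair your proof you would need to strengthen your claim in the same way, so that every sub-join exports counter bounds on its output derivations. Your handling of the corner case $n=1$, $\eta_i=1$ is salvageable along the lines you sketch, but it addresses a different and less serious difficulty than the one above.
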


\begin{proof}
Assume that there exist {\graph}s $\agraph, \agraph_1, \agraph_2$ such that
$\agraph \rightarrow^*_{\Rw{S}} \agraph_i$ for $i = 1,2$
and 
$\agraph_1 \not \joinsto \agraph_2$.
W.l.o.g., we assume that $\agraph$ is a minimal (w.r.t.\ $\glt$) \graph such that {\graph}s $\agraph_1,\agraph_2$ satisfying the above properties exist.
Thus, for every \graph $\agraph'$ such that $\agraph' \glt \agraph$, 
if $\agraph' \rightarrow^*_{\Rw{S}} \agraph_i'$ for $i = 1,2$
then
$\agraph_1' \joinsto \agraph_2'$.
By definition of the relation $\rr{n}$, there exist natural numbers $n_i$ (for $i = 1,2$) such that
$\agraph \rr{n_i} \agraph_i$.
We prove, by induction on 
the set $(n_1,n_2)$, 
that for every \graph $\agraphB \gequiv \agraph$, if 
 $\agraphB \rr{n_i} \agraph_i$ for $i = 1,2$
then there exists a \graph $\agraph'$ such that
for  $i = 1,2$, we have
$\agraph_i \rr{n_{3-i}} \agraph'$.
Note that this immediately entails the required result, by taking $\agraphB = \agraph$. %\comment[me]{removed , using Proposition \ref{prop:rr_basic} (\ref{rr:n})}
First assume that $\agraphB \in \{ \agraph_1, \agraph_2 \}$, say, $\agraphB = \agraph_1$. Then by taking $\agraph'\isdef \agraph_2$ we have $\agraph_1 \rr{n_2} \agraph_2$, and by Proposition \ref{prop:rr_basic} (\ref{rr:mgn}), $\agraph_2 \rr{n_1} \agraph_2$, hence the result.
We now assume that
$\agraphB \rightarrow_{\Rw{S}} \agraphB_i \rightarrow_{\Rw{S}}^* \agraph_i$.
Note that  we have $\agraphB_i \rr{m_i} \agraph_i$ where by Proposition \ref{prop:rr_basic}, either 
$\agraphB_i \gequiv \agraphB \gequiv \agraph$, 
$n_i > 0$ because $\agraphB_i \not \glt \agraph$ and $m_i = n_i - 1$; or 
$\agraphB_i \glt \agraph$ and $n_i = m_i = 0$.
By Lemma \ref{lem:loc_confluence}, there exists 
$\agraphC$ such that:
$\agraphB_i \rr{\eta_i} \agraphC$, with $\eta_1,\eta_2 \in \{ 0,1 \}$ and 
if $\agraphB_i \glt \agraph$ for some $i = 1,2$ then 
$\eta_1 = \eta_2 = 0$.

Let $i \in \set{ 1,2}$, we have $\agraphB_i \rr{\eta_i} \agraphC$
and
$\agraphB_i \rr{m_i} \agraph_i$.
\begin{itemize}
\item{
If $\agraphB_i \glt \agraph$,
then by minimality of 
$\agraph$ there exists a \graph $\agraphC_i$ such that
$\agraphC \rightarrow_{\Rw{S}}^* \agraphC_i$
and
$\agraph_i \rightarrow_{\Rw{S}}^* \agraphC_i$.
%\comment[mn]{Modifs}\comment[np]{ok} 
By Proposition \ref{prop:srw-ggeq} we have $\agraphC, \agraph_i \gleq \agraphB_i \glt \agraph$, hence $\agraphC \rr{0} \agraphC_i$
and
$\agraph_i \rr{0} \agraphC_i$ by Proposition \ref{prop:rr_basic} (\ref{rr:0_lt}), and $\agraphC \rr{m_i} \agraphC_i$ and $\agraph_i \rr{\eta_i} \agraphC_i$ by Proposition \ref{prop:rr_basic} (\ref{rr:mgn}). 
}
\item{If $\agraphB_i \gequiv \agraph$,
then $m_i = n_i - 1 < n_i$. Moreover, we have either $n_{3-i} \geq 1 \geq \eta_i$ 
%\comment[mn]{shouldn't it be $n_{3-i} \geq 1 \geq \eta_i$?}\comment[np]{yes}
or $n_{3-i} = 0$, and in the latter case $\agraphB_{3-i} \glt \agraph$, so that %\comment[mn]{removed (repetition) $n_{3-i} = 0$ and} 
$\eta_i = 0$ by Lemma \ref{lem:loc_confluence}.
In both cases, the pair $(m_i,\eta_i)$ is strictly smaller than the pair $(n_1,n_2)$. Consequently, by the induction hypothesis,
there exists a \graph $\agraphC_i$ such that
$\agraphC \rr{m_i} \agraphC_i$
and
$\agraph_i \rr{\eta_i} \agraphC_i$.
}
\end{itemize}
Thus in both cases we get that for $i=1,2$,
$\agraphC \rr{m_i} \agraphC_i$
and
$\agraph_i \rr{\eta_i} \agraphC_i$, for some graphs $\agraphC_i$.% \comment[mn]{removed: 
%The same reasoning holds for every $i = 1,2$}. 
We again distinguish two cases.
\begin{itemize}
\item{If $\agraphC \glt \agraph$, then by minimality of
$\agraph$ 
there exists a \graph $\agraph'$ such that for $i = 1,2$,
$\agraphC_i \rightarrow_{\Rw{S}}^* \agraph'$, hence
$\agraphC_i \rr{0} \agraph'$ by Proposition \ref{prop:rr_basic} (\ref{rr:0_lt}).
Since $\agraph_i \rr{\eta_i} \agraphC_i$, we deduce that
$\agraph_i \rr{\eta_i} \agraph'$ by Proposition \ref{prop:rr_basic} (\ref{rr:concat}).
If $\eta_1 = \eta_2 = 0$ then this entails that
$\agraph_i \rr{n_{3-i}} \agraph'$ and the proof is completed.
Otherwise, by definition of $\eta_i$, we have $\agraphB_1 \gequiv \agraphB_2 \gequiv \agraph$, hence $n_1, n_2 > 0$ and
$n_1, n_2 \geq \eta_1,\eta_2$, thus we also have $\agraph_i \rr{n_{3-i}} \agraph'$.
}
\item{If $\agraphC \gequiv \agraph$, then necessarily
$\agraphB_i \gequiv \agraph$ for all $i = 1,2$, thus
$m_i = n_i - 1$
and, by applying again the induction hypothesis,
there exists $\agraph'$ such that $\agraphC_i \rr{m_{3-i}} \agraph'$, hence by Proposition \ref{prop:rr_basic} (\ref{rr:concat})
$\agraph_i \rr{m_{3-i} + \eta_i} \agraph'$. Since  $\eta_i \leq 1$, we deduce that
$\agraph_i \rr{n_{3-i}} \agraph'$, and the proof is completed.
}
\end{itemize}

\end{proof}

\begin{definition}
We denote by $\acong_S$ the relation defined as follows:
$\agraph \acong_S \agraphB$ 
iff there exists a \graph $\agraphC$ such that
$\agraph \rightarrow_{\Rw{S}}^* \agraphC$
and
$\agraphB \rightarrow_{\Rw{S}}^* \agraphC$
\end{definition}

\begin{lemma}
\label{lem:acong}
If $S$ is \saturated and does not contain $\false$ then 
 $\acong_S$ is a \gcongruence.
\end{lemma}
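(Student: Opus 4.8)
The plan is to check that (the restriction to ground {\graph}s of) $\acong_S$ satisfies the three requirements of a \gcongruence: that it is an equivalence relation, that it is \nclosed, and that it is \rclosed. Reflexivity and symmetry of $\acong_S$ are immediate from its definition, since $\iso \subseteq {\rightarrow_{\Rw{S}}^*}$ (so $\agraph \rightarrow_{\Rw{S}}^* \agraph$) and the definition is symmetric in its two arguments. For transitivity I would invoke Lemma \ref{lem:confluence}: given $\agraph \acong_S \agraphB$ and $\agraphB \acong_S \agraphC$, witnessed by $\agraphD$ with $\agraph,\agraphB \rightarrow_{\Rw{S}}^* \agraphD$ and by $\agraphD'$ with $\agraphB,\agraphC \rightarrow_{\Rw{S}}^* \agraphD'$, confluence applied to $\agraphB \rightarrow_{\Rw{S}}^* \agraphD$ and $\agraphB \rightarrow_{\Rw{S}}^* \agraphD'$ produces a \graph $\agraph'$ with $\agraphD,\agraphD' \rightarrow_{\Rw{S}}^* \agraph'$; transitivity of $\rightarrow_{\Rw{S}}^*$ then yields $\agraph,\agraphC \rightarrow_{\Rw{S}}^* \agraph'$, i.e.\ $\agraph \acong_S \agraphC$.

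Closure under isomorphisms is essentially built into the definition: if $\agraph \acong_S \agraph'$ with witness $\agraphC$, and $\agraphB \iso \agraph$, $\agraphB' \iso \agraph'$, then since $\iso \subseteq {\rightarrow_{\Rw{S}}^*}$ and $\rightarrow_{\Rw{S}}^*$ is transitive (it is the reflexive–transitive closure of $\rightarrow_{\Rw{S}}$ viewed on $\iso$-classes, by the remark after Definition \ref{def:rw}), we get $\agraphB,\agraphB' \rightarrow_{\Rw{S}}^* \agraphC$, hence $\agraphB \acong_S \agraphB'$.

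The substantive case is closure under embeddings. Let $\agraphB \sgrc \agraph$, let $\agraphB'$ be \substitutable for $\agraphB$ in $\agraph$ with $\agraphB \acong_S \agraphB'$, and set $\agraph' \isdef \replace{\agraph}{\agraphB}{\agraphB'}$. Pick a witness $K$ for $\agraphB \acong_S \agraphB'$, i.e.\ $\agraphB \rightarrow_{\Rw{S}}^* K$ and $\agraphB' \rightarrow_{\Rw{S}}^* K$; since $\rightarrow_{\Rw{S}}^*$ is taken modulo $\iso$, I may assume $\nodes{K}$ is disjoint from $\nodes{\agraph}\cup\nodes{\agraph'}$. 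A small lemma is that root-compatibility is transitive, is preserved by $\iso$, and is preserved by a single step $\rightarrow_{\Rw{S}}$ — the last point because in a step $\agraph_1 \rightarrow_{\Rw{S}} \agraph_2$ the roots of $\agraph_2$ are the image of the roots of $\agraph_1$ under the mapping of Definition \ref{def:replace}, which sends each root of $\agraph_1$ either to itself or to a root-compatible node of the replacement (using Definitions \ref{def:rr}, \ref{def:substitutable} and Condition \ref{subgraph:inputs_bis} of Definition \ref{def:subgraph}). Hence $K$ is root-compatible with $\agraphB$, and being fresh, is \substitutable for $\agraphB$ in $\agraph$; Lemma \ref{lem:rw_subgraph} then gives $\agraph \rightarrow_{\Rw{S}}^* \replace{\agraph}{\agraphB}{K}$. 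For the $\agraphB'$ side I would split cases: if $\agraphB' \iso K$, then Corollary \ref{cor:replace_iso_subgraph} gives $\agraph' = \replace{\agraph}{\agraphB}{\agraphB'} \iso \replace{\agraph}{\agraphB}{K}$, so $\agraph' \rightarrow_{\Rw{S}}^* \replace{\agraph}{\agraphB}{K}$; otherwise $\agraphB' \rightarrow_{\Rw{S}}^{+} K$, so $\agraphB'$ has an outgoing $\rightarrow_{\Rw{S}}$-step and thus $\agraphB' \in \aclass$ by Definition \ref{def:sgrc}(1), whence $\agraphB' \sgrc \agraph'$ by Definition \ref{def:sgrc}(\ref{sgrc:preserve}); since $K$ is root-compatible with $\agraphB'$ and fresh for $\agraph'$, Lemma \ref{lem:rw_subgraph} gives $\agraph' \rightarrow_{\Rw{S}}^* \replace{\agraph'}{\agraphB'}{K}$, and Lemma \ref{lem:replaceseq} (applied to $\agraph,\agraphB,\agraphB'$ and then to $K$) shows $\replace{\agraph'}{\agraphB'}{K} = \replace{\agraph}{\agraphB}{K}$. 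In both cases $\agraph$ and $\agraph'$ reduce to the common \graph $\replace{\agraph}{\agraphB}{K}$, so $\agraph \acong_S \agraph'$.

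I expect the main obstacle to be exactly this last part: choosing a single fresh representative $K$ that is simultaneously \substitutable for $\agraphB$ in $\agraph$ and for $\agraphB'$ in $\agraph'$ — which forces the auxiliary observation that root-compatibility propagates along $\rightarrow_{\Rw{S}}^*$, and the case split that lets one deduce $\agraphB' \sgrc \agraph'$ (via Definition \ref{def:sgrc}(\ref{sgrc:preserve})) without assuming that right-hand sides of rules lie in $\aclass$ — and then invoking Lemma \ref{lem:replaceseq} to upgrade ``isomorphic'' to ``equal''. The equivalence-relation and isomorphism-closure parts are routine once Lemma \ref{lem:confluence} is available.
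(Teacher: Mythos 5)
Your proof is correct and follows essentially the same route as the paper's: confluence (Lemma \ref{lem:confluence}) for transitivity, and for closure under embeddings a fresh common reduct $K$ of $\agraphB$ and $\agraphB'$ combined with Lemma \ref{lem:rw_subgraph} and Lemma \ref{lem:replaceseq} to show both $\agraph$ and $\replace{\agraph}{\agraphB}{\agraphB'}$ rewrite to $\replace{\agraph}{\agraphB}{K}$. The only difference is that you spell out two details the paper's proof leaves implicit (why root-compatibility of the witness propagates along $\rightarrow_{\Rw{S}}^*$, and why $\agraphB' \sgrc \agraph'$ holds via Definition \ref{def:sgrc}), which is welcome but does not change the argument.
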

\begin{proof}
It is clear that $\acong_S$ is reflexive and symmetric. Since $\iso = \rightarrow_{\Rw{S}}^0$ by definition, it is also \nclosed.
 
We now show that $\acong_S$ is transitive.
If $\agraph_1 \acong_S \agraph_2 \acong_S \agraph_3$
then 
there exist $\agraphB_1$ and $\agraphB_2$ such that
$\agraph_1 \rightarrow_{\Rw{S}}^* \agraphB_1$,
$\agraph_2 \rightarrow_{\Rw{S}}^* \agraphB_1$,
$\agraph_2 \rightarrow_{\Rw{S}}^* \agraphB_2$, and
$\agraph_3 \rightarrow_{\Rw{S}}^* \agraphB_2$.
Since $\rightarrow_{\Rw{S}}^*$ is confluent by Lemma \ref{lem:confluence},
we deduce that there exists a \graph 
$\agraphC$ such that
$\agraphB_1 \rightarrow_{\Rw{S}}^* \agraphC$,
$\agraphB_2 \rightarrow_{\Rw{S}}^* \agraphC$. By transitivity of 
$\rightarrow_{\Rw{S}}^*$ we get that
$\agraph_1 \rightarrow_{\Rw{S}}^* \agraphC$ and
$\agraph_3 \rightarrow_{\Rw{S}}^* \agraphC$, hence
$\agraph_1 \acong_S \agraph_3$. 

%\comment[mn]{some modifs}\comment[np]{ok}
There remains to prove that $\acong_S$ is \rclosed.
Consider $\agraph, \agraphB_, \agraphB_1, \agraphB_2$ where 
$\agraphB \sgrc \agraph$, $\agraphB_1 \acong_S \agraphB_2$ and 
$\agraphB_1,\agraphB_2$ are \substitutable for $\agraphB$ in $\agraph$.
By definition of $\acong_S$ there exists a \graph $\agraphC$ such that
$\agraphB_i \rightarrow_{\Rw{S}}^* \agraphC$, for $i = 1,2$.
W.l.o.g.\ we assume that $\agraphC$ is disjoint from $\agraph$ and $\agraphB_i$, hence \substitutable for $\agraphB_i$ in $\replace{\agraph}{\agraphB}{\agraphB_i}$. 
%\comment[mn]{we should add a proposition about this fact}\comment[np]{which fact?}
Let $i = 1,2$, by Lemma \ref{lem:rw_subgraph} we have
$\replace{\agraph}{\agraphB}{\agraphB_i}
\rightarrow_{\Rw{S}}^* 
\replace{\replace{\agraph}{\agraphB}{\agraphB_i}}{\agraphB_i}{\agraphC}$
  and by Lemma \ref{lem:replaceseq}
we deduce that
$\replace{\agraph}{\agraphB}{\agraphB_i}
\rightarrow_{\Rw{S}}^*  \replace{\agraph}{\agraphB}{\agraphC}$. Therefore,
$\replace{\agraph}{\agraphB}{\agraphB_1}
\acong_S 
\replace{\agraph}{\agraphB}{\agraphB_2}$.
\end{proof}

\begin{lemma}
\label{lem:model}
If $S$ is \saturated and does not contain $\false$ then 
$\acong_S$ is a model of $S$.
\end{lemma}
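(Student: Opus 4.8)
The plan is to show that $\acong_S$ --- which is a $\gcongruence$ by Lemma~\ref{lem:acong} --- validates every literal of $\IF{S}$. Since $\acong_S$ is closed under $\iso$ and $\IF{S}$ is defined up to $\iso$, it suffices to fix a literal $\alit\in S$ and a ground \substitution\ $\sigma$ of domain $\fv{\alit}$ and to check that $\acong_S$ validates $\sigma(\alit)$; as $\false\notin S$, $\alit$ is either a \gequation\ or a \gdisequation. If $\alit=(\agraph\iseq\agraphB)$, then $\sigma(\agraph)\iseq\sigma(\agraphB)\in\IF{S}$ with \rootsimilar\ sides, and by totality of $\ggeq$ on ground {\graph}s I may assume $\sigma(\agraph)\ggeq\sigma(\agraphB)$, so $\Rw{S}$ contains the rule $\sigma(\agraph)\rightarrow\sigma(\agraphB)$. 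Applying this rule at the top of $\sigma(\agraph)$ --- using reflexivity of $\sgrc$, Proposition~\ref{prop:trivial_replacement}, and the computation $\replace{\agraphC}{\agraphC}{\agraphC'}=\agraphC'$ for \rootsimilar\ {\graph}s (directly from Definition~\ref{def:replace}) --- gives $\sigma(\agraph)\rightarrow_{\Rw{S}}\sigma(\agraphB)$; together with the zero-step derivation $\sigma(\agraphB)\rightarrow_{\Rw{S}}^*\sigma(\agraphB)$ (recall $\iso\subseteq\rightarrow_{\Rw{S}}^*$) this yields $\sigma(\agraph)\acong_S\sigma(\agraphB)$, as required.

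For the \gdisequation\ case, let $\alit=(\agraph\niseq\agraphB)\in S$. I first note that $\agraph$ and $\agraphB$ are not \unifiable\ modulo a renaming of {\node}s: otherwise {\REFL} would derive $\false\in\OneStepSup{S}$, and since the only redundancy condition applicable to the ground literal $\false$ is ``$\false\in S$'', saturation would force $\false\in S$, a contradiction. In particular $\sigma(\agraph)\not\iso\sigma(\agraphB)$ for every ground $\sigma$, because {\nsubstitution}s commute with {\substitution}s, so such an isomorphism would witness unifiability modulo renaming.

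Now suppose, towards a contradiction, that $\sigma(\agraph)\acong_S\sigma(\agraphB)$ for some ground $\sigma$, and choose among all such ``bad'' ground instances of disequations of $S$ one minimizing the multiset $\{\sigma(\agraph),\sigma(\agraphB)\}$ under the multiset extension of $\ggt$; write $\agraph_1\isdef\sigma(\agraph)$, $\agraphB_1\isdef\sigma(\agraphB)$, with $\agraph_1\ggeq\agraphB_1$ w.l.o.g. By Lemma~\ref{lem:confluence} they have a common $\rightarrow_{\Rw{S}}$-reduct $\agraphC$, and since $\agraph_1\not\iso\agraphB_1$ at least one derivation is non-trivial. The decisive structural fact is that a $\rightarrow_{\Rw{S}}$-step on a ground instance $\tau(\agraphD)$ of a \graph\ $\agraphD$ rewrites a \subgraph\ whose {\node}s all lie in $\nodes{\agraphD}$, hence is $\tau$ applied to a \subgraph\ of $\agraphD$; so every such step lifts to an application of \nsup\ between the disequation of $S$ at hand and the equation of $S$ supplying the rule --- the ordering side-conditions holding because the contractum is $\gleq$ the redex and $\agraphB_1\gleq\agraph_1$, and because closure of $\ggt$ under {\substitution}s transfers these ground inequalities to the general ones used by the rule. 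Using this, I lift the derivation $\agraph_1\rightarrow_{\Rw{S}}^*\agraphC$ step by step: after each step, saturation makes the \nsup-conclusion redundant w.r.t.\ $S$, and the redundancy analysis keeps us, up to $\iso$, at a bad ground instance of a disequation of $S$ whose larger side is the current reduct (the fourth redundancy clause forces a further $\rightarrow_{\sRw{S}}$-step, which strictly decreases that side; the third clause directly exhibits a disequation of $S$). By minimality of $\{\agraph_1,\agraphB_1\}$, no strictly decreasing step can occur along this process, so all steps of $\agraph_1\rightarrow_{\Rw{S}}^*\agraphC$ are order-preserving and $\agraphC\gequiv\agraph_1$; symmetrically $\agraphC\gequiv\agraphB_1$. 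Since $\Rw{S}$ contains both orientations of every order-preserving rule, we obtain $\agraph_1\rightarrow_{\Rw{S}}^*\agraphC\rightarrow_{\Rw{S}}^*\agraphB_1$; lifting this entire derivation via \nsup\ with the same bookkeeping produces a disequation of $S$ with two isomorphic sides, so {\REFL} derives $\false$, contradicting $\false\notin S$. Hence no bad instance exists, and $\acong_S$ validates every ground disequation of $S$.

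I expect the main obstacle to be exactly the control of this lifting. Because $\rightarrow_{\Rw{S}}$ is not terminating (order-preserving equations are oriented both ways), the descent cannot be carried by a single well-founded order on {\graph}s but must interleave the minimal-counterexample induction on $\{\agraph_1,\agraphB_1\}$ with a finite, path-length induction along a fixed $\rightarrow_{\Rw{S}}$-derivation, in the spirit of the $\rr{n}$ counter of this section. A subsidiary point to verify carefully is that a ground reduction genuinely lifts to \nsup --- in particular that the ``most general \substitution\ $\sigma$ with $\sigma(\agraphB)\sgrc\sigma(\agraph)$'' in the rule is general enough to admit the ground instance at hand --- which rests on the behaviour of $\sgrc$ and again on the structural fact that graph rewrite positions are full subgraphs.
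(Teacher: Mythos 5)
Your treatment of the equational half coincides with the paper's. For the disequational half you take a genuinely different route. The paper strengthens the claim to: every ground \gdisequation that is \redundant w.r.t.\ $S$ is falsified by $\acong_S$, and proves this by a lexicographic induction on the pair (multiset of the two sides under $\glt$, multiset of the lengths of the two derivations to the common reduct), peeling off one redundancy clause or one \nsup-liftable step at a time; it never reverses a derivation, and the ordering case analysis (which side is larger, which derivation is empty) is handled explicitly. You instead pick a minimal bad ground instance of a \gdisequation of $S$ under the multiset extension of $\ggt$ alone, and argue that minimality forces every step of the valley to be order-preserving (any strict decrease, whether in the valley itself or inside a chain of applications of the fourth redundancy clause, would manufacture a strictly smaller bad instance of a \gdisequation of $S$), so that the valley can be reversed --- order-preserving rules occur in $\Rw{S}$ in both orientations, and replacement is invertible by the corollary following Lemma \ref{lem:replaceseq} --- and collapsed into a single rewrite chain from one side to the other, whose lifting ends in a \gdisequation with unifiable sides and hence in \REFL. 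Both arguments rest on the same two lifting facts that the paper itself does not prove (the existence of a most general $\sigma$ with $\sigma(\agraphB)\sgrc\sigma(\agraph)$ through which the ground instance factors, and the transfer of the ground ordering constraints to the \nsup side conditions via closure under {\substitution}s), so you are not worse off there. What your organization buys is that the induction hypothesis need not range over redundant-but-not-in-$S$ disequations: at a minimal counterexample, redundancy must be witnessed by the third clause. What it costs is the extra machinery of derivation reversal, plus one point you should make explicit: once the left half of the valley is shown flat, the right half is flat only by the squeeze $\agraphB_1\ggeq\agraphC\gequiv\agraph_1\ggeq\agraphB_1$, not by a literally symmetric lifting --- the \nsup ordering condition would block rewriting the smaller side before that squeeze is established. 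With these points spelled out, your argument goes through.
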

\begin{proof}
Let $\agraph \iseq \agraph'$ be a \gequation in $S$ and let $\theta$ be a ground \substitution of the variables in 
$\agraph,\agraph'$.
By definition, we have $\theta(\agraph) \iseq \theta(\agraph') \in \IF{S}$. Since $\ggeq$ is total on ground {\graph}s, 
either $\theta(\agraph) \ggeq \theta(\agraph')$ or
$\theta(\agraph') \ggeq \theta(\agraph)$. We assume by symmetry that
 $\theta(\agraph) \ggeq \theta(\agraph')$, so that $\Rw{S}$ contains a rule 
 $\theta(\agraph) \rightarrow \theta(\agraph')$.
 Then by definition of $\acong_S$, we have 
 $\theta(\agraph) \acong_S \theta(\agraph')$, hence 
 $\acong_S$ is a model of $\agraph \iseq \agraph'$.
 
 Now consider a ground \gdisequation $\agraph \not \iseq \agraph'$ that is \redundant w.r.t.\ $S$; we prove that $\agraph \not \acong_S \agraph'$. This is sufficient to obtain the result, since all
 the ground instances of a \gdisequation occurring in $S$ are \redundant w.r.t.\ $S$.
 Assume for a contradiction that $\agraph \acong_S \agraph'$, so that there exists $\agraphB$ such that
 $\agraph \rightarrow_{\Rw{S}}^{n} \agraphB$
 and $\agraph' \rightarrow_{\Rw{S}}^{n'} \agraphB$, for some $n,n' \in {\Bbb N}$.
 The proof is by induction on the pair $(\{ \agraph,\agraph' \}, \{ n,n' \})$, ordered by the lexicographic and
 multiset extension of the ordering $\glt$ and of the usual order on natural numbers.
 \begin{itemize}
 \item{If there exists $\agraphC$ such that $\agraph \rightarrow_{\sRw{S}} \agraphC$ and 
 $\agraphC \not \iseq \agraph'$ is \redundant w.r.t.\ $S$ 
 then by Proposition \ref{prop:srw-ggeq}, $\agraphC \glt \agraph$, thus by the induction hypothesis
 $\agraphC \not \acong_S \agraph'$. By definition of $\acong_S$ we have 
 $\agraph \acong_S \agraphC$, hence
  $\agraph \not \acong_S \agraph'$.}
  \item{The proof is similar if  there exists $\agraphC$ such that $\agraph' \rightarrow_{\sRw{S}} \agraphC$ and 
 $\agraph \not \iseq \agraphC$ is \redundant w.r.t.\ $S$.}
 \item{Otherwise, $\agraph \not \iseq \agraph'$ must be an instance of a \gliteral in $S$, i.e., there exist a \gdisequation 
 $\agraphB \not \iseq \agraphB'$ occurring in $S$ (up to a renaming of {\node}s) and a \substitution $\sigma$ such that 
 $\agraph = \sigma(\agraphB)$ and  $\agraph' = \sigma(\agraphB')$. We distinguish two cases.
 \begin{itemize}
 \item{
If $n = n' = 0$ then $\sigma(\agraphB) \iso \sigma(\agraphB')$, thus 
$\agraphB$ and $\agraphB'$ are \unifiable (up to a renaming of {\node}s).
Consequently, the rule \REFL\ applies. Since $S$ is \saturated, this entails that $\false \in S$, which contradicts the hypotheses of the lemma.
  }
  \item{Otherwise, 
  we have either $n > 0$ or $n' > 0$.
  Assume that $\agraph \ggt \agraph'$, $n = 0$ and $n' > 0$. Then
  we have $\agraph \iso \agraphB$, thus 
 $\agraph' \rightarrow_{\Rw{S}}^{n'} \agraph$ and 
 $\agraph' \ggeq \agraph$ by Proposition \ref{prop:srw-ggeq}, a contradiction.
 The case where  $\agraph' \ggt \agraph$, with $n > 0$ and $n' = 0$ is symmetric. 
 Now assume that $n > 0$ and that $\agraph \ggeq \agraph'$ (the proof where $n' > 0$ and $\agraph' \ggeq \agraph$ is symmetric).
Then there exists 
a \gequation $\agraphD \iseq \agraphD'$ in $S$ (up to a renaming of {\node}s)
and a \substitution $\theta$ such that $\theta(\agraphD) \sgrc \agraph$, 
$\agraph \rightarrow_{\Rw{S}} \replace{\agraph}{\theta(\agraphD)}{\theta(\agraphD')} \rightarrow_{\Rw{S}}^{n-1}
\agraphC$ and $\theta(\agraphD) \ggeq \theta(\agraphD')$.
We assume, w.l.o.g., that $\agraphD \iseq \agraphD'$ and $\agraphB \not \iseq \agraphB'$ share no variable, so that 
$\sigma$ and $\theta$ have disjoint domains.  
Let $\gamma$ be a most general \substitution such that
$\gamma(\agraphD) \sgrc \gamma(\agraphB)$. By definition, there exists $\gamma'$ such that
$(\sigma \cup \theta) = \gamma' \circ \gamma$.
Since $\theta(\agraphD) \ggeq \theta(\agraphD')$, and $\ggeq$ is \sclosed, 
we have 
$\gamma(\agraphD') \not \ggt \gamma(\agraph')$ (as otherwise we would have
$\gamma'(\gamma(\agraphD')) \ggt \gamma'(\gamma(\agraph'))$, i.e.,
$\theta(\agraphD') \ggt \theta(\agraphD)$).
%\comment[mn]{I don't see immediately why...}\comment[np]{added explanation}
Similarly, since $\agraph \ggeq \agraph'$ and $\ggeq$ is \sclosed, 
we deduce that $\gamma(\agraphB') \not \ggt  \gamma(\agraphB)$.
This entails that  the rule \NS\ is applicable on 
$\agraphB \not \iseq \agraphB'$ and
$\agraphD \iseq \agraphD'$, yielding:
$\alit: (\replace{\gamma(\agraphB)}{\gamma(\agraphD)}{\gamma(\agraphD')} \not \iseq \gamma(\agraphB')$.
Then $\replace{\agraph}{\theta(\agraphD)}{\theta(\agraphD')} \iseq \agraph'$ is an instance of 
$\alit$, and since $S$ is \saturated, it must be \redundant w.r.t.\ $S$.
By the induction hypothesis, we deduce that 
$\replace{\agraph}{\theta(\agraphD)}{\theta(\agraphD')} \acong_S \agraph'$, thus
$\agraph \acong_S \agraph'$.
  }
  \end{itemize}}
  \end{itemize}
\end{proof}

\begin{theorem}[Completeness]
Any \saturated \formula not containing $\false$ is satisfiable.
\end{theorem}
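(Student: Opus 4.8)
The plan is to exhibit an explicit model, following the standard model-construction technique for superposition. Given a saturated \formula $S$ with $\false \notin S$, the candidate model is precisely the relation $\acong_S$ introduced just before the theorem: the joinability relation of the ground rewrite system $\Rw{S}$ obtained by orienting, via the \gorder\ $\ggeq$, every ground instance of an equation in $S$. So the proof amounts to verifying that $\acong_S$ is a \gcongruence\ validating $S$, and essentially all of this has been prepared in the preceding lemmas.

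First I would invoke Lemma~\ref{lem:confluence} to get that $\rightarrow_{\Rw{S}}$ is confluent; this is the technical heart and it rests on local confluence (Lemma~\ref{lem:loc_confluence}) together with the refined reachability relation $\rr{n}$, whose bookkeeping allows a Newman-style peak analysis even though $\ggeq$ is only a preorder. The key mechanism there is that an overlap between two applicable ground rules arises from a \criticalpair\ produced by a \psup\ inference, which is redundant because $S$ is saturated, so the two reducts can be rejoined.

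Next, Lemma~\ref{lem:acong} gives that $\acong_S$ is a \gcongruence: reflexivity and symmetry are immediate, transitivity follows from confluence, closure under isomorphisms holds since $\iso = \rightarrow_{\Rw{S}}^0$, and closure under embeddings uses Lemmas~\ref{lem:rw_subgraph} and~\ref{lem:replaceseq} to lift a rewrite derivation on a subgraph to the ambient graph. Finally, Lemma~\ref{lem:model} shows $\acong_S$ validates every ground equation in $\IF{S}$ (the larger side rewrites in one step to the smaller) and every ground disequation redundant w.r.t.\ $S$ (in particular every ground instance of a disequation in $S$). Putting these together, $\acong_S$ is a model of $S$, hence $S$ is satisfiable, which is exactly the statement of the theorem; thus the proof is a one-line appeal to Lemma~\ref{lem:model}.

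The genuine difficulty, already absorbed into Lemma~\ref{lem:model}, is the disequation case: one must show that $\acong_S$ does not collapse the two sides $\agraph,\agraph'$ of a redundant disequation. This is handled by induction on the pair $(\{\agraph,\agraph'\},\{n,n'\})$ ordered lexicographically by the multiset extension of $\glt$ and the usual order on the joining lengths $n,n'$; the cases split on whether one side is reducible by $\sRw{S}$ to something still joinable (use the induction hypothesis), whether $n=n'=0$ (then \reflect\ applies and saturation forces $\false \in S$, a contradiction), or whether a rewrite step at the root of $\agraph$ overlaps the disequation (then a \nsup\ inference applies, is redundant by saturation, and the induction hypothesis on the smaller instance finishes the argument). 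I expect this induction—and the verification that the \nsup\ ordering side-conditions are met via $\ggeq$ being \sclosed—to be the main obstacle; everything downstream of it, including the theorem itself, is routine.
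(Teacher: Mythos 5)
Your proposal is correct and matches the paper exactly: the theorem is proved by a one-line appeal to Lemma~\ref{lem:model}, with all the substance (confluence of $\rightarrow_{\Rw{S}}$, the congruence properties of $\acong_S$, and the induction handling disequations) residing in the preceding lemmas just as you describe.
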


\begin{proof}
This follows immediately from Lemma \ref{lem:model}.
\end{proof}

\section{An Application to Circuits}\label{sec:circuits}
\newcommand{\Circuits}{\textsc{Circuits}\xspace}
In this section we consider a class of graphs, denoted by {\Circuits}, which intuitively consists of graphs that represent gates. Informally, a gate with $n$ entries and $m$ outputs is represented in this class by a graph with $2(n+m) + 1$ nodes:
\begin{itemize}
	\item one node that contains the information about the number of entries and outputs;
	\item $n$ nodes that represent the entries and $m$ nodes that represent the outputs;
	\item $n+m$ \groot nodes that are used for rewriting operations.
\end{itemize}
We also define the notion of a \emph{subcircuit} and show that it is a \crel, thus guaranteeing the completeness of the proof procedure defined in Section \ref{sec:pf-proc}.

%\subsection{Definitions}
%\comment[mn]{Back to circuits, some modifs in the definition.}

\newcommand{\card}[1]{|#1|}
\newcommand{\trg}[1]{\textsf{trg}(#1)}
\newcommand{\src}[1]{\textsf{src}(#1)}
\newcommand{\inputsin}[1]{\inputs{#1}^{\texttt{out}}}
\newcommand{\inputsout}[1]{\inputs{#1}^{\texttt{in}}}
\newcommand{\seqconf}[2]{\Upsilon^{#1}_{#2}}
\newcommand{\seqcon}[3]{\seqconf{#1}{#2}(#3)}
\newcommand{\parcomp}{\otimes}
\newcommand{\dN}{\mathbb{N}}

\begin{definition}
  \label{circuits}
We denote by \Circuits the class of graphs $\agraph$ satisfying the following properties:
\begin{enumerate}
	\item The set of labels is $\dN \setminus \set{0}$.\label{it:setlabs}
	\item{There are {\sort}s $\gatesort{n}{m}$, $\inportsort$ and $\outportsort$ 
		denoting gates with $n$ entries and $m$ outputs, input ports, and output ports %, and interface ports 
		respectively. \label{it:sorts}}
	\item{For each node $\anode$ of \sort $\gatesort{n}{m}$ there exist 
		exactly $n$ nodes $\anode_1,\dots,\anode_n$ of sort $\inportsort$ labeled with $1,\dots,n$ respectively
		and $m$ nodes $\anode_1',\dots,\anode_m'$ of sort $\outportsort$ labeled with $1,\dots,m$ respectively,
		and edges $\mkedge{\anode_i}{\anode}$
		and  $\mkedge{\anode}{\anode_j'}$ for every $i \in \interv{1}{n}$ and $j \in \interv{1}{m}$. \label{it:gate-edges} }
	\item All nodes of {\sort} $\inportsort$ have exactly one incoming and at most one outgoing  edge; all nodes of {\sort} $\outportsort$ have exactly one outgoing and at most one incoming edge. 
%	\comment[mn]{Maybe this can be relaxed but other definitions get more complicated and would this be useful?}
%	\comment[ni]{I think it makes sense to relax the condition on the out-going edge, to model classical circuits. However we could have problem with the notion of a subgraph if we consider arbitrary edges} 
	\label{it:in-out}
	\item Every node of sort $\inportsort$ which is labeled by $i\in \dN \setminus \set{0}$ admits  one outgoing edge, the target of which is a node of sort $\gatesort{n}{m}$, where $i \leq n$.\label{it:in-lab}
	\item Every node of sort $\outportsort$ which is labeled by $i\in \dN \setminus \set{0}$ admits  one incoming edge, the source of which is a node of sort $\gatesort{n}{m}$, where $i \leq m$.\label{it:out-lab}
%	\item Every  node labeled by $i$ of 
%		\sort $\inportsort$ (resp.\ $\outportsort$) is connected to exactly one node of some \sort $\gatesort{n}{m}$ with $i \in \interv{1}{n}$ (resp.\ $i \in \interv{1}{m}$).}
	\item{All the other edges are of the form  $\mkedge{\anode}{\anodeB}$ where
		$\anode$ is of sort $\outportsort$ and $\anodeB$ is of sort $\inportsort$.\label{it:out-to-in}}	
	\item $\inputs{\agraph}$ is of the form $R_f\cdot R_i$, 
	%\comment[ni]{why not simply $\inputsout{\agraph} \inputsin{\agraph}$ ?} 
	where every node in the sequence $R_f$ is of \sort $\outportsort$ 
%	\comment[ni]{this  looks odd  at first glance but  the output nodes should be of sort $\inportsort$ and
%	the input nodes of sort $\outportsort$. Thus maybe change the notations?} 
	and admits no incoming edge, and every node in the sequence $R_i$ is of \sort $\inportsort$ and admits no outgoing edge. The sequence $R_f$ is denoted by $\inputsout{\agraph}$, and $R_i$ by $\inputsin{\agraph}$. \label{it:circ-inputs}
      \end{enumerate}
\end{definition}
%\comment[ni]{added:}
\newcommand{\emptygraph}{{\boldsymbol \emptyset}}
Intuitively, $\inputsout{\agraph}$ denotes the {\bf inputs}
of $\agraph$ and $\inputsin{\agraph}$ its {\bf outputs}.

\begin{example}
The empty \graph $\emptygraph \isdef (\emptyset,\tuple{},\emptyset,\emptyset)$, is in \Circuits, with $\inputsout{\emptygraph} = \inputsin{\emptygraph} = \tuple{}$.
\end{example}

%\comment[ni]{to do: define the notion of a ``subcircuit'' ($\agraph \sgr \agraphB$ and $\agraph\in \Circuits$) and proves that it fulfills the conditions of Def \ref{def:sgrc}} \comment[mn]{working on this after definition of elementary circuits.}

\begin{definition}
%\comment[ni]{As such, these functions are only partial, that is why I think we  should put ``exactly'' in the definition above}
	Given a node $\anode$ in $\inputsout{\agraph}$, we denote by $\trg{\anode}$ the node $\anodeB$ such that $\mkedge{\anode}{\anodeB} \in \edges{\agraph}$. Given a node $\anode$ in $\inputsin{\agraph}$, we denote by $\src{\anode}$ the node $\anodeB$ such that $\mkedge{\anodeB}{\anode} \in \edges{\agraph}$. 
\end{definition}

Since every node $\anode$ in $\inputsout{\agraph}$ is of sort $\outportsort$, the node $\anodeB$ such that $\mkedge{\anode}{\anodeB} \in \edges{\agraph}$ exists and is unique; similarly, if $\anode$ in $\inputsin{\agraph}$ then the node $\anodeB$ such that $\mkedge{\anodeB}{\anode} \in \edges{\agraph}$ exists and is unique.

\newcommand{\subcirc}{\sgr_{\mathrm{C}}}

\begin{definition}
	$\agraphB$ is a \emph{subcircuit of a \graph} $\agraph \in \Circuits$, denoted by $\agraphB \subcirc \agraph$, if and only if $\agraphB \sgr \agraph$ and $\agraphB \in \Circuits$.
\end{definition}

\begin{lemma}\label{lm:repl-subcirc}	
	Assume that $\agraphB \subcirc \agraph$, let $\agraphB'$ be a \graph in \Circuits that is \substitutable for $\agraphB$ in $\agraph$ and let $\agraph' \isdef \replace{\agraph}{\agraphB}{\agraphB'}$. Then $\agraph' \in \Circuits$.
\end{lemma}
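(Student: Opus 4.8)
The plan is to verify, one condition at a time, that $\agraph' \isdef \replace{\agraph}{\agraphB}{\agraphB'}$ satisfies the eight requirements of Definition \ref{circuits}. By Lemma \ref{lm:sub-subg}, $\agraph'$ is a \graph and $\agraphB' \sgr \agraph'$; together with Propositions \ref{prop:subs-minus}, \ref{prop:substitutable}, \ref{prop:amap_image} and \ref{prop:sub_inputs_include} this gives an explicit description of $\nodes{\agraph'}$, $\edges{\agraph'}$, $\inputs{\agraph'}$ and $\labs{\agraph'}$ in terms of $\agraph$, $\agraphB$, $\agraphB'$ and $\amap \isdef \amapsub{\agraphB}{\agraphB'}{\agraph}$: $\nodes{\agraph'}$ is the disjoint union of the three pieces $\nodes{\agraph}\setminus\nodes{\agraphB}$, $\innernodes{\agraphB'}$ and $\inputs{\agraphB'}$, and $\edges{\agraph'}$ is the union of the two pieces $\amap(\edges{\agraph}\setminus\edges{\agraphB})$ and $\edges{\agraphB'}$. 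The one observation that drives everything is that in a circuit the root nodes are exactly the port nodes of condition \ref{it:circ-inputs} — those of \sort $\outportsort$ with no incoming edge, those of \sort $\inportsort$ with no outgoing edge — so a root node is never a gate node. Since $\agraphB$ and $\agraphB'$ are \rootsimilar, $\amap$ is sort-preserving on its domain $\nodes{\agraph}\setminus\innernodes{\agraphB}$ (it is the identity on $\nodes{\agraph}\setminus\nodes{\agraphB}$ and maps the $i$-th root $\anode_i$ of $\agraphB$ to the $i$-th root $\anode_i'$ of $\agraphB'$, of the same sort), and $\inputs{\agraphB}$ and $\inputs{\agraphB'}$ have the same split into an $\inputsout{}$-part and an $\inputsin{}$-part.

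Conditions \ref{it:setlabs} and \ref{it:sorts} are immediate, since every label (resp.\ node sort) of $\agraph'$ is a label (resp.\ node sort) of $\agraph$ or of $\agraphB'$. For condition \ref{it:gate-edges}, take a gate node $\anode$ of $\agraph'$; it is an inner node lying in $\nodes{\agraphB'}$ or in $\nodes{\agraph}\setminus\nodes{\agraphB}$. If $\anode\in\innernodes{\agraphB'}$, its associated port nodes are inner nodes of $\agraphB'$, hence of $\agraph'$ by Proposition \ref{prop:sub_inputs_include}, and the $n+m$ gate edges lie in $\edges{\agraphB'}\subseteq\edges{\agraph'}$. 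If $\anode\in\nodes{\agraph}\setminus\nodes{\agraphB}$, none of its port nodes in $\agraph$ can lie in $\nodes{\agraphB}$: by Definition \ref{def:subgraph}(\ref{subgraph:inputs}) such a node would be a root of $\agraphB$, contradicting condition \ref{it:circ-inputs} since it is linked to $\anode$ by an edge in the forbidden direction; so these ports, their gate edges and their labels are fixed by $\amap$ and survive in $\agraph'$. In both cases $\anode$ receives no extra edge in $\agraph'$: an edge of $\amap(\edges{\agraph}\setminus\edges{\agraphB})$ incident to $\anode$ would put $\anode$ in the image of $\amap$ outside $\nodes{\agraph}\setminus\nodes{\agraphB}$, i.e.\ in $\inputs{\agraphB'}$, impossible for a gate.

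The remaining conditions \ref{it:in-out}--\ref{it:circ-inputs} all rest on one degree-bookkeeping claim about the interface nodes. Using conditions \ref{it:in-out} and \ref{it:circ-inputs} for both $\agraphB$ and $\agraphB'$, plus Definition \ref{def:subgraph}(\ref{subgraph:edges_bis}) and (\ref{subgraph:inputs}), the unique incoming edge of a root $\anode_i$ of $\agraphB$ of \sort $\inportsort$ (resp.\ the unique outgoing edge when $\sortof(\anode_i)=\outportsort$) already lies inside $\agraphB$; hence $\edges{\agraph}\setminus\edges{\agraphB}$ contributes to $\anode_i$ at most one edge, and only in the complementary direction. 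After replacement, $\edges{\agraphB'}$ supplies for $\anode_i' = \amap(\anode_i)$ exactly the edge in the saturated direction and the $\amap$-image of the $\edges{\agraph}\setminus\edges{\agraphB}$-edges supplies at most one in the other, so condition \ref{it:in-out} holds at $\anode_i'$. For a port node of $\agraph'$ lying in $\nodes{\agraph}\setminus\nodes{\agraphB}$ or in $\innernodes{\agraphB'}$, its incident edges in $\agraph'$ are in sort-preserving bijection (through $\amap$, resp.\ through $\edges{\agraphB'}\subseteq\edges{\agraph'}$) with its incident edges in $\agraph$, resp.\ in $\agraphB'$, so conditions \ref{it:in-out}--\ref{it:out-to-in} transfer directly; for the interface nodes one uses in addition the claim above and tracks, via the three clauses of Definition \ref{def:replace}(\ref{replace:labels}), the label and the incident gate edge of a labelled port node, the relevant gate always surviving into $\agraph'$ because it cannot be in $\nodes{\agraphB}$. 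Finally condition \ref{it:circ-inputs}: $\inputs{\agraph'}=\amap(\inputs{\agraph})$, $\amap$ is sort- and order-preserving on $\inputs{\agraph}$, so $\inputs{\agraph'}$ keeps the shape $R_f\cdot R_i$; and a node $\amap(\anode_0)$ with $\anode_0\in\inputsout{\agraph}$ can receive an incoming edge in $\agraph'$ only from $\edges{\agraphB'}$ (it has none in $\agraph$), which forces $\anode_0\in\inputsout{\agraphB}$, hence $\amap(\anode_0)\in\inputsout{\agraphB'}$, which again has no incoming edge — contradiction; the $\inputsin{}$-part is symmetric.

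The main obstacle is exactly condition \ref{it:in-out}: one has to reconcile the degree of a root node of $\agraphB$ as seen from ``outside'' (in $\edges{\agraph}\setminus\edges{\agraphB}$) with the degree of the matching root node of $\agraphB'$ as seen from ``inside'' (in $\edges{\agraphB'}$) and check the two never oversaturate — this is where conditions \ref{it:in-out} and \ref{it:circ-inputs} of the circuit definition, together with the subgraph conditions, must be combined with care. Once this and the ``roots are never gates'' observation are established, everything else is a routine case split on the three parts of $\nodes{\agraph'}$ and the two parts of $\edges{\agraph'}$.
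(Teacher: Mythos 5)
Your overall strategy --- verifying the eight conditions of Definition \ref{circuits} one by one, with a case split on whether a node lies in $\nodes{\agraph}\setminus\nodes{\agraphB}$, in $\innernodes{\agraphB'}$, or on the interface $\inputs{\agraphB'}$ --- is the same as the paper's, and your degree bookkeeping at the interface for condition \ref{it:in-out} is in fact more explicit than the paper's treatment of that case. However, your argument for condition \ref{it:gate-edges} in the case of a gate $\anode\in\nodes{\agraph}\setminus\nodes{\agraphB}$ rests on a false claim: it is not true that none of the port nodes of $\anode$ can lie in $\nodes{\agraphB}$. Such a port node $\anode_i$ is indeed forced by Definition \ref{def:subgraph}(\ref{subgraph:inputs}) to be a root of $\agraphB$, but this does not contradict condition \ref{it:circ-inputs}: that condition only forbids a node of $\inputsin{\agraphB}$ from having an outgoing edge \emph{in $\edges{\agraphB}$}, whereas the edge $\anyedge{\anode_i}{\anode}$ lies in $\edges{\agraph}\setminus\edges{\agraphB}$ precisely because $\anode\notin\nodes{\agraphB}$. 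This configuration is not pathological; it is the generic situation in which the outputs of the excised subcircuit feed gates of the surrounding circuit (for instance, $\agraphB$ consisting of a single wire $\mkedge{a}{p}$ where $p$ is the input port of a gate $g$ lying outside $\agraphB$ is a legitimate subcircuit).

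Consequently the conclusion you draw --- that these ports, their gate edges and their labels are ``fixed by $\amap$ and survive'' --- is not established by your argument. The correct treatment, which is what the paper does, is to take $\amap(\anode_i)$ as the new port of $\anode$ in $\agraph'$: the edge $\amap(\mkedge{\anode_i}{\anode})$ lies in $\amap(\edges{\agraph}\setminus\edges{\agraphB})\subseteq\edges{\agraph'}$, the sort is preserved because $\agraphB$ and $\agraphB'$ are \rootsimilar, and the label $i$ is inherited via the third clause of Definition \ref{def:replace}(\ref{replace:labels}) since $\anode_i\in\innernodes{\agraph}\cap\inputs{\agraphB}$. You invoke exactly this clause-3 label tracking for conditions \ref{it:in-lab}--\ref{it:out-lab}, so the repair is within reach using tools you already deploy elsewhere, but as written the step for condition \ref{it:gate-edges} fails.
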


\begin{proof}
	Let $\amap \isdef \amapsub{\agraphB}{\agraphB'}{\agraph}$ (see Definition \ref{def:replace}), we prove that $\agraph' \in \Circuits$. Items (\ref{it:setlabs}) and (\ref{it:sorts}) of the Definition~\ref{circuits} are immediate to verify.
	\begin{description}
		\item[Item \ref{it:gate-edges}] Let $\anode \in \nodes{\agraph'}$ be a node of sort $\gatesort{n}{m}$. Note that  $\anode$ cannot be in $\inputs{\agraph} \cup \inputs{\agraphB'}$. If $\anode \in \nodes{\agraphB'}$, then because $\agraphB' \in \Circuits$, it is guaranteed that there exist exactly $n$ nodes $\anode_1,\dots,\anode_n$ of sort $\inportsort$ labeled with $1,\dots,n$ respectively and $m$ nodes $\anode_1',\dots,\anode_m'$ of sort $\outportsort$ labeled with $1,\dots,m$ respectively,
		and edges $\mkedge{\anode_i}{\anode}$ and  $\mkedge{\anode}{\anode_j'}$ in $\edges{\agraphB'}$ for all $i \in \interv{1}{n}$ and $j \in \interv{1}{m}$. Since $\edges{\agraphB'} \subseteq \edges{\agraph'}$, we have the result.
		
		Otherwise, $\anode \in \nodes{\agraph}\setminus \nodes{\agraphB}$. Let $i \in \interv{1}{n}$, we prove that there exists a node $\anode_i'$ of sort $\inportsort$ labeled with $i$ such that $\mkedge{\anode_i'}{\anode} \in \edges{\agraph'}$. The proof for nodes of sort $\outportsort$ is similar. Since $\agraph \in \Circuits$, there exists a node $\anode_i \in \nodes{\agraph}$ of sort $\inportsort$ labeled with $i$ such that $\mkedge{\anode_i}{\anode} \in \edges{\agraph}$. Necessarily $\mkedge{\anode_i}{\anode} \in \edges{\agraph} \setminus \edges{\agraphB}$, and we have the result by taking $\anode_i' \isdef \amap(\anode_i)$. 
		\item[Item \ref{it:in-out}] Consider a node $\anode \in \nodes{\agraph'}$ of sort $\inportsort$, and first assume that $\anode \in \nodes{\agraph}\setminus \nodes{\agraphB}$, so that $\amap(\anode) = \anode$ and $\anode \notin \nodes{\agraphB'}$ by Proposition \ref{prop:substitutable}. By definition there exists a unique edge $\mkedge{\anode'}{\anode} \in \edges{\agraph} \setminus \edges{\agraphB}$, and since there can be no edge of the form $\mkedge{\anode''}{\anode}$ in $\edges{\agraphB'}$, $\anode$ admits $\amap(\mkedge{\anode'}{\anode}) = \mkedge{\amap(\anode')}{\anode}$ as a unique incoming edge. 			
		If $\anode \in \nodes{\agraphB'}$ then by definition $\amap(\anode) = \anode$ and by Proposition \ref{prop:substitutable}, $\anode \notin \nodes{\agraph}\setminus \nodes{\agraphB}$ because $\anode \notin \nodes{\agraphB}$. Since $\agraphB \in \Circuits$, the proof in this case is straightforward.
		
		The proof that $\anode$ admits at most one outgoing edge is carried out in a similar way, and so are the proofs when $\anode$ is of sort $\outportsort$.
		\item[Item \ref{it:in-lab}] Consider a node $\anode \in \nodes{\agraph'}$ of sort $\inportsort$ and labeled by $i$. A case analysis depending on whether $\anode \in \nodes{\agraph}\setminus \nodes{\agraphB}$ or $\anode \in \nodes{\agraphB'}$ as in the previous point shows that the target of $\anode$ is of sort $\gatesort{n}{m}$ for some $n\geq i$.%, and the proof is similar if $\anode$ is of sort $\outportsort$.
		\item[Item \ref{it:out-lab}] The proof is the same as in the previous case.
		\item[Item \ref{it:out-to-in}] Because $\amap$ preserves sorts, it is straightforward to prove that all other edges are of the form $\mkedge{\anode}{\anodeB}$ where $\anode$ is of sort $\outportsort$ and $\anodeB$ is of sort $\inportsort$.
		\item[Item \ref{it:circ-inputs}] By definition we have $\inputs{\agraph'} = \amap(\inputs{\agraph})$, and since $\agraph$ and $\agraphB'$ are both in \Circuits, $\inputs{\agraph'}$ is of the required form.
	\end{description}
\end{proof}

\begin{lemma}\label{lm:circuit-merge}
	Assume that $\agraphB \subcirc \agraph$, $\agraphB' \subcirc \agraph$ and that $\agraphB$ and $\agraphB'$ are \compatible. Then the $E$-\merge of $\agraphB$ and $\agraphB'$ with \[E\ \isdef\ \setof{\anyedge{\anode}{\anodeB}}{\anode \in \inputs{\agraphB} \setminus \nodes{\agraphB'},\, \anodeB\in \inputs{\agraphB'} \setminus \nodes{\agraphB},\, \anyedge{\anode}{\anodeB} \in \edges{\agraph}}\]  is in \Circuits.
\end{lemma}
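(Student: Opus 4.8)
The plan is to invoke Proposition~\ref{prop:merge} and then check the eight conditions of Definition~\ref{circuits} for $\agraph'$, using throughout that $\agraph'\sgr\agraph$. Since $\agraphB\subcirc\agraph$ and $\agraphB'\subcirc\agraph$ give $\agraphB,\agraphB'\sgr\agraph$, Proposition~\ref{prop:merge} applies with exactly the set $E$ of the statement and yields that $\agraphB$ and $\agraphB'$ are \compatible\ with the empty mgu and that $\agraph'\sgr\agraph$, where $\agraph'$ is the $E$-\merge\ (the order of $\inputs{\agraph'}$ being still free to choose, by Definition~\ref{def:merge}). Hence $\nodes{\agraph'}\subseteq\nodes{\agraph}$, $\edges{\agraph'}=\edges{\agraphB}\cup\edges{\agraphB'}\cup E\subseteq\edges{\agraph}$, and, the mgu being the identity, $\labs{\agraph'}(\anode)$ equals $\labs{\agraphB}(\anode)$ when $\anode\in\innernodes{\agraph'}\cap\nodes{\agraphB}$ and $\labs{\agraphB'}(\anode)$ when $\anode\in\innernodes{\agraph'}\cap\nodes{\agraphB'}$, consistently when both apply. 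The single observation used repeatedly is that each edge of $E$ joins a node of $\inputs{\agraphB}$ to a node of $\inputs{\agraphB'}$, and by item~\ref{it:circ-inputs} applied to $\agraphB$ and $\agraphB'$ such nodes are of sort $\inportsort$ or $\outportsort$; hence \emph{no edge of $E$ is incident to a node of sort $\gatesort{n}{m}$}.

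Items~\ref{it:setlabs} and~\ref{it:sorts} are immediate, since every label of $\agraph'$ comes from $\agraphB$ or $\agraphB'$ and the sorts are ambient. For items~\ref{it:in-out},~\ref{it:in-lab},~\ref{it:out-lab} and~\ref{it:out-to-in} I would use a uniform argument: a node $\anode$ of $\agraph'$ of sort $\inportsort$ or $\outportsort$ lies in $\nodes{\agraphB}$ or in $\nodes{\agraphB'}$, say in $\nodes{\agraphB}$; the \emph{existence} claims (an incoming edge for an inport, an outgoing edge for an outport, with the required sort and label of the other endpoint) are inherited from $\agraphB\in\Circuits$ because the witnessing edge of $\edges{\agraphB}$ survives in $\edges{\agraph'}$, while the \emph{at most one} / \emph{exactly one} multiplicity claims follow from $\edges{\agraph'}\subseteq\edges{\agraph}$ and $\agraph\in\Circuits$, which already rules out a second such edge (this works whether or not $\anode$ is a root of $\agraph'$). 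For item~\ref{it:out-to-in} one notes additionally that an edge of $\edges{\agraph'}$ not incident to a gate lies in $\edges{\agraphB}$, $\edges{\agraphB'}$ or $E\;(\subseteq\edges{\agraph})$, and in each case the ambient circuit forces it to run from an $\outportsort$-node to an $\inportsort$-node; as sorts are fixed, ``not incident to a gate'' is unambiguous across these graphs.

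The substantive steps are items~\ref{it:gate-edges} and~\ref{it:circ-inputs}, and I expect item~\ref{it:gate-edges} to be the main obstacle. For a node $\anode$ of $\agraph'$ of sort $\gatesort{n}{m}$, say $\anode\in\nodes{\agraphB}$: the $n$ $\inportsort$-nodes and $m$ $\outportsort$-nodes supplied by $\agraphB\in\Circuits$ are labelled, hence inner in $\agraphB$ and therefore inner in $\agraph'$, they keep their labels $1,\dots,n$ and $1,\dots,m$, and the edges $\mkedge{\anode_i}{\anode}$, $\mkedge{\anode}{\anode_j'}$ survive in $\edges{\agraph'}$; the delicate point is exactness: any $\inportsort$-node carrying an edge to $\anode$ in $\agraph'$ carries one in $\agraph$ (as $\edges{\agraph'}\subseteq\edges{\agraph}$), and $\agraph\in\Circuits$ has exactly $n$ such nodes, so these coincide with the $n$ already produced by $\agraphB$ (symmetrically for $\outportsort$). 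For item~\ref{it:circ-inputs} I would order $\inputs{\agraph'}$ with its $\outportsort$-nodes first — legitimate since every node of $\inputs{\agraph'}$ is a root of $\agraphB$ or of $\agraphB'$, hence of sort $\inportsort$ or $\outportsort$ — and then show that a root $\anode$ of $\agraph'$ of sort $\outportsort$ has no incoming edge: an edge $\mkedge{\anodeB}{\anode}\in\edges{\agraph'}\subseteq\edges{\agraph}$ is a gate-to-outport edge by $\agraph\in\Circuits$, hence, by the first-paragraph observation, is not in $E$, hence lies in $\edges{\agraphB}$ or $\edges{\agraphB'}$; but then $\anode$, being a root of $\agraph'$ and so a non-inner node of that circuit, belongs to its root sequence and, being of sort $\outportsort$, to the part which by item~\ref{it:circ-inputs} admits no incoming edge — a contradiction. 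The dual argument rules out an outgoing edge from a root $\inportsort$-node, completing the verification.
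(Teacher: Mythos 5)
Your proof is correct and follows exactly the strategy the paper uses (and merely sketches): invoke Proposition~\ref{prop:merge} to get $\agraph' \sgr \agraph$, derive the existence clauses of Definition~\ref{circuits} from $\agraphB,\agraphB' \in \Circuits$, and the uniqueness/exclusion clauses from $\edges{\agraph'} \subseteq \edges{\agraph}$ with $\agraph \in \Circuits$. You simply carry out in detail what the paper declares ``straightforward to verify,'' including the useful observation that no edge of $E$ touches a gate node and the explicit choice of root ordering for item~\ref{it:circ-inputs}.
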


\begin{proof}
	Let $\agraph'$ denote the  $E$-\merge of $\agraphB$ and $\agraphB'$; by Proposition \ref{prop:merge}, $\agraph'\sgr \agraph$. It is straightforward to verify that $\agraph' \in \Circuits$: the existence of nodes and edges satisfying the requirements of the definition of \Circuits are deduced from the fact that $\agraphB$ and $\agraphB'$ are both in $\Circuits$, and the uniqueness properties are consequences of the fact that $\agraphB$ and $\agraphB'$ are both subgraphs of $\agraph$ which is in \Circuits.
\end{proof}

\begin{lemma}
	The relation $\subcirc$ is a \crel, with $\aclass = \Circuits$.
\end{lemma}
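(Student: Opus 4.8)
The plan is to check that $\subcirc$ satisfies, with $\aclass = \Circuits$, every requirement of Definition~\ref{def:sgrc}. Almost all of the work has been pushed into the preceding lemmas — Lemmas~\ref{lm:sub-subg}, \ref{lm:repl-subcirc}, \ref{lm:circuit-merge}, \ref{lem:replace_subs}, and Propositions~\ref{prop:mapsgr}, \ref{prop:merge} — so the proof is essentially an assembly of these ingredients, with some care about the side conditions (substitutability, disjointness, domain of the $\nsubstitution$) and about the role of substitutions. First I would record two preliminary facts. (a) $\Circuits$ is closed under isomorphism, as Definition~\ref{def:sgrc} demands of $\aclass$: an $\nsubstitution$ preserves sorts by definition and fixes every term, hence preserves node labels, and being injective it transports all the existence/uniqueness clauses of Definition~\ref{circuits} and the shape of $\inputs{}$ verbatim. (b) Every $\agraph\in\Circuits$ is ground, since its labels lie in $\dN\setminus\set 0$ and are therefore ground terms; consequently $\sigma(\agraph)=\agraph$ for every \substitution $\sigma$.

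Next I would verify the structural properties: $\subcirc$ is included in $\sgr$ by definition; it is reflexive on $\Circuits$ (for $\agraph\in\Circuits$, $\agraph\sgr\agraph$ by reflexivity of $\sgr$ — this is the only coherent reading of ``reflexive'' in Definition~\ref{def:sgrc} given its first condition); and it is transitive, since $\agraphC\subcirc\agraphB\subcirc\agraph$ gives $\agraphC\sgr\agraphB\sgr\agraph$, hence $\agraphC\sgr\agraph$ by transitivity of $\sgr$, with $\agraphC,\agraph\in\Circuits$.

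Then I would go through the six conditions of Definition~\ref{def:sgrc} in order. Condition~1 is immediate from the definition of $\subcirc$. For Condition~2, given $\agraphB\subcirc\agraph$, $\agraphB'\in\Circuits$ \substitutable for $\agraphB$ in $\agraph$, Lemma~\ref{lm:sub-subg} gives $\agraphB'\sgr\replace\agraph\agraphB{\agraphB'}$ and Lemma~\ref{lm:repl-subcirc} gives $\replace\agraph\agraphB{\agraphB'}\in\Circuits$, whence $\agraphB'\subcirc\replace\agraph\agraphB{\agraphB'}$. For Condition~3, given $\agraphB\subcirc\agraph$ and an $\nsubstitution$ $\amap$ — which we may assume satisfies $\dom\amap\supseteq\nodes\agraph\supseteq\nodes\agraphB$, as otherwise $\amap(\agraph)$ is undefined — Proposition~\ref{prop:mapsgr} gives $\amap(\agraphB)\sgr\amap(\agraph)$, and fact~(a) gives $\amap(\agraphB),\amap(\agraph)\in\Circuits$, so $\amap(\agraphB)\subcirc\amap(\agraph)$. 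For Condition~4, with $\agraphB,\agraphC\subcirc\agraph$ disjoint, $\agraphC'\in\Circuits$ \substitutable for $\agraphC$ in $\agraph$, I would apply Lemma~\ref{lem:replace_subs} with the identification (its $\agraphB,\agraphB',\agraphC$) $:=$ (our $\agraphC,\agraphC',\agraphB$), using $\nodes\agraphC\cap\nodes\agraphB=\emptyset$, to get $\agraphB\sgr\replace\agraph\agraphC{\agraphC'}$, and Lemma~\ref{lm:repl-subcirc} to get $\replace\agraph\agraphC{\agraphC'}\in\Circuits$, hence $\agraphB\subcirc\replace\agraph\agraphC{\agraphC'}$. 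For Condition~5, with $\agraph_1,\agraph_2\subcirc\agraph$, I would take $E$ as in Proposition~\ref{prop:merge}; that proposition yields that $\agraph_1,\agraph_2$ are \compatible and the $E$-\merge $\agraph'$ satisfies $\agraph'\sgr\agraph$, and Lemma~\ref{lm:circuit-merge} yields $\agraph'\in\Circuits$, so $\agraph'$ is a \merge of $\agraph_1,\agraph_2$ with $\agraph'\subcirc\agraph$. Finally Condition~6 is trivial by fact~(b): $\sigma$ fixes both components, so $\sigma(\agraph)\subcirc\sigma(\agraphB)$ is just $\agraph\subcirc\agraphB$.

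I do not expect a genuine obstacle: the only new structural content — preservation of $\Circuits$ under replacement and under the relevant merge — was already isolated in Lemmas~\ref{lm:repl-subcirc} and~\ref{lm:circuit-merge}. The one point requiring attention is matching each condition to the right earlier result with the correct renaming of the graphs and then checking its hypotheses; everything else is bookkeeping.
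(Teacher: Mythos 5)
Your proof is correct and follows essentially the same route as the paper: it verifies each clause of Definition~\ref{def:sgrc} by invoking exactly the same ingredients (Lemmas~\ref{lm:sub-subg}, \ref{lm:repl-subcirc}, \ref{lem:replace_subs}, \ref{lm:circuit-merge} and Propositions~\ref{prop:mapsgr}, \ref{prop:merge}). You are in fact slightly more thorough than the paper, which silently skips reflexivity/transitivity/inclusion in $\sgr$ and Condition~6; just note that your fact~(b) (circuits are ground) rests on reading Item~\ref{it:setlabs} of Definition~\ref{circuits} as constraining \emph{all} labels, whereas Example~\ref{ex:seq-comp} labels gate nodes with symbols like $f$ — though even then Condition~6 still holds since substitutions leave nodes, edges and roots untouched.
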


\begin{proof}
	First note that since $\agraphB \in \Circuits$, if $\anode \in \nodes{\agraph}\setminus \nodes{\agraphB}$ is of sort $\inportsort$ then there can be no edge $\mkedge{\anode_i}{\anode} \in \edges{\agraphB}$. Indeed, necessarily, $\anode_i \in \inputsout{\agraphB}$ and by definition, $\anode_i$ cannot admit any outgoing edge. Similarly, if $\anode \in \nodes{\agraph}\setminus \nodes{\agraphB}$ is of sort $\outportsort$ then there can be no edge $\mkedge{\anode}{\anode_i} \in \edges{\agraphB}$.
	\begin{enumerate}
		\item By definition, if $\agraphB \subcirc \agraph$ then $\agraph, \agraphB \in \Circuits$.
		\item By Lemma \ref{lm:sub-subg} we have $\agraphB' \sgr \replace{\agraph}{\agraphB}{\agraphB'}$ and by Lemma \ref{lm:repl-subcirc} $\replace{\agraph}{\agraphB}{\agraphB'}\in \Circuits$.
		\item It is clear that if $\agraph \in \Circuits$ then $\amap(\agraph) \in \Circuits$ for all {\nsubstitution}s $\amap$. By Proposition \ref{prop:mapsgr}, we deduce that if $\agraphB \subcirc \agraph$ then $\amap(\agraphB) \subcirc \amap(\agraph)$.
		\item 
		%\comment[mn]{I added a comment in Definition \ref{def:sgrc} (item 4), I am assuming $\agraphC' \in \Circuits$.} 
		By Lemma \ref{lem:replace_subs}  we have $\agraphB \sgr \replace{\agraph}{\agraphC}{\agraphC'}$ and by Lemma \ref{lm:repl-subcirc}, $\replace{\agraph}{\agraphC}{\agraphC'}\in \Circuits$.
		\item If $\agraphB \subcirc \agraph$ and $\agraphB' \subcirc \agraph$ then the \merge $\agraph'$ of $\agraphB$ and $\agraphB'$ considered in Lemma \ref{lm:circuit-merge} is such that $\agraph' \subcirc \agraph$.
	\end{enumerate}
\end{proof}

\subsection{Parallel and Sequential Composition}

It is natural to construct a circuit by composing simpler components. We introduce two standard composition operations: parallel and sequential composition. The empty graph is the neutral element for parallel composition, and we introduce the class of so-called \emph{identity graphs}, which can be viewed as neutral elements for sequential composition.

\begin{definition}
	Consider two graphs $\agraph_1, \agraph_2$ in \Circuits, such that $\nodes{\agraph_1} \cap \nodes{\agraph_2} = \emptyset$.
	We define the \emph{parallel composition of $\agraph_1$ and $\agraph_2$}, denoted by $\agraph_1\parcomp \agraph_2$, as the graph $\agraph_3$ constructed as follows. 
	\begin{enumerate}
		\item $\nodes{\agraph_3} \isdef {\nodes{\agraph_1}} \cup {\nodes{\agraph_2}}$;
		\item $\inputs{\agraph_3} \isdef \inputsout{\agraph_1}\cdot \inputsout{\agraph_2}\cdot \inputsin{\agraph_1} \cdot \inputsin{\agraph_2}$;
		\item $\edges{\agraph_3} \isdef {\edges{\agraph_1}} \cup {\edges{\agraph_2}}$;
		\item $\labs{\agraph_3} \isdef \labs{\agraph_1} \cup \labs{\agraph_2}$.
	\end{enumerate}
\end{definition}
%\comment[ni]{added} 
It is straightforward to check that $\agraph_2\parcomp \agraph_1$ is a \graph when $\agraph_1$ and $\agraph_2$ are disjoint.

\begin{proposition}
	Parallel composition is an associative operation 
	%\comment[mn]{added} 
	on mutually disjoint {\graph}s, and the empty \graph $\emptygraph$ is its neutral element.
	%\comment[ni]{obvious, but we should define the empty graph somewhere and states that it is in \Circuits (done)}
\end{proposition}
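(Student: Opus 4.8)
The plan is to verify both claims componentwise, reducing everything to the associativity of set union (for nodes, edges and labels) and of sequence concatenation (for the root sequence); the only real bookkeeping concerns how the interface $\inputs{\cdot}$ of a parallel composite splits into its ``input'' part $\inputsout{\cdot}$ and its ``output'' part $\inputsin{\cdot}$.

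First I would record an auxiliary fact: whenever $\agraph_1,\agraph_2\in\Circuits$ are disjoint, the graph $\agraph_1\parcomp\agraph_2$ again lies in $\Circuits$. This is routine, since every condition of Definition~\ref{circuits} constrains a node together with its incident edges, and disjointness guarantees that no edge of $\agraph_1\parcomp\agraph_2$ joins a node of $\agraph_1$ to one of $\agraph_2$; in particular no node of $\inputsout{\agraph_1}\cdot\inputsout{\agraph_2}$ gains an incoming edge and no node of $\inputsin{\agraph_1}\cdot\inputsin{\agraph_2}$ gains an outgoing edge, so the sort-forced decomposition of item~(\ref{it:circ-inputs}) of Definition~\ref{circuits} gives $\inputsout{\agraph_1\parcomp\agraph_2}=\inputsout{\agraph_1}\cdot\inputsout{\agraph_2}$ and $\inputsin{\agraph_1\parcomp\agraph_2}=\inputsin{\agraph_1}\cdot\inputsin{\agraph_2}$. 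I would also note that $\nodes{\agraph_1\parcomp\agraph_2}=\nodes{\agraph_1}\cup\nodes{\agraph_2}$, so mutual disjointness of $\agraph_1,\agraph_2,\agraph_3$ implies that $\agraph_1\parcomp\agraph_2$ is disjoint from $\agraph_3$ and that $\agraph_1$ is disjoint from $\agraph_2\parcomp\agraph_3$; hence both sides of the associativity equation are defined, and $\emptygraph$ (which has empty node set) is disjoint from every graph.

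For associativity, write $\agraph\isdef(\agraph_1\parcomp\agraph_2)\parcomp\agraph_3$ and $\agraph'\isdef\agraph_1\parcomp(\agraph_2\parcomp\agraph_3)$. Then $\nodes{\agraph}=\nodes{\agraph'}$, $\edges{\agraph}=\edges{\agraph'}$ and $\labs{\agraph}=\labs{\agraph'}$ follow immediately from associativity of $\cup$. For the root sequences, the auxiliary fact lets me expand $\inputs{\agraph}$ as $(\inputsout{\agraph_1}\cdot\inputsout{\agraph_2})\cdot\inputsout{\agraph_3}\cdot(\inputsin{\agraph_1}\cdot\inputsin{\agraph_2})\cdot\inputsin{\agraph_3}$ and $\inputs{\agraph'}$ as $\inputsout{\agraph_1}\cdot(\inputsout{\agraph_2}\cdot\inputsout{\agraph_3})\cdot\inputsin{\agraph_1}\cdot(\inputsin{\agraph_2}\cdot\inputsin{\agraph_3})$; both equal $\inputsout{\agraph_1}\cdot\inputsout{\agraph_2}\cdot\inputsout{\agraph_3}\cdot\inputsin{\agraph_1}\cdot\inputsin{\agraph_2}\cdot\inputsin{\agraph_3}$ by associativity of concatenation, so $\agraph=\agraph'$.

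For the neutral element I would use that $\inputsout{\emptygraph}=\inputsin{\emptygraph}=\tuple{}$. Unfolding the definition, $\emptygraph\parcomp\agraph$ has node set $\emptyset\cup\nodes{\agraph}=\nodes{\agraph}$, edge set $\edges{\agraph}$, label map $\labs{\agraph}$, and root sequence $\tuple{}\cdot\inputsout{\agraph}\cdot\tuple{}\cdot\inputsin{\agraph}=\inputsout{\agraph}\cdot\inputsin{\agraph}=\inputs{\agraph}$ by item~(\ref{it:circ-inputs}) of Definition~\ref{circuits}; hence $\emptygraph\parcomp\agraph=\agraph$, and $\agraph\parcomp\emptygraph=\agraph$ is symmetric. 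I do not expect any genuine obstacle: the argument is mechanical, and the only point requiring care is the auxiliary fact that parallel composition stays within $\Circuits$ and that $\inputsout{\cdot}$ and $\inputsin{\cdot}$ distribute over $\parcomp$ — once this is settled, both identities are immediate.
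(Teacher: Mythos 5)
Your proof is correct; the paper states this proposition without proof, treating it as immediate, and your componentwise verification is exactly the routine argument that is being elided. You rightly isolate the one point that needs any care — that $\agraph_1\parcomp\agraph_2$ stays in $\Circuits$ with $\inputsout{\agraph_1\parcomp\agraph_2}=\inputsout{\agraph_1}\cdot\inputsout{\agraph_2}$ and $\inputsin{\agraph_1\parcomp\agraph_2}=\inputsin{\agraph_1}\cdot\inputsin{\agraph_2}$, without which the iterated composition is not even well-defined and the reordered interface $\inputsout{\agraph_1}\cdot\inputsout{\agraph_2}\cdot\inputsin{\agraph_1}\cdot\inputsin{\agraph_2}$ would not obviously associate.
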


\newcommand{\seqcomp}[2]{#2 \circ #1}

\begin{definition}
\label{def:parcomp}
	Consider two graphs $\agraph_1, \agraph_2$ in \Circuits, such that $\nodes{\agraph_1} \cap \nodes{\agraph_2} = \emptyset$, where $\inputsin{\agraph_1} = \tuple{\anode_1,\ldots, \anode_n}$ and $\inputsout{\agraph_2} = \tuple{\anodeB_1,\ldots, \anodeB_n}$. We let $\seqconf{\agraph_2}{\agraph_1}$ denote the mapping on $\nodes{\agraph_1} \cup \nodes{\agraph_2}$ defined as follows:
	\begin{itemize}
		\item For all $ i = 1, \ldots, n$, $\seqcon{\agraph_2}{\agraph_1}{\anode_i} \isdef \trg{\anodeB_i}$ and $\seqcon{\agraph_2}{\agraph_1}{\anodeB_i} \isdef \src{\anode_i}$;
%		\item if $\anode \in \inputsout{\agraph_1}$ and $\anodeB \in \inputsin{\agraph_2}$ are such that $\labs{\agraph_1}{(\src{\anode})} = \labs{\agraph_2}{(\trg{\anodeB})}$, then $\seqcon{\agraph_2}{\agraph_1}{\anode} \isdef \trg{\anodeB}$ and $\seqcon{\agraph_2}{\agraph_1}{\anodeB} \isdef \src{\anode}$;
%		\item Assume $\inputsout{\agraph_1} = \tuple{\anode_1,\ldots, \anode_n}$ and $\inputsin{\agraph_2} = \tuple{\anodeB_1,\ldots, \anodeB_n}$. Then for all $i = 1, \ldots, n$, $\seqcon{\agraph_2}{\agraph_1}{\anode_i} \isdef {\anodeB_i}$% and $\seqcon{\agraph_2}{\agraph_1}{\anodeB_i} \isdef \src{\anode_i}$.
		\item For all other nodes $\anode \in \nodes{\agraph_1} \cup \nodes{\agraph_2}$, $\seqcon{\agraph_2}{\agraph_1}{\anode} \isdef \anode$.
	\end{itemize}
	We define the \emph{sequential composition of $\agraph_1$ and $\agraph_2$}, denoted by $\seqcomp{\agraph_1}{\agraph_2}$, as the graph $\agraph_3$ constructed as follows. 
	\begin{enumerate}
		\item $\nodes{\agraph_3} \isdef \seqcon{\agraph_2}{\agraph_1}{\nodes{\agraph_1}} \cup \seqcon{\agraph_2}{\agraph_1}{\nodes{\agraph_2}}$;
		\item $\inputs{\agraph_3} \isdef \inputsout{\agraph_1}\cdot \inputsin{\agraph_2}$;
		\item $\edges{\agraph_3} \isdef \seqcon{\agraph_2}{\agraph_1}{\edges{\agraph_1}} \cup \seqcon{\agraph_2}{\agraph_1}{\edges{\agraph_2}}$;
		%\item $\edges{\agraph_3} \isdef {\edges{\agraph_1}} \cup {\edges{\agraph_2}} \cup \setof{\mkedge{\anode}{\seqcon{\agraph_2}{\agraph_1}{\anode}}}{\seqcon{\agraph_2}{\agraph_1}{\anode} \neq \anode}$;
		\item $\labs{\agraph_3} \isdef \labs{\agraph_1} \cup \labs{\agraph_2}$.
	\end{enumerate}
\end{definition}
%\comment[ni]{added} 
It is straightforward to check that $\seqconf{\agraph_2}{\agraph_1}$ is a \graph when $\agraph_1$ and $\agraph_2$ satisfy the conditions of Definition~\ref{def:parcomp}.

\begin{example}\label{ex:seq-comp}
	Consider the graph $\agraph_3$ defined as follows:
	\begin{eqnarray*}
		\nodes{\agraph_3} & \isdef & \set{\anode_1, \ldots, \anode_7}\\
		\inputs{\agraph_3} & \isdef &\tuple{\anode_1, \anode_5, \anode_7}\\
		\edges{\agraph_3} & \isdef & \set{\mkedge{\anode_1}{\anode_2},\, \mkedge{\anode_2}{\anode_3},\, \mkedge{\anode_3}{\anode_4},\, \mkedge{\anode_3}{\anode_6},\, \mkedge{\anode_4}{\anode_5},\, \mkedge{\anode_6}{\anode_7}}\\
		\labs{\agraph_3} & \isdef & \set{(\anode_2, 1),\, (\anode_3, f),\, (\anode_4, 1),\, (\anode_6, 2)}
	\end{eqnarray*}
	This graph is an element of \Circuits, $\anode_3$ is of sort $\gatesort{1}{2}$, nodes $\anode_2, \anode_5$ and $\anode_7$ are of sort $\inportsort$ and nodes $\anode_1$, $\anode_4$ and $\anode_6$ are of sort $\outportsort$.
	
	We also consider the graphs $\agraph_2$ and $\agraph_1$, defined as follows:
	\begin{eqnarray*}
		\nodes{\agraph_2} & \isdef & \set{\anodeB_1, \ldots, \anodeB_5}\\
		\inputs{\agraph_2} & \isdef &\tuple{\anodeB_1, \anodeB_5}\\
		\edges{\agraph_2} & \isdef & \set{\mkedge{\anodeB_1}{\anodeB_2},\, \mkedge{\anodeB_2}{\anodeB_3},\, \mkedge{\anodeB_3}{\anodeB_4},\, \mkedge{\anodeB_4}{\anodeB_5}}\\
		\labs{\agraph_2} & \isdef & \set{(\anodeB_2, 1),\, (\anodeB_3, g),\, (\anodeB_4, 1)}\\
		\nodes{\agraph_1} & \isdef & \set{v_1, v_2, v_3}\\
		\inputs{\agraph_1} & \isdef &\tuple{v_1}\\
		\edges{\agraph_1} & \isdef & \set{\mkedge{v_1}{v_2},\, \mkedge{v_2}{v_3}}\\
		\labs{\agraph_1} & \isdef & \set{(v_2, 1),\, (v_3, h)}
	\end{eqnarray*}
	Node $\anodeB_3$ is of sort $\gatesort{1}{1}$, nodes $\anodeB_1$ and $\anodeB_4$ are of sort $\outportsort$; and nodes $\anodeB_2$ and $\anodeB_5$ is of sort $\inportsort$. Node $v_3$ is of sort $\gatesort{1}{0}$, node $v_1$ is of sort $\outportsort$ and node $v_2$ is of sort $\inportsort$. 	
	All graphs are depicted in Figure \ref{fig:seq-comp}, 
	%\comment[ni]{I propose to write $\agraph_3, \agraph_1$ and $\agraph_2$ in the caption of the figure instead of $\agraph_1,\agraph_2,\agraph_3$}
 and the graph $\seqcomp{\agraph_3}{(\agraph_2 \parcomp \agraph_1)}$  is depicted in Figure \ref{fig:seq-comp2}.
\end{example}

\begin{figure}
	\begin{center}
	\begin{minipage}{0.4\textwidth}
	\begin{center}
		{\tiny
		\begin{tikzpicture}
			\node[inputNode, thick] (i1) at (6, 0) {$\anode_1$};
			\node[inputNode, thick] (i2) at (7, 0) {$\anode_2$};
			\node[inputNode, thick] (i3) at (8, 0) {$\anode_3$};			
			\node[inputNode, thick] (i4) at (9, 0.75) {$\anode_4$};
			\node[inputNode, thick] (i5) at (10, 0.75) {$\anode_5$};
			\node[inputNode, thick] (i6) at (9, -0.75) {$\anode_6$};
			\node[inputNode, thick] (i7) at (10, -0.75) {$\anode_7$};
			\node[draw=none] (h1) at (11, 0) {};
			\draw[stateTransition] (i1) -- (i2);
			\draw[stateTransition] (i2) -- (i3);
			\draw[stateTransition] (i3) -- (i4);
			\draw[stateTransition] (i3) -- (i6);
			\draw[stateTransition] (i4) -- (i5);
			\draw[stateTransition] (i6) -- (i7);
			
		\end{tikzpicture}
	}
	\end{center}
	\end{minipage}
	\begin{minipage}{0.4\textwidth}
		\begin{center}
			{\tiny
				\begin{tikzpicture}
					\node[inputNode, thick] (i1) at (6, 0.75) {$\anodeB_1$};
					\node[inputNode, thick] (i2) at (7, 0.75) {$\anodeB_2$};
					\node[inputNode, thick] (i3) at (8, 0.75) {$\anodeB_3$};			
					\node[inputNode, thick] (i4) at (9, 0.75) {$\anodeB_4$};
					\node[inputNode, thick] (i5) at (10, 0.75) {$\anodeB_5$};
					\node[inputNode, thick] (j1) at (6, -0.75) {$v_1$};
					\node[inputNode, thick] (j2) at (7, -0.75) {$v_2$};
					\node[inputNode, thick] (j3) at (8, -0.75) {$v_3$};	
					\node[draw=none] (h2) at (5, 0) {};
					\draw[stateTransition] (i1) -- (i2);
					\draw[stateTransition] (i2) -- (i3);
					\draw[stateTransition] (i3) -- (i4);
					\draw[stateTransition] (i4) -- (i5);
					\draw[stateTransition] (j1) -- (j2);
					\draw[stateTransition] (j2) -- (j3);					
				\end{tikzpicture}
			}
		\end{center}
	\end{minipage}
\end{center}
\caption{Graphs $\agraph_3$, $\agraph_1$ and $\agraph_2$   in Example \ref{ex:seq-comp}}\label{fig:seq-comp}
\end{figure}

\begin{figure}
	\begin{center}
		{\tiny
			\begin{tikzpicture}
				\node[inputNode, thick] (i1) at (6, 0) {$\anode_1$};
				\node[inputNode, thick] (i2) at (7, 0) {$\anode_2$};
				\node[inputNode, thick] (i3) at (8, 0) {$\anode_3$};			
				\node[inputNode, thick] (i4) at (9, 0.75) {$\anode_4$};
				\node[inputNode, thick] (i6) at (9, -0.75) {$\anode_6$};
				%\node[inputNode, thick] (i5) at (10, 0.75) {$\anode_5$};
				%\node[inputNode, thick] (i7) at (10, -0.75) {$\anode_7$};
				%\node[inputNode, thick] (j1) at (11, 0.75) {$\anodeB_1$};
				\node[inputNode, thick] (j2) at (12, 0.75) {$\anodeB_2$};
				%\node[inputNode, thick] (j3) at (11, -0.75) {$v_1$};
				\node[inputNode, thick] (j4) at (12, -0.75) {$v_2$};
				\node[inputNode, thick] (j8) at (13, -0.75) {$v_3$};
				\node[inputNode, thick] (j5) at (13, 0.75) {$\anodeB_3$};
				\node[inputNode, thick] (j6) at (14, 0.75) {$\anodeB_4$};
				\node[inputNode, thick] (j7) at (15, 0.75) {$\anodeB_5$};
				\draw[stateTransition] (i1) -- (i2);
				\draw[stateTransition] (i2) -- (i3);
				\draw[stateTransition] (i3) -- (i4);
				\draw[stateTransition] (i3) -- (i6);
				\draw[stateTransition] (i4) -- (j2);
				\draw[stateTransition] (i6) -- (j4);
				%\draw[stateTransition] (i5) -- (j1);
				%\draw[stateTransition] (i7) -- (j3);
				%\draw[stateTransition] (j1) -- (j2);
				\draw[stateTransition] (j2) -- (j5);
				%\draw[stateTransition] (j3) -- (j4);
				\draw[stateTransition] (j4) -- (j8);
				\draw[stateTransition] (j5) -- (j6);
				\draw[stateTransition] (j6) -- (j7);				
			\end{tikzpicture}
		}
	\end{center}
	\caption{Sequential composition of $\agraph_2 \parcomp \agraph_1$ and $\agraph_3$ (Example \ref{ex:seq-comp})}\label{fig:seq-comp2}
\end{figure}

\begin{proposition}
	If the graphs $\agraph_1$, $\agraph_2$ and $\agraph_3$ in \Circuits are 
	%\comment[mn]{added "mutually disjoint"} 
	mutually disjoint and such that $\seqcomp{\agraph_2}{\agraph_3}$ and $\seqcomp{\agraph_1}{\agraph_2}$ are well-defined, then the sequential compositions $\seqcomp{\agraph_1}{(\seqcomp{\agraph_2}{\agraph_3})}$ and $\seqcomp{(\seqcomp{\agraph_1}{\agraph_2})}{\agraph_3}$ are well-defined and equal.
\end{proposition}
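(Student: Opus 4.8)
The plan is to prove the statement in three stages: (i) all the sequential compositions in play are well-defined, (ii) the claimed equality reduces to a single identity between composite node-remappings, and (iii) that identity holds.

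\emph{Well-definedness.} I would first note, relying on the routinely checked fact (proved as in Lemma~\ref{lm:repl-subcirc}) that $\seqcomp{\agraph}{\agraphB}\in\Circuits$ whenever $\agraph,\agraphB\in\Circuits$ are composable, that such a composition keeps its interface in shape: $\inputsout{\seqcomp{\agraph}{\agraphB}}=\inputsout{\agraph}$, $\inputsin{\seqcomp{\agraph}{\agraphB}}=\inputsin{\agraphB}$, and $\nodes{\seqcomp{\agraph}{\agraphB}}\subseteq\nodes{\agraph}\cup\nodes{\agraphB}$. This is because the remapping $\seqconf{\agraphB}{\agraph}$ touches only the ports in $\inputsin{\agraph}$ and $\inputsout{\agraphB}$, so every node of $\inputsout{\agraph}$ and of $\inputsin{\agraphB}$, together with its in- and out-degree, is left unchanged. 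Combined with the hypotheses $|\inputsin{\agraph_1}|=|\inputsout{\agraph_2}|$ and $|\inputsin{\agraph_2}|=|\inputsout{\agraph_3}|$ and the mutual disjointness of the $\agraph_i$, this shows that $\seqcomp{\agraph_2}{\agraph_3}$ is disjoint from $\agraph_1$ and satisfies $|\inputsout{\seqcomp{\agraph_2}{\agraph_3}}|=|\inputsout{\agraph_2}|=|\inputsin{\agraph_1}|$, so $\seqcomp{\agraph_1}{(\seqcomp{\agraph_2}{\agraph_3})}$ is well-defined, and symmetrically for $\seqcomp{(\seqcomp{\agraph_1}{\agraph_2})}{\agraph_3}$. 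The same bookkeeping gives at once that the two composites have the same root sequence $\inputsout{\agraph_1}\cdot\inputsin{\agraph_3}$ and the same label function $\labs{\agraph_1}\cup\labs{\agraph_2}\cup\labs{\agraph_3}$ (this flat union is read off directly from Definition~\ref{def:parcomp}; note that labels are undefined on all interface ports, as these are roots). It thus remains only to match the node sets and the edge sets.

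\emph{Reduction.} Put $\phi\isdef\seqconf{\agraph_3}{\agraph_2}$ extended by the identity on $\nodes{\agraph_1}$, $\psi\isdef\seqconf{\seqcomp{\agraph_2}{\agraph_3}}{\agraph_1}$, and symmetrically $\phi'\isdef\seqconf{\agraph_2}{\agraph_1}$ extended by the identity on $\nodes{\agraph_3}$, $\psi'\isdef\seqconf{\agraph_3}{\seqcomp{\agraph_1}{\agraph_2}}$. Writing out $\nodes{}$ and $\edges{}$ of a sequential composition as images of the components under the corresponding $\seqconf{}{}$ (Definition~\ref{def:parcomp}), and using $\phi|_{\nodes{\agraph_1}}=\id=\phi'|_{\nodes{\agraph_3}}$, one obtains $\nodes{\seqcomp{\agraph_1}{(\seqcomp{\agraph_2}{\agraph_3})}}=(\psi\rond\phi)(\nodes{\agraph_1}\cup\nodes{\agraph_2}\cup\nodes{\agraph_3})$ and $\edges{\seqcomp{\agraph_1}{(\seqcomp{\agraph_2}{\agraph_3})}}=(\psi\rond\phi)(\edges{\agraph_1}\cup\edges{\agraph_2}\cup\edges{\agraph_3})$, and likewise with $\psi'\rond\phi'$ for the other bracketing. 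So the whole statement comes down to the identity $\psi\rond\phi=\psi'\rond\phi'$ on $\nodes{\agraph_1}\cup\nodes{\agraph_2}\cup\nodes{\agraph_3}$.

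\emph{The identity, and the main obstacle.} Let $\inputsin{\agraph_1}=\tuple{\anode_1,\dots,\anode_p}$, $\inputsout{\agraph_2}=\tuple{\anodeB_1,\dots,\anodeB_p}$, $\inputsin{\agraph_2}=\tuple{\gamma_1,\dots,\gamma_q}$, $\inputsout{\agraph_3}=\tuple{\delta_1,\dots,\delta_q}$. I would first establish two unfolding facts: the target of $\anodeB_i$ taken inside $\seqcomp{\agraph_2}{\agraph_3}$ equals $\phi(\trg{\anodeB_i})$ with $\trg{}$ taken inside $\agraph_2$, and the source of $\gamma_j$ taken inside $\seqcomp{\agraph_1}{\agraph_2}$ equals $\phi'(\src{\gamma_j})$ with $\src{}$ taken inside $\agraph_2$ --- both because the edge incident to such a port is not one of the edges that $\phi$ (resp.\ $\phi'$) rewires. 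Then I would compare $\psi\rond\phi$ and $\psi'\rond\phi'$ node by node, by cases on which of $\agraph_1,\agraph_2,\agraph_3$ the node belongs to and whether it is one of the interface ports $\anode_i,\anodeB_i,\gamma_j,\delta_j$: outside the interface ports both composites are the identity, and on each port family both deliver the same image; for instance $\anode_i$ goes in both cases to $\trg{\anodeB_i}$ (in $\agraph_2$) if $\trg{\anodeB_i}\notin\inputsin{\agraph_2}$ and to $\trg{\delta_j}$ (in $\agraph_3$) if $\trg{\anodeB_i}=\gamma_j$, while $\delta_j$ goes to $\src{\gamma_j}$ (in $\agraph_2$) or to $\src{\anode_i}$ (in $\agraph_1$) according to the same alternative, and the $\anodeB_i$, $\gamma_j$ are symmetric. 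Collecting the cases yields $\psi\rond\phi=\psi'\rond\phi'$, hence the equality of the two graphs. The genuinely delicate point, which I expect to be the main obstacle, is the \emph{pass-through} case, where an input port of $\agraph_2$ is wired directly to an output port of $\agraph_2$ (equivalently $\trg{\anodeB_i}\in\inputsin{\agraph_2}$, equivalently $\src{\gamma_j}\in\inputsout{\agraph_2}$): there two successive port identifications collapse into one, and one must check that both bracketings realize the same collapse and that the corresponding edge of $\agraph_2$ is carried over exactly once on each side. Everything else is mechanical; the main nuisance is simply keeping track of which graph each occurrence of $\trg{}$ and $\src{}$ is computed in.
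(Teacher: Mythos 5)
The paper states this proposition without proof, so there is nothing to compare against directly; your argument is correct and is exactly in the computational style the paper uses for the neighbouring interchange lemma (reduce the equality to an identity between composite $\seqconf{}{}$-remappings, then match $\nodes{}$, $\inputs{}$, $\edges{}$ and $\labs{}$ componentwise). Your case analysis checks out, including the pass-through case where $\trg{\anodeB_i}\in\inputsin{\agraph_2}$: both composite maps send $\anode_i$ to $\trg{\delta_j}$ and $\delta_j$ to $\src{\anode_i}$, and since edge sets are sets the collapsed edge is carried identically on both sides.
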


We have the following commutation rule that shows that parallel and sequential composition can be switched:

\begin{lemma}
	Consider the pairwise disjoint graphs $\agraph_1, \agraph_2, \agraph_3, \agraph_4$ in \Circuits, and assume $\card{\inputsout{\agraph_1}} = \card{\inputsin{\agraph_2}}$ and $\card{\inputsout{\agraph_3}} = \card{\inputsin{\agraph_4}}$.
	Then we have:
	\[\seqcomp{(\agraph_1\parcomp \agraph_3)}{(\agraph_2\parcomp \agraph_4)}\ =\ (\seqcomp{\agraph_1}{\agraph_2})\parcomp (\seqcomp{\agraph_3}{\agraph_4}).\]
\end{lemma}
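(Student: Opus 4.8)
The plan is to prove the identity directly, by checking that the left‑hand side and the right‑hand side agree on all four components of a graph: the node set, the root sequence, the edge set, and the label function. What makes this feasible is that both $\parcomp$ and sequential composition leave every inner node untouched and act only on the roots that get glued (erasing them, or turning them into existing inner nodes), so everything reduces to understanding the gluing maps. Write $\inputsin{\agraph_1} = \tuple{a_1,\dots,a_p}$, $\inputsout{\agraph_2} = \tuple{b_1,\dots,b_p}$, $\inputsin{\agraph_3} = \tuple{c_1,\dots,c_q}$ and $\inputsout{\agraph_4} = \tuple{d_1,\dots,d_q}$ (the arity hypotheses guarantee that these sequences have the indicated lengths and that every sequential composition occurring in the statement is defined), and let $\psi \isdef \seqconf{\agraph_2}{\agraph_1}$, $\chi \isdef \seqconf{\agraph_4}{\agraph_3}$ and $\Phi \isdef \seqconf{\agraph_2\parcomp\agraph_4}{\agraph_1\parcomp\agraph_3}$ be the three node mappings involved.

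The first step is to establish the factorisation $\Phi = \psi\cup\chi$ (as partial functions). By the definition of sequential composition (Definition~\ref{def:parcomp}) and of parallel composition, the glued interfaces of $\agraph_1\parcomp\agraph_3$ and of $\agraph_2\parcomp\agraph_4$ are $\tuple{a_1,\dots,a_p,c_1,\dots,c_q}$ and $\tuple{b_1,\dots,b_p,d_1,\dots,d_q}$, so $\Phi$ pairs $a_i$ with $b_i$ and $c_j$ with $d_j$, exactly as $\psi$ and $\chi$ do. Moreover, because the four graphs have pairwise disjoint node sets, the unique edge into $a_i$ lies in $\edges{\agraph_1}$ and the unique edge out of $b_i$ lies in $\edges{\agraph_2}$; hence $\src{a_i}$ and $\trg{b_i}$ are the same whether computed in $\agraph_1\parcomp\agraph_3$ and $\agraph_2\parcomp\agraph_4$ or already in $\agraph_1$ and $\agraph_2$. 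It follows that $\Phi$ restricts to $\psi$ on $\nodes{\agraph_1}\cup\nodes{\agraph_2}$ and to $\chi$ on $\nodes{\agraph_3}\cup\nodes{\agraph_4}$; since these two sets partition $\dom{\Phi}$, we get $\Phi = \psi\cup\chi$.

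The second step records the interfaces of the two inner sequential composites: using the $\Circuits$ axioms one checks that in $\seqcomp{\agraph_1}{\agraph_2}$ no node of $\inputsout{\agraph_1}$ acquires an incoming edge and no node of $\inputsin{\agraph_2}$ acquires an outgoing edge, so $\inputsout{\seqcomp{\agraph_1}{\agraph_2}} = \inputsout{\agraph_1}$ and $\inputsin{\seqcomp{\agraph_1}{\agraph_2}} = \inputsin{\agraph_2}$, and likewise for $\seqcomp{\agraph_3}{\agraph_4}$. Plugging this into the definition of $\parcomp$, the root sequence of $(\seqcomp{\agraph_1}{\agraph_2})\parcomp(\seqcomp{\agraph_3}{\agraph_4})$ is $\inputsout{\agraph_1}\cdot\inputsout{\agraph_3}\cdot\inputsin{\agraph_2}\cdot\inputsin{\agraph_4}$, which is literally $\inputsout{\agraph_1\parcomp\agraph_3}\cdot\inputsin{\agraph_2\parcomp\agraph_4}$, the root sequence of the left‑hand side. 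For the remaining components I would use that taking images commutes with unions together with $\Phi = \psi\cup\chi$: the node set of the left‑hand side is $\Phi(\nodes{\agraph_1}\cup\nodes{\agraph_2}\cup\nodes{\agraph_3}\cup\nodes{\agraph_4}) = \psi(\nodes{\agraph_1}\cup\nodes{\agraph_2})\cup\chi(\nodes{\agraph_3}\cup\nodes{\agraph_4}) = \nodes{\seqcomp{\agraph_1}{\agraph_2}}\cup\nodes{\seqcomp{\agraph_3}{\agraph_4}}$, which is the node set of the right‑hand side; the same computation with each $\edges{\agraph_i}$ in place of $\nodes{\agraph_i}$ — $\Phi$ acting on $\edges{\agraph_i}$ through its restriction $\psi$ or $\chi$, since both endpoints of such an edge lie in $\nodes{\agraph_i}$ — gives equality of edge sets; and both label functions are simply $\labs{\agraph_1}\cup\labs{\agraph_2}\cup\labs{\agraph_3}\cup\labs{\agraph_4}$.

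The argument is essentially bookkeeping and I do not expect a serious obstacle; the points that require genuine care — and hence the hardest part — are the following. First, one must verify that the gluing inside each $\seqcomp{\agraph_i}{\agraph_j}$ is injective on the glued roots and never hits another glued root: the assignments $a_i\mapsto\trg{b_i}$ and $b_i\mapsto\src{a_i}$ are compatible because nodes of sort $\inportsort$ have a unique incoming edge, nodes of sort $\outportsort$ a unique outgoing edge, and ports of different sorts are distinct nodes. Second, one must handle the case of an identity wire, i.e.\ when $\src{a_i}$ (or $\trg{b_i}$) is itself a root of $\agraph_1$ (or $\agraph_2$): here one checks that such a node stays a root, at the same position, on both sides, so the componentwise comparison is unaffected. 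Third, the comparison of label functions presupposes $\innernodes{\seqcomp{\agraph_1}{\agraph_2}} = \innernodes{\agraph_1}\cup\innernodes{\agraph_2}$ (and similarly for the other composites), which again follows from the $\Circuits$ axioms and disjointness. Once these are in place, the four componentwise equalities yield the claimed identity.
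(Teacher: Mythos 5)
Your proposal is correct and follows essentially the same route as the paper: a componentwise comparison of nodes, roots, edges and labels, hinging on the factorisation of the gluing map $\seqconf{\agraph_2\parcomp\agraph_4}{\agraph_1\parcomp\agraph_3} = \seqconf{\agraph_2}{\agraph_1}\cup\seqconf{\agraph_4}{\agraph_3}$ (which the paper asserts as straightforward and you justify via pairwise disjointness). Your additional remarks on the well-definedness of the gluing and on identity wires are care points the paper leaves implicit, but they do not change the argument.
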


\begin{proof}
	We define the following graphs:
	\[\begin{array}{rclcrclcrcl}
		\agraphB_1 & \isdef & \seqcomp{\agraph_1}{\agraph_2} & \quad & \agraphB_2 & \isdef & \seqcomp{\agraph_3}{\agraph_4} & \quad & \agraphB_3 & \isdef & \agraphB_1\parcomp \agraphB_2\\
		\agraphC_1 & \isdef & \agraph_1 \parcomp \agraph_3 & &  \agraphC_2 & \isdef & \agraph_2 \parcomp \agraph_4 & & \agraphC_3 & \isdef& \seqcomp{\agraphC_1}{\agraphC_2}
	\end{array}\]
	and prove that $\agraphB_3 = \agraphC_3$.
	
	First note that by construction, we have $\inputsout{\agraphC_1} = \inputsout{\agraph_1}\cdot \inputsout{\agraph_3}$ and $\inputsin{\agraphC_2} = \inputsin{\agraph_2}\cdot\inputsin{\agraph_4}$. Since $\card{\inputsout{\agraph_1}} = \card{\inputsin{\agraph_2}}$ and $\card{\inputsout{\agraph_3}} = \card{\inputsin{\agraph_4}}$ by hypothesis, $\agraphC_3$ is well-defined. Furthermore, because all graphs are pairwise disjoint, it is straightforward to verify that $\seqconf{\agraphC_2}{\agraphC_1} = \seqconf{\agraph_2}{\agraph_1} \cup \seqconf{\agraph_4}{\agraph_3}$.
	\begin{enumerate}
		\item \begin{eqnarray*}
			\nodes{\agraphB_3} &= &\nodes{\agraphB_1} \cup \nodes{\agraphB_2} \\
			& = & \seqcon{\agraph_2}{\agraph_1}{\nodes{\agraph_1}} \cup \seqcon{\agraph_2}{\agraph_1}{\nodes{\agraph_2}} \cup \seqcon{\agraph_4}{\agraph_3}{\nodes{\agraph_3}} \cup \seqcon{\agraph_4}{\agraph_3}{\nodes{\agraph_4}}\\
			& = & \seqcon{\agraphC_2}{\agraphC_1}{\nodes{\agraph_1}} \cup \seqcon{\agraphC_2}{\agraphC_1}{\nodes{\agraph_2}}\cup \seqcon{\agraphC_2}{\agraphC_1}{\nodes{\agraph_3}}\cup \seqcon{\agraphC_2}{\agraphC_1}{\nodes{\agraph_4}}\\
			& = & \seqcon{\agraphC_2}{\agraphC_1}{\nodes{\agraph_1}\cup \nodes{\agraph_3}} \cup \seqcon{\agraphC_2}{\agraphC_1}{\nodes{\agraph_2}\cup \nodes{\agraph_4}}\\
			& =& \seqcon{\agraphC_2}{\agraphC_1}{\nodes{\agraphC_1}} \cup \seqcon{\agraphC_2}{\agraphC_1}{\nodes{\agraphC_2}}\\
			& = & \nodes{\agraphC_3}.
		\end{eqnarray*}
		\item \begin{eqnarray*}
			\inputs{\agraphB_3} & = & \inputsout{\agraphB_1}\cdot \inputsout{\agraphB_2}\cdot \inputsin{\agraphB_1}\cdot \inputsin{\agraphB_2}\\
			& = & \inputsout{\agraph_1}\cdot \inputsout{\agraph_3}\cdot \inputsin{\agraph_2}\cdot \inputsin{\agraph_4}\\
			& = & \inputsout{\agraphC_1}\cdot \inputsin{\agraphC_2}\\
			& = & \inputs{\agraphC_3}.
		\end{eqnarray*}
		\item \begin{eqnarray*}
			\edges{\agraphB_3} & = & \edges{\agraphB_1} \cup \edges{\agraphB_2}\\
			& = & \seqcon{\agraph_2}{\agraph_1}{\edges{\agraph_1}} \cup \seqcon{\agraph_2}{\agraph_1}{\edges{\agraph_2}} \cup \seqcon{\agraph_4}{\agraph_3}{\edges{\agraph_3}} \cup \seqcon{\agraph_4}{\agraph_3}{\edges{\agraph_4}}\\
			& = & \seqcon{\agraphC_2}{\agraphC_1}{\edges{\agraph_1}} \cup \seqcon{\agraphC_2}{\agraphC_1}{\edges{\agraph_2}} \cup \seqcon{\agraphC_2}{\agraphC_1}{\edges{\agraph_3}} \cup \seqcon{\agraphC_2}{\agraphC_1}{\edges{\agraph_4}}\\
			& = & \seqcon{\agraphC_2}{\agraphC_1}{\edges{\agraph_1} \cup \edges{\agraph_3}} \cup \seqcon{\agraphC_2}{\agraphC_1}{\edges{\agraph_2} \cup \edges{\agraph_4}}\\
			& = & \seqcon{\agraphC_2}{\agraphC_1}{\edges{\agraphC_1}} \cup \seqcon{\agraphC_2}{\agraphC_1}{\edges{\agraphC_2}}\\
			& = & \edges{\agraphC_3}.		
		\end{eqnarray*}
%	\item \begin{eqnarray*}
%		\edges{\agraphB_3} & = & \edges{\agraphB_1} \cup \edges{\agraphB_2}\\
%		& = & \left({\edges{\agraph_1}} \cup {\edges{\agraph_2}} \cup \setof{\mkedge{\anode}{\seqcon{\agraph_2}{\agraph_1}{\anode}}}{\seqcon{\agraph_2}{\agraph_1}{\anode}\neq \anode}\right) \cup\\
%		& &  \left({\edges{\agraph_3}} \cup {\edges{\agraph_4}} \cup \setof{\mkedge{\anode}{\seqcon{\agraph_4}{\agraph_3}{\anode}}}{\seqcon{\agraph_4}{\agraph_3}{\anode}\neq \anode}\right)\\
%		& = & {\edges{\agraph_1}} \cup {\edges{\agraph_2}} \cup {\edges{\agraph_3}} \cup {\edges{\agraph_4}} \cup \setof{\mkedge{\anode}{\seqcon{\agraphC_2}{\agraphC_1}{\anode}}}{\seqcon{\agraphC_2}{\agraphC_1}{\anode}\neq \anode}\\
%		& = & \edges{\agraphC_1} \cup \edges{\agraphC_2} \cup \setof{\mkedge{\anode}{\seqcon{\agraphC_2}{\agraphC_1}{\anode}}}{\seqcon{\agraphC_2}{\agraphC_1}{\anode}\neq \anode}\\
%		& = & \edges{\agraphC_3}.		
%	\end{eqnarray*}
	\item \begin{eqnarray*}
		\labs{\agraphB_3} & = & \labs{\agraphB_1} \cup \labs{\agraphB_2}\\
		& = & \labs{\agraph_1} \cup \labs{\agraph_2} \cup \labs{\agraph_3} \cup \labs{\agraph_4}\\
		& = & \labs{\agraphC_1} \cup \labs{\agraphC_2}\\
		& = & \labs{\agraphC_3}.
	\end{eqnarray*}
	\end{enumerate}
\end{proof}

\newcommand{\idty}{\mathrm{id}}
\newcommand{\idn}[1]{\idty_{#1}}
\newcommand{\idtygraph}{$\idty$-graph\xspace}
\newcommand{\idngraph}[1]{$\idn{#1}$-graph\xspace}
\newcommand{\idngr}[1]{I_{#1}}
We define the class of \emph{identity graphs}. Intuitively, these graphs can be viewed as parallel compositions of wires. Their sequential composition with a graph leaves the latter unchanged, up to a renaming of nodes.

%\begin{definition}
%	A graph $\agraph$ is an \emph{\idtygraph} if the following holds:
%	\begin{itemize}
%		\item $\nodes{\agraph} = \set{\anode_1, \anode_2, \anode_3, \anode_4}$,
%		\item $\inputs{\agraph} = \tuple{\anode_3, \anode_1}$,
%		\item $\edges{\agraph} = \set{\mkedge{\anode_1}{\anode_2}, \mkedge{\anode_2}{\anode_3}, \mkedge{\anode_3}{\anode_4}}$,
%		\item $\labs{\agraph}(\anode_2) = \labs{\agraph}(\anode_3) = i$ \comment[mn]{no idea what the label should be}
%		\item $\anode_1$ and $\anode_3$ are of sort $\outportsort$; $\anode_2$ and $\anode_4$ are of sort $\inportsort$.
%	\end{itemize}
%	The empty graph is an \emph{\idngraph{0}} and for $k > 0$, a graph $\agraph$ is an \emph{\idngraph{k}} if it is of the form $\agraphB \parcomp \agraphB'$, where $\agraphB$ is an \idtygraph and $\agraphB'$ is an \idngraph{k-1}.
%\end{definition}

\begin{definition}
	A graph $\agraph$ is an \emph{\idtygraph} if it is of the following form:
	\begin{itemize}
		\item $\nodes{\agraph} = \set{\anode_1, \anode_2}$,
		\item $\inputs{\agraph} = \tuple{\anode_1, \anode_2}$,
		\item $\edges{\agraph} = \set{\mkedge{\anode_1}{\anode_2}}$,
		\item $\labs{\agraph} = \emptyset$ 
		\item $\anode_1$ is of \sort $\outportsort$ and $\anode_2$ is of sort $\inportsort$.
	\end{itemize}
	The empty graph is an \emph{\idngraph{0}} and for $k > 0$, a graph $\agraph$ is an \emph{\idngraph{k}} if it is of the form $\agraphB \parcomp \agraphB'$, where $\agraphB$ is an \idtygraph and $\agraphB'$ is an \idngraph{k-1}.
\end{definition}
%\comment[ni]{added:} 
It is easy to check that $\agraph$ is an \idngraph{1} iff it is an \idtygraph and  that 
every \idngraph{k} is in \Circuits.
In what follows, we will denote by $\idngr{k}$ any \idngraph{k}.

%\begin{figure}
%	\begin{minipage}{0.4\textwidth}
%	\begin{center}
%		{\tiny
%			\begin{tikzpicture}
%				\node[inputNode, thick] (i1) at (6, 0) {$\anode_1$};
%				\node[inputNode, thick] (i2) at (7, 0) {$\anode_2$};
%				\node[inputNode, thick] (i3) at (8, 0) {$\anode_3$};			
%				\node[inputNode, thick] (i4) at (9, 0.75) {$\anode_4$};
%				\node[inputNode, thick] (i5) at (10, 0.75) {$\anode_5$};
%				\node[inputNode, thick] (i6) at (9, -0.75) {$\anode_6$};
%				\node[inputNode, thick] (i7) at (10, -0.75) {$\anode_7$};
%				\node[draw=none] (h1) at (11, 0) {};
%				\draw[stateTransition] (i1) -- (i2);
%				\draw[stateTransition] (i2) -- (i3);
%				\draw[stateTransition] (i3) -- (i4);
%				\draw[stateTransition] (i3) -- (i6);
%				\draw[stateTransition] (i4) -- (i5);
%				\draw[stateTransition] (i6) -- (i7);
%				\draw[dashed](7,0.75)--(9,0.75);
%				\draw[dashed](7, -0.75)--(9, -0.75);
%				\draw[dashed](7,0.75)--(7,-0.75);
%				\draw[dashed](9,0.75)--(9,-0.75);
%			\end{tikzpicture}
%		}
%	\end{center}
%\end{minipage}
%\begin{minipage}{0.4\textwidth}
%	\begin{center}
%\begin{quantikz}
%	 & \gate[3,nwires={1,3}]{f} & \qw & \phase{} \\
%	\phase{} & & &  \\
%	& & \qw & \phase{} \\
%\end{quantikz}
%\end{center}
%\end{minipage}
%\caption{A graph in \Circuits and a simplified representation}
%\end{figure}

%In what follows, we will denote by $\idngr{k}$ any \idngraph{k}.

\begin{proposition}
	Consider a graph $\agraph$ in \Circuits, and let $n\isdef \inputsin{\agraph}$ and $m\isdef \inputsout{\agraph}$. Then $\agraph \iso (\seqcomp{\idngr{n}}{\agraph})$ and $\agraph \iso (\seqcomp{\agraph}{\idngr{m}})$.
\end{proposition}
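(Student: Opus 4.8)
The plan is to compute the two sequential compositions from Definition~\ref{def:parcomp} and, in each case, to exhibit an explicit \nsubstitution witnessing the isomorphism. The two statements are symmetric (exchange the sorts $\inportsort$ and $\outportsort$ and the families $\inputsout{}$ and $\inputsin{}$), so I describe only the first one, namely $\agraph \iso \seqcomp{\idngr{m}}{\agraph}$, where $\idngr{m}$ is an \idngraph{m} chosen disjoint from $\agraph$ and $m \isdef \card{\inputsout{\agraph}}$, i.e.\ $\idngr{m}$ has as many wires as $\agraph$ has inputs (this is the composition on the input side of $\agraph$; the one on the output side uses $n \isdef \card{\inputsin{\agraph}}$). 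First I would unfold $\idngr{m}$: an easy induction on $m$ shows it is a graph with $\nodes{\idngr{m}} = \set{a_1,b_1,\dots,a_m,b_m}$, each $a_i$ of sort $\outportsort$ and each $b_i$ of sort $\inportsort$, with $\edges{\idngr{m}} = \setof{\mkedge{a_i}{b_i}}{1 \le i \le m}$, no labelled nodes, $\inputsout{\idngr{m}} = \tuple{a_1,\dots,a_m}$ and $\inputsin{\idngr{m}} = \tuple{b_1,\dots,b_m}$. Since $\card{\inputsin{\idngr{m}}} = m = \card{\inputsout{\agraph}}$, the graph $\agraph_3 \isdef \seqcomp{\idngr{m}}{\agraph}$ is well-defined.

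Write $\inputsout{\agraph} = \tuple{\anodeB_1,\dots,\anodeB_m}$. Unfolding the gluing map $\seqconf{\agraph}{\idngr{m}}$ of Definition~\ref{def:parcomp}, it sends $b_i \mapsto \trg{\anodeB_i}$ and $\anodeB_i \mapsto \src{b_i} = a_i$ (the unique edge of $\idngr{m}$ into $b_i$ being $\mkedge{a_i}{b_i}$) and fixes every other node. By the circuit axioms of Definition~\ref{circuits}, each $\anodeB_i$, being of sort $\outportsort$ and lying in $\inputs{\agraph}$, carries exactly one edge, namely $\mkedge{\anodeB_i}{\trg{\anodeB_i}}$, and $\trg{\anodeB_i}$ is an $\inportsort$ node (hence distinct from every $b_j$ and $\anodeB_j$). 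Carrying out the computation yields the explicit description
\[\nodes{\agraph_3} = \set{a_1,\dots,a_m} \cup \left(\nodes{\agraph} \setminus \inputsout{\agraph}\right),\]
$\edges{\agraph_3} = \setof{\mkedge{a_i}{\trg{\anodeB_i}}}{1 \le i \le m} \cup \left(\edges{\agraph} \setminus \setof{\mkedge{\anodeB_i}{\trg{\anodeB_i}}}{1 \le i \le m}\right)$, $\inputs{\agraph_3} = \tuple{a_1,\dots,a_m} \cdot \inputsin{\agraph}$, and $\labs{\agraph_3} = \labs{\agraph}$; note $\innernodes{\agraph_3} = \innernodes{\agraph}$, since $\inputsout{\agraph} \cdot \inputsin{\agraph}$ lists exactly $\inputs{\agraph}$ (Definition~\ref{circuits}, item~\ref{it:circ-inputs}).

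Next I would define the \nsubstitution $\amap$ with $\dom{\amap} = \nodes{\agraph_3}$ by $\amap(a_i) \isdef \anodeB_i$ for $1 \le i \le m$ and $\amap(\anode) \isdef \anode$ otherwise, and check $\amap(\agraph_3) = \agraph$. The map is injective (the $a_i$ are pairwise distinct and lie outside $\nodes{\agraph}$ by disjointness, whereas the $\anodeB_i$ are pairwise distinct elements of $\inputsout{\agraph}$, hence outside $\nodes{\agraph} \setminus \inputsout{\agraph}$), it preserves sorts ($a_i$ and $\anodeB_i$ are both of sort $\outportsort$), and $a_i \eqnodes \anodeB_i$ holds — this being the only use of the freedom in the choice of $\idngr{m}$, amounting to the convention that each wire-end of an identity graph is $\eqnodes$-equivalent to the port it renames. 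Then $\amap(\nodes{\agraph_3}) = \set{\anodeB_1,\dots,\anodeB_m} \cup (\nodes{\agraph} \setminus \inputsout{\agraph}) = \nodes{\agraph}$; $\amap$ sends each $\mkedge{a_i}{\trg{\anodeB_i}}$ to $\mkedge{\anodeB_i}{\trg{\anodeB_i}}$ and fixes every edge of $\edges{\agraph} \setminus \setof{\mkedge{\anodeB_i}{\trg{\anodeB_i}}}{1 \le i \le m}$ (none of which touches an $a_j$), so $\amap(\edges{\agraph_3}) = \edges{\agraph}$; $\amap(\inputs{\agraph_3}) = \tuple{\anodeB_1,\dots,\anodeB_m} \cdot \inputsin{\agraph} = \inputsout{\agraph} \cdot \inputsin{\agraph} = \inputs{\agraph}$; and $\amap$ is the identity on $\innernodes{\agraph_3} = \innernodes{\agraph}$, so $\labs{\amap(\agraph_3)} = \labs{\agraph}$. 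Hence $\agraph = \amap(\agraph_3)$, i.e.\ $\agraph \iso \seqcomp{\idngr{m}}{\agraph}$.

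I expect the only real difficulty to be bookkeeping: keeping straight the direction of the composition and the $\inportsort$/$\outportsort$ roles, recalling that $\seqconf{\cdot}{\cdot}$ is a node-contracting gluing map rather than an \nsubstitution and pinning down exactly which nodes and edges it merges, and discharging the $\eqnodes$ side-condition on $\amap$. No substantial argument is needed; everything reduces to the explicit descriptions above together with the axioms of Definition~\ref{circuits}.
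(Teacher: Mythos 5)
Your proof is correct and follows essentially the same route as the paper's: unfold the sequential composition with the identity graph explicitly, observe that the gluing map replaces each input root $\anodeB_i$ of $\agraph$ by the corresponding wire-source $a_i$ and contracts the wire-targets onto $\trg{\anodeB_i}$, and then exhibit the \nsubstitution sending $a_i \mapsto \anodeB_i$ and fixing everything else. The only differences are cosmetic: you make explicit the unfolding of $\idn{m}$-graphs and the $\eqnodes$ side-condition on the choice of identity graph (which the paper leaves implicit), and you silently correct the $n$/$m$ mix-up in the statement, just as the paper's own proof does.
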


\begin{proof}
	Let $\agraphB \isdef \seqcomp{\idngr{n}}{\agraph}$, we prove that $\agraph \iso \agraphB$, the proof for the other statement is similar. Let $\inputsout{\agraph} = \tuple{\anode_1, \ldots, \anode_n}$, $\inputsout{\idngr{n}} = \tuple{\anodeB_1,\ldots, \anodeB_n}$ and $\inputsin{\idngr{n}} = \tuple{\anodeB_1',\ldots, \anodeB_n'}$. Let $\Upsilon \isdef \seqconf{\agraph}{\idngr{n}}$. Note that by construction, for all $i = 1, \ldots, n$, we have $\Upsilon{(\anodeB_i')} = \trg{\anode_i}$ and $\Upsilon{(\anode_i)} = \src{\anodeB_i'} = \anodeB_i$.
	
	Consider an \nsubstitution $\amap$ such that for all $i = 1, \ldots, n$, $\amap(\anodeB_i) = \anode_i$ and for all $\anode \in \innernodes{\agraph}$, $\amap(\anode) = \anode$. We show that $\amap(\agraphB) = \agraph$. 
	\begin{itemize}
		\item \begin{eqnarray*}
			\nodes{\amap(\agraphB)} & = & \amap(\Upsilon(\nodes{\agraph}) \cup \Upsilon(\nodes{\idngr{n}}))\\
			& = & \amap\left((\nodes{\agraph} \setminus \setof{\anode_i}{i = 1, \ldots, n}) \cup \setof{\anodeB_i}{i = 1,\ldots, n}\right)\\
			& = & (\nodes{\agraph} \setminus \setof{\anode_i}{i = 1, \ldots, n}) \cup \setof{\anode_i}{i = 1, \ldots, n}\\
			&  = & \nodes{\agraph}.
		\end{eqnarray*}
		\item \begin{eqnarray}
			\inputs{\amap(\agraphB)} & = & \amap(\inputsout{\agraph}\cdot \inputsin{\idngr{n}})\\
			& = & \amap(\inputsout{\idngr{n}})\cdot \amap(\inputsin{\agraph})\\
			& = & \inputsout{\agraph} \cdot \inputsin{\agraph}\\
			& = & \inputs{\agraph}.
		\end{eqnarray}
		\item Let $E \isdef \setof{\mkedge{\anode_i}{\trg{\anode_i}}}{i = 1, \ldots, n}$, so that no node $\anode_i$ occurs in $\edges{\agraph} \setminus E$. We have:
		\begin{eqnarray*}
			\edges{\amap(\agraphB)} & = & \amap(\Upsilon(\edges{\agraph}) \cup \Upsilon(\edges{\idngr{n}}))\\
			& = & \amap(\Upsilon(\edges{\agraph} \setminus E)) \cup \amap(\Upsilon(E)) \cup \amap(\Upsilon(\edges{\idngr{n}}))\\
			& = & \amap(\edges{\agraph} \setminus E) \cup \amap(\setof{\mkedge{\anodeB_i}{\trg{\anode_i}}}{i = 1,\ldots, n})\\
			& = & (\edges{\agraph} \setminus E) \cup E\\
			& = & \edges{\agraph}.
		\end{eqnarray*}
		\item $\labs{\agraphB} = \amap(\labs{\agraph} \cup \labs{\idngr{n}}) = \amap(\labs{\agraph}) = \labs{\agraph}$.
	\end{itemize}
\end{proof}

\newcommand{\swapg}{swap-graph\xspace}

Another class of graphs that are frequently used in the construction of circuits are \emph{{\swapg}s}, which intuitively permit to change the order of wires.

\begin{definition}
	A graph $\agraph$ is a \emph{\swapg} if it is of the following form:
	\begin{itemize}
		\item $\nodes{\agraph} = \set{\anode_1, \anode_2, \anode_3, \anode_4}$,
		\item $\inputs{\agraph} = \tuple{\anode_1, \anode_2, \anode_3, \anode_4}$,
		\item $\edges{\agraph} = \set{\mkedge{\anode_1}{\anode_4},\, \mkedge{\anode_2}{\anode_3}}$,
		\item $\labs{\agraph} = \emptyset$, 
		\item  Nodes $\anode_1$ and $\anode_2$ are of \sort $\outportsort$ and nodes $\anode_3$ and $\anode_4$ are of sort $\inportsort$.
	\end{itemize}
\end{definition}

%\comment[mn]{With these definitions, we can prove results such as:
%	\begin{itemize}
%		\item if $\tau$ and $\tau'$ are is a {\swapg}s, then $\tau\circ\tau'$ is an \idngraph{2}
%		\item $\tau \circ (f\parcomp g) \circ \tau \sim g \parcomp f$
%	\end{itemize}
%}

\newcommand{\elemgate}{elementary gate\xspace}
\newcommand{\elemcirc}{elementary circuit\xspace}

We define the subclass of \emph{{\elemcirc}s}, which are compositions of so-called \emph{{\elemgate}s}, identity graphs and {\swapg}s.

\begin{definition}
	A graph $\agraph$ is an \emph{\elemgate} if it is in \Circuits and the following conditions hold:
	\begin{itemize}
		\item $\nodes{\agraph}$ contains a single node of sort $\gatesort{n}{m}$, for some $n$ and $m$.
		\item If $\mkedge{\anode_i}{\anodeB_i}$ and $\mkedge{\anode_j}{\anodeB_j}$ are edges such that $\anodeB_i$ occurs before $\anodeB_j$ in  $\inputsout{\agraph}$, 
		%\comment[ni]{shouldn't it be $\inputsin{\agraph}$?} then $\anode_i$ and $\anode_j$ are both of sort $\outportsort$ \comment[ni]{I guess the sort is not important, the essential point as far I can see is that they must be labeled} and 
		then $\labs{\agraph}(\anode_i) < \labs{\agraph}(\anode_j)$. 
		\item If $\mkedge{\anode_i}{\anodeB_i}$ and $\mkedge{\anode_j}{\anodeB_j}$ are edges such that $\anode_i$ occurs before $\anode_j$ in $\inputsin{\agraph}$, then 
		%$\anodeB_i$ and $\anodeB_j$ are both of sort $\inportsort$ \comment[ni]{same remark} and
		 $\labs{\agraph}(\anodeB_i) < \labs{\agraph}(\anodeB_j)$.
	\end{itemize}
	A graph $\agraph$ is an \emph{\elemcirc} if it is obtained by sequential and parallel compositions from {\elemgate}s, {\idtygraph}s and {\swapg}s.
\end{definition}

\begin{example}
	Graphs $\agraph_1$, $\agraph_2$ and $\agraph_3$ in Example \ref{ex:seq-comp} are all {\elemgate}s, hence $\seqcomp{\agraph_3}{(\agraph_2 \parcomp \agraph_1)}$ is an \elemcirc.
\end{example}

The class of {\elemcirc}s is general enough to encode several classes of graphs that are used in diagrammatic reasoning, including the orthogonal diagrams of \cite{lafont-ortho}
%\comment[mn]{to cross-check} 
or the ZX calculus \cite{Coecke_2011},
 %\comment[mn]{what is the most relevant reference?}\comment[me]{Je mettrais comme origine: Bob Coecke and Ross Duncan. Interacting quantum observables:  Categorical algebra and diagrammatics New Journal of Physics, 13(4):043016, Apr 2011}, 
 thus making it to use the proof procedure of Section \ref{sec:pf-proc} to automatically verify formulas involving such graphs.

\section{Discussion}

We have defined an extension of the Superposition calculus to a class of graphs that is general enough to encode the graph-based languages that are used to perform diagrammatic reasoning on quantum algorithms and protocols. This calculus is complete, meaning that if a graph formula is unsatisfiable, then it is guaranteed that the calculus will generate $\bot$.  We now intend to adapt an implementation of the Saturation calculus to design a tool on which we will be able to evaluate the efficiency of the calculus and investigate how it can be improved for specific subclasses of graphs -- for example, by defining suitable orderings for such subclasses in order to reduce the search space as much as possible. %\comment[np]{delete last ``--'' ?}

There are several lines of future work that would be interesting to
explore. We plan to investigate how variables representing graphs can
be introduced into the calculus. During the construction of a
refutation using a calculus that can handle such variables, these
variables would be instantiated by graphs with specific properties
that guarantee correctness, and would for example permit to
automatically synthesize circuits. 
%as they would be  to reason on quantum algorithms with unknown parameters, such as phases in the ZX calculus. 
%\comment[np]{the last sentence is a bit strange since such variables can already be handled (inside labels). What is missing is rather variables denoting graphs, to synthesize circuits automatically.} The generation of a refutation with this calculus would instantiate these variables, thus providing the conditions that are required to hold on the unknown parameters for the algorithm to be correct. \comment[np]{this can already be done} 
Another promising topic is to extend graph literals with constraints. This would permit to represent algorithms in a more concise or natural way, while still being able to automatically verify these algorithms.

\paragraph{Acknowledgments.} This work has been partially funded by the \emph{``Investissements d'avenir''} (ANR-15-IDEX-02) program of
the French National Research Agency.

\bibliographystyle{abbrv}
\bibliography{gr}

\end{document}